\def\dOi{12(3:1)2016}
\keywords{data-word, counter systems, LTL}
\theoremstyle{definition}
\newcommand{\altparagraph}[1]{\smallskip\paragraph*{#1}}
\newif\ifLONG\LONGtrue
\newcommand{\set}[1]{\{ #1 \}}
\newcommand{\pair}[2]{\langle #1,#2 \rangle}
\newcommand{\triple}[3]{\langle #1,#2,#3 \rangle}
\newcommand{\tuple}[2]{\langle #1, \ldots, #2 \rangle}
\newcommand{\Nat}{\ensuremath{\mathbb{N}}}
\newcommand{\Zed}{\ensuremath{\mathbb{Z}}}
 \newcommand{\powerset}[1]{2^{#1}}
\newcommand{\nepowerset}[1]{\mathcal{P}^{+}({#1})}
\newcommand{\card}[1]{{\rm card}(#1)}
\newcommand{\amap}{f}
\newcommand{\aset}{X}
\newcommand{\asetter}{Z}
\newcommand {\length}[1] {\ensuremath{|#1|}}
\newcommand{\aconstraint}{\varphi}
 \newcommand{\aformula}{\phi} \newcommand{\aformulabis}{\psi} \newcommand{\aformulater}{\varphi}
\newcommand{\mynext}{{\sf X}}
\newcommand{\previous}{{\sf X}^{-1}}
\newcommand{\until}{{\sf U}}
\newcommand{\sometimes}{{\sf F}}
\newcommand{\pastsometimes}{{\sf F}^{-1}}
\newcommand{\always}{{\sf G}}
\newcommand{\since}{{\sf S}}
\newcommand{\dcltl}{\mainlogic}
\newcommand{\ddcltl}{\pmainlogic}
\newcommand{\mainlogic}{\textup{LRV}}
\newcommand{\pmainlogic}{\textup{PLRV}}
\newcommand{\amodel}{\sigma}
\newcommand{\aletter}{a}
\newcommand{\aletterbis}{b}
\newcommand{\aalphabet}{\Sigma}
\newcommand{\aautomaton}{{\mathcal A}}
\newcommand{\aautomatonbis}{{\mathcal B}}
\newcommand{\astate}{q}
\newcommand{\states}{Q}
\newcommand{\counters}{C}
\newcommand{\ainitst}{\astate_{0}}
\newcommand{\afinst}{\astate_{f}}
\newcommand{\aupfunc}{\vect{u}}
\newcommand{\incer}{\mathit{incer}}
\newcommand{\egdef}{\stackrel{\mbox{\begin{tiny}def\end{tiny}}}{=}} \newcommand{\eqdef}{\stackrel{\mbox{\begin{tiny}def\end{tiny}}}{\equiv}} \newcommand{\equivdef}{\stackrel{\mbox{\begin{tiny}def\end{tiny}}}{\equivaut}} \newcommand{\equivaut}{\;\Leftrightarrow\;}
\newcommand{\step}[1]{\xrightarrow{\!\!#1\!\!}}
\newcommand{\stepinc}[1]{\xrightarrow{\!\!#1\!\!}_{\rm \small  gainy}}
\newcommand{\stepdec}[1]{\xrightarrow{\!\!#1\!\!}_{\rm \small lossy}}
\newcommand {\pspace} {\textsc{pspace}}
\newcommand {\expspace} {\textsc{expspace}}
\newcommand {\twoexpspace} {\textsc{2expspace}}
\newcommand{\symbval}{\mathit{fr}}
\newcommand{\symbmodels}{\models_{{\rm symb}}}
\newcommand{\FFrame}{\mathtt{FFrame}}
\newcommand{\numvars}{k}
\newcommand{\runlen}{n}
\newcommand{\Var}{{\rm VAR}}
\newcommand{\vect}[1]{\mathbf{#1}}
\newcommand{\Aphi}{\aautomaton_\aformula}
\newcommand{\Asymb}{\aautomaton_{\rm symb}}
\newcommand{\Asat}{\aautomaton_{\rm real}}
\newcommand{\Aosc}{\aautomaton_{\rm 1sc}}
\newcommand{\Ainc}{\aautomaton_ {\rm inc}}
\newcommand{\Adec}{\aautomaton_{\rm dec}}
\newcommand{\union} {\ensuremath{\cup}}
\newcommand{\intersection} {\ensuremath{\cap}}
\newcommand{\acountseq}{\beta}
\newcommand{\acountval}{\vect{v}}
\newcommand{\fo}[2]{{\rm FO}^{#1}(#2)}
\newcommand{\avariable}{\mathtt{x}}
\newcommand{\avariablebis}{\mathtt{y}}
\newcommand{\avariableter}{\mathtt{z}}
\newcommand{\defstyle}[1]{\emph{#1}}
\newcommand{\macstyle}[1]{\textbf{#1}}
\newcommand{\commstyle}[1]{\textsl{\scriptsize /* #1 */}}
\newcommand{\tup}[1]{\langle #1 \rangle}
\newcommand{\tupz}[1]{\left\langle #1 \right\rangle}
\newcommand{\ptup}[1]{\langle #1 \rangle^{-1}}
\newcommand{\dcltltup}{\dcltl_{\textit{vec}}}
\newcommand{\oblieq}[3]{#1 \approx \tup{#3 ?}#2}
\newcommand{\oblieqz}[3]{#1 \approx \tupz{#3 ?}#2}
\newcommand{\oblineq}[3]{#1 \not\approx \tup{#3 ?}#2}
\newcommand{\poblieq}[3]{#1 \approx \ptup{#3 ?} #2}
\newcommand{\poblineq}[3]{#1 \not\approx \ptup{#3 ?} #2}
\newcommand{\oblieqlocal}{\approx}
\newcommand{\oblineqlocal}{\not\approx}
\newcommand{\noblieqlocal}{\not\approx}
\newcommand{\N}{\Nat}
\newcommand{\inc}[1]{{\rm inc}(#1)}
\newcommand{\dec}[1]{{\rm dec}(#1)}
\newcommand{\cnext}[1]{{\rm next}(#1)}
\newcommand{\cprevious}[1]{{\rm prev}(#1)}
\newcommand{\cisfirst}[1]{{\rm first(#1)?}}
\newcommand{\cisnotfirst}[1]{\overline{{\rm first}(#1)}{\rm ?}}
\newcommand{\cislast}[1]{{\rm last(#1)?}}
\newcommand{\cisnotlast}[1]{\overline{{\rm last}(#1)}{\rm ?}}
\newcommand{\locations}{Q}
\newcommand{\alocation}{q}
\newcommand{\alocationbis}{r}
\newcommand{\transitions}{\delta} 
\newcommand{\instructions}{I}
\newcommand{\ainstruction}{{\rm instr}}
\newcommand{\achain}{\alpha}
\newcommand{\arun}{\rho}
\newcommand{\atransition}{t}
\newcommand{\interval}[2]{[#1,#2]}
\newcommand{\acounter}{\mathtt{c}}
\newcommand{\vertuple}[4]{
  \left(
  \begin{array}{c}
    #1 \\ #2 \\ #3 \\ #4
  \end{array}
  \right)
}
\newcommand{\odd}{O}
\newcommand{\even}{E}
\newcommand{\oddin}{O_{\mathit{in}}}
\newcommand{\evenen}{E_{\mathit{fi}}}
\newcommand{\avariablebp}{\avariableter_{b}^{1}}
\newcommand{\avariableep}{\avariableter_{b}^{2}}
\newcommand{\reverse}[1]{\widetilde{#1}}
\newcommand{\aword}{u}
\newcommand{\avass}{\mathcal{A}}
\newcommand{\apcpinst}{\mathit{pcp}}
\newcounter{openproblem}
 \def\theopenproblem{\arabic{openproblem}}
\newcommand{\emsot}{\textup{EMSO}\ensuremath{{}^2(+1,<,\sim)}}
\newcommand{\emsotraneq}{\textup{EMSO}\ensuremath{{}^2(<,\sim)}}
\newcommand{\emsotk}{\textup{EMSO}\ensuremath{{}^2(+1, \dotsc, +k,<,\sim)}}
\newcommand{\emsotw}{\textup{EMSO}\ensuremath{{}^2(+1,<)}}
\newcommand{\femsot}{\textit{forward-}\emsot}
\newcommand{\femsotk}{\textit{forward-}\emsotk}
\newcommand{\coloneqq}{::=}
\newcommand{\X}{\ensuremath{{\sf X}}}
\newcommand{\F}{\ensuremath{{\sf F}}}
\newcommand{\U}{\ensuremath{{\sf U}}}
\newcommand{\ltlrepeat}{\textsf{LTL}(\U, \F_=, \F_{\neq}, \set{\X^k_=, \X^k_{\neq}}_{k \in \Nat})}
\newcommand{\G}{\ensuremath{{\sf G}}}
\newcommand{\cut}[1]{}
\newcommand{\D}{\mathbb{D}}
\newcommand{\pcpdir}{MPCP$^{\textit{dir}}$}
\newcommand{\minupfunc}{\mathit{minup}}
\newcommand{\SAT}[1]{{\rm SAT(}#1{\rm )}}
\newcommand{\size}[1]{| #1 |}
\newcommand{\variabs}{\textit{vars}}
\newcommand{\subft}[1]{\textit{sub}_{\tup{ }}\!(#1)}
\newcommand{\mainlogiceq}{\mainlogic^\approx}
\newcommand{\pmainlogiceq}{\pmainlogic^\approx}
\newcommand{\adatum}{\mathfrak{d}}
\newcommand{\adatumbis}{\mathfrak{e}}
\newcommand{\maxupfunc}{\mathit{maxup}}
\newcommand{\counternb}{D} 
\newlength{\ml}
\tikzstyle{atom}=[rectangle split, rectangle split draw splits =
\tikzstyle{state}=[rectangle, draw=black, rounded corners=0.05\ml,
\tikzstyle{flchop}=[rectangle, draw=black, rounded corners=0.05\ml,
\tikzstyle{flchcnd}=[diamond, draw=black, aspect=2]
\tikzstyle{traceop}=[rectangle, draw=black,
\newif\ifLONG\LONGtrue 
\begin{document}

\title[Reasoning about Data Repetitions with Counter Systems]
      {Reasoning about Data Repetitions \\with Counter Systems\rsuper*}

\author[S.~Demri]{St{\'e}phane Demri\rsuper a}	
\address{{\lsuper a}LSV, ENS Cachan, CNRS, Universit\'e Paris-Saclay,
  94235 Cachan, France}
	
\author[D.~Figueira]{Diego Figueira\rsuper b}
\address{{\lsuper b}LaBRI, CNRS, 33405 Talence, France}	

\author[M.~Praveen]{M.~Praveen\rsuper c}	
\address{{\lsuper c}CMI, Chennai, India}	
\titlecomment{{\lsuper*}Complete version of~\cite{Demri&Figueira&Praveen13}.}
\thanks{
Work supported by project
REACHARD ANR-11-BS02-001, and FET-Open Project FoX, grant agreement
233599. M. Praveen was supported by the ERCIM ``Alain Bensoussan''
Fellowship Programme.}
	
\begin{abstract}\makeatletter{}We study linear-time temporal logics interpreted over data words with
multiple attributes. We restrict the atomic formulas to equalities of
attribute values in successive positions and to repetitions of
attribute values in the future or past.  We demonstrate
correspondences between satisfiability problems for logics and
reachability-like decision problems for counter systems. We show that
allowing/disallowing atomic formulas expressing repetitions of values
in the past corresponds to the reachability/coverability problem in
Petri nets. This gives us \twoexpspace{} upper bounds for several
satisfiability problems.  We prove matching lower bounds by
reduction from a reachability problem for a newly introduced class of
counter systems. This new class is a succinct version of vector
addition systems with states in which counters are accessed via pointers, a
potentially useful feature in other contexts.  We strengthen further
the correspondences between data logics and counter systems by
characterizing the complexity of fragments, extensions and variants of
the logic. For instance, we precisely characterize the relationship
between the number of attributes allowed in the logic and the number
of counters needed in the counter system.
\cut{
We study linear-time temporal logics interpreted over data words 
with multiple attributes in which atomic formulas are restricted to  equalities between
values of successive positions and to repetitions of values in the future or past. 
We show correspondences between logical satisfiability problems and reachability-like
decision problems for counter systems, as the standard problems for Petri nets. 
We demonstrate that allowing/disallowing atomic formulas 
expressing past repetitions of values corresponds to the reachability/coverability problem in Petri nets. 
This gives us an 2EXPSPACE upper bounds for several logical satisfiability problems. 
We prove matching lower bounds by reduction from a reachability problem for a newly introduced class of 
counter systems (a succinct version of VASS) 
for which counters are accessed via pointers, a potentially useful feature in other contexts.
We strenghen further the correspondences between data logics and counter systems by
characterizing the complexity of fragments, extensions and variants.
For instance, we prove that reasoning about pairs of repeating values leads to 
undecidability and we characterize the complexity of 
several fragments of first-order data logics.
}

\cut{
We study the satisfiability problem of temporal logics that reason
about repetitions of values in words over infinite alphabets. We
reduce the satisfiability problem to different decision problems for
Petri nets, depending on the expressivity of the logic.

We focus on a linear temporal logic $\mainlogic$ (``Logic of
Repeating Values'') interpreted over words with multiple data in which
atomic formulae can state either local (in)equalities or
(non)repetitions of values in the future (future obligations).  The
logic $\mainlogic$ extends known logics for which satisfiability is
shown decidable by reducing them to the reachability problem for Petri
nets.  We show that adding the ability to reason about past
obligations makes the satisfiability problem equivalent to the
reachability problem for Petri nets. Without past obligations, the
satisfiability problem is equivalent to the covering problem for Petri
nets. In the reduction to the covering problem, there is an
exponential blow up. We show that this is unavoidable by formalising
the power of $\mainlogic$ to reason about subsets of variables. We
show that allowing the logic to reason about repetitions of two
variables simultaneously will lead to undecidability.
\ifLONG
We also show \pspace-completeness for several $\mainlogic$ fragments
and undecidability when repetitions of values can be done by pairs.
All the decidability and complexity results can be lifted to the case
with infinite words with multiple data or to the case with a finite
amount of MSO-definable temporal operators.  Finally, we take
advantage of these results to establish complexity results for
fragments of first-order logic over data words and to fragments of
freeze LTL with a single register. 
\else
Other results for fragments, extensions or variants are presented in
the paper. 
\fi
}
 
\end{abstract}\maketitle
\makeatletter{}\section{Introduction}
\label{section-introduction}
\paragraph{\bf Words with multiple data}
Finite data words~\cite{Bouyer02} are ubiquitous structures that include  timed words, 
runs of counter automata or runs of concurrent programs with an unbounded number
of processes. These are finite words in which every position carries a label from a finite alphabet and a 
data value from some infinite alphabet.
More generally, structures over an infinite alphabet
provide an adequate abstraction for objects from several domains: 
for example, infinite runs of counter automata
can be viewed as infinite data words, 
finite arrays are finite data words~\cite{Alur&Cerny&Weinstein12},
finite data trees model XML documents with 
attribute values~\cite{Figueira10} and so on.
A wealth of specification formalisms for data words (or slight variants)
has been introduced stemming from automata, see e.g.~\cite{Neven&Schwentick&Vianu04,Segoufin06},
to adequate logical languages such as first-order logic~\cite{BDMSS06:journal,David09}
or temporal 
logics~\cite{Lisitsa&Potapov05,Kupferman&Vardi06,Laz06,Figueira10,Kara&Schwentick&Zeume10,Figueira11}
(see also a related formalism in~\cite{Fitting02}).
Depending on the type of structures, other formalisms have been considered such as
XPath~\cite{Figueira10} or monadic second-order logic~\cite{Bolligetal12}.  
In full generality, most formalisms lead to undecidable decision problems and a well-known
research trend 
consists of finding a good trade-off
between expressiveness and decidability. Restrictions to regain decidability are protean:
bounding the models (from trees to words for instance), 
restricting the number of variables, see e.g.~\cite{BDMSS06:journal},
limiting the set of the temporal operators or the use of the data manipulating operator, 
see e.g.~\cite{Figueira&Segoufin09,Demri&DSouza&Gascon12}. 
As far as classes of automata for data languages  are concerned, other questions arise
related to closure properties or to logical characterisations,
see e.g.~\cite{BDMSS06:journal,Bojanczyk&Lasota10,Kara&Schwentick&Tan12}. 
Moreover, 
interesting and surprising results have been exhibited about
relationships between
logics for data words and counter  
automata (including vector addition systems with 
states)~\cite{BDMSS06:journal,DL-tocl08,Bojanczyk&Lasota10},
leading to a first classification
of automata on data words~\cite{Bojanczyk&Lasota10,Bollig11}. 
This is why logics for 
data words are not only interesting for their own sake but also for their deep
relationships with data automata or with counter automata. Herein, we  pursue
further this line of work and we work with words in which every
position contains a \emph{vector} of data values.\paragraph{\bf Motivations}
In~\cite{Demri&DSouza&Gascon12}, a decidable linear-time temporal  logic interpreted over 
(finite or infinite) sequences of variable valuations (understood as words with multiple data)
is introduced in which the atomic formulae
are of the form either $\avariable \oblieqlocal \mynext^i \avariablebis$ or 
$\oblieq{\avariable}{\avariablebis}{\top}$. 
The formula $\avariable \oblieqlocal \mynext^i \avariablebis$ 
states that the current value of variable $\avariable$ is the same as the value of $\avariablebis$
$i$ steps ahead (local constraint) whereas 
$\oblieq{\avariable}{\avariablebis}{\top}$ states that the current 
value of $\avariable$
is repeated in a future value of $\avariablebis$ (future obligation). 
Such atomic properties can be naturally expressed with a \emph{freeze}  
operator that stores a data value for later comparison, and in~\cite{Demri&DSouza&Gascon12}, it is shown that
the satisfiability problem is decidable with the  temporal operators 
in $\set{\mynext, \previous, \until, \since}$. 
The freeze operator allows to store a data value in a register and then to test later 
equality between the value in the register and a data value at some other position.
This is a powerful mechanism but the logic in \cite{Demri&DSouza&Gascon12} uses it in a limited way: 
only repetitions of data values can be 
expressed and 
it restricts very naturally the use of the freeze operator. 
The decidability result is  robust
since it holds for finite or infinite 
sequences, for any set of MSO-definable temporal
operators and 
with the addition of atomic formulas of the form 
$\poblieq{\avariable}{\avariablebis}{\top}$ stating that 
 the current value of $\avariable$
is repeated in a past value of $\avariablebis$ (past obligation). 
Decidability can be shown either by reduction into $\fo{2}{\sim,<,+\omega}$, a first-order logic
over data words introduced in~\cite{BDMSS06:journal} or by reduction into the verification
of fairness properties in Petri nets, 
shown decidable in~\cite{Jancar95}. In both cases, an essential use  
of the decidability of the reachability problem for Petri nets is made, for which no primitive
recursive algorithm is known, see e.g.~\cite{Leroux11} (see also a first upper bound 
established recently in~\cite{Leroux&Schmitz15}).
 Hence, even though the logics shown 
decidable 
in~\cite{Demri&DSouza&Gascon12} poorly use the freeze 
operator 
(or equivalently, the only
properties about data are related to controlled repetitions), 
the complexity of their
satisfiability problems is unknown. Moreover, it is unclear whether the reductions into
the reachability problem for Petri nets are really 
needed; this would be the case if
reductions in the other direction exist.  
Note that in~\cite{Kara&Schwentick&Zeume10}, a richer logic BD-LTL has been introduced
and it has been shown that satisfiability is equivalent to the reachability problem for 
Petri nets. Moreover, in~\cite{Deckeretal14}, two fragments BD-LTL$^{-}$ and BD-LTL$^{+}$
of that richer logic
have been introduced and shown to admit \twoexpspace{}-complete satisfiability problems.
Forthcoming logic $\mainlogic^{\top}$ is shown in~\cite{Deckeretal14} to be strictly
less expressive than BD-LTL$^{+}$.

Our main motivation is to investigate logics that  express
repetitions of values, revealing the correspondence between
expressivity of the logic and reachability problems for counter machines, including
well-known problems for Petri
nets. This work can be seen as a study of the precision with which
counting needs to be done as a consequence of having a mechanism for
demanding ``\emph{the current data value is repeated in the
future/past}''  in a logic. 
Hence, this is not the study of yet another logic, but of a natural feature 
shared by most studied logics on data words~\cite{Demri&DSouza&Gascon12,BDMSS06:journal,DL-tocl08,Figueira&Segoufin09,Kara&Schwentick&Zeume10,Figueira10,Figueira11}: the property of demanding that a data value be repeated. 
We consider different ways in which one can demand the repetition of a value, and study the repercussion in terms of the ``precision'' with which we
need to count in order to solve the satisfiability problem. Our
measurement of precision here  distinguishes the reachability
versus the coverability problem for Petri nets and the number of
counters needed as a function of the number of variables used in the
logic.

\paragraph{\bf Our contribution}
 We introduce the  linear-time temporal logic $\mainlogic$ 
(``Logic of Repeating Values'')
interpreted over finite words with multiple data, equipped with atomic formulas of the form either
$\avariable \oblieqlocal \mynext^i \avariablebis$ or 
$\oblieq{\avariable}{\avariablebis}{\aformula}$ [resp.
$\oblineq{\avariable}{\avariablebis}{\aformula}$],
where  
$\oblieq{\avariable}{\avariablebis}{\aformula}$ [resp.\ $\oblineq{\avariable}{\avariablebis}{\aformula}$] 
states that the current value of $\avariable$
is repeated [resp.\ is not repeated] in some future value of $\avariablebis$ in a position where 
$\aformula$ holds true. 
When we impose $\aformula = \top$, the logic introduced
in~\cite{Demri&DSouza&Gascon12} is obtained and it is denoted by  $\mainlogic^{\top}$
(a different name is used in~\cite{Demri&DSouza&Gascon12}, namely CLTL$^{\mathtt{XF}}$). 
Note that the  syntax
 for future obligations is freely inspired from 
 propositional dynamic logic PDL with its test operator `?'.
Even though $\mainlogic$ contains the past-time temporal operators $\previous$ and $\since$,
it  has no past obligations. We write $\pmainlogic$ to denote the extension
of $\mainlogic$  with past obligations of the form 
$\poblieq{\avariable}{\avariablebis}{\aformula}$ or 
$\poblineq{\avariable}{\avariablebis}{\aformula}$. 
Figure~\ref{figure-logics} illustrates how $\mainlogic$ and variants are compared
to existing data logics. 
\begin{figure}
\begin{center}
\scalebox{0.8}{
\begin{tikzpicture}[->,>=stealth',shorten >=1pt, node distance=1.5cm, thick,auto]

  \tikzstyle{every state}=[fill=white,draw=black,text=black,minimum size=6mm]
  \tikzstyle{stlab}=[node distance=6mm]

  \node (1)    {BD-LTL~\cite{Kara&Schwentick&Zeume10} ($\equiv$ Reach(VASS))};
  \node (2)  [right=1cm of 1] {LTL$_1^{\downarrow}(\mynext, \mynext^{-1},\until,\since)$~\cite{DL-tocl08} (undec.)};
  \node (3)  [below right=.3cm and .5cm of 1] {$\pmainlogic$};  
  \node (4)  [below right=.3cm and .5cm of 3] {$\mainlogic$};  
  \node (5) [left=1.3cm of 4] {$\pmainlogic^{\top}$ =  CLTL$^{\mathtt{XF},\mathtt{XF}^{-1}}$~\cite{Demri&DSouza&Gascon12}}; 
  \node (6) [below=1.3cm of 3] {$\mainlogic^{\top}$ = CLTL$^{\mathtt{XF}}$~\cite{Demri&DSouza&Gascon12}};
  
   \path[-] (1) edge             node {} (3);
   \path[-] (3) edge             node {} (4);
   \path[-] (3) edge             node {} (5);
   \path[-] (5) edge             node {} (6);
   \path[-] (4) edge             node {} (6);
   \path[-] (2) edge node {(1 attribute)} (3);
                       
\end{tikzpicture}}
\end{center}
\caption{Placing $\mainlogic$ and variants in the family of data logics}
\label{figure-logics}
\end{figure}

Our main results are listed below.
\begin{enumerate}[label=\roman*.]
\item We begin where~\cite{Demri&DSouza&Gascon12} stopped: the
  reachability problem for Petri nets is reduced to the satisfiability
  problem of $\pmainlogic$ (i.e., the logic with past obligations).
\item Without past obligations, the satisfiability problem is much
  easier: we reduce the satisfiability problem of
  $\mainlogic^{\top}$ and $\mainlogic$ to the control-state reachability problem for
  VASS, via a detour to a reachability problem on gainy VASS. But the number of counters in the VASS is
  exponential in the number of variables used in the formula. This
  gives us a \twoexpspace{} upper bound.
\item\label{it:chainAutomata} The exponential blow up mentioned above
  is unavoidable: we show a polynomial-time reduction in the
  converse direction, starting from a linear-sized counter machine
  (without zero tests) that can access exponentially many counters.
  This gives us a matching \twoexpspace{} lower bound.
\item Several augmentations to the logic do not alter the complexity: we
  show that complexity is preserved when MSO-definable temporal
  operators are added or when infinite words with multiple data are
  considered.
\item The power of nested testing formulas: we show that the
  complexity of the satisfiability problem for $\mainlogic^{\top}$
  reduces to \pspace{}-complete when the number of variables in the
  logic is bounded by a constant, while the complexity of the
  satisfiability of $\mainlogic$ does not reduce even when only one
  variable is allowed.  Recall that the difference between
  $\mainlogic^{\top}$ and $\mainlogic$ is that the later allows any
  $\aformula$ in $\oblieq{\avariable}{\avariablebis}{\aformula}$ while
  the former restricts $\aformula$ to just $\top$.
\item The power of pairs of repeating values: we show that the
  satisfiability problem of $\mainlogic^{\top}$ augmented with
  $\oblieq{\langle \avariable, \avariablebis \rangle}{\langle
  \avariable', \avariablebis' \rangle}{\top}$ (repetitions of pairs of
  data values) is undecidable, even when $\poblieq{\langle \avariable,
  \avariablebis \rangle}{\langle \avariable', \avariablebis'
  \rangle}{\top}$ is not allowed (i.e., even when past obligations are
  not allowed).
\item Implications for classical logics: we show a
  3\expspace{} upper bound for the satisfiability problem for
  \femsot{} over data words, using  results on $\mainlogic$.
\end{enumerate}
For proving the result mentioned in point~\ref{it:chainAutomata}
above, we introduce a new class of counter machines that we call 
\defstyle{chain systems} and show a key hardness result for them. This class is
interesting for its own sake and could be used in situations where the
power of binary encoding needs to be used. We prove the
$(k+1)$\expspace{}-completeness of the control state reachability
problem for chain systems of level $k$ (we only use $k=1$ in this
paper but the proof for arbitrary $k$ is no more complex than the
proof for the particular case of $k=1$). In chain systems, the number
of counters is equal to an exponential tower of height $k$ but we cannot 
access the counters directly in the transitions. Instead, we have
a pointer that we can move along a chain of counters. The
$(k+1)$\expspace{} lower bound is obtained by a non-trivial extension
of the \expspace{}-hardness result from \cite{Lipton76,Esparza98}. Then we show
that the control state reachability problem for the class of chain
systems with $k=1$ can be reduced to the satisfiability problem for
$\mainlogic$ (see Section~\ref{section-simulating}).
It was known that data logics are strongly related to classes of counter automata, 
      see e.g.~\cite{BDMSS06:journal,DL-tocl08,Bojanczyk&Lasota10} but 
      herein, we show how varying the expressive power of logics leads to
      correspondence with different reachability  problems for counter machines. 
\makeatletter{}\section{Preliminaries}\label{section-preliminaries}\ifLONG
We write $\Nat$ [resp.\ $\Zed$] to denote  the set of non-negative integers [resp.\ integers]
and $\interval{i}{j}$ to denote  the set $\set{k \in \Zed: i \leq k \ {\rm and} \ k \leq j}$.
For every $\vect{v}\in\Zed^n$, $\vect{v}(i)$ 
denotes the $i^{th}$ element of $\vect{v}$ for every $i\in \interval{1}{n}$.
We write $\vect{v} \preceq \vect{v'}$ whenever for every $i \in \interval{1}{n}$, we have
$\vect{v}(i) \leq \vect{v'}(i)$. 
For a (possibly infinite) alphabet $\aalphabet$, $\aalphabet^*$ represents the set of finite words over $\aalphabet$, 
$\aalphabet^+$ the set of finite non-empty words over $\aalphabet$.  
For a finite word $\aword=\aletter_1\ldots \aletter_k$ over $\aalphabet$, 
we write $\length{\aword}$ to denote its \defstyle{length} $k$. 
For every $0 \leq i < \length{\aword}$, $\aword(i)$ represents the $(i+1)$-th letter of the word, here $\aletter_{i+1}$. 
We use $\card{\aset}$ to denote the number of elements of a finite set $\aset$.

\subsection{Logics of Repeating Values}
Let $\Var = \set{\avariable_1, \avariable_2, \dots}$ be a countably infinite set of
\defstyle{variables}. We denote by $\dcltl$ the logic whose formulas are defined
as follows: 
\[
\aformula ::= \avariable \oblieqlocal \mynext^i \avariablebis \mid
\oblieq{\avariable}{\avariablebis}{\aformula} \mid
\oblineq{\avariable}{\avariablebis}{\aformula} 
\mid \aformula \wedge \aformula \mid \neg \aformula \mid \mynext
\aformula \mid \aformula \until \aformula \mid \mynext^{-1} \aformula
\mid \aformula \since \aformula
\]
where $\avariable, \avariablebis \in \Var$ and $i \in \Nat$.
Formulas of one of the forms
$\avariable \oblieqlocal \mynext^i \avariablebis$,
$\oblieq{\avariable}{\avariablebis}{\aformula}$ or 
$\oblineq{\avariable}{\avariablebis}{\aformula}$
are said to be \defstyle{atomic} and an expression
of the form $\mynext^i \avariable$ (abbreviation for $i$ next symbols followed by
a variable) is called a \defstyle{term}. 
 
A \defstyle{valuation} is  a map from $\Var$ to $\Nat$, and a \defstyle{model} is a
finite non-empty sequence $\sigma$ of valuations. 
 All the
subsequent developments can be equivalently done with the domain
$\Nat$ replaced by an infinite set $\D$ since only equality tests are
performed in the logics.  
 
We write $\length{\sigma}$ to denote the length of
$\sigma$. For every
model $\sigma$ and $0 \leq i < \length{\sigma}$, the satisfaction
relation $\models$ is defined inductively as follows.
Note that the temporal operators next ($\mynext$), 
previous ($\previous$), until ($\until$) and since ($\since$) and Boolean connectives are defined in the 
usual way.

\cut{
\begin{itemize}
\itemsep 0 cm 
\item $\sigma, i \models \avariable \oblieqlocal \mynext^{j} \avariablebis$ iff $i+j <
  \length{\sigma}$ and $\sigma(i)(\avariable) = \sigma(i+j)(\avariablebis)$,
\item $\sigma, i \models \oblieq{\avariable}{\avariablebis}{\aformula}$ 
  iff there exists $j$ such that $i < j <
  \length{\sigma}$, $\sigma(i)(\avariable) = \sigma(j)(\avariablebis)$
  and $\sigma, j \models \aformula$,
\item $\sigma, i \models \oblineq{\avariable}{\avariablebis}{\aformula}$ 
  iff there exists $j$ such that $i < j <
  \length{\sigma}$, $\sigma(i)(\avariable) \neq \sigma(j)(\avariablebis)$
  and $\sigma, j \models \aformula$,
\item $\sigma,i \models \aformula \wedge \aformula'$ iff $\sigma,i
  \models \aformula$ and $\sigma,i \models \aformula'$,
\item $\sigma,i \models \neg \aformula$ iff $\sigma,i \not \models
  \aformula$,
\item $\sigma,i \models \mynext \aformula$ iff $i+1 <
  \length{\sigma}$ and $\sigma,i+1 \models \aformula$,
\item $\sigma,i \models \mynext^{-1} \aformula$ iff $i>0$ and 
  $\sigma,i-1 \models \aformula$, 
\item $\sigma,i \models \aformula \until \aformula'$ iff there is $i
  \leq j < \length{\sigma}$ such that $\sigma,j \models
  \aformula'$ and for every $i \leq l <j,$ we have $\sigma,l \models
  \aformula$.
\item $\sigma,i \models \aformula \since \aformula'$ iff there is $0
  \leq j \leq i$ such that $\sigma,j \models \aformula'$ and for every 
  $j < l \leq i$ we have $\sigma,l \models \aformula$.
\end{itemize}
}
\begin{center}
\begin{tabular}{lcl}
$\sigma, i \models \avariable \oblieqlocal \mynext^{j} \avariablebis$ & iff & $i+j <
  \length{\sigma}$ and $\sigma(i)(\avariable) = \sigma(i+j)(\avariablebis)$ \\

$\sigma, i \models \oblieq{\avariable}{\avariablebis}{\aformula}$  & iff &
   there exists $j$ such that $i < j <
  \length{\sigma}$, \\
& & $\sigma(i)(\avariable) = \sigma(j)(\avariablebis)$
  and $\sigma, j \models \aformula$\\

$\sigma, i \models \oblineq{\avariable}{\avariablebis}{\aformula}$  & iff &
there exists $j$ such that $i < j <
  \length{\sigma}$, \\
& & $\sigma(i)(\avariable) \neq \sigma(j)(\avariablebis)$
  and $\sigma, j \models \aformula$ \\

$\sigma,i \models \aformula \wedge \aformula'$ & iff & $\sigma,i
  \models \aformula$ and $\sigma,i \models \aformula'$ \\

$\sigma,i \models \neg \aformula$ & iff & $\sigma,i \not \models
  \aformula$ \\

$\sigma,i \models \mynext \aformula$ & iff & $i+1 <
  \length{\sigma}$ and $\sigma,i+1 \models \aformula$ \\

$\sigma,i \models \mynext^{-1} \aformula$ & iff & $i>0$ and
  $\sigma,i-1 \models \aformula$ \\

$\sigma,i \models \aformula \until \aformula'$ & iff & there is $i
  \leq j < \length{\sigma}$ such that $\sigma,j \models
  \aformula'$ and \\
& &  for every $i \leq l <j,$ 
      we have $\sigma,l \models
  \aformula$ \\ 

$\sigma,i \models \aformula \since \aformula'$ & iff & there is $0
  \leq j \leq i$ such that $\sigma,j \models \aformula'$ and  \\
& & for every 
  $j < l \leq i$ we have $\sigma,l \models \aformula$.
\end{tabular}
\end{center}

\noindent We write $\sigma \models \aformula$ if $\sigma, 0 \models \aformula$.
We  use the standard derived temporal operators ($\always$,
$\sometimes$, $\pastsometimes$, \ldots), and derived Boolean operators
($\vee$, $\Rightarrow$, \ldots) and constants $\top$, $\bot$. 

We also use the notation $\mynext^i \avariable \oblieqlocal
\mynext^j \avariablebis$ as an abbreviation for the formula $\mynext^i(\avariable \oblieqlocal \mynext^{j-i}
\avariablebis)$ (assuming without any loss of generality that $i \leq j$).
Similarly, $\mynext^{j} \avariablebis \oblieqlocal  \avariable$ is an abbreviation
for  $ \avariable  \oblieqlocal \mynext^{j} \avariablebis $.

Given a set of temporal operators $\mathcal{O}$ definable from those in
$\set{\mynext,\previous, \since,\until}$ and a natural number $k \geq
0$, we write $\dcltl_k(\mathcal{O})$ to denote the fragment of
$\dcltl$ restricted to formulas with temporal operators from
$\mathcal{O}$ and with at most $k$ variables.
The \defstyle{satisfiability problem} for
$\mainlogic$ (written \SAT{$\mainlogic$}) is to
check for a given $\mainlogic$ formula $\aformula$, whether there exists a 
model $\sigma$ such that $\sigma \models
\aformula$.
Note that there is a logarithmic-space reduction from the satisfiability problem for
$\mainlogic$ to its restriction where atomic formulas of the form 
$\avariable \oblieqlocal \mynext^i \avariablebis$ satisfy $i \in \set{0,1}$ (at the cost of introducing new variables).

Let $\pmainlogic$ be the extension of $\mainlogic$ with
additional atomic formulas of the form $\poblieq{\avariable}{\avariablebis}{\aformula}$ and $\poblineq{\avariable}{\avariablebis}{\aformula}$.
The satisfaction relation is extended as follows:
\begin{center}
\begin{tabular}{lcl}
$\sigma, i \models \poblieq{\avariable}{\avariablebis}{\aformula}$ & iff &
       there is $0 \leq j < i$ such that
       $\sigma(i)(\avariable) = \sigma(j)(\avariablebis)$ and $\sigma, j \models \aformula$ \\
$\sigma, i \models \poblineq{\avariable}{\avariablebis}{\aformula}$ & iff &
       there is $0 \leq j < i$ such that
       $\sigma(i)(\avariable) \neq \sigma(j)(\avariablebis)$ and $\sigma, j \models \aformula$.
\end{tabular}
\end{center}

\noindent We write $\mainlogic^{\top}$ [resp.\ $\pmainlogic^{\top}$]  to denote the fragment of $\mainlogic$ 
[resp.\ $\pmainlogic$] in which atomic formulas are restricted to
$\avariable \oblieqlocal \mynext^i \avariablebis$ and
$\oblieq{\avariable}{\avariablebis}{\top}$ [resp.\ 
$\avariable \oblieqlocal \mynext^i \avariablebis$, $\oblieq{\avariable}{\avariablebis}{\top}$ and 
$\poblieq{\avariable}{\avariablebis}{\top}$].  These are precisely the fragments considered 
in~\cite{Demri&DSouza&Gascon12} and shown decidable by reduction into 
the reachability problem
for Petri nets. 

\begin{prop} \label{proposition-diamond}
\cite{Demri&DSouza&Gascon12}
\begin{enumerate}[label=(\Roman*)]
\item Satisfiability problem for $\mainlogic^{\top}$ is decidable (by reduction to the reachability problem
for Petri nets).
\item Satisfiability problem for $\mainlogic^{\top}$ restricted to a single variable
is \pspace-complete.  
\item Satisfiability problem for $\pmainlogic^{\top}$ is decidable (by reduction to the reachability problem
for Petri nets). 
\end{enumerate}
\end{prop}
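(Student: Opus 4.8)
The plan is to prove all three parts through a single \emph{symbolic abstraction} that replaces concrete data values by a finite amount of ``repetition information'' and then counts the outstanding repetition demands with a counter system. First I would fix a formula $\aformula$, note that it mentions only finitely many variables and has a bounded maximal $\mynext$-window width, and define a \emph{symbolic model} to be a word over a finite alphabet in which each position records (a) the equivalence relation induced on the relevant terms $\mynext^i \avariable$ inside the window by value equality, and (b) a truth assignment to the subformulas of $\aformula$ consistent with the usual one-step LTL constraints for $\mynext,\previous,\until,\since$. These local consistency conditions are recognised by a finite automaton $\Asymb$ of size singly exponential in $|\aformula|$, exactly as in the automata-theoretic treatment of $\ltl$ over finite words; the only genuinely data-dependent atoms still to handle are the future obligations $\oblieq{\avariable}{\avariablebis}{\top}$ and, for part~(III), the past obligations $\poblieq{\avariable}{\avariablebis}{\top}$.

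The obligations are what force counters. Reading the symbolic word left to right, I maintain one counter per \emph{obligation profile}, where a profile is a set $S$ of variables and the counter for $S$ records how many currently ``live'' values $\adatum$ carry exactly the pending demand ``$\adatum$ must reappear as $\avariablebis$ for each $\avariablebis \in S$''. When a position is processed, the automaton guesses which variable slots carry fresh values and which reuse a live value; a reuse discharges the matching part of a profile (decrementing the old counter, incrementing the residual one), while asserting $\oblieq{\avariable}{\avariablebis}{\top}$ creates a new demand (incrementing the appropriate counter). A symbolic run corresponds to a genuine model iff it can be executed so that \emph{every} obligation is eventually discharged, i.e.\ iff the counter system reaches the configuration in which all obligation counters equal $0$. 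This is a reachability instance for a Petri net, so decidability of \SAT{$\mainlogic^{\top}$} follows from decidability of Petri net reachability, giving part~(I). For part~(III) the past obligations are treated dually: a demand $\poblieq{\avariable}{\avariablebis}{\top}$ must be matched against a value produced \emph{earlier}, so the target marking has to balance supply and demand exactly in both time directions; since the matching must be exact rather than merely coverable, one again lands on the reachability problem (not coverability) for Petri nets, yielding decidability of \SAT{$\pmainlogic^{\top}$}.

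I expect the main obstacle to be the soundness and completeness of this abstraction: showing that any accepting, zero-reaching run of the counter system can be realised by an honest reassignment of concrete data values that simultaneously respects all window equalities and all (non)repetition atoms, and conversely that every real model projects to a legal symbolic run. The delicate point is that one data value may witness several demands at once while being constrained by several local equalities across overlapping windows, so one must argue that the nondeterministic bookkeeping of profiles never over-commits. This is essentially a careful interchange-and-renaming argument, and it is where the granularity of ``profile'' has to be gotten exactly right.

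Finally, for part~(II) I would specialise to a single variable $\avariable$. Now the only atoms are $\avariable \oblieqlocal \mynext^i \avariable$ and $\oblieq{\avariable}{\avariablebis}{\top}$ with $\avariablebis = \avariable$, so there is a single obligation profile and the counter system collapses to a finite automaton equipped with just one counter (incremented by a new demand, decremented by a repetition) and no zero-tests. For the upper bound I would establish a small-model property bounding both the length of a shortest model and the reachable counter value by a singly exponential quantity in $|\aformula|$; a nondeterministic procedure then guesses the symbolic model position by position while storing the single counter in binary and the bounded equality pattern within the current $\mynext$-window, all in polynomial space, so by Savitch's theorem the problem lies in \pspace. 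For the matching lower bound I would reduce from a standard \pspace-complete problem, such as validity of quantified Boolean formulas, or directly from \pspace-hardness of $\ltl$ satisfiability over finite words, encoding propositional atoms by data-equality gadgets built from $\avariable \oblieqlocal \mynext \avariable$ and the obligation atom so that the temporal skeleton alone realises the hardness. Combining the two bounds gives \pspace-completeness, which is part~(II).
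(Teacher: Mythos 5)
Your proposal for parts (I) and (II) follows essentially the same route as the paper, which does not reprove this proposition but cites \cite{Demri&DSouza&Gascon12} and recalls the underlying construction in Lemma~\ref{lem:APhiFromEarlierPaper} and Appendix~\ref{section-appendix-aphifrompreviouspaper}: symbolic models built from frames (the equality type of the window terms $\mynext^i\avariable$ together with the obligation atoms, glued by Vardi--Wolper-style one-step consistency), and a VASS whose counters are indexed by the nonempty subsets of $\set{\avariable_1, \ldots, \avariable_k}$ and count the live values whose exact pending-repetition set is that subset, satisfiability being equivalent to reaching the all-zero counter configuration from the all-zero one. Your single-variable analysis (one counter, no zero tests, exponential small-model bounds, on-the-fly guessing in polynomial space plus Savitch, and hardness from one-proposition LTL encoded by $\avariable \oblieqlocal \mynext \avariable$) is likewise the route the paper takes when it re-derives and generalizes this result in Proposition~\ref{proposition-pspace}.

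Part (III), however, contains a genuine gap. Past obligations are not dual to future ones under a left-to-right counting abstraction, and the mechanism you describe --- matching each $\poblieq{\avariable}{\avariablebis}{\top}$ against an earlier ``supply'' so that the target marking balances supply and demand exactly in both time directions --- fails if implemented literally: an earlier occurrence of a value as $\avariablebis$ is \emph{not consumed} by a past obligation, since that single occurrence must be able to justify arbitrarily many later assertions $\poblieq{\avariable}{\avariablebis}{\top}$ about the same value, and negated past obligations additionally require knowing exactly which variables a value has occurred as so far. The needed repair is to enrich the profile of each counted value to a \emph{pair} (history set, pending-future set): histories only grow, so such values can never be dropped from the count; past obligations and their negations are checked at assertion time against the history component (via a decrement/re-increment update of profile counters), not at the final marking; and the final condition constrains only the future components to be zero, leaving the history-only counters unconstrained, so one needs a submarking-style target (or a draining gadget at the end) to phrase the question as Petri net reachability. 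This bookkeeping is a genuinely extra idea rather than a symmetric copy of the future-only case; once it is added, your architecture does yield the cited decidability of $\pmainlogic^{\top}$.
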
\smallskip

\noindent In~\cite{Demri&DSouza&Gascon12}, there are no reductions in the directions opposite to (I) and (III). 
The characterisation of the computational complexity for the satisfiability problems for  $\dcltl$ and  $\ddcltl$ 
remained unknown so far and this will be a contribution of the paper.

\subsection{Properties} 
In the table below, we justify
our choices for atomic formulae by presenting several abbreviations
(with their obvious semantics).
By contrast, we include in $\mainlogic$ both $\oblieq{\avariable}{\avariablebis}{\aformula}$
and $\oblineq{\avariable}{\avariablebis}{\aformula}$ when $\aformula$ is an arbitrary formula
since there is no obvious way to express one with the other.

\bigskip

\begin{center}
\begin{tabular}{|c|c|} \hline 
Abbreviation & Definition \\ \hline
$\avariable \oblineqlocal \mynext^i \avariablebis$  & 
$\neg (\avariable \oblieqlocal \mynext^i \avariablebis) \wedge 
\overbrace{\mynext \cdots \mynext}^{i \ {\rm times}} \top$ 
\\ \hline 
$\avariable \oblieqlocal \mynext^{-i} \avariablebis$ & 
$\overbrace{\mynext^{-1} \cdots \mynext^{-1}}^{i \ {\rm times}}(\avariablebis  \oblieqlocal \mynext^i \avariable )$ \\
\hline
$\avariable \oblineqlocal \mynext^{-i} \avariablebis$ & 
$\neg (\avariable \oblieqlocal \mynext^{-i} \avariablebis) \wedge 
\overbrace{\mynext^{-1} \cdots \mynext^{-1}}^{i \ {\rm times}} \top$ 
\\ \hline 
$\oblineq{\avariable}{\avariablebis}{\top}$ &  
$(\avariable \oblineqlocal \mynext \avariablebis) \vee 
       \mynext((\avariablebis \oblieqlocal \mynext \avariablebis) \until 
               (\avariablebis \oblineqlocal \mynext \avariablebis))$ \\ \hline 
 $\poblineq{\avariable}{\avariablebis}{\top}$ & 
$(\avariable \oblineqlocal \mynext^{-1} \avariablebis) \vee 
       \mynext^{-1}(
(\avariablebis \oblieqlocal \mynext^{-1} \avariablebis)
 \since 
 (\avariablebis \oblineqlocal  \mynext^{-1} \avariablebis))$ \\ \hline
\end{tabular}
\end{center}

\bigskip 
\cut{
First, let us come back to our choices for atomic formulae.
In $\mainlogic$, there is no need for 
atomic formulae stating that
`$\avariable$ is not equal to $\avariablebis$ $i$ steps ahead'
since this can be expressed  with 
$\neg (\avariable \oblieqlocal \mynext^i \avariablebis)$.
Similarly,  `$\avariable$ is not equal to $\avariablebis$ $i$ steps backward'
(which could correspond to  $\avariable \oblieqlocal \mynext^{-i} \avariablebis$)
can be expressed
with $\overbrace{\mynext^{-1} \cdots \mynext^{-1}}^{i \ {\rm times}}(\mynext^i \avariable  
       \oblieqlocal \avariablebis)$.
Moreover, in  $\mainlogic^{\top}$, there is no need for atomic formulae of the
form  $\oblineq{\avariable}{\avariablebis}{\top}$
(`$\avariable$ is not repeated in a future value of $\avariablebis$'), since they can be expressed
by  $(\avariable \oblineqlocal \mynext \avariablebis) \vee 
       \mynext((\avariablebis \oblieqlocal \mynext \avariablebis) \until 
               (\avariablebis \oblineqlocal \mynext \avariablebis))$.
The same holds for  $\poblineq{\avariable}{\avariablebis}{\top}$ in  $\pmainlogic^{\top}$.
}

\noindent Models for $\mainlogic$ can be  viewed
as finite data words  in $(\Sigma \times \D)^*$, where $\Sigma$ is a finite alphabet and $\D$ is an 
infinite domain. E.g., equalities between dedicated variables can simulate that a position
is labelled by a letter from $\Sigma$; moreover, we may assume that the data values are encoded with the
variable $\avariable$. Let us express that
whenever there are $i < j$ such that $i$ and $j$ [resp. $i+1$ and $j+1$, $i+2$ and $j+2$]  are 
labelled by $\aletter$ [resp. $\aletter'$, $\aletter''$], 
$\sigma(i+1)(\avariable) \neq \sigma(j+1)(\avariable)$.
This can be stated in $\mainlogic$ by:
$$
\always(
\aletter \wedge \mynext(\aletter' \wedge \mynext \aletter'')
\Rightarrow
\mynext \neg 
(
\oblieq{\avariable}{\avariable}{\mynext^{-1}\aletter \wedge \aletter' \wedge \mynext \aletter''}
)
).
$$
This is an example of key constraints, see e.g.~\cite[Definition 2.1]{Niewerth&Schwentick11}
and the current paper contains also numerous examples of properties that can be captured by $\mainlogic$. 

\subsection{Basics on VASS} 
A \defstyle{vector addition system with states} is a tuple 
$\avass = \triple{\locations}{\counters}{\transitions}$ where
$\locations$ is a finite set of \defstyle{control states},
$\counters$ is  a finite set of \defstyle{counters} and $\transitions$ is a finite
set of \defstyle{transitions} in $\locations \times \Zed^{\counters} \times \locations$. 
A \defstyle{configuration} of $\avass$ is a pair
$\pair{\alocation}{\vect{v}} \in \locations \times
\Nat^{\counters}$. We write $\pair{\alocation}{\vect{v}} \step{} \pair{\alocation'}{\acountval'}$ if
there is \ifLONG a transition \fi $(\alocation, \vect{u}, \alocation')\in \transitions$ such that
$\vect{v'} = \vect{v} + \vect{u}$. Let $\step{*}$ be the reflexive and transitive closure of $\step{}$. 
The reachability problem for VASS (written Reach(VASS)) 
consists of checking whether 
 $\pair{\alocation_0}{\vect{v_0}} \step{*} \pair{\alocation_f}{\vect{v_f}}$, given two configurations
$\pair{\alocation_0}{\vect{v_0}}$ and $\pair{\alocation_f}{\vect{v_f}}$. 
The reachability
problem for VASS is decidable but all known 
algorithms~\cite{Mayr84,Kosaraju82,Lambert92,Leroux11} 
take non-primitive
recursive space in the worst case. The best known lower bound is
\expspace{} \cite{Lipton76,Esparza98} whereas a first upper bound
has been recently established in ~\cite{Leroux&Schmitz15}.
The control state reachability problem consists in checking
whether $\pair{\alocation_0}{\vect{v_0}} \step{*} \pair{\alocation_f}{\vect{v}}$
for some $\vect{v} \in \Nat^{\counters}$, given a configuration  $\pair{\alocation_0}{\vect{v_0}}$
and a control state $\alocation_f$. This problem is known to be 
\expspace-complete~\cite{Lipton76,Rackoff78}. 
The relation $\step{}$ denotes the one-step transition in a perfect computation.
In the paper, we need to introduce computations with gains or with losses. We define below
the variant relations $\stepinc{}$ and $\stepdec{}$. 
We write $\pair{\alocation}{\vect{v}} \stepinc{} \pair{\alocation'}{\vect{v'}}$ if
there is a transition $(\alocation, \vect{u}, \alocation') \in \transitions$ 
and $\vect{w}, \vect{w'} \in \Nat^{\counters}$ such that
$\vect{v} \preceq \vect{w}$, $\vect{w'} = \vect{w} + \vect{u}$ and $\vect{w'} \preceq \vect{v'}$.
Let $\stepinc{*}$ be the reflexive and transitive closure of $\stepinc{}$. 
Similarly, we write $\pair{\alocation}{\vect{v}} \stepdec{} \pair{\alocation'}{\vect{v'}}$ if
there is a transition $(\alocation, \vect{u}, \alocation') \in \transitions$ 
and $\vect{w}, \vect{w'} \in \Nat^{\counters}$ such that
$\vect{w} \preceq \vect{v}$, $\vect{w'} = \vect{w} + \vect{u}$ and $\vect{v'} \preceq \vect{w'}$.
Let $\stepdec{*}$ be the reflexive and transitive closure of $\stepdec{}$. 
Counter automata with imperfect computations such as  
lossy channel systems~\cite{Abdulla&Jonsson96,Finkel&Schnoebelen01},
lossy counter automata~\cite{Mayr03} or gainy counter automata~\cite{Schnoebelen10b}
have been intensively studied (see also~\cite{Schnoebelen10}). In the paper, imperfect
computations are used with VASS in Section~\ref{section-upper-bound}.

\else\section{Preliminaries}
\label{section-preliminaries}We write $\Nat$ [resp.\ $\Zed$] to denote  the set of non-negative integers [resp.\ integers]
and $\interval{i}{j}$ to denote  $\set{k \in \Zed: i \leq k \ {\rm and} \ k \leq j}$.
For $\vect{v}\in\Zed^n$, $\vect{v}(i)$ 
denotes the $i^{th}$ element of $\vect{v}$ for every $i\in \interval{1}{n}$.
We write $\vect{v} \preceq \vect{v'}$ whenever for every $i \in \interval{1}{n}$, we have
$\vect{v}(i) \leq \vect{v'}(i)$. 
For a (possibly infinite) alphabet $\aalphabet$, $\aalphabet^*$ represents the set of finite words over $\aalphabet$, 
$\aalphabet^+$ the set of finite non-empty words over $\aalphabet$.  
For a finite word or sequence $\aword=\aletter_1\ldots \aletter_k$ over $\aalphabet$, 
we write $\length{\aword}$ to denote its \defstyle{length} $k$. 
For $0 \leq i < \length{\aword}$, $\aword(i)$ represents the $(i+1)$-th letter of the word, here $\aletter_{i+1}$. 
\altparagraph{Logics of Repeating Values}
Let $\Var = \set{\avariable_1, \avariable_2, \dots}$ be a countably infinite set of
\defstyle{variables}. We denote by $\dcltl$ the logic whose formulas are defined
as follows, where $\avariable, \avariablebis \in \Var$, $i \in \Nat$.
\begin{align*}
  \aformula &::= \avariable \oblieqlocal \mynext^i \avariablebis \mid
      \oblieq{\avariable}{\avariablebis}{\aformula} \mid
  \oblineq{\avariable}{\avariablebis}{\aformula} \mid \aformula \wedge
  \aformula \mid \neg \aformula \mid \\
&~~~~~\,\mynext \aformula \mid \aformula
  \until \aformula \mid \mynext^{-1} \aformula \mid \aformula \since
  \aformula
\end{align*}
A \defstyle{valuation} is  a map from $\Var$ to $\Nat$, and a \defstyle{model} is a
finite non-empty sequence $\sigma$ of valuations. 
For every model $\sigma$ and $0 \leq i < \length{\sigma}$, the satisfaction
relation $\models$ is defined:
\begin{itemize}
\itemsep 0 cm
\item $\sigma, i \models \avariable \oblieqlocal \mynext^{j} \avariablebis$ if{f} $i+j <
  \length{\sigma}$ and $\sigma(i)(\avariable) = \sigma(i+j)(\avariablebis)$,
\item $\sigma, i \models \oblieq{\avariable}{\avariablebis}{\aformula}$ 
  if{f} there exists $j$ such that $i < j <
  \length{\sigma}$, $\sigma(i)(\avariable) = \sigma(j)(\avariablebis)$
  and $\sigma, j \models \aformula$.
\end{itemize}
The semantics 
$\oblineq{\avariable}{\avariablebis}{\aformula}$ 
is defined similarly but asking for a different data value. 
The temporal operators next ($\mynext$), 
previous ($\previous$), until ($\until$) and since ($\since$) and Boolean connectives are defined in the 
usual way. We also use the standard derived temporal operators ($\always$,
$\sometimes$, $\pastsometimes$, \ldots) and constants $\top$, $\bot$. 
We write $\sigma \models \aformula$ if $\sigma, 0 \models \aformula$.
\ifLONG
We also use the notation $\mynext^i \avariable \oblieqlocal
\mynext^j \avariablebis$ as an abbreviation for the formula $\mynext^i(\avariable \oblieqlocal \mynext^{j-i}
\avariablebis)$ (assuming without any loss of generality that $i \leq j$).
\fi 
Given a set of temporal operators $\mathcal{O}$,
\ifLONG definable from those in $\set{\mynext,\previous, \since,\until}$ \fi  
we write $\mainlogic(\mathcal{O})$ to denote the fragment of
$\dcltl$ restricted to formulas with  operators from
$\mathcal{O}$.
The \defstyle{satisfiability problem} for
$\mainlogic$  (written \SAT{$\mainlogic$}) is to
check for an $\mainlogic$ formula $\aformula$, whether there is  
$\sigma$ such that $\sigma \models
\aformula$.
\ifLONG  
Note that there is a logarithmic-space reduction from the satisfiability problem for
$\mainlogic$ to its restriction where atomic formulas of the form 
$\avariable \oblieqlocal \mynext^i \avariablebis$ satisfy $i \in \set{0,1}$ (at the cost of introducing new variables).
\fi 
Let $\pmainlogic$ be the extension of $\mainlogic$ with
additional atomic formulas of the form $\poblieq{\avariable}{\avariablebis}{\aformula}$ and $\poblineq{\avariable}{\avariablebis}{\aformula}$.
The satisfaction relation is extended as expected:
$\sigma, i \models \poblieq{\avariable}{\avariablebis}{\aformula}$ if{f}
       there is $0 \leq j < i$ such that
       $\sigma(i)(\avariable) = \sigma(j)(\avariablebis)$ and $\sigma, j \models \aformula$,
and similarly for $\poblineq{\avariable}{\avariablebis}{\aformula}$.
We write $\mainlogic^{\top}$ [resp.\ $\pmainlogic^{\top}$]  to denote the fragment of $\mainlogic$ 
[resp.\ $\pmainlogic$] in which atomic formulas are restricted to  
$\avariable \oblieqlocal \mynext^i \avariablebis$ and
$\oblieq{\avariable}{\avariablebis}{\top}$ [resp. to $\avariable \oblieqlocal \mynext^i \avariablebis$, 
$\oblieq{\avariable}{\avariablebis}{\top}$ and
$\poblieq{\avariable}{\avariablebis}{\top}$].  In~\cite{Demri&DSouza&Gascon12}, $\mainlogic^{\top}$ and $\pmainlogic^{\top}$ were shown to be decidable by reduction into 
the reachability problem
for Petri nets. However, the characterisation of their complexity remained open. 
\altparagraph{Properties} In the table below, we justify
our choices for atomic formulae by presenting several abbreviations
(with their obvious semantics).
By contrast, we include in $\mainlogic$ both $\oblieq{\avariable}{\avariablebis}{\aformula}$
and $\oblineq{\avariable}{\avariablebis}{\aformula}$ when $\aformula$ is an arbitrary formula
since there is no obvious way to express one with the other.
{\small 
\begin{center}
\begin{tabular}{|c|c|} \hline 
Abbreviation & Definition \\ \hline
$\avariable \oblineqlocal \mynext^i \avariablebis$  & 
$\neg (\avariable \oblieqlocal \mynext^i \avariablebis) \wedge 
\overbrace{\mynext \cdots \mynext}^{i \ {\rm times}} \top$ 
\\ \hline 
$\avariable \oblieqlocal \mynext^{-i} \avariablebis$ & 
$\overbrace{\mynext^{-1} \cdots \mynext^{-1}}^{i \ {\rm times}}(\mynext^i \avariable   \oblieqlocal \avariablebis)$ \\
\hline
$\avariable \oblineqlocal \mynext^{-i} \avariablebis$ & 
$\neg (\avariable \oblieqlocal \mynext^{-i} \avariablebis) \wedge 
\overbrace{\mynext^{-1} \cdots \mynext^{-1}}^{i \ {\rm times}} \top$ 
\\ \hline 
$\oblineq{\avariable}{\avariablebis}{\top}$ &  
$(\avariable \oblineqlocal \mynext \avariablebis) \vee 
       \mynext((\avariablebis \oblieqlocal \mynext \avariablebis) \until 
               (\avariablebis \oblineqlocal \mynext \avariablebis))$ \\ \hline 
 $\poblineq{\avariable}{\avariablebis}{\top}$ & 
$(\avariable \oblineqlocal \mynext^{-1} \avariablebis) \vee 
       \mynext^{-1}(
(\avariablebis \oblieqlocal \mynext^{-1} \avariablebis)
 \since 
 (\avariablebis \oblineqlocal  \mynext^{-1} \avariablebis))$ \\ \hline
\end{tabular}
\end{center}
}
\cut{
First, let us come back to our choices for atomic formulae.
In $\mainlogic$, there is no need for 
atomic formulae stating that
`$\avariable$ is not equal to $\avariablebis$ $i$ steps ahead'
since this can be expressed  with 
$\neg (\avariable \oblieqlocal \mynext^i \avariablebis)$.
Similarly,  `$\avariable$ is not equal to $\avariablebis$ $i$ steps backward'
(which could correspond to  $\avariable \oblieqlocal \mynext^{-i} \avariablebis$)
can be expressed
with $\overbrace{\mynext^{-1} \cdots \mynext^{-1}}^{i \ {\rm times}}(\mynext^i \avariable  
       \oblieqlocal \avariablebis)$.
Moreover, in  $\mainlogic^{\top}$, there is no need for atomic formulae of the
form  $\oblineq{\avariable}{\avariablebis}{\top}$
(`$\avariable$ is not repeated in a future value of $\avariablebis$'), since they can be expressed
by  $(\avariable \oblineqlocal \mynext \avariablebis) \vee 
       \mynext((\avariablebis \oblieqlocal \mynext \avariablebis) \until 
               (\avariablebis \oblineqlocal \mynext \avariablebis))$.
The same holds for  $\poblineq{\avariable}{\avariablebis}{\top}$ in  $\pmainlogic^{\top}$.
}

Models for $\mainlogic$ can be  viewed
as finite data words  in $(\Sigma \times \D)^*$, where $\Sigma$ is a finite alphabet and $\D$ is an 
infinite domain. E.g., equalities between dedicated variables can simulate that a position
is labelled by a letter from $\Sigma$; moreover, we may assume that the data values are encoded with the
variable $\avariable$. Let us express that
whenever there are $i < j$ such that $i$ and $j$ [resp. $i+1$ and $j+1$, $i+2$ and $j+2$]  are 
labelled by $\aletter$ [resp. $\aletter'$, $\aletter''$], 
$\sigma(i+1)(\avariable) \neq \sigma(j+1)(\avariable)$.
This can be stated in $\mainlogic$ by:
$
\always(
\aletter \wedge \mynext(\aletter' \wedge \mynext \aletter'')
\Rightarrow
\mynext \neg 
(
\oblieq{\avariable}{\avariable}{\mynext^{-1}\aletter \wedge \aletter' \wedge \mynext \aletter''}
)
)
$.
This is an example of key constraints, see e.g.~\cite[Definition 2.1]{Niewerth&Schwentick11}
and the paper contains also numerous examples of properties that can be captured by $\mainlogic$. 
\altparagraph{Basics on VASS}
A \defstyle{vector addition system with states} (VASS) is a tuple 
$\avass = \triple{\locations}{\counters}{\transitions}$ where
$\locations$ is a finite set of \defstyle{control states},
$\counters$ is  a finite set of \defstyle{counters} and $\transitions$ is a finite
set of \defstyle{transitions} in $\locations \times \Zed^{\counters} \times \locations$. 
A \defstyle{configuration} of $\avass$ is a pair
$\pair{\alocation}{\vect{v}} \in \locations \times
\Nat^{\counters}$. We write $\pair{\alocation}{\vect{v}} \step{} \pair{\alocation'}{\acountval'}$ if
there is \ifLONG a transition \fi $(\alocation, \vect{u}, \alocation')\in \transitions$ such that
$\vect{v'} = \vect{v} + \vect{u}$. 
The \defstyle{reachability problem} for VASS (written Reach(VASS)) consists in checking whether 
 $\pair{\alocation_0}{\vect{v_0}} \step{*} \pair{\alocation_f}{\vect{v_f}}$, given two configurations
$\pair{\alocation_0}{\vect{v_0}}$ and $\pair{\alocation_f}{\vect{v_f}}$. 
The reachability
problem for VASS is decidable with non-primitive recursive complexity
\ifLONG algorithms~\cite{Mayr84,Kosaraju82,Lambert92,Leroux11}\else  algorithms~\cite{Mayr84,Kosaraju82,Leroux11}\fi, but the best known lower bound is only
\expspace{} \cite{Lipton76,Esparza98}.
The \defstyle{control state reachability problem} is the following \expspace-complete
problem~\cite{Lipton76,Rackoff78}: given a configuration  $\pair{\alocation_0}{\vect{v_0}}$
and a control state $\alocation_f$, check
whether $\pair{\alocation_0}{\vect{v_0}} \step{*} \pair{\alocation_f}{\vect{v}}$
for some $\vect{v} \in \Nat^{\counters}$.
We also use two other kinds of computations: with gains ($\stepinc{}$) or losses ($\stepdec{}$).
We write $\pair{\alocation}{\vect{v}} \stepinc{} \pair{\alocation'}{\vect{v'}}$ [resp.\ $\pair{\alocation}{\vect{v}} \stepdec{} \pair{\alocation'}{\vect{v'}}$] if
there is a transition $(\alocation, \vect{u}, \alocation') \in \transitions$ 
and $\vect{w} \in \Nat^{\counters}$ such that
$\vect{v} \preceq \vect{w}$ and $\vect{w} + \vect{u} \preceq
\vect{v'}$ [resp.\ $\vect{w} \preceq \vect{v}$ and $\vect{v'} \preceq \vect{w} + \vect{u}$].
\fi
 
\makeatletter{}\section{The Power of Past: From 
Reach(VASS)
to
\SAT{$\pmainlogic$}}
\label{section-past}
While~\cite{Demri&DSouza&Gascon12} concentrated on decidability
results, here we begin with a hardness result. When past obligations are allowed
as in $\pmainlogic$, \SAT{$\pmainlogic$}  is equivalent to the very
difficult problem of reachability in VASS (see recent developments in~\cite{Leroux&Schmitz15}). Combined with the result of
the next section where we prove that removing past obligations leads
to a reduction into the control state reachability problem for VASS,
this means that reasoning with past obligations is probably much more
complicated.

\begin{thm} 
\label{theorem-reach-PLRV}
There is a polynomial-space reduction 
from Reach(VASS)
into \SAT{$\pmainlogic$}. 
\end{thm}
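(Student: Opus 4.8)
The plan is to encode a run of the VASS as a data word whose data values act as \emph{token identities}: a unit increment of a counter ``creates'' a fresh value, a unit decrement ``consumes'' a previously created value, and the current value of a counter is the number of created-but-not-yet-consumed tokens. The past obligations of $\pmainlogic$ are exactly what let me insist that every consumed token was genuinely produced earlier \emph{and} that its production is unique; this is what separates exact counting (reachability) from mere coverability, so it is natural that Reach(VASS) rather than coverability is the source problem.

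First I would normalise the instance. Given $\avass = \triple{\locations}{\counters}{\transitions}$ and configurations $\pair{\alocation_0}{\vect{v_0}}$, $\pair{\alocation_f}{\vect{v_f}}$, I add a fresh state $\hat{\alocation}_0$ with a transition $\hat{\alocation}_0 \to \alocation_0$ of update $\vect{v_0}$, and a fresh state $\hat{\alocation}_f$ with a transition $\alocation_f \to \hat{\alocation}_f$ of update $-\vect{v_f}$; then $\pair{\alocation_0}{\vect{v_0}} \step{*} \pair{\alocation_f}{\vect{v_f}}$ in $\avass$ iff $\pair{\hat{\alocation}_0}{\vect{0}} \step{*} \pair{\hat{\alocation}_f}{\vect{0}}$ in the augmented system, so it suffices to reduce the \emph{zero-to-zero} reachability problem. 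I then decompose every transition into a block of unit operations, each changing a single counter by $\pm 1$, placing all $+1$ operations of a block before its $-1$ operations. Since a VASS step requires $\vect{0} \preceq \vect{v} + \vect{u}$, this ordering guarantees that each unit decrement inside a block always finds a token available, so the unit-operation semantics coincides exactly with the VASS's configuration-boundary semantics. The blocks may be exponentially long when updates are written in binary, but the formula is emitted symbol by symbol while counting in binary, so the reduction runs in polynomial space, as required.

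The word I describe carries, via dedicated equalities, a label from a finite alphabet naming at each position the control state being left, the transition being executed, and the unit operation performed; a standard LTL formula over this alphabet using $\always$, $\mynext$ and $\until$ enforces that the label sequence spells a legal run from $\hat{\alocation}_0$ to $\hat{\alocation}_f$ (consecutive transitions agree on the shared control state, each block is the canonical expansion of its transition, and so on). I use one variable $\avariable_{\acounter}$ per counter $\acounter \in \counters$ and mark by propositions $\mathrm{inc}_{\acounter}, \mathrm{dec}_{\acounter}$ the positions performing a unit increment/decrement of $\acounter$. The counting is pinned down by conjoining, under $\always$, for each $\acounter$: at every $\mathrm{inc}_{\acounter}$-position the constraint $\oblieq{\avariable_{\acounter}}{\avariable_{\acounter}}{\mathrm{dec}_{\acounter}}$ together with $\neg(\oblieq{\avariable_{\acounter}}{\avariable_{\acounter}}{\mathrm{inc}_{\acounter}}) \wedge \neg(\poblieq{\avariable_{\acounter}}{\avariable_{\acounter}}{\mathrm{inc}_{\acounter}})$ (the created value reappears at some later decrement and at no other increment), and at every $\mathrm{dec}_{\acounter}$-position the constraint $\poblieq{\avariable_{\acounter}}{\avariable_{\acounter}}{\mathrm{inc}_{\acounter}}$ together with $\neg(\oblieq{\avariable_{\acounter}}{\avariable_{\acounter}}{\mathrm{dec}_{\acounter}}) \wedge \neg(\poblieq{\avariable_{\acounter}}{\avariable_{\acounter}}{\mathrm{dec}_{\acounter}})$ (the consumed value comes from some earlier increment and from no other decrement). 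These force the relation ``equal $\avariable_{\acounter}$-value'' to be a bijection between the $\mathrm{inc}_{\acounter}$- and $\mathrm{dec}_{\acounter}$-positions in which each increment precedes its matched decrement; reading this bijection as token creation/consumption shows that along the induced run every counter stays non-negative and returns to $0$ at the end, i.e. the word encodes a genuine zero-to-zero run, and conversely any such run yields a satisfying model by assigning pairwise distinct fresh values to matched inc/dec pairs.

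The main obstacle is precisely the correctness of this bijection: I must verify that the six obligations per counter admit no ``cheating'' model, for instance that a single value cannot be shared by two different increments, that a decrement cannot borrow a token belonging to a different counter (prevented by using a separate variable per counter and gating every obligation by the $\mathrm{inc}_{\acounter}/\mathrm{dec}_{\acounter}$ tests), and that the incs-before-decs ordering inside each block keeps the unit semantics faithful to the VASS even when a block mixes increments and decrements of several counters. Establishing this equivalence, namely that satisfying models of the conjunction correspond exactly to zero-to-zero runs, is the technical heart of the argument; the LTL control-flow encoding and the polynomial-space bookkeeping are then routine.
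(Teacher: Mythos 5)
Your proposal is correct and follows essentially the same route as the paper's proof: normalise to zero-to-zero reachability with unit increments/decrements (a polynomial-space step because updates are written in binary), encode the run as a data word labelled by transitions, and use injectivity of data values among increments and among decrements together with the future obligation $\oblieq{\avariable}{\avariable}{\dec{\acounter}}$ (forcing every increment to be consumed, hence zero final values) and the past obligation $\poblieq{\avariable}{\avariable}{\inc{\acounter}}$ (forcing every decrement to be covered, hence non-negativity). The only cosmetic differences are that you use one variable per counter where the paper uses a single distinguished variable gated by the $\inc{\acounter}/\dec{\acounter}$ test formulas, and your incs-before-decs ordering inside blocks is harmless but unnecessary, since each counter's update within one transition is monotone anyway.
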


The proof of Theorem~\ref{theorem-reach-PLRV} 
is analogous to the proof of~\cite[Theorem 16]{BDMSS06:journal}
except that properties are expressed in $\pmainlogic$ instead of being expressed in $\fo{2}{\sim,<,+1}$.

\makeatletter{}\begin{proof} First, the reachability problem for VASS can be reduced in polynomial space to its restriction such that
the initial and final configurations have all the counters equal to zero and each transition can only increment or 
decrement a unique counter. In the sequel, we consider an instance of this subproblem: 
$\avass = \triple{\locations}{\counters}{\transitions}$ is a VASS, the initial configuration
 is $\pair{\alocation_i}{\vec{0}}$ and the final configuration is $\pair{\alocation_f}{\vec{0}}$. 

Now, we build a formula  $\aformula$ in $\pmainlogic$ such that $\pair{\alocation_f}{\vec{0}}$ is reachable
from $\pair{\alocation_i}{\vec{0}}$ iff $\aformula$ is satisfiable. To do so, we encode runs of $\avass$ by data words
following exactly the proof of~\cite[Theorem 16]{BDMSS06:journal} except that the properties are expressed in $\pmainlogic$
instead of  FO$^2(\sim,<,+1)$. Letters from the finite alphabet are also encoded by equalities
of the form $\avariable_0 = \avariable_i$.

The objective is to encode a word $\arun \in \transitions^*$ that represents an accepting run
from  $\pair{\alocation_i}{\vec{0}}$ to $\pair{\alocation_f}{\vec{0}}$. 
We use the alphabet $\transitions$ of transitions, that we code using a logarithmic number of variables. 
One can simulate $m$ different labels in $\pmainlogic$, by using  $\lceil\log(m)+1\rceil$ variables and 
its equivalence classes. In order to simulate the alphabet $\transitions$, we use the variables 
$\avariable_0, \dotsc, \avariable_N$, with $N = \lceil \log( \card{\transitions} ) \rceil$. 
For any $\atransition \in \transitions$, let $\tup{\atransition} \in \pmainlogic$ be the 
formula that tests for label $\atransition$ at the current position.  
More precisely, for any fixed injective function $\lambda : \transitions \to 
\powerset{\interval{1}{N}}$  we define 
\[
\tup{\atransition} ~~=~~ \bigwedge_{i \in \lambda(\atransition)} \avariable_0 = \avariable_i 
~~\land~~ \bigwedge_{1 \leq i \leq N, i \not\in \lambda(\atransition)} \avariable_0 \neq \avariable_i.
\]
Note that $\tup{\atransition}$ uses exclusively variables $\avariable_0, \dotsc, \avariable_N$, 
that is of size logarithmic in $\card{\transitions}$, and that $\tup{\atransition}$ 
holds at a position for at most one $\atransition \in \transitions$.
 We build a $\pmainlogic$ formula $\aformula$ so that any word from $\transitions^*$ corresponding to a 
model of $\aformula$ is an accepting run for $\avass$ from  $\pair{\alocation_i}{\vec{0}}$ to 
$\pair{\alocation_f}{\vec{0}}$. And conversely, for any accepting run of $\avass$ from  $\pair{\alocation_i}{\vec{0}}$ to 
$\pair{\alocation_f}{\vec{0}}$ there is a model of $\aformula$ corresponding to the run.
The following are standard counter-blind conditions to check.
\begin{enumerate}
  \item  Every position satisfies $\tup{\atransition}$ for some $\atransition \in \transitions$.
  \item  The first position satisfies $\tup{\triple{\alocation_i}{\ainstruction}{\alocation}}$
  for some $\alocation \in \locations$.
  \item  The last position satisfies $\tup{\triple{\alocation}{\ainstruction}{\alocation_f}}$ 
   for some $\alocation \in \locations$.
  \item There are no two consecutive positions $i$ and 
   $i+1$ satisfying $\tup{\triple{\alocation_1}{\ainstruction}{\alocation_2}}$ and 
   $\tup{\triple{\alocation_1'}{\ainstruction'}{\alocation_2'}}$ respectively, with $\alocation_2 \neq \alocation_1'$.
\end{enumerate}\medskip

\noindent In the formula $\aformula$, we use the distinguished 
variable $\avariable$ to relate increments and decrements. Here are the main
properties to satisfy.

\begin{enumerate}
\item For every counter $\acounter \in \counters$, there are no two positions labelled by a transition
      with instruction $\inc{\acounter}$ having the
      same value for $\avariable$:
     $$
     \always(
     \inc{\acounter}
     \Rightarrow
     \neg (\oblieq{\avariable}{\avariable}{\inc{\acounter}})
     )
     $$
     where $\inc{\acounter}$ is a shortcut for 
     $\bigvee_{\triple{\alocation}{\inc{\acounter}}{\alocation'} \in \transitions}  \tup{\atransition}$. 
     A similar constraint can be expressed with $\dec{\acounter}$.
\item For every counter $\acounter \in \counters$, for every position labelled by a transition
      with instruction  $\inc{\acounter}$, there is a future position labelled by a transition
      with instruction  $\dec{\acounter}$ with the same value for $\avariable$: 
      $$
     \always(
     \inc{\acounter}
     \Rightarrow
     \oblieq{\avariable}{\avariable}{\dec{\acounter}}
     )
     $$
     where $\dec{\acounter}$ is a shortcut for 
     $\bigvee_{\triple{\alocation}{\dec{\acounter}}{\alocation'} \in \transitions}  \tup{\atransition}$. 
     This guarantees that the final configuration ends with all counters equal to zero. 
\item Similarly,  for every counter $\acounter \in \counters$, for every position labelled by a transition
      with instruction  $\dec{\acounter}$, there is a past position labelled by a transition
      with instruction  $\inc{\acounter}$ with the same value for $\avariable$: 
      $$
     \always(
     \dec{\acounter}
     \Rightarrow
     \poblieq{\avariable}{\avariable}{\inc{\acounter}}
     ).
     $$
     This guarantees that every decrement follows a corresponding increment, satisfying that counter values
     are never negative.
\end{enumerate}\medskip

\noindent Let $\aformula$ be the conjunction of all the formulas defined above. Since all the properties
considered herein are those used in the proof of~\cite[Theorem 16]{BDMSS06:journal} (but herein they are expressed
in $\pmainlogic$ instead of FO2$(\sim,<,+1)$), it follows that 
 $\aformula$ is satisfiable in $\pmainlogic$ iff there is an accepting run of $\avass$ 
from  $\pair{\alocation_i}{\vec{0}}$ to 
$\pair{\alocation_f}{\vec{0}}$.
\end{proof}

\makeatletter{}\section{Leaving the Past Behind Simplifies Things: From \SAT{$\mainlogic$}  to Control State Reachability}
\label{section-upper-bound}
In this section, we show the reduction from \SAT{$\mainlogic$}
to the control state reachability problem in VASS. We obtain
a \twoexpspace{} upper bound for  \SAT{$\mainlogic$}
as a
consequence. This is done in two steps: 
\begin{enumerate}
\itemsep 0 cm
\item simplifying formulas of
the form $\oblieq{\avariable}{\avariablebis}{\aformula}$ to remove the
test formula $\aformula$ (i.e., a reduction from  \SAT{$\mainlogic$}
into \SAT{$\mainlogic^\top$}); and 
\item  reducing from
\SAT{$\mainlogic^\top$}  into the control state reachability 
\ifLONG problem \else pb. \fi 
in VASS.
\end{enumerate}
\makeatletter{}\subsection{Elimination of Test Formulas}
\label{section-elimination}
We give a 
polynomial-time algorithm such that given  $\varphi \in \mainlogic$, it computes a formula $\varphi' \in 
\mainlogic^{\top}$ that preserves satisfiability: there is a model
$\sigma$  such that $\sigma \models \varphi$ iff 
there is a model $\sigma'$ such that $\sigma' \models \varphi'$.
We give the reduction in two steps. First, we eliminate formulas
with inequality tests of the form
$\oblineq{\avariable}{\avariablebis}{\aformulabis}$ using only positive
tests of the form $\oblieq{\avariable}{\avariablebis}{\aformulabis}$. We
then eliminate formulas of the form
$\oblieq{\avariable}{\avariablebis}{\aformulabis}$, using only formulas 
of the form $\oblieq{\avariable}{\avariablebis}{\top}$. 
Although both
reductions share some common structure, they use independent coding
strategies, and exploit different features of the logic; we therefore
present them separately.

We first show how to eliminate all formulas with inequality tests of the form $\oblineq{\avariable}{\avariablebis}{\aformulabis}$.

Let $\mainlogiceq$  be the logic $\mainlogic$ where there are no appearances of formulas of the form $\oblineq{\avariable}{\avariablebis}{\aformulabis}$; 
and let $\pmainlogiceq$ be $\pmainlogic$ without $\oblineq{\avariable}{\avariablebis}{\aformulabis}$ or $\poblineq{\avariable}{\avariablebis}{\aformulabis}$.

\makeatletter{}

Henceforward, $\variabs(\varphi)$ denotes the set of all variables in $\varphi$, and $\subft{\varphi}$ the set of all subformulas $\psi$ such that $\oblieq{\avariable}{\avariablebis}{\psi}$ or $\oblineq{\avariable}{\avariablebis}{\psi}$ appears in $\varphi$ for some $\avariable,\avariablebis \in \variabs(\varphi)$.

In both reductions we make use of the following easy lemma.
\begin{lem}\label{lem:no-nesting-reduction}
There is a polynomial-time satisfiability-preserving translation $t : \mainlogic \to \mainlogic$ [resp.\ $t: \mainlogic^\approx \to \mainlogic^\approx$] such that for every $\varphi$ and $\psi\in\subft{t(\varphi)}$, we have 
 $\subft{\psi} = \emptyset$.
\end{lem}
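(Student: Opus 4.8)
The plan is to \emph{flatten} the test formulas by replacing each one, wherever it occurs inside an obligation, with a fresh proposition whose meaning is pinned down once and for all by a global equivalence. Concretely, enumerate the finitely many test subformulas $\subft{\varphi} = \set{\psi_1, \dots, \psi_n}$ (note $n \le \size{\varphi}$). For each $\psi \in \subft{\varphi}$ introduce a fresh proposition $p_\psi$, realised as in the Preliminaries by a same-position equality $\avariableter_\psi \oblieqlocal \avariableter'_\psi$ between two fresh variables (so that a valuation may set $p_\psi$ true or false independently at each position, and $\subft{p_\psi} = \emptyset$). Define a one-level flattening $\lfloor\,\cdot\,\rfloor$ that commutes with all Boolean and temporal connectives and sends an obligation to the obligation with its test replaced by the matching proposition, i.e.\ $\lfloor \oblieq{\avariable}{\avariablebis}{\psi}\rfloor = \oblieq{\avariable}{\avariablebis}{p_\psi}$ and $\lfloor \oblineq{\avariable}{\avariablebis}{\psi}\rfloor = \oblineq{\avariable}{\avariablebis}{p_\psi}$ (crucially, $\lfloor\,\cdot\,\rfloor$ does \emph{not} recurse into the discarded test $\psi$). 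Then set
\[
t(\varphi) ~=~ \lfloor \varphi \rfloor ~\wedge~ \bigwedge_{\psi \in \subft{\varphi}} \always\bigl(p_\psi \Leftrightarrow \lfloor \psi \rfloor\bigr).
\]

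The syntactic requirement is immediate from this shape. Every obligation occurring in $t(\varphi)$ sits either in $\lfloor\varphi\rfloor$ or in some $\lfloor\psi\rfloor$, and by construction its test is one of the propositions $p_{\psi'}$; the only other place a $p_\psi$ appears is at the top level of a biconditional, not as a test. Hence $\subft{t(\varphi)}$ is exactly $\set{p_\psi : \psi \in \subft{\varphi}}$, and each $p_\psi$ is a local equality with $\subft{p_\psi} = \emptyset$. Moreover $\lfloor\,\cdot\,\rfloor$ never turns a $\oblieq{}{}{}$ into a $\oblineq{}{}{}$ or vice versa, and the added conjuncts introduce no inequality obligations, so $t$ maps $\mainlogiceq$ to $\mainlogiceq$, giving the ``resp.'' statement. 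Finally $\size{\lfloor\varphi\rfloor} = O(\size{\varphi})$ and each of the $n \le \size{\varphi}$ conjuncts has size $O(\size{\varphi})$, so $t$ is computable in polynomial time (using $O(\size{\varphi})$ fresh variables).

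It remains to prove satisfiability preservation, and this is where the real work lies. The key claim is that, whenever a model interprets the fresh variables so that all the constraints $\always(p_\psi \Leftrightarrow \lfloor\psi\rfloor)$ hold, then $p_\psi$ is true at a position exactly when $\psi$ is true there. This cannot be shown by plain structural induction, because $\lfloor\psi\rfloor$ mentions the propositions $p_{\psi'}$ of the tests $\psi'$ nested inside $\psi$; instead I would argue by induction on the obligation-nesting depth of $\psi$. In the base case $\psi$ contains no obligation, so $\lfloor\psi\rfloor = \psi$ and the constraint gives $p_\psi \equiv \psi$ directly; in the inductive step the nested tests $\psi'$ have strictly smaller depth, so by the induction hypothesis $p_{\psi'} \equiv \psi'$ everywhere, whence $\oblieq{\avariable}{\avariablebis}{p_{\psi'}} \equiv \oblieq{\avariable}{\avariablebis}{\psi'}$ (and likewise for $\oblineq$), giving $\lfloor\psi\rfloor \equiv \psi$ and therefore $p_\psi \equiv \psi$. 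The same replacement argument then yields $\lfloor\varphi\rfloor \equiv \varphi$.

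For the two directions: given $\sigma \models \varphi$, I extend it to $\sigma'$ by setting the fresh variables so that $p_\psi$ holds at $i$ iff $\sigma, i \models \psi$ (possible since the $\avariableter_\psi,\avariableter'_\psi$ are fresh and do not affect the original variables); the claim then makes every constraint hold and forces $\sigma' \models \lfloor\varphi\rfloor$, so $\sigma' \models t(\varphi)$. Conversely, if $\sigma' \models t(\varphi)$ then all constraints hold, the claim gives $\lfloor\varphi\rfloor \equiv \varphi$ in $\sigma'$, and since $\varphi$ mentions only original variables we get $\sigma' \restr \variabs(\varphi) \models \varphi$. Two minor points to check along the way: that $\always$ evaluated at position $0$ indeed pins down the equivalence at \emph{every} position of the (finite, forward) model, and that obligations compare only the original variables $\avariable,\avariablebis$, whose values are untouched by the extension; both are routine. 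The induction on nesting depth is the one genuinely non-routine ingredient.
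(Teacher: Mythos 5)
Your construction is correct and is essentially the paper's own reduction: the paper likewise replaces each test $\psi$ occurring in an obligation by a same-position equality $\avariable_{new} \oblieqlocal \avariablebis_{new}$ between fresh variables and adds the conjunct $\always(\psi \Leftrightarrow \avariable_{new} \oblieqlocal \avariablebis_{new})$, the only difference being that the paper applies these replacements iteratively, innermost occurrences first, whereas you perform them all in one shot and discharge the nesting by an explicit induction on obligation depth. The two formulations yield the same formula up to naming of the fresh variables, and your added detail (the depth induction, the closure of $\subft{\varphi}$ under nested tests) just makes explicit what the paper's innermost-first iteration handles implicitly.
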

\begin{proof}
This is a standard reduction.
Indeed, given a formula $\varphi$ with subformula $\oblieq{\avariable}{\avariablebis}{\psi}$,
$\varphi$ is satisfiable if{f} $\always(\psi \Leftrightarrow \avariable_{new} \oblieqlocal \avariablebis_{new})
\wedge \varphi[\oblieq{\avariable}{\avariablebis}{\psi} \leftarrow 
\oblieq{\avariable}{\avariablebis}{\avariable_{new} \oblieqlocal \avariablebis_{new}}]$
is satisfiable, where $\avariable_{new}, \avariablebis_{new} \not\in \variabs(\varphi)$,
and $\varphi[\oblieq{\avariable}{\avariablebis}{\psi} \leftarrow 
\oblieq{\avariable}{\avariablebis}{ \avariable_{new} \oblieqlocal \avariablebis_{new}}]$ is the result of replacing every occurrence of  $\oblieq{\avariable}{\avariablebis}{\psi}$
by $\oblieq{\avariable}{\avariablebis}{ \avariable_{new} \oblieqlocal \avariablebis_{new}}$ in $\varphi$.
Similarly, given a formula $\varphi$ with subformula $\oblineq{\avariable}{\avariablebis}{\psi}$,
$\varphi$ is satisfiable if{f} $\always(\psi \Leftrightarrow \avariable_{new} \oblieqlocal \avariablebis_{new})
\wedge \varphi[\oblineq{\avariable}{\avariablebis}{\psi} \leftarrow 
\oblineq{\avariable}{\avariablebis}{\avariable_{new} \oblieqlocal \avariablebis_{new}}]$
is satisfiable.
We need to apply these replacements repeatedly, at most a polynomial number of times if we apply
it to the innermost occurrences. 
\end{proof}

  \begin{prop}[from $\mainlogic$ to $\mainlogiceq$]\label{prop:LRV2LRVapprox}
    There is a polynomial-time reduction from \SAT{$\mainlogic$} into \SAT{$\mainlogiceq$}; 
    and from \SAT{$\pmainlogic$}  into \SAT{$\pmainlogiceq$}.
  \end{prop}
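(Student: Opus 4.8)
The plan is to rewrite each inequality-test subformula $\oblineq{\avariable}{\avariablebis}{\psi}$ into an $\mainlogiceq$-formula in a satisfiability-preserving way, using only positive tests together with fresh variables that serve as precomputed markers. By Lemma~\ref{lem:no-nesting-reduction} I may first assume that every test formula $\psi$ occurring in $\varphi$ is obligation-free, hence already an $\mainlogiceq$-formula (indeed a local equality) with no $\oblineq$ inside it; consequently all $\oblineq$-subformulas live in the Boolean/temporal skeleton of $\varphi$ and may be rewritten simultaneously. The starting point is the value-based case split
\[ \oblineq{\avariable}{\avariablebis}{\psi} \;\equiv\; \mathrm{TD}_{\avariablebis,\psi} \;\vee\; \bigl(\mynext\sometimes\psi \,\wedge\, \neg\,\oblieq{\avariable}{\avariablebis}{\psi}\bigr), \]
where $\mathrm{TD}_{\avariablebis,\psi}$ asserts that at least two distinct values of $\avariablebis$ occur among the strictly future $\psi$-positions. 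This equivalence is checked by three cases at a position $i$: if two distinct future values occur then $\oblineq{\avariable}{\avariablebis}{\psi}$ holds regardless of $\sigma(i)(\avariable)$; if exactly one value $d$ occurs then $\oblieq{\avariable}{\avariablebis}{\psi}$ amounts to $\sigma(i)(\avariable)=d$, so the second disjunct captures $\sigma(i)(\avariable)\neq d$; and if no future $\psi$-position exists both sides are false. The second disjunct is already in $\mainlogiceq$, so the whole task reduces to expressing $\mathrm{TD}_{\avariablebis,\psi}$.

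The step I expect to be the main obstacle is exactly $\mathrm{TD}_{\avariablebis,\psi}$: a direct rendering is circular, since ``two distinct future values'' is precisely $\oblineq{\avariablebis}{\avariablebis}{\psi}$ evaluated at the first future $\psi$-position, i.e.\ again an inequality test. I break this circularity by observing that two distinct values occur among the future $\psi$-positions iff some consecutive pair of future $\psi$-positions carry different $\avariablebis$-values, and by precomputing ``the value of the next $\psi$-position'' in a fresh variable. Concretely, for each pair $(\avariablebis,\psi)$ I add a fresh variable $\avariableter$ constrained by
\[ \always(\psi \Rightarrow \avariablebis \oblieqlocal \avariableter) \;\wedge\; \always\bigl((\neg\psi \wedge \mynext\sometimes\psi) \Rightarrow \avariableter \oblieqlocal \mynext \avariableter\bigr), \]
which forces $\sigma(p)(\avariableter)$ to equal the $\avariablebis$-value of the first $\psi$-position at or after $p$ (whenever one exists; elsewhere $\avariableter$ is left free and is never read). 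Then at a $\psi$-position $j$ that has a later $\psi$-position, $\sigma(j+1)(\avariableter)$ is the $\avariablebis$-value of the next $\psi$-position, so ``that next value differs from $\avariablebis$ at $j$'' becomes the purely local inequality $\neg(\avariablebis \oblieqlocal \mynext \avariableter)$. Hence I can set
\[ \mathrm{TD}_{\avariablebis,\psi} \;:=\; \mynext\sometimes\bigl(\psi \,\wedge\, \mynext\sometimes\psi \,\wedge\, \neg(\avariablebis \oblieqlocal \mynext \avariableter)\bigr), \]
which lies in $\mainlogiceq$ and uses only local (in)equalities and temporal operators.

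To assemble the reduction I replace every $\oblineq{\avariable}{\avariablebis}{\psi}$ in $\varphi$ by the right-hand side of the first displayed equivalence (with $\mathrm{TD}_{\avariablebis,\psi}$ substituted), and conjoin the propagation constraints for each fresh $\avariableter$; since there are at most $\length{\varphi}$ distinct pairs $(\avariablebis,\psi)$ and each added gadget is of polynomial size, the whole translation is polynomial-time. Correctness holds in both directions: any model of $\varphi$ extends on the fresh variables, by their intended definition, to a model of the propagation constraints, under which each replacement formula is equivalent at every position to the original $\oblineq$-subformula; conversely, any model of the new formula satisfies the constraints, so the same equivalence holds and the reduct satisfies $\varphi$. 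Since the gadget uses no inequality test, applying it to all future tests yields $\SAT{\mainlogic}\to\SAT{\mainlogiceq}$. For $\SAT{\pmainlogic}\to\SAT{\pmainlogiceq}$ I additionally treat each past test $\poblineq{\avariable}{\avariablebis}{\psi}$ by the mirror construction: a fresh variable propagated \emph{forward} to record the value of the nearest earlier $\psi$-position, the boundary inequality $\neg(\avariablebis \oblieqlocal \previous \avariableter)$, the past operators $\previous,\pastsometimes,\since$ in place of their future counterparts, and the past positive test $\poblieq{\avariable}{\avariablebis}{\psi}$ in the analogue of the second disjunct. The future and past gadgets are independent and neither reintroduces an inequality test, so their combination lands in $\pmainlogiceq$.
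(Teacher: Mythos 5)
Your proof is correct, but it follows a genuinely different route from the paper's. The paper puts $\varphi$ into negation normal form and treats the two polarities of inequality tests separately: a positive occurrence of $\oblineq{\avariable}{\avariablebis}{\psi}$ becomes $\avariable \not\approx \mathtt{v}_{\oblineq{\avariable}{\avariablebis}{\psi}} \land \oblieq{\mathtt{v}_{\oblineq{\avariable}{\avariablebis}{\psi}}}{\avariablebis}{\psi}$ for a fresh witness variable (a replacement which only \emph{implies} the original atomic formula, hence is satisfiability-preserving rather than pointwise equivalent), while a negative occurrence $\lnot(\oblineq{\avariable}{\avariablebis}{\psi})$ becomes $\lnot\mynext\sometimes\psi \lor \avariable \approx \mynext\mathtt{v}^{\approx}_{\avariablebis,\psi}$, where $\mathtt{v}^{\approx}_{\avariablebis,\psi}$ stays equal to a designated constant variable $\mathtt{k}$ (kept distinct from all other values) until the first position from which all future $\psi$-positions carry one common $\avariablebis$-value, and holds that common value from then on; correctness is then proved by building a canonical extension $\sigma_\varphi$ of a model of $\varphi$ and doing a polarity-aware structural induction in each direction. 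You avoid NNF and the polarity bookkeeping altogether: your value-based case split (at least two distinct $\avariablebis$-values among strictly future $\psi$-positions, versus a unique one) gives a single rewriting that, in \emph{every} model of your propagation constraints, is pointwise equivalent to $\oblineq{\avariable}{\avariablebis}{\psi}$, so both directions of correctness follow from one substitution/congruence argument. The two constructions are conceptually close --- the paper's $\mathtt{v}^{\approx}_{\avariablebis,\psi}$ and your $\mathrm{TD}_{\avariablebis,\psi}$ both turn on whether all future $\psi$-positions agree on $\avariablebis$ --- but your ``value of the next $\psi$-position'' marker replaces the paper's constant-$\mathtt{k}$ switching device, and it lets you detect disagreement of \emph{adjacent} $\psi$-positions by a purely local inequality, which is precisely what makes your translation an equivalence. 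The price is a slightly more involved gadget (chained markers plus the doubly nested $\mynext\sometimes$); the gain is that you need no canonical-model construction, no negation normal form, and only one fresh variable per pair $(\avariablebis,\psi)$ instead of, additionally, one per inequality test $\oblineq{\avariable}{\avariablebis}{\psi}$. Your mirrored treatment of $\poblineq{\avariable}{\avariablebis}{\psi}$ for the $\pmainlogic$ case matches the paper's remark that past obligations are handled symmetrically.
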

\makeatletter{}\ifLONG \begin{proof}

  For every $\varphi \in \mainlogic$, we compute $\varphi' \in
  \mainlogiceq$ in polynomial time, which preserves satisfiability.
The variables of $\varphi'$ consist of all the variables of $\varphi$,  plus: a distinguished
  variable $\mathtt{k}$, and variables
  $\mathtt{v}^\approx_{\avariablebis,\psi}$,
$\mathtt{v}_{\oblineq{\avariable}{\avariablebis}{\psi}}$ for every
  subformula $\psi$ of $\varphi$ and variables $\avariable, \avariablebis$ of $\varphi$.
The variables
$\mathtt{v}_{\oblineq{\avariable}{\avariablebis}{\psi}}$'s
will be used to get rid of $\not\approx$ in formulas of the form
$\oblineq{\avariable}{\avariablebis}{\psi}$, and the variables
 $\mathtt{v}^\approx_{\avariablebis,\psi}$'s to treat formulas of the form
$\lnot(\oblineq{\avariable}{\avariablebis}{\psi})$.
Finally, $\mathtt k$ is a special variable, which has a constant value, different from all the values of the variables of $\varphi$.

Assume $\varphi$ is in negation normal form.
Note that each positive occurrence of $\oblineq{\avariable}{\avariablebis}{\psi}$ can be safely replaced with $\avariable \not\approx  \mathtt{v}_{\oblineq{\avariable}{\avariablebis}{\psi}} \land  \oblieq{\mathtt{v}_{\oblineq{\avariable}{\avariablebis}{\psi}}}{\avariablebis}{\psi}$. Indeed, the latter formula implies the former, and it is not difficult to see that whenever there is a model for the former formula, there is also one for the latter.
    On the other hand, translating formulas of the form $\lnot (\oblineq{\avariable}{\avariablebis}{\psi})$
is more involved as these implicate some form of universal quantification. For treating these formulas, we use the variables $\mathtt{v}^\approx_{\avariablebis,\psi}$  and $\mathtt{k}$ as explained next.
Let $i$ be the first position of the model so that all future positions $j>i$ verifying $\psi$ have the same value on variable $\avariablebis$, say  value $n$.
As we will see, with a formula of $\mainlogiceq$, one can ensure that
$\mathtt{v}^\approx_{\avariablebis,\psi}$ has the same value as
$\mathtt k$ for all $j\leq i$ and value $n$ for all other positions.
The enforced values are illustrated below, with an initial prefix where variable $\mathtt{v}^\approx_{\avariablebis,\psi}$ is equal to $\mathtt k$ until we reach position $i$, from which point all values of $\mathtt{v}^\approx_{\avariablebis,\psi}$ concide with value $n$ ---represented as a dashed area. 
  \begin{center}
    \includegraphics{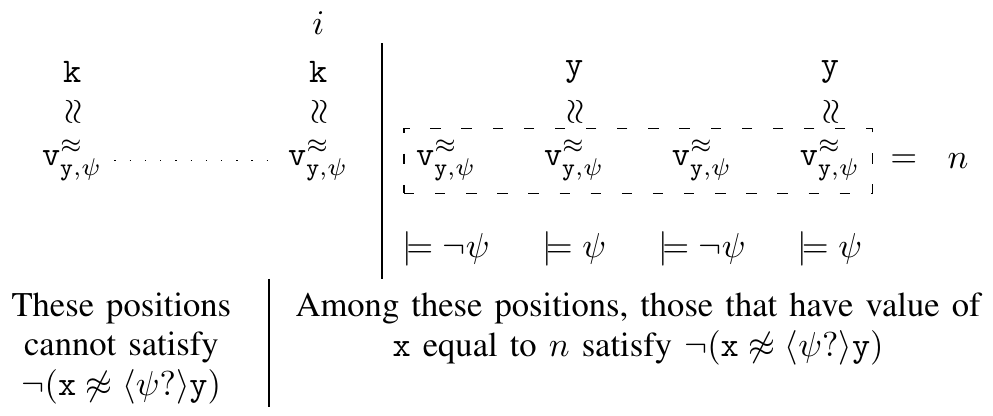}
  \end{center}
The  positions satisfying $\lnot (
  \oblineq{\avariable}{\avariablebis}{\psi})$ are of two types: the
  first type are those positions such that no future position
  satisfies $\psi$. The second type are those such that all future
  positions satisfying $\psi$ have the same value $n$ on variable
  $\avariablebis$, and the variable $\avariable$ takes the value $n$. The first type of
  positions are captured by the formula $\lnot \mynext\sometimes
  \psi$. As can be seen from the illustration above, the second type
  of positions can be captured by the formula $\avariable \approx
  \mynext \mathtt{v}^\approx_{\avariablebis,\psi}$. Thus, $\lnot (
  \oblineq{\avariable}{\avariablebis}{\psi})$ can be replaced with $\lnot \mynext\sometimes \psi ~\lor~ \avariable \approx \mynext \mathtt{v}^\approx_{\avariablebis,\psi}$.
                    Past
  obligations are treated in a symmetrical way.  We now formalise
  these ideas, showing that $ \sigma  \models \varphi$ implies that $ \sigma_\varphi \models \varphi' $, where $\varphi'$ is the translation of $\varphi$ and 
$\sigma_\varphi$ an extension of $\sigma$ with the new variables of the translation and values corresponding to the intuition above. 
On the other hand, we will also show that $\sigma  \models \varphi'$ implies $ \sigma  \models \varphi$. Next, we formally define $\sigma_\varphi$ and the translation $\varphi'$, and then we show these two facts.

\cut{
We show how to compute in polynomial time, for every $\varphi \in \mainlogic$,  a formula $\varphi' \in \mainlogiceq$ that preserves satisfiability. The formula $\varphi'$ uses, besides all the variables from $\varphi$, a distinguished variable $\mathtt{k} \not\in \variabs(\varphi)$, and variables $\mathtt{v}^\approx_{\avariable,\psi}$, $\mathtt{v}_{\oblineq{\avariable}{\avariablebis}{\psi}}$ for every subformula $\psi$ of $\varphi$ and variables $\avariable, \avariablebis \in\variabs(\varphi)$.

The idea behind the coding is the following. Any positive test $\oblineq{\avariable}{\avariablebis}{\psi}$ can be always converted to a test $\avariable \not\approx \mathtt{v}_{\oblineq{\avariable}{\avariablebis}{\psi}} \land 
\oblieq{\mathtt{v}_{\oblineq{\avariable}{\avariablebis}{\psi}}}{\avariablebis}{\psi}$, where 
$\mathtt{v}_{\oblineq{\avariable}{\avariablebis}{\psi}}$ is a fresh variable. 
At any given position, we only need at most as many fresh variables as there are subformulas $\psi$ and source and target variables $\avariable,\avariablebis \in \variabs(\varphi)$, and we hence index these fresh variables by $\avariable$, $\avariablebis$ and $\psi$. 
On the other hand, when a negative test $\lnot (\oblineq{\avariable}{\avariablebis}{\psi})$ holds at a position $i$, it means that all positions satisfying $\psi$ to the right of $i$ have the same data value in the variable $\avariablebis$. Since this data value is determined for every $\psi,\avariablebis$, there is only boundedly many data values to remember, one for every subformula and variable. We use separate variables $\mathtt{v}^\approx_{\avariable,\psi}$ to remember this data value for every $\avariable, \psi$. This variable is used in conjunction with $\mathtt{k}$. Whenever $\mathtt{v}^\approx_{\avariable,\psi} \not\approx \mathtt{k}$ at a position $i$, it means that $\lnot ( \oblineq{\mathtt{v}^\approx_{\avariable,\psi}}{\avariable}{\psi})$, and we can make sure of this without using $\not\approx$. Indeed, we can do this by letting $\mathtt{v}^\approx_{\avariable,\psi}$ maintain the same value from position $i$ onwards ---until the last element verifying $\psi$---, we can test that whenever $\psi$ holds, we have $\avariable \approx \mathtt{v}^\approx_{\avariable,\psi}$.
}
\medskip

Given a model $\sigma$, let us define the model $\sigma_\varphi$ as follows:
\begin{enumerate}[label=\({\alph*}]
  \item\label{it:sigmaphi:red1:1}
 $|\sigma| = |\sigma_\varphi|$;
  \item\label{it:sigmaphi:red1:2}
 for every $0 \leq i < |\sigma|$ and $\avariable \in \variabs(\varphi)$, $\sigma(i)(\avariable) = \sigma_\varphi(i)(\avariable)$;
  \item\label{it:sigmaphi:red1:3}
 there is some data value $\adatum \not\in \set{\sigma(i)(\avariable) \mid \avariable \in \variabs(\varphi), 0 \leq i < |\sigma|}$ such that for every $0 \leq i < |\sigma_\varphi|$, $\sigma_\varphi(i)(\mathtt{k}) = \adatum$;
\item\label{it:sigmaphi:red1:4}
 for every $0 \leq i < |\sigma|$, $\avariable \in \variabs(\varphi)$, and $\psi \in \subft{\varphi}$,
  \begin{itemize}
  \item if for some $j \geq i$, $\sigma,j \models \psi$ and for every
    $i \leq j'<|\sigma|$ such that $\sigma,
    j' \models \psi$  we have $\sigma(j)(\avariable) = \sigma(j')(\avariable)$, then
    $\sigma_\varphi(i)(\mathtt{v}^\approx_{\avariable,\psi}) =
\sigma(j)(\avariable)$,
  \item otherwise,
    $\sigma_\varphi(i)(\mathtt{v}^\approx_{\avariable,\psi}) =
    \sigma_\varphi(i)(\mathtt{k})$ (i.e., equal to $\adatum$);
  \end{itemize}
\item  \label{it:sigmaphi:red1:5}
for every $0 \leq i < |\sigma|$, $\avariable, \avariablebis \in \variabs(\varphi)$, and $\psi \in \subft{\varphi}$,
  \begin{itemize}
  \item if for some $j>i$,
    $\sigma(j)(\avariablebis) \neq \sigma(i)(\avariable)$ and $\sigma,j
    \models \psi$, then let $j_0$ be the first such $j$, and
$\sigma_\varphi(i)(\mathtt{v}_{\oblineq{\avariable}{\avariablebis}{\psi}}) =
    \sigma(j_0)(\avariablebis)$,
  \item otherwise,
    $\sigma_\varphi(i)(\mathtt{v}_{\oblineq{\avariable}{\avariablebis}{\psi}}) =
    \sigma_\varphi(i)(\mathtt{k})$ (i.e., equal to $\adatum$).
  \end{itemize}
\end{enumerate}
It is evident that for every $\varphi$ and $\sigma$, a model $\sigma_\varphi$ with the aforementioned properties exists. 
Next, we define $\varphi'\in \mainlogiceq$ so that 
\begin{align}
  \sigma \models \varphi \text{ if, and only if, } \sigma_\varphi \models
  \varphi'.\label{eq:sigma,phi-iff-sigmaphi,phi'}
\end{align}

\noindent We assume that for every $\psi\in\subft{\varphi}$, we have 
that $\subft{\psi} = \emptyset$; this is without any loss of generality due to Lemma~\ref{lem:no-nesting-reduction}. 
We will also assume that $\varphi$ is in negation normal form, that is, negation appears only in subformulas of the type $\lnot(\avariable \approx \mynext^\ell \avariablebis)$ or $\lnot 
( \oblieq{\avariable}{\avariablebis}{\psi})$ [resp.\ $\neg (\oblineq{\avariable}{\avariablebis}{\psi})$]. 
In particular, this means
that we introduce a dual operator for each temporal operator. 

\begin{itemize}
\item First, the variable $\mathtt{k}$ will act as a
  constant; it will always have the same data value at any position of
  the model, which must be different from those of all variables of
  $\varphi$.
  \[
  \textit{const} ~~=~~ \always \left((\mathtt{k}
    \approx \mynext \mathtt{k} \lor \lnot \mynext
    \top) ~\land~ \bigwedge_{\avariable \in \variabs(\varphi)}
    (\mathtt{k} \not\approx \avariable)\right).
  \]
\item   Second, for every position we ensure that if
  $\mathtt{v}^\approx_{\avariable,\psi}$ is different from
  $\mathtt{k}$, then it preserves its value until the
  last element verifying $\psi$; and if $\psi$ holds at any of these positions then $\mathtt{v}^\approx_{\avariable,\psi}$ contains the value of $\avariable$.
  \begin{align*}
    \textit{val-}\mathtt{v}^\approx_{\avariable,\psi} ~~=~~ \always 
    \left(\begin{tabular}{rcl}\normalsize
      $\mathtt{k} \not \approx
      \mathtt{v}^\approx_{\avariable,\psi}$&$\Rightarrow$&$ \mathtt{v}^\approx_{\avariable,\psi}
      \approx \mynext \mathtt{v}^\approx_{\avariable,\psi} \lor \lnot \mynext\sometimes \psi
      ~~~~\land$\\
      $\mathtt{k} \not \approx
      \mathtt{v}^\approx_{\avariable,\psi} \land \psi $&$\Rightarrow$&$
      \mathtt{v}^\approx_{\avariable,\psi} \approx \avariable$
    \end{tabular}\right).
  \end{align*}

\item Finally, let $\varphi^\approx$ be the result of replacing
\begin{enumerate}
\item[(f)] every appearance of   $ \lnot (\oblineq{\avariable}{\avariablebis}{\psi})$ by
$\lnot \mynext\sometimes \psi ~\lor~ \avariable \approx \mynext \mathtt{v}^\approx_{\avariablebis,\psi}$, and
\item[(g)] every positive appearance of   $ \oblineq{\avariable}{\avariablebis}{\psi}$ by
  $\avariable \not \approx \mathtt{v}_{\oblineq{\avariable}{\avariablebis}{\psi}} \land \oblieq{\mathtt{v}_{\oblineq{\avariable}{\avariablebis}{\psi}}}{\avariablebis}{\psi}$
\end{enumerate}
in $\varphi$.
\end{itemize}
The formula $\varphi'$ is defined as follows:
\[
\varphi' ~~=~~ 
\varphi^\approx
~~\land~~
\textit{const}
~~\land~~
\bigwedge_{\substack{\psi  \in \subft{\aformula},\\ \avariable \in \variabs(\varphi)}}
\textit{val-}\mathtt{v}^\approx_{\avariable,\aformulabis}.
\]
Notice that $\varphi'$ can be computed from $\varphi$ in polynomial time in the size of $\varphi$.

\smallskip

\begin{clm}
  $\varphi$ is satisfiable if, and only if, $\varphi'$ is satisfiable.
\end{clm}
\begin{proof}

[$\Rightarrow$]
Note that one direction would follow 
from~\eqref{eq:sigma,phi-iff-sigmaphi,phi'}: 
if $\varphi$ is satisfiable by some $\sigma$, then $\varphi'$ is satisfiable in  $\sigma_\varphi$.
In order to establish~\eqref{eq:sigma,phi-iff-sigmaphi,phi'}, we show that 
\begin{align}
  \text{ for every subformula $\gamma$ of $\varphi$ and $0 \leq i < |\sigma_\varphi|$, we have
    $\sigma, i \models \gamma$ if{f} $\sigma_{\varphi}, i
    \models \gamma^\approx$.}
  \label{eq:psi-equiv-psiapprox}
\end{align}
We further assume that $\gamma$ is not an atomic formula that is dominated by a negation in $\varphi$. 

Since $\sigma_\varphi \models \textit{const}$ by 
Condition~\eqref{it:sigmaphi:3}, and $\sigma_\varphi \models
\textit{val-}\mathtt{v}^\approx_{\avariable,\psi}$ for every $\psi$ due to
\eqref{it:sigmaphi:4}, this is sufficient to conclude that $\sigma
\models \varphi$ if{f} $\sigma_\varphi \models \varphi'$.

We show \eqref{eq:psi-equiv-psiapprox} by structural induction.  If
$\gamma = \oblineq{\avariable}{\avariablebis}{\psi}$, then $\sigma, i
\models \gamma$ if there is some $j > i$ where
$\sigma(j)(\avariablebis) \neq \sigma(i)(\avariable)$ and $\sigma, j
\models \psi$. Let $j_0$ be the first such $j$. Then, by 
Condition~\eqref{it:sigmaphi:red1:5}, 
$\sigma(i)(\avariable)
\neq\sigma_\varphi(i)( \mathtt{v}_{\oblineq{\avariable}{\avariablebis}{\psi}}) =
\sigma(j_0)(\avariablebis)$. 
Hence, 
$\sigma_\varphi,i \models
\oblieq{\mathtt{v}_{\oblineq{\avariable}{\avariablebis}{\psi}}}{\avariablebis}{\psi}$
and 
$\sigma_\varphi,i \models \avariable \not\approx \mathtt{v}_{\oblineq{\avariable}{\avariablebis}{\psi}}$ 
and thus
$\sigma_\varphi,i \models \gamma^\approx$.

If, on the other hand, $\gamma = \lnot(\oblineq{\avariable}{\avariablebis}{\psi})$ then either
\begin{itemize}
\item there is no $j>i$ so that $\sigma,j \models \psi$, or,
  otherwise,
\item for every $j' \geq i+1$ so that $\sigma,j' \models \psi$ we have
  $\sigma(j')(\avariablebis) = \sigma(i)(\avariable)$.
\end{itemize}
In the first case, we have that $\sigma_\varphi,i \models \lnot
\mynext \sometimes \psi$, and in the second case we have that, by 
Condition~\eqref{it:sigmaphi:red1:4},
$\sigma_\varphi(i')(\mathtt{v}^\approx_{\avariablebis,\psi}) =
\sigma_\varphi(i)(\avariable)$ for every $i' > i$, and in particular
for $i' = i+1$. Hence, $\sigma_\varphi, i \models \lnot \mynext \sometimes
\psi ~\lor~ \avariable \approx \mynext \mathtt{v}^\approx_{\avariablebis,\psi}$
and thus $\sigma_\varphi, i \models \gamma^\approx$.

Finally, the proof for the base case of the form $\gamma =
\avariable \oblieqlocal \mynext^\ell \avariablebis$ [resp.\ $\avariable \noblieqlocal \mynext^\ell \avariablebis$] 
and for all Boolean and temporal operators are by an easy 
verification, since $(\cdot)^\approx$ is homomorphic for these. Hence,
\eqref{eq:psi-equiv-psiapprox} holds.

\smallskip

[$\Leftarrow$] Now suppose that $\sigma \models \varphi'$. Since
$\sigma \models \textit{const}$, we have that $\sigma$ verifies
Condition~\eqref{it:sigmaphi:red1:3}. And since $\sigma \models
\textit{val-}\mathtt{v}^\approx_{\avariable,\psi}$ for every $\psi \in \subft{\varphi}$,
$\avariable \in \variabs(\varphi)$, we have that $\sigma$ verifies
Condition~\eqref{it:sigmaphi:red1:4}.
We prove by structural induction that for every subformula $\gamma$ of
$\varphi$, if $\sigma, i \models \gamma^\approx$ then $\sigma, i
\models \gamma$ ($\gamma$ is not an atomic formula that is dominated by a negation in $\varphi$).

\begin{itemize}

\item If $\gamma = \avariable \approx \avariablebis$ [resp.\
  $\avariable \not\approx \avariablebis$, $\avariable \approx
  \mynext^\ell \avariablebis$, $\avariable \not\approx \mynext^\ell
  \avariablebis$] it is immediate since $\gamma^\approx = \gamma$.

\item If $\gamma = \oblineq{\avariable}{\avariablebis}{\psi}$ and thus
  $\gamma^\approx = \avariable \not \approx
  \mathtt{v}_{\oblineq{\avariable}{\avariablebis}{\psi}} \land
  \oblieq{\avariable_{\avariablebis,\psi}^{\not
      \approx}}{\avariablebis}{\psi}$.  Then,
  $\sigma(i)(\avariable)\neq
  \sigma(i)(\mathtt{v}_{\oblineq{\avariable}{\avariablebis}{\psi}})=\sigma(j)(\avariablebis)$ for some $j>i$ where $\sigma,j
  \models \psi$. Hence, $\sigma, i \models
  \oblineq{\avariable}{\avariablebis}{\psi}$.

\item If $\gamma = \lnot (\oblineq{\avariable}{\avariablebis}{\psi})$,
  and thus $\gamma^\approx = \lnot \mynext\sometimes \psi ~\lor~ \avariable
  \approx \mynext \mathtt{v}^\approx_{\avariablebis,\psi}$. This means that
  either
  \begin{itemize}
  \itemsep 0 cm 
  \item there is no $j>i$ so that $\sigma, j \models \psi$ and hence
    $\sigma, i \models \lnot
    (\oblineq{\avariable}{\avariablebis}{\psi})$, or
  \item $\sigma, j \models \psi$ for some $j>i$ and
    $\sigma(i)(\avariable)=\sigma(i+1)(\mathtt{v}^\approx_{\avariablebis,\psi})$. By
    Condition~\eqref{it:sigmaphi:red1:4}, we have that for all $i <
    j'$, so that $\sigma, j' \models \psi$, we have
    $\sigma(j')(\mathtt{v}^\approx_{\avariablebis,\psi}) =
    \sigma(j')(\avariablebis)$. Then, $\sigma, i \models \lnot
    (\oblineq{\avariable}{\avariablebis}{\psi})$.
  \end{itemize}
\item If $\gamma = \sometimes \psi$ and $\gamma^\approx = \sometimes
  \psi^\approx$, there must be some position $i'\geq i$ so that $\sigma,
  i' \models \psi^\approx$. By inductive hypothesis, $\sigma, i'
  \models \psi$ and hence $\sigma, i \models \sometimes \psi$. We
  proceed similarly for all temporal operators and their dual, as well
  as for the Boolean operators $\land$, $\lor$. This is because
  $(\cdot)^\approx$ is homomorphic for all temporal and positive
  Boolean operators.\qedhere
\end{itemize}
\end{proof}\medskip

\noindent We can easily extend this coding allowing for past obligations. We only need to use some extra variables $\mathtt{v}^{-1,\approx}_{\avariable,\psi}$, $\mathtt{v}^{-1}_{\poblineq{\avariable}{\avariablebis}{\psi}}$ that behave in the same way as the previously defined, but with past obligations. That is, we also define a $\textit{val-}\mathtt{v}^{-1,\approx}_{\avariable,\psi}$ as $\textit{val-}\mathtt{v}^\approx_{\avariable,\psi}$, but making use of $\mathtt{v}^{-1,\approx}_{\avariable,\psi}$, $\mynext^{-1}$ and $\sometimes^{-1}$.
And finally, we have to further replace 
\begin{enumerate}
\itemsep 0 cm 
\item[(c)] every appearance of   $ \lnot (\poblineq{\avariable}{\avariablebis}{\psi})$ by
$\lnot \mynext^{-1}\sometimes^{-1} \psi ~\lor~ \avariable \approx \mynext \mathtt{v}^{-1,\approx}_{\avariablebis,\psi}$, and
\item[(d)] every positive appearance of   $\poblineq{\avariable}{\avariablebis}{\psi}$ by
  $\avariable \not \approx \mathtt{v}_{\poblineq{\avariable}{\avariablebis}{\psi}}^{-1} \land \poblieq{\mathtt{v}_{\poblineq{\avariable}{\avariablebis}{\psi}}^{-1}}{\avariablebis}{\psi}$
\end{enumerate}
in $\varphi$ to obtain $\varphi^\approx$.
\end{proof}

\else \begin{proof}[Proof sketch]
  For every $\varphi \in \mainlogic$, we compute $\varphi' \in
  \mainlogiceq$ in polynomial time, which preserves satisfiability.
  Besides all the variables from $\varphi$, $\varphi'$ uses a distinguished
  variable $\mathtt{k}$ (which will have a constant value, different
  from all the values of the variables of $\varphi$) and variables
  $\mathtt{v}^\approx_{\avariablebis,\psi}$,
  $\mathtt{v}_{\oblineq{\avariable}{\avariablebis}{\psi}}$ for every
  subformula $\psi$ of $\varphi$ and variables $\avariable,
  \avariablebis$ of $\varphi$.

  Each subformula $\oblineq{\avariable}{\avariablebis}{\psi}$ is
  replaced by $\avariable \not\approx
  \mathtt{v}_{\oblineq{\avariable}{\avariablebis}{\psi}} \land \oblieq{\mathtt{v}_{\oblineq{\avariable}{\avariablebis}{\psi}}}{\avariablebis}{\psi}$,
  where $\mathtt{v}_{\oblineq{\avariable}{\avariablebis}{\psi}}$ is a fresh
  variable.
On the other hand, for each subformula $\lnot
  (\oblineq{\avariable}{\avariablebis}{\psi})$, we use
  the variable $\mathtt{v}^\approx_{\avariablebis,\psi}$  in  conjunction with $\mathtt{k}$ as shown below.
  \begin{center}
    \includegraphics{EliminateNegTest}
  \end{center}
 In the beginning,
  $\mathtt{v}^\approx_{\avariablebis,\psi} \oblieqlocal \mathtt{k}$,
  which is broken at the first position where all future positions
  where $\psi$ holds have the same value for $\avariablebis$. At this
  position (say $i+1$), $\mathtt{v}^\approx_{\avariablebis,\psi}
  \oblineqlocal \mathtt{k}$ and 
  $\mathtt{v}^\approx_{\avariablebis,\psi}$ maintains the same value
  (say $n$)
  at all future positions, illustrated as a dashed box above.
  In all positions after $i$ that satisfy $\psi$, $\avariablebis
  \oblieqlocal \mathtt{v}^\approx_{\avariablebis,\psi}$. These
  conditions can be enforced without using $\lnot (
  \oblineq{\avariable}{\avariablebis}{\psi})$. Suppose
  $\avariable$ also has value $n$ at position $i$ as shown above. Now
  in any position after $i$ that satisfies $\psi$, $\avariablebis$ has
  the value $n$, which is the value of $\avariable$ in position
  $i$. This is exactly the semantics of $\lnot (
  \oblineq{\avariable}{\avariablebis}{\psi})$. Hence, $\lnot (
  \oblineq{\avariable}{\avariablebis}{\psi})$ can be replaced by $\lnot \mynext\sometimes \psi ~\lor~ \avariable \approx \mynext \mathtt{v}^\approx_{\avariablebis,\psi}$. Past
  obligations are treated in a symmetrical way.  
\end{proof}
\fi

  \begin{prop}[from $\mainlogiceq$ to $\mainlogic^\top$]
\label{prop:LRVapprox-to-LRVtop}
    There is a polynomial-time reduction from 
    \SAT{$\mainlogiceq$}  to \SAT{$\mainlogic^{\top}$}; and from \SAT{$\pmainlogiceq$}  into 
\SAT{$\pmainlogic^{\top}$}.
  \end{prop}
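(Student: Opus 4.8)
The plan is to eliminate the test formula $\psi$ from each obligation $\oblieq{\avariable}{\avariablebis}{\psi}$ by introducing, for every target variable $\avariablebis$ and test $\psi$ occurring in $\varphi$, a fresh \emph{value-carrier} variable $\mathtt{w}_{\avariablebis,\psi}$ that mirrors $\avariablebis$ exactly at the positions where $\psi$ holds and that elsewhere carries a single global dummy value. First I would invoke Lemma~\ref{lem:no-nesting-reduction} to assume every $\psi\in\subft{\varphi}$ is test-free (in fact a local equality), so that $\psi\in\mainlogic^{\top}$ and $\psi$ is evaluated identically on any two models agreeing on the variables of $\varphi$. I then add a distinguished constant variable $\mathtt{k}$, constrained by $\textit{const} = \always\big((\mathtt{k} \approx \mynext \mathtt{k} \lor \lnot \mynext \top) \land \bigwedge_{\avariable \in \variabs(\varphi)} \mathtt{k} \oblineqlocal \avariable\big)$; since $\mathtt{k}$ then takes a single value $c$ throughout and $c \ne \avariable(i)$ at every position, no variable of $\varphi$ ever takes the value $c$. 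For each pair $(\avariablebis,\psi)$ I add $\always(\psi \Rightarrow \mathtt{w}_{\avariablebis,\psi} \approx \avariablebis)$ together with $\always(\lnot\psi \Rightarrow \mathtt{w}_{\avariablebis,\psi} \approx \mathtt{k})$, and finally I replace every occurrence of $\oblieq{\avariable}{\avariablebis}{\psi}$ by $\oblieq{\avariable}{\mathtt{w}_{\avariablebis,\psi}}{\top}$. The resulting $\varphi'$ lies in $\mainlogic^{\top}$ and has size polynomial in $\size{\varphi}$.

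The correctness argument is a structural induction establishing $\sigma, i \models \gamma$ if and only if $\sigma', i \models \gamma'$ for every subformula $\gamma$, where $\gamma'$ is the translation and $\sigma'$ extends $\sigma$ by the canonical values for $\mathtt{k}$ and the $\mathtt{w}_{\avariablebis,\psi}$. The Boolean and temporal cases are immediate because $(\cdot)'$ is homomorphic, and local equalities are untouched. The crux is the obligation case. If $\sigma, i \models \oblieq{\avariable}{\avariablebis}{\psi}$ with witness $j>i$, then $\psi$ holds at $j$, so $\mathtt{w}_{\avariablebis,\psi}(j) = \avariablebis(j) = \avariable(i)$ and the $\top$-obligation holds. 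Conversely, if $\sigma', i \models \oblieq{\avariable}{\mathtt{w}_{\avariablebis,\psi}}{\top}$ with witness $j>i$, then $\avariable(i) = \mathtt{w}_{\avariablebis,\psi}(j)$; were $\psi$ false at $j$ we would have $\mathtt{w}_{\avariablebis,\psi}(j) = c$, contradicting $\avariable(i) \ne c$, so $\psi$ holds at $j$ and $\avariable(i) = \avariablebis(j)$, witnessing the original obligation. For the forward direction of the proposition I would build $\sigma'$ from a model of $\varphi$ using exactly these canonical values; for the converse I would restrict a model of $\varphi'$ to the variables of $\varphi$, observing that $\textit{const}$ and the two $\always$-formulas force precisely the value pattern used above.

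The extension from \SAT{$\pmainlogiceq$} to \SAT{$\pmainlogic^{\top}$} needs no new machinery: the carrier $\mathtt{w}_{\avariablebis,\psi}$ depends only on the per-position rewriting of values, not on temporal direction, so the same variable also serves past obligations, and I would simply replace each $\poblieq{\avariable}{\avariablebis}{\psi}$ by $\poblieq{\avariable}{\mathtt{w}_{\avariablebis,\psi}}{\top}$, the symmetric argument (with $j<i$) going through verbatim. I expect the main obstacle to be ruling out \emph{false positives}, i.e.\ a spurious match $\avariable(i) = \mathtt{w}_{\avariablebis,\psi}(j)$ at a position $j$ where $\psi$ fails. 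This is exactly what the single global constant $c$ resolves: because $c$ is forced to differ from every value of every variable of $\varphi$ at every position, one fixed dummy value simultaneously blocks spurious matches for all source variables and all positions, which is what makes a single carrier per $(\avariablebis,\psi)$ suffice.
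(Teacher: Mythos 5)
Your proposal is correct and follows essentially the same route as the paper's proof: the same appeal to Lemma~\ref{lem:no-nesting-reduction} to flatten tests, the same constant variable $\mathtt{k}$ with the same $\textit{const}$ formula, carrier variables $\mathtt{w}_{\avariablebis,\psi}$ that coincide with the paper's $\mathtt{v}_{\avariablebis,\psi}$ (your two $\always$-implications are, given $\textit{const}$, equivalent to the paper's $\textit{val-}\mathtt{v}_{\avariablebis,\psi}$ constraint), the same replacement of $\oblieq{\avariable}{\avariablebis}{\psi}$ by a $\top$-obligation on the carrier, and the same two-directional structural-induction argument whose crux is that the globally fresh constant value blocks spurious matches. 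Your handling of past obligations by reusing the same carriers is also exactly how the paper concludes.
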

\makeatletter{}\ifLONG
  \begin{proof}

In a nutshell, for every $\varphi \in
  \mainlogiceq$, we compute in polynomial time a formula $\varphi' \in \mainlogic^{\top}$ that
  preserves satisfiability. Besides all
  the variables from $\varphi$, $\varphi'$ uses a new distinguished
  variable $\mathtt{k}$, and a variable
  $\mathtt{v}_{\avariablebis,\psi}$ for every subformula $\psi$ of
  $\varphi$ and every variable $\avariablebis$ of $\varphi$.
  We enforce $\mathtt{k}$ to have a constant value different from all
  values of variables of $\varphi$. At every position, we 
  enforce
  $\psi$ to hold if $\mathtt{v}_{\avariablebis,\psi} \approx \avariablebis$,
  and $\psi$ not to  hold if $\mathtt{v}_{\avariablebis,\psi} \approx
  \mathtt{k}$ as shown above. 
  Then
  $\oblieq{\avariable}{\avariablebis}{\psi}$ is replaced by
  $\oblieq{\avariable}{\mathtt{v}_{\avariablebis,\psi}}{\top}$. 
  \begin{center}
    \includegraphics{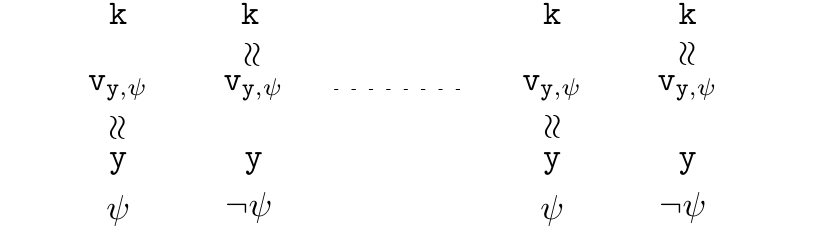}
  \end{center}
  Next,
  we formalise these ideas.

\cut{
Let us be more precise. 
We show how to compute in polynomial time, for every $\varphi \in \mainlogiceq$,  a formula $\varphi' \in \mainlogic^{\top}$ that preserves satisfiability. The formula $\varphi'$ uses, besides all the variables from $\varphi$, a distinguished variable $\mathtt{k}$ that does not appear in $\varphi$, and a variable $\mathtt{v}_{\avariable,\psi}$ for every subformula $\psi$ of $\varphi$ and every variable $\avariable$ appearing in $\varphi$. 
}

\smallskip

Let $\varphi \in \mainlogiceq$. For any model $\sigma$, let $\sigma_\varphi$ be so that
\begin{enumerate}[label=\({\alph*}]
  \item\label{it:sigmaphi:1}
 $|\sigma| = |\sigma_\varphi|$,
  \item\label{it:sigmaphi:2}
 for every $0 \leq i < |\sigma|$ and $\avariable \in \variabs(\varphi)$, $\sigma(i)(\avariable) = \sigma_\varphi(i)(\avariable)$,
  \item\label{it:sigmaphi:3}
 there is some data value $\adatum \not\in \set{\sigma_\varphi(i)(\avariable) \mid \avariable \in \variabs(\varphi), 0 \leq i < |\sigma|}$ such that for every $0 \leq i < |\sigma|$, $\sigma_\varphi(i)(\mathtt{k}) = \adatum$, and
  \item\label{it:sigmaphi:4}
 for every $0 \leq i < |\sigma|$, $\sigma_\varphi(i)(\mathtt{v}_{\avariable,\psi}) = \sigma_\varphi(i)(\avariable)$ if $\sigma,i \models \psi$, and $\sigma_\varphi(i)(\mathtt{v}_{\avariable,\psi}) = \sigma_\varphi(i)(\mathtt{k})$ otherwise.
\end{enumerate}
For every other unmentioned variable, $\sigma$ and $\sigma_\varphi$ coincide. It is evident that for every $\varphi$ and $\sigma$, a model $\sigma_\varphi$ with the aforementioned properties exists. 
Next, we define $\varphi'\in \mainlogic^{\top}$ so that 
\begin{align}
  \sigma \models \varphi ~~\text{if, and only if,}~~ \sigma_\varphi \models \varphi'. \label{eq:lrvapprox-to-lrvtop:1}
\end{align}\smallskip

\noindent We assume that for every $\psi\in\subft{\varphi}$, we have 
that $\subft{\psi} = \emptyset$. This is without any loss of generality by Lemma~\ref{lem:no-nesting-reduction}.

\begin{itemize}
\item First, the variable $\mathtt{k}$ will act as a
  constant; it will always have the same data value at any position of
  the model, which must be different from those of all variables of
  $\varphi$.
  \[
  \textit{const} ~~=~~ \always \left((\mathtt{k}
    \approx \mynext \mathtt{k} \lor \lnot \mynext
    \top) ~\land~ \bigwedge_{\avariable \in \variabs(\varphi)}
    (\mathtt{k} \not\approx \avariable)\right).
  \]
\item   Second, any variable $\mathtt{v}_{\avariable,\psi}$ has either the value of
  $\mathtt{k}$ or that of $\avariable$. Further, the
  latter holds if, and only if, $\psi$ is true.
  \[
  \textit{val-}\mathtt{v}_{\avariable,\psi} ~~=~~ \always( \mathtt{v}_{\avariable,\psi} \approx
  \mathtt{k} ~\lor~ \mathtt{v}_{\avariable,\psi} \approx
  \avariable) ~\land~ \always( {\mathtt{v}_{\avariable,\psi} \approx \avariable}
  ~\Leftrightarrow~ \psi).
  \]
\item   Finally, let $\varphi^{\top}$ be the result of replacing every
  appearance of $\oblieq{\avariable}{\avariablebis}{\psi}$ by
  $\oblieq{\avariable}{\mathtt{v}_{\avariablebis,\psi}}{\top}$ in $\varphi$.
\end{itemize}
We then define $\varphi'$ as follows.
\[
\varphi' ~~=~~ 
\varphi^\top
~~\land~~
\textit{const}
~~\land~~
\bigwedge_{\psi \in \subft{\varphi}}\textit{val-}\mathtt{v}_{\avariable,\psi}.
\]
Notice that $\varphi'$ can be computed from $\varphi$ in polynomial time.

\smallskip

\begin{clm}
$\varphi$ is satisfiable if, and only if,
  $\varphi'$ is satisfiable.
\end{clm}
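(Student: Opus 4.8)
The plan is to establish the two-way equivalence~\eqref{eq:lrvapprox-to-lrvtop:1}, namely $\sigma \models \varphi$ iff $\sigma_\varphi \models \varphi'$, and then read off both directions of the claim from it. For the forward direction, if $\sigma \models \varphi$ then I would check that the model $\sigma_\varphi$ defined by Conditions~\ref{it:sigmaphi:1}--\ref{it:sigmaphi:4} satisfies $\varphi'$: the conjunct $\textit{const}$ holds by Condition~\ref{it:sigmaphi:3}, each conjunct $\textit{val-}\mathtt{v}_{\avariable,\psi}$ holds by Condition~\ref{it:sigmaphi:4} (here one uses that $\psi$ is obligation-free, so $\sigma_\varphi, i \models \psi$ iff $\sigma, i \models \psi$ by Condition~\ref{it:sigmaphi:2}), and $\sigma_\varphi \models \varphi^\top$ follows from the structural induction below. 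For the converse, given any $\sigma' \models \varphi'$ I would restrict it to $\variabs(\varphi)$ to obtain $\sigma$; since $\sigma' \models \textit{const}$ forces $\mathtt{k}$ to be constant with a value outside $\variabs(\varphi)$ and each $\sigma' \models \textit{val-}\mathtt{v}_{\avariable,\psi}$ forces $\mathtt{v}_{\avariable,\psi}$ to take exactly the value prescribed by Condition~\ref{it:sigmaphi:4}, the model $\sigma'$ meets the defining conditions of $\sigma_\varphi$ (with $\adatum := \sigma'(0)(\mathtt{k})$, the specific constant being irrelevant as only equality is tested). Hence $\sigma \models \varphi$ and $\varphi$ is satisfiable.

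The heart of the argument is the structural induction showing that for every subformula $\gamma$ of $\varphi$ and every position $i$, $\sigma, i \models \gamma$ iff $\sigma_\varphi, i \models \gamma^\top$. For Boolean connectives, temporal operators and local tests $\avariable \oblieqlocal \mynext^\ell \avariablebis$ this is immediate, because $(\cdot)^\top$ is homomorphic for these and $\sigma_\varphi$ agrees with $\sigma$ on $\variabs(\varphi)$ by Condition~\ref{it:sigmaphi:2}; I expect these cases to reduce to the inductive hypothesis by a routine check. The only interesting case is the obligation $\gamma = \oblieq{\avariable}{\avariablebis}{\psi}$, whose translation is $\gamma^\top = \oblieq{\avariable}{\mathtt{v}_{\avariablebis,\psi}}{\top}$.

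For that case I would argue using Condition~\ref{it:sigmaphi:4}: at every position $j$ one has $\sigma_\varphi(j)(\mathtt{v}_{\avariablebis,\psi}) = \sigma(j)(\avariablebis)$ when $\sigma, j \models \psi$, and $\sigma_\varphi(j)(\mathtt{v}_{\avariablebis,\psi}) = \adatum$ (the fresh value carried by $\mathtt{k}$) otherwise. If $\sigma, i \models \oblieq{\avariable}{\avariablebis}{\psi}$, pick a witness $j > i$ with $\sigma(i)(\avariable) = \sigma(j)(\avariablebis)$ and $\sigma, j \models \psi$; then $\sigma_\varphi(i)(\avariable) = \sigma(j)(\avariablebis) = \sigma_\varphi(j)(\mathtt{v}_{\avariablebis,\psi})$, so $\sigma_\varphi, i \models \oblieq{\avariable}{\mathtt{v}_{\avariablebis,\psi}}{\top}$. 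Conversely, a witness $j > i$ for $\oblieq{\avariable}{\mathtt{v}_{\avariablebis,\psi}}{\top}$ satisfies $\sigma_\varphi(i)(\avariable) = \sigma_\varphi(j)(\mathtt{v}_{\avariablebis,\psi})$; since $\adatum$ differs from every value of a variable of $\varphi$, the right-hand side cannot equal $\adatum$, whence $\sigma, j \models \psi$ and $\sigma_\varphi(j)(\mathtt{v}_{\avariablebis,\psi}) = \sigma(j)(\avariablebis)$, giving $\sigma(i)(\avariable) = \sigma(j)(\avariablebis)$ and thus $\sigma, i \models \oblieq{\avariable}{\avariablebis}{\psi}$.

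The main obstacle — and the single point on which correctness rests — is precisely this freshness argument: storing the junk value $\adatum$ in $\mathtt{v}_{\avariablebis,\psi}$ exactly at the positions where $\psi$ fails, together with $\adatum$ being distinct from all genuine values, is what makes the unguarded repetition test $\oblieq{\avariable}{\mathtt{v}_{\avariablebis,\psi}}{\top}$ behave like the guarded test $\oblieq{\avariable}{\avariablebis}{\psi}$. Two side conditions make this watertight and I would verify them explicitly: that $\psi$ is obligation-free (guaranteed by Lemma~\ref{lem:no-nesting-reduction}, so that $\psi^\top = \psi$ and $\varphi'$ genuinely lies in $\mainlogic^{\top}$ with well-formed $\textit{val-}$ conjuncts), and that the value of $\avariable$ is never $\adatum$, which is exactly what $\textit{const}$ enforces. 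Finally, the extension from $\pmainlogiceq$ to $\pmainlogic^{\top}$ is entirely symmetric: one introduces analogous variables used with $\poblieq{\cdot}{\cdot}{\top}$ and repeats the same induction for the past obligation $\poblieq{\avariable}{\avariablebis}{\psi}$.
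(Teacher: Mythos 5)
Your proposal is correct and follows essentially the same route as the paper: the two-way structural induction on subformulas, with the obligation case handled by exactly the freshness argument (the constant $\mathtt{k}$-value marking positions where $\psi$ fails, so that $\oblieq{\avariable}{\mathtt{v}_{\avariablebis,\psi}}{\top}$ simulates $\oblieq{\avariable}{\avariablebis}{\psi}$). The only cosmetic difference is in the converse direction, where the paper runs a fresh one-directional induction directly on the model of $\varphi'$, whereas you observe that any such model already satisfies the defining conditions of $\sigma_\varphi$ relative to its restriction and then reuse the two-way equivalence; both rest on the same fact that $\textit{const}$ and $\textit{val-}\mathtt{v}_{\avariable,\psi}$ force those conditions.
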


\begin{proof}

[$\Rightarrow$]
Note that one direction would follow from
\eqref{eq:lrvapprox-to-lrvtop:1}: if $\varphi$ is satisfiable by
some $\sigma$, then $\varphi'$ is satisfiable in $\sigma_\varphi$. In
order to establish \eqref{eq:lrvapprox-to-lrvtop:1}, we show that 
\begin{align}
  \text{ for every subformula $\gamma$ of $\varphi$ and $0 \leq i < |\sigma_\varphi|$, we have
    $\sigma, i \models \gamma$ if{f} $\sigma_{\varphi}, i
    \models \gamma^{\top}$.}
  \label{eq:psi-equiv-psitop}
\end{align}
Since $\sigma_\varphi \models \textit{const}$ by
condition \eqref{it:sigmaphi:3} and $\sigma_\varphi \models
\textit{val-}\mathtt{v}_{\avariable,\psi}$ for every $\psi$ due to
\eqref{it:sigmaphi:4}, this is sufficient to conclude then $\sigma
\models \varphi$ iff $\sigma_\varphi \models \varphi'$.
We show it by structural induction.  If $\sigma, i \models
\oblieq{\avariable}{\avariablebis}{\psi}$ then there is some $j > i$
where $\sigma(j)(\avariablebis) = \sigma(i)(\avariable)$ and $\sigma,
j \models \psi$. Then, by condition \eqref{it:sigmaphi:4},
$\sigma_\varphi(j)(\mathtt{v}_{\avariablebis,\psi}) =
\sigma_\varphi(j)(\avariablebis)$ and thus $\sigma_\varphi, i \models
\oblieq{\avariable}{\mathtt{v}_{\avariablebis,\psi}}{\top}$.  Conversely, if
$\sigma_{\varphi}, i \models
\oblieq{\avariable}{\mathtt{v}_{\avariablebis,\psi}}{\top}$ this means that there
is some $j > i$ such that $\sigma_{\varphi}(i)(\avariable) =
\sigma_{\varphi}(j)(\mathtt{v}_{\avariablebis,\psi})$. By \eqref{it:sigmaphi:3}, we
have that $\sigma_{\varphi}(j)(\mathtt{v}_{\avariablebis,\psi}) \neq
\sigma_{\varphi}(j)(\mathtt{k})$, and by
\eqref{it:sigmaphi:4} this means that
$\sigma_{\varphi}(j)(\mathtt{v}_{\avariablebis,\psi}) =
\sigma_{\varphi}(j)(\avariablebis)$ and $\sigma_{\varphi}, j \models
\psi$. Therefore, $\sigma, i \models
\oblieq{\avariable}{\avariablebis}{\psi}$.  The proof for the base
case of the form $\aformulabis = \avariable \oblieqlocal \mynext^i
\avariablebis$ and for all the cases of the induction step are by an
easy verification since $(\cdot)^\top$ is homomorphic. Hence, \eqref{eq:psi-equiv-psitop} holds.

\smallskip

[$\Leftarrow$] Suppose that $\sigma \models \varphi'$. Note that due
to $\textit{const}$ condition \eqref{it:sigmaphi:3} holds in $\sigma$,
and due to $\textit{val-}\mathtt{v}_{\avariable,\psi}$, condition
\eqref{it:sigmaphi:4} holds.  We show that for every position $i$ and
subformula $\gamma$ of $\varphi$, if $\sigma, i \models \gamma^\top$,
then $\sigma, i \models \gamma$.

The only interesting case is when $\gamma =
\oblieq{\avariable}{\avariablebis}{\psi}$, and $\gamma^\top =
\oblieq{\avariable}{\mathtt{v}_{\avariablebis,\psi}}{\top}$, since for all
boolean and temporal operators $(\cdot)^\top$ is homomorphic. If
$\sigma, i \models \oblieq{\avariable}{\mathtt{v}_{\avariablebis,\psi}}{\top}$
there must be some $j>i$ so that $\sigma(j)(\mathtt{v}_{\avariablebis,\psi}) =
\sigma(i)(\avariable)$. By condition \eqref{it:sigmaphi:3},
$\sigma(j)(\mathtt{k}) \neq \sigma(j)(\mathtt{v}_{\avariablebis,\psi}) =
\sigma(i)(\avariable)$, and by condition \eqref{it:sigmaphi:4}, this
means that $\sigma, j \models \psi$ and
$\sigma(j)(\mathtt{v}_{\avariablebis,\psi}) = \sigma(j)(\avariablebis)$. Hence,
$\sigma, i \models\oblieq{\avariable}{\avariablebis}{\psi}$. The remaining cases are straightforward since $(\cdot)^\top$ is homomorphic on temporal and boolean operators.
\end{proof}
Finally, note that we can extend this coding to treat past obligations in the obvious way. 
\end{proof}

\else \begin{proof}[Proof sketch]
  For every $\varphi \in
  \mainlogiceq$, we compute in polynomial time $\varphi' \in \mainlogic^{\top}$ that
  preserves satisfiability. Besides all
  the variables from $\varphi$, $\varphi'$ uses a new distinguished
  variable $\mathtt{k}$, and a variable
  $\mathtt{v}_{\avariablebis,\psi}$ for every subformula $\psi$ of
  $\varphi$ and every variable $\avariablebis$ of $\varphi$.
  We enforce $\mathtt{k}$ to have a constant value different from all
  values of variables of $\varphi$. At every
  \begin{center}
    \includegraphics{EliminatePosTest}
  \end{center}
  position, we enforce
  $\psi$ to hold if $\mathtt{v}_{\avariablebis,\psi} \approx \avariablebis$,
  and $\psi$ not to  hold if $\mathtt{v}_{\avariablebis,\psi} \approx
  \mathtt{k}$ as shown above. Then
  $\oblieq{\avariable}{\avariablebis}{\psi}$ is replaced by
  $\oblieq{\avariable}{\mathtt{v}_{\avariablebis,\psi}}{\top}$.
\end{proof}
\fi

By combining Proposition~\ref{prop:LRV2LRVapprox} and Proposition~\ref{prop:LRVapprox-to-LRVtop},
we get the following result. 

\begin{cor}\label{cor:reduction-LRVtop}
There is a polynomial-time reduction from
    \SAT{$\mainlogic$} into \SAT{$\mainlogic^{\top}$} [resp.  from
    \SAT{$\pmainlogic$} into \SAT{$\pmainlogic^{\top}$}].
  \end{cor}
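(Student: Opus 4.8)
The plan is to simply compose the two reductions established above, relying on the fact that the composition of two polynomial-time, satisfiability-preserving translations is again a polynomial-time, satisfiability-preserving translation. Concretely, given $\varphi \in \mainlogic$, I would first apply the reduction of Proposition~\ref{prop:LRV2LRVapprox} to obtain in polynomial time a formula $\varphi_1 \in \mainlogiceq$ with $\varphi$ satisfiable iff $\varphi_1$ satisfiable. Treating $\varphi_1$ as the input, I would then apply the reduction of Proposition~\ref{prop:LRVapprox-to-LRVtop} to obtain in polynomial time a formula $\varphi_2 \in \mainlogic^{\top}$ with $\varphi_1$ satisfiable iff $\varphi_2$ satisfiable. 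Chaining the two equivalences yields that $\varphi$ is satisfiable iff $\varphi_2$ is, so $\varphi \mapsto \varphi_2$ is the desired reduction from \SAT{$\mainlogic$} into \SAT{$\mainlogic^{\top}$}.

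The only point that warrants (minimal) care is the polynomial bound of the composition. Since Proposition~\ref{prop:LRV2LRVapprox} guarantees $|\varphi_1| \leq p(|\varphi|)$ for some polynomial $p$, and the reduction of Proposition~\ref{prop:LRVapprox-to-LRVtop} runs in time polynomial in the size of its input, the second stage runs in time polynomial in $p(|\varphi|)$, hence polynomial in $|\varphi|$; adding the polynomial time of the first stage, the whole translation is computable in polynomial time. The $\pmainlogic$ case is handled identically: I would invoke the respective clauses mapping \SAT{$\pmainlogic$} into \SAT{$\pmainlogiceq$} and \SAT{$\pmainlogiceq$} into \SAT{$\pmainlogic^{\top}$}, and compose them in exactly the same way.

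There is essentially no obstacle here, since the substantive work is already carried out in the two propositions: Proposition~\ref{prop:LRV2LRVapprox} eliminates the inequality tests $\oblineq{\avariable}{\avariablebis}{\aformula}$, and Proposition~\ref{prop:LRVapprox-to-LRVtop} then eliminates the arbitrary positive tests $\oblieq{\avariable}{\avariablebis}{\aformula}$, leaving only the restricted atoms $\avariable \oblieqlocal \mynext^i \avariablebis$ and $\oblieq{\avariable}{\avariablebis}{\top}$ that constitute $\mainlogic^{\top}$. The corollary is thus a one-line consequence of the closure of polynomial-time reductions under composition.
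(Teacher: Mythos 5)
Your proof is correct and matches the paper's own argument exactly: the corollary is stated there as an immediate consequence of composing Proposition~\ref{prop:LRV2LRVapprox} with Proposition~\ref{prop:LRVapprox-to-LRVtop}, which is precisely what you do (including the routine check that composing two polynomial-time, satisfiability-preserving translations stays polynomial). Nothing further is needed.
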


  Since \SAT{$\pmainlogic^{\top}$} is decidable \cite{Demri&DSouza&Gascon12}, we obtain the decidability of 
\SAT{$\pmainlogic$}.

\begin{cor}
    \SAT{$\pmainlogic$} is decidable.
  \end{cor}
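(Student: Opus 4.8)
The plan is to reduce \SAT{$\pmainlogic$} to an already-solved decision problem and then simply invoke its decision procedure. Concretely, Corollary~\ref{cor:reduction-LRVtop} supplies a polynomial-time, satisfiability-preserving translation mapping every $\pmainlogic$ formula $\aformula$ to a $\pmainlogic^{\top}$ formula $\aformula'$ such that $\aformula$ is satisfiable if and only if $\aformula'$ is. Since \SAT{$\pmainlogic^{\top}$} is decidable by Proposition~\ref{proposition-diamond}(III) (the result of~\cite{Demri&DSouza&Gascon12}, obtained by reduction to the reachability problem for Petri nets), the following is a decision procedure for \SAT{$\pmainlogic$}: given $\aformula$, first compute $\aformula'$ via the reduction, then run the procedure for \SAT{$\pmainlogic^{\top}$} on $\aformula'$ and return its answer.

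First I would make explicit that the reduction of Corollary~\ref{cor:reduction-LRVtop} is itself the composition of the two translations just established. Proposition~\ref{prop:LRV2LRVapprox} eliminates the inequality-tested obligations $\oblineq{\avariable}{\avariablebis}{\aformula}$ and $\poblineq{\avariable}{\avariablebis}{\aformula}$, carrying $\pmainlogic$ into $\pmainlogiceq$; Proposition~\ref{prop:LRVapprox-to-LRVtop} then eliminates the remaining arbitrary test formulas inside equality obligations, carrying $\pmainlogiceq$ into $\pmainlogic^{\top}$. Each step preserves satisfiability and runs in polynomial time, so their composition does too, and in particular $\aformula'$ is effectively computable from $\aformula$. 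I would also record that both propositions are stated together with their ``resp.'' variants for past obligations, so the chain applies verbatim to $\pmainlogic$ (and not merely to $\mainlogic$).

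There is essentially no genuine obstacle remaining: all the real work has been front-loaded into the two elimination propositions and into the cited decidability of \SAT{$\pmainlogic^{\top}$}. The only point worth double-checking is that the two reductions are truly chainable --- that the output of Proposition~\ref{prop:LRV2LRVapprox} lies in the input class of Proposition~\ref{prop:LRVapprox-to-LRVtop}, and that the normal-form hypothesis of Lemma~\ref{lem:no-nesting-reduction} (flattening nested test formulas so that $\subft{\psi}=\emptyset$) is compatible with both --- which it is, since that lemma is proved for both $\mainlogic$ and $\mainlogiceq$. Granting this, decidability of \SAT{$\pmainlogic$} follows immediately from the composition together with Proposition~\ref{proposition-diamond}(III).
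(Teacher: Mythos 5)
Your proposal is correct and matches the paper's own argument exactly: the paper likewise obtains this corollary by composing the reduction of Corollary~\ref{cor:reduction-LRVtop} (itself the chain of Propositions~\ref{prop:LRV2LRVapprox} and~\ref{prop:LRVapprox-to-LRVtop}) with the decidability of \SAT{$\pmainlogic^{\top}$} from~\cite{Demri&DSouza&Gascon12}. Your additional checks on chainability and the normal-form hypothesis of Lemma~\ref{lem:no-nesting-reduction} are sound but not points the paper needed to belabor.
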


We have seen  how to eliminate test
formulas $\aformula$ from
$\oblieq{\avariable}{\avariablebis}{\aformula}$ and
$\poblieq{\avariable}{\avariablebis}{\aformula}$. Combining this with the decidability proof for $\pmainlogic^{\top}$ satisfiability
from~\cite{Demri&DSouza&Gascon12}, we get  that
 both 

\begin{cor}
\SAT{$\pmainlogic$} and \SAT{$\pmainlogic^{\top}$}
are equivalent to 
Reach(VASS)
modulo polynomial-space reductions.
\end{cor}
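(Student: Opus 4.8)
The plan is to assemble three reductions that are already at hand and to observe that they close into two cycles, one through each logic. The three building blocks are: (R1) the polynomial-space reduction from Reach(VASS) into \SAT{$\pmainlogic$} of Theorem~\ref{theorem-reach-PLRV}; (R2) the polynomial-time reduction from \SAT{$\pmainlogic$} into \SAT{$\pmainlogic^{\top}$} furnished by the $\pmainlogic$-instance of Corollary~\ref{cor:reduction-LRVtop}; and (R3) the reduction from \SAT{$\pmainlogic^{\top}$} into Reach(VASS) underlying the decidability argument of Proposition~\ref{proposition-diamond}(III), i.e.\ the translation of $\pmainlogic^{\top}$ satisfiability into Petri-net reachability from~\cite{Demri&DSouza&Gascon12}.

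For the equivalence of \SAT{$\pmainlogic$} with Reach(VASS), the forward direction Reach(VASS) $\to$ \SAT{$\pmainlogic$} is exactly (R1), while the backward direction \SAT{$\pmainlogic$} $\to$ Reach(VASS) is obtained by composing (R2) with (R3). For the equivalence of \SAT{$\pmainlogic^{\top}$} with Reach(VASS), the backward direction is (R3) directly, and the forward direction Reach(VASS) $\to$ \SAT{$\pmainlogic^{\top}$} is the composition of (R1) with (R2). The reason the extra step (R2) is unavoidable in the forward direction for $\pmainlogic^{\top}$ is that the formula produced by the construction of Theorem~\ref{theorem-reach-PLRV} genuinely lives in $\pmainlogic$ and not in $\pmainlogic^{\top}$: its obligations $\oblieq{\avariable}{\avariable}{\dec{\acounter}}$ and $\poblieq{\avariable}{\avariable}{\inc{\acounter}}$ carry non-trivial test formulas, so one must first eliminate those tests via Corollary~\ref{cor:reduction-LRVtop}.

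The only delicate point, and the one I would spend care on, is the complexity bookkeeping: one must verify that all three reductions stay within polynomial space and, crucially, that the intermediate objects produced in the two compositions have polynomial size, so that the compositions themselves remain polynomial-space reductions. Composition of (R2) with (R3) is unproblematic, since (R2) is polynomial-time and therefore emits a $\pmainlogic^{\top}$-formula of polynomial size, which (R3) then consumes within polynomial work space. The forward composition of (R1) with (R2) requires checking that the formula output by Theorem~\ref{theorem-reach-PLRV} is of polynomial size, so that applying the polynomial-time reduction (R2) to it stays polynomial. Finally, the one genuinely external resource bound is that of (R3): I would confirm, or cite from~\cite{Demri&DSouza&Gascon12}, that the translation of $\pmainlogic^{\top}$ satisfiability into Petri-net reachability is computable in polynomial space. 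Granting these size and space checks, each direction of each of the two equivalences is realised by a polynomial-space reduction, which is precisely the statement of the corollary.
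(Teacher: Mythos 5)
Your proposal is correct and takes essentially the same route as the paper, whose own (one-sentence) justification likewise combines Theorem~\ref{theorem-reach-PLRV}, the test-elimination reduction of Corollary~\ref{cor:reduction-LRVtop}, and the reduction of \SAT{$\pmainlogic^{\top}$} into Petri-net reachability from~\cite{Demri&DSouza&Gascon12}, composed into exactly the two cycles you describe. One caveat on your bookkeeping: Theorem~\ref{theorem-reach-PLRV} is only claimed as a polynomial-\emph{space} reduction (its preprocessing into zero initial/final configurations with unit updates need not keep the instance polynomial-size), so rather than checking that its output formula is of polynomial size, you should check that its composition with the polynomial-time test elimination remains computable in polynomial space, which holds because that elimination is a local syntactic substitution that can be applied while regenerating the intermediate formula on the fly.
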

 
\makeatletter{}\subsection{From $\mainlogic^{\top}$  Satisfiability to Control State Reachability}
\label{section-reduction-from-top}
Recall that in~\cite{Demri&DSouza&Gascon12}, \SAT{$\mainlogic^{\top}$} is reduced
to the reachability problem for a subclass of VASS. 
Herein, this is refined  by introducing \defstyle{incremental errors}
in order to improve the complexity. 

In~\cite{Demri&DSouza&Gascon12}, the standard concept of atoms from
the Vardi-Wolper construction of automaton for LTL is used (see also Appendix~\ref{section-appendix-aphifrompreviouspaper}). Refer to
the diagram at the top of Figure~\ref{fig:OptionalDecrExample}. 

\begin{figure}[!htp]
  \center
  \includegraphics[scale=.9]{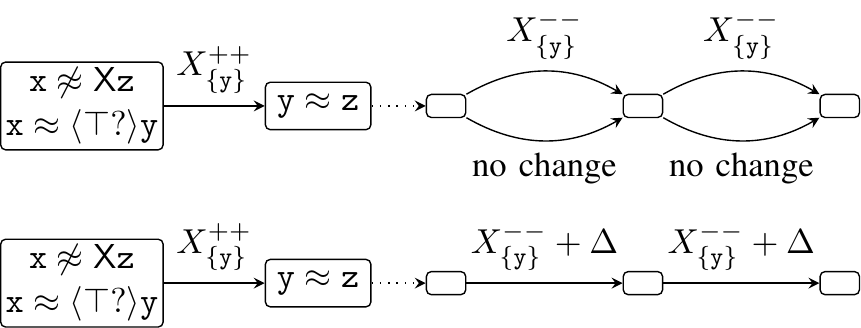}
  \caption{Automaton constructions from
  \cite{Demri&DSouza&Gascon12} (top) and from this paper (bottom).}
  \label{fig:OptionalDecrExample}
\end{figure}
The
formula $\oblieq{\avariable}{\avariablebis}{\top}$ in the left atom
creates an obligation for the current value
of $\avariable$ to appear some time in the future in $\avariablebis$.
This obligation cannot be satisfied in the second atom, since
$\avariablebis$ has to satisfy some other constraint there
($\avariablebis \oblieqlocal \avariableter$). To remember this
unsatisfied obligation about $\avariablebis$ while taking the
transition from the first atom to the second, the counter
$\aset_{\set{\avariablebis}}$ is incremented.  The counter can be
decremented later in transitions that allow the repetition in
$\avariablebis$. If several transitions allow such a repetition, only
one of them needs to decrement the counter (since there was only one
obligation at the beginning). The other transitions which should not
decrement the counter can take the alternative labelled ``no change''
in the right part of Figure~\ref{fig:OptionalDecrExample}.

The idea here is to replace the combination of the decrementing
transition and the ``no change'' transition in the top of
Figure~\ref{fig:OptionalDecrExample} with a single transition with
incremental errors as shown in the bottom. After Lemma~\ref{lem:APhiFromEarlierPaper} below
that formalises ideas from~\cite[Section 7]{Demri&DSouza&Gascon12} (see also Appendix~\ref{section-appendix-aphifrompreviouspaper}), 
we prove that
the transition with incremental errors is sufficient.

\begin{lem}[\cite{Demri&DSouza&Gascon12}]
  \label{lem:APhiFromEarlierPaper}
  For a $\mainlogic^{\top}$ formula $\aformula$ that
  uses the variables $\set{\avariable_{1}, \ldots,
  \avariable_{\numvars}}$, a VASS $\Aphi = \langle
  \states,\counters, \transitions \rangle$ can be defined, along with sets
  $\states_{0}, \states_{f} \subseteq \states$ of \defstyle{initial}
  and \defstyle{final} states resp., such that
  \begin{itemize}
  \itemsep 0 cm 
  \item the set of counters $\counters$ consists of all nonempty
    subsets of $ \{\avariable_{1}, \dots, \avariable_{\numvars}\}$.
  \item For all $q,q' \in \states$, either $ \set{\aupfunc \mid
    (q,\aupfunc,q') \in \transitions} = [n_1,m_1] \times \dotsb \times
    [n_{|\counters|},m_{|\counters|}]$ for  some $n_1,m_1, \dotsc,
    n_{|\counters|},m_{|\counters|} \in [-\numvars, \numvars]$, or $\set{\aupfunc \mid (q,\aupfunc,q') \in
    \transitions}=\emptyset$. We call this property \defstyle{closure
    under component-wise interpolation}.
    \item If $\transitions \cap (\set{\astate} \times [-\numvars,
      \numvars]^{\counters} \times \set{\astate'})$ is not empty, then
      for every $X \in \counters$ there is $(q,\aupfunc,q') \in
      \transitions$ so that $\aupfunc(X) \geq 0$.
      We call this property \defstyle{optional decrement}.
    \item Let $\vect{0}$ be the counter valuation that assigns $0$ to all
  counters. Then $\aformula$ is satisfiable iff $\langle \ainitst,
  \vect{0} \rangle \step{*} \langle \afinst, \vect{0} \rangle$ for
  some $\ainitst \in \states_{0}$ and $\afinst \in \states_{f}$.
  \end{itemize}
\end{lem}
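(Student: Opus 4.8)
The plan is to build $\Aphi$ as a Vardi--Wolper style automaton for the temporal skeleton of $\aformula$, augmented with counters that track the pending \emph{future obligations}. By the logarithmic-space reduction noted in the preliminaries I may assume every local atom $\avariable \oblieqlocal \mynext^i \avariablebis$ has $i \in \set{0,1}$, so a control state (an \emph{atom}, a maximal consistent subset of the closure of $\aformula$) only needs to record the truth of each subformula at the current position together with the equality pattern linking the current and the next position; the operators $\mynext, \until, \previous, \since$ are handled exactly as in the classical construction, which fixes the finite set $\states$ and the sets $\states_0$ (atoms fit for position $0$) and $\states_f$ (atoms fit to be the last position, with all pending eventualities discharged). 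A transition $\astate \step{} \astate'$ is enabled when the ``next'' commitments recorded in $\astate$ agree with what $\astate'$ asserts; so far this carries no data information beyond adjacency equalities.

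The counters carry the data-sensitive part. The key idea is that each value $\adatum$ occurring in a model has, at every position, a \emph{residual obligation set}: the set of variables $\avariablebis$ for which some earlier position asserted $\oblieq{\avariable}{\avariablebis}{\top}$ on $\adatum$ and whose promise has not yet been met by a later occurrence of $\adatum$ under $\avariablebis$. Because the domain is infinite and values are otherwise interchangeable, only the \emph{number} of live values carrying each residual set is relevant, so I would keep one counter for every nonempty $X \subseteq \set{\avariable_1,\dots,\avariable_\numvars}$ holding how many live values have residual set exactly $X$; this forces $\counters$ to be the family of all nonempty subsets, as required. On a transition $\astate \step{} \astate'$, each future-obligation subformula true at the current position creates a fresh promise and increments the appropriate counter, while each value reappearing at the next position may discharge promises, moving a value to a smaller residual class (decrementing one counter and, unless the class becomes empty, incrementing a smaller one). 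Starting in $\langle \ainitst, \vect 0\rangle$ says no promise is pending initially, and demanding $\langle \afinst, \vect 0\rangle$ says every promise is eventually kept; this is exactly the semantics of $\oblieq{\avariable}{\avariablebis}{\top}$.

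Given this design the three structural properties fall out. Fixing $\astate,\astate'$, the adjacency equalities and the multiset of freshly created obligations are determined, and the only remaining freedom is how many values of each residual class are discharged at the next step; these choices are bounded (at most $\numvars$ variables act at one position, so each counter moves within $[-\numvars,\numvars]$) and, by construction, may be made independently per class, so the admissible update vectors form a product $\prod_{X \in \counters}[n_X,m_X]$ of intervals, giving \emph{closure under component-wise interpolation}. \emph{Optional decrement} holds because a repeated value need never be read as discharging a pending promise: whenever a decrement of the counter $X$ is enabled by $\astate'$, a companion transition treats the repetition as incidental and leaves $X$ unchanged, so for every $X \in \counters$ some $(\astate,\aupfunc,\astate')$ has $\aupfunc(X) \ge 0$.

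The main obstacle is the correctness equivalence, i.e.\ that $\aformula$ is satisfiable iff $\langle \ainitst, \vect 0\rangle \step{*} \langle \afinst, \vect 0\rangle$. The forward direction is direct: given a model, read off at each position the residual sets and the discharges that actually occur, producing an accepting run ending at $\vect 0$. The reverse direction is the delicate \emph{concretization} step: from an accepting run one must synthesise an actual model over the infinite domain. Here I would assign pairwise distinct fresh values to the increments, thread each such value through exactly the positions dictated by the bookkeeping until its contribution is decremented away, and exploit the infinitude of the domain to keep distinct obligation-threads value-disjoint; reaching $\vect 0$ guarantees no value is left with an unmet promise, so the resulting $\sigma$ satisfies every $\oblieq{\avariable}{\avariablebis}{\top}$. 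The technically careful point is to verify that this per-value threading is simultaneously consistent with the adjacency equality pattern forced by the atoms, so that the local constraints $\avariable \oblieqlocal \mynext^i \avariablebis$ and the obligation accounting never impose contradictory value identifications.
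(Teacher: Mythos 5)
Your architecture is the same as the paper's (recalled in Appendix~\ref{section-appendix-aphifrompreviouspaper}): a Vardi--Wolper-style automaton whose states record local equalities and asserted obligations, one counter per nonempty subset of $\set{\avariable_{1},\ldots,\avariable_{\numvars}}$ counting pending future obligations, the forward direction by reading a run off a model, and the converse by concretizing counter increments with pairwise distinct fresh data values. The genuine gap is in your justification of \emph{closure under component-wise interpolation}. Your bookkeeping discharges a pending value by a \emph{transfer}: counter $X$ is decremented and counter $X\setminus Y$ is incremented when the value reappears under the variables in $Y$. This couples distinct counters, and the resulting sets of update vectors are not products of intervals. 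Concretely, take $X=\set{\avariablebis,\avariableter}$ and $X'=\set{\avariableter}$, with a target state in which $\avariablebis$ may host a repetition: discharging one pending value yields the update $(-1,+1)$ on the pair $(X,X')$, while not discharging yields $(0,0)$; any box containing both must also contain $(-1,0)$ and $(0,+1)$, but $(-1,0)$ is unsound (it silently forgets the residual obligation on $\avariableter$, so a run could reach $\vect{0}$ even though every corresponding model leaves an obligation unmet, breaking the right-to-left direction of the final equivalence), and $(0,+1)$ corresponds to no event at all. So the box structure does not ``fall out'' of your design, and you cannot simply close the transition set under interpolation afterwards without destroying correctness.

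The paper's construction avoids transfers entirely, and that is precisely the point of its definition of \emph{points of decrement}: a pending obligation set $X$ may be discharged only by an equivalence class $C$ at the new position whose own asserted future obligations $O$ satisfy $C\cup O=X$; the residual $X\setminus C$ is thereby absorbed into $O$, which the frame bookkeeping will itself count later (as an increment) if it survives unsatisfied. Hence increments are a function of the source frame only, decrements of the target frame only, and each counter's update can be chosen independently anywhere in the interval $[u^{+}-u^{-},u^{+}]$; this simultaneously yields closure under component-wise interpolation and optional decrement, the two properties that Lemmas~\ref{lem:IncAutSimReal} and~\ref{lem:RealAutSimInc} actually consume, so the gap is not cosmetic. (A separate, minor mismatch: your preliminary reduction to $i\in\set{0,1}$ introduces fresh variables, so your counters range over subsets of a larger variable set than that of $\aformula$; the paper keeps the original variables and handles $\mynext^{i}$ terms via frames over a window whose length is the $\mynext$-length of $\aformula$.)
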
\smallskip

\noindent At the top of Figure~\ref{fig:OptionalDecrExample}, only one counter
$\aset_{\set{\avariablebis}}$ is shown and is decremented by $1$ for
simplicity. In general, multiple counters can be changed and they can
be incremented/de\-cre\-men\-ted by any number up to $\numvars$, depending
on the initial and target atoms of the transition. If a
counter can be incremented by $\numvars_{1}$ and can be decremented by
$\numvars_{2}$, then there will also be
transitions between the same pair of atoms allowing changes of
$\numvars_{1}-1, \ldots, 1, 0 , -1, \ldots -(\numvars_{2}-1)$. This
corresponds to the closure under component-wise interpolation
mentioned in Lemma~\ref{lem:APhiFromEarlierPaper}. 
The optional decrement property
corresponds to the fact that there will always be a ``no change''
transition that does not decrement any counter.

Now, we show that a single transition that decrements all counters by
the maximal possible number can simulate the set of all transitions
between two atoms, using incremental errors.

Let $\Ainc = \langle
\states, \counters, \transitions^{\mathit{min}} \rangle$ and
$\states_{0}, \states_{f} \subseteq \states$, where $\states,
\states_{0},\states_{f}$ and $\counters$ are same as those of $\Aphi$
and $\transitions^{\mathit{min}}$ is defined as follows: $(\astate,
\minupfunc_{\astate, \astate'}, \astate') \in
\transitions^{\mathit{min}}$ iff $\transitions \cap (\set{\astate}
\times [-\numvars, \numvars]^{\counters} \times \set{\astate'})$ is
not empty and $\minupfunc_{\astate, \astate'}(\aset) \egdef \min_{\aupfunc:
(\astate, \aupfunc, \astate') \in \transitions}\set{\aupfunc(\aset)}$
for all $\aset \in \counters$.
Similarly, $\maxupfunc_{\astate, \astate'}(\aset) \egdef \max_{\aupfunc:
(\astate, \aupfunc, \astate') \in \transitions}\set{\aupfunc(\aset)}$
for all $\aset \in \counters$

\begin{lem}
  \label{lem:IncAutSimReal}
  If $\langle \astate, \acountval \rangle \step{} \langle \astate',
  \acountval' \rangle$ in $\Aphi$, then $\langle \astate, \acountval
  \rangle \stepinc{} \langle \astate', \acountval' \rangle$ in
  $\Ainc$.
\end{lem}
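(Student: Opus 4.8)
The plan is to unfold both step relations and then exhibit explicit witnesses. First I would apply the definition of $\step{}$ in $\Aphi$ to the hypothesis: this yields a transition $(\astate, \aupfunc, \astate') \in \transitions$ with $\acountval' = \acountval + \aupfunc$, and since $\langle \astate, \acountval \rangle$ and $\langle \astate', \acountval' \rangle$ are configurations we have $\acountval, \acountval' \in \Nat^{\counters}$. Because this transition witnesses that $\transitions \cap (\set{\astate} \times [-\numvars, \numvars]^{\counters} \times \set{\astate'})$ is non-empty, the definition of $\transitions^{\mathit{min}}$ guarantees that $(\astate, \minupfunc_{\astate, \astate'}, \astate') \in \transitions^{\mathit{min}}$; this is the transition of $\Ainc$ I will use to realise the gainy step.

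The key observation is the pointwise inequality $\minupfunc_{\astate, \astate'}(\aset) \leq \aupfunc(\aset)$ for every $\aset \in \counters$, which is immediate because $\aupfunc(\aset)$ is one of the values over which the minimum defining $\minupfunc_{\astate, \astate'}(\aset)$ is taken. Guided by the intuition that the single $\minupfunc$-transition decrements by the maximal possible amount and that the incremental error must make up the difference, I would take as witnesses $\vect{w} \egdef \acountval' - \minupfunc_{\astate, \astate'}$ and $\vect{w'} \egdef \acountval'$.

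With these choices the three requirements of $\stepinc{}$ reduce to short computations. The ``before-gain'' condition $\acountval \preceq \vect{w}$ is, componentwise, $0 \leq \aupfunc(\aset) - \minupfunc_{\astate, \astate'}(\aset)$ after substituting $\acountval' = \acountval + \aupfunc$, which is exactly the key inequality; the update equation $\vect{w'} = \vect{w} + \minupfunc_{\astate, \astate'}$ is the trivial identity $\acountval' = (\acountval' - \minupfunc_{\astate, \astate'}) + \minupfunc_{\astate, \astate'}$; and the final condition $\vect{w'} \preceq \acountval'$ holds with equality.

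The only point needing care is that $\vect{w}$ and $\vect{w'}$ must lie in $\Nat^{\counters}$, which I expect to be the main (if minor) obstacle: the naive choice $\vect{w} = \acountval$ would force $\vect{w'} = \acountval + \minupfunc_{\astate, \astate'}$, which can be negative precisely because $\minupfunc_{\astate, \astate'}$ over-decrements. My choice avoids this, since $\vect{w'} = \acountval' \in \Nat^{\counters}$ by hypothesis and $\vect{w} = \acountval' - \minupfunc_{\astate, \astate'} \succeq \acountval' - \aupfunc = \acountval \succeq \vect{0}$, again by the key inequality. This establishes $\langle \astate, \acountval \rangle \stepinc{} \langle \astate', \acountval' \rangle$ in $\Ainc$.
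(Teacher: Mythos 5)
Your proof is correct and follows essentially the same route as the paper: it uses the transition $(\astate, \minupfunc_{\astate,\astate'}, \astate') \in \transitions^{\mathit{min}}$ and chooses the gained configuration so that the perfect $\minupfunc_{\astate,\astate'}$-step lands exactly on $\acountval'$ --- your witness $\vect{w} = \acountval' - \minupfunc_{\astate,\astate'}$ is precisely the paper's $\acountval + \incer$ with $\incer = \aupfunc - \minupfunc_{\astate,\astate'}$. The only (welcome) difference is presentational: you spell out the non-negativity of the intermediate vectors, which the paper leaves implicit.
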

\makeatletter{}\ifLONG
\begin{proof}
  Since $\langle \astate, \acountval \rangle \step{} \langle \astate',
  \acountval' \rangle$, there is a transition $(\astate, \aupfunc,
  \astate') \in \transitions$ such that $\aupfunc + \acountval =
  \acountval'$.  By definition of $\Ainc$, there is
  a transition $(\astate, \minupfunc_{\astate, \astate'}, \astate')
  \in \transitions^{\mathit{min}}$ where $\aupfunc -
  \minupfunc_{\astate, \astate'} \succeq \vect 0$. Let $\incer =
  \aupfunc - \minupfunc_{\astate, \astate'}$ (this will be the
  incremental error used in $\Ainc$).  Now we have $\langle \astate,
  \acountval + \incer \rangle \step{\minupfunc_{\astate, \astate'}}
  \langle \astate', \acountval + \aupfunc\rangle = \langle \astate',
  \acountval'\rangle$ in $\Ainc$. Hence, $\langle \astate, \acountval
  \rangle \stepinc{} \langle \astate', \acountval' \rangle$ in
  $\Ainc$.
\end{proof}
\else
\begin{proof}[Proof sketch]
  If $\Aphi$ takes one of the transitions in the top of
  Figure~\ref{fig:OptionalDecrExample}, $\Ainc$ takes the
  corresponding transition at the bottom, adjusting the incremental
  error $\Delta$ accordingly.
\end{proof} 
\fi
 
\begin{lem}
  \label{lem:RealAutSimInc}
  If $\langle \astate_{1}, \acountval_{1} \rangle \stepinc { *}
  \langle \astate_{2}, \vect{0} \rangle$ in $\Ainc$ and $
  \acountval_{1}' \preceq \acountval_{1}$, then $\langle \astate_{1},
  \acountval_{1}' \rangle \step { *} \langle \astate_{2}, \vect{0}
  \rangle$ in $\Aphi$.
\end{lem}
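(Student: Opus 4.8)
The plan is to induct on the length $n$ of the gainy run in $\Ainc$, proving the stronger claim that a perfect run of $\Aphi$ can be built along the very same sequence of control states while keeping its counter valuation pointwise below the one of $\Ainc$. Write the gainy run as $\langle p_0,\vect{c}_0\rangle \stepinc{} \cdots \stepinc{} \langle p_n,\vect{c}_n\rangle$ with $p_0=\astate_1$, $\vect{c}_0=\acountval_1$, $p_n=\astate_2$ and $\vect{c}_n=\vect{0}$. I would produce valuations $\vect{d}_0,\dots,\vect{d}_n$ with $\vect{d}_0=\acountval_1'$, with $\langle p_i,\vect{d}_i\rangle \step{} \langle p_{i+1},\vect{d}_{i+1}\rangle$ in $\Aphi$, and with the invariant $\vect{d}_i \preceq \vect{c}_i$ at every step. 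The base case holds because $\acountval_1' \preceq \acountval_1 = \vect{c}_0$ by hypothesis; and once the induction is complete, $\vect{d}_n \preceq \vect{c}_n = \vect{0}$ together with non-negativity of counters forces $\vect{d}_n = \vect{0}$, which is exactly the target configuration.

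For the inductive step, assume $\vect{d}_i \preceq \vect{c}_i$ and that $\langle p_i,\vect{c}_i\rangle \stepinc{} \langle p_{i+1},\vect{c}_{i+1}\rangle$ is witnessed by the transition $(p_i,\minupfunc_{p_i,p_{i+1}},p_{i+1}) \in \transitions^{\mathit{min}}$ and vectors $\vect{w},\vect{w}' \in \Nat^{\counters}$ with $\vect{c}_i \preceq \vect{w}$, $\vect{w}' = \vect{w}+\minupfunc_{p_i,p_{i+1}}$ and $\vect{w}' \preceq \vect{c}_{i+1}$. Because this min-transition exists, $\transitions$ contains at least one transition from $p_i$ to $p_{i+1}$, so by closure under component-wise interpolation the admissible updates of $\Aphi$ between these two states are exactly the box $\prod_{\aset} [\minupfunc_{p_i,p_{i+1}}(\aset),\maxupfunc_{p_i,p_{i+1}}(\aset)]$. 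I would then select an update $\aupfunc$ counter by counter and put $\vect{d}_{i+1} := \vect{d}_i + \aupfunc$.

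The crux is that, for each counter $\aset$, the value $\aupfunc(\aset)$ can be chosen to lie in $[\minupfunc_{p_i,p_{i+1}}(\aset),\maxupfunc_{p_i,p_{i+1}}(\aset)]$, to keep the counter non-negative ($\aupfunc(\aset) \geq -\vect{d}_i(\aset)$), and to preserve the invariant ($\aupfunc(\aset) \leq \vect{c}_{i+1}(\aset) - \vect{d}_i(\aset)$). Such a value exists iff $\max(\minupfunc_{p_i,p_{i+1}}(\aset),-\vect{d}_i(\aset)) \leq \min(\maxupfunc_{p_i,p_{i+1}}(\aset),\vect{c}_{i+1}(\aset)-\vect{d}_i(\aset))$, i.e.\ iff four inequalities hold. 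Two are immediate: $\minupfunc_{p_i,p_{i+1}}(\aset) \leq \maxupfunc_{p_i,p_{i+1}}(\aset)$ by definition, and $-\vect{d}_i(\aset) \leq \vect{c}_{i+1}(\aset)-\vect{d}_i(\aset)$ since $\vect{c}_{i+1}(\aset) \geq 0$. The inequality $\minupfunc_{p_i,p_{i+1}}(\aset) \leq \vect{c}_{i+1}(\aset)-\vect{d}_i(\aset)$ follows from $\vect{d}_i(\aset) \leq \vect{c}_i(\aset) \leq \vect{w}(\aset)$ and $\vect{w}(\aset)+\minupfunc_{p_i,p_{i+1}}(\aset)=\vect{w}'(\aset) \leq \vect{c}_{i+1}(\aset)$. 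The last inequality $-\vect{d}_i(\aset) \leq \maxupfunc_{p_i,p_{i+1}}(\aset)$ reduces, since $\vect{d}_i(\aset) \geq 0$, to $\maxupfunc_{p_i,p_{i+1}}(\aset) \geq 0$, which is guaranteed by the optional decrement property applied to the non-empty set of transitions from $p_i$ to $p_{i+1}$.

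I expect the only genuine difficulty to be this last point, namely that one cannot naively reuse the minimal update $\minupfunc_{p_i,p_{i+1}}$ of $\Ainc$ inside $\Aphi$: since $\vect{d}_i$ may be strictly smaller than the gained vector $\vect{w}$, applying the most decrementing update could send a counter below zero. Optional decrement is precisely what rescues the argument, as it ensures $\Aphi$ offers on each counter an update of value at least $0$; intuitively, the spurious gains of the $\Ainc$-run are simulated in $\Aphi$ not by adding tokens but by choosing a less decrementing update. Closure under interpolation then guarantees that the per-counter choices assemble into a single legitimate transition of $\Aphi$, and the induction closes.
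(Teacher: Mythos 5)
Your proposal is correct and follows essentially the same route as the paper's proof: induction on the length of the gainy run, simulating each gainy step by a real transition of $\Aphi$ chosen component-wise (using closure under interpolation to assemble the per-counter choices and optional decrement to guarantee a non-negative update exists), while maintaining the invariant that the simulated valuation stays $\preceq$ the gainy one. The only cosmetic difference is that the paper fixes the concrete update $\max(\minupfunc_{\astate_1,\astate_3}(\aset), -\acountval_1'(\aset))$ (its Case 1/Case 2 split), which is exactly the lower endpoint of the feasible interval you exhibit.
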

\makeatletter{}\ifLONG
\begin{proof}
  By induction on the length $\runlen$ of the run $\langle
  \astate_{1}, \acountval_{1} \rangle \stepinc { *} \langle
  \astate_{2}, \vect{0} \rangle$. The base case $\runlen = 0$ is
  trivial since there is no change in the configuration.

  Induction step: the idea is to simulate the first gainy transition
  by a normal transition that decreases each counter as much as
  possible while ensuring that (1) the resulting value is non-negative
  and (2) we can apply the induction hypothesis to the resulting
  valuation. We calculate the update required for each counter
  individually and by closure under component-wise interpolation, there
  will always be a transition with the required update function. Let
  $\langle \astate_{1}, \acountval_{1} \rangle \stepinc{} \langle
  \astate_{3}, \acountval_{3} \rangle \stepinc { *} \langle
  \astate_{2}, \vect{0} \rangle$ and $\acountval_{1}' \preceq
  \acountval_{1}$. We will define an update function $\aupfunc'$ such
  that $\langle \astate_{1}, \acountval_{1}' \rangle \step{} \langle
  \astate_{3}, \acountval_{3}' \rangle$, $\acountval_{3}' (\aset) =
  \acountval_{1}' (\aset) + \aupfunc' ( \aset)$ for each $\aset \in
  \counters$ and $ \acountval_{3}' \preceq \acountval_{3}$. For each
  counter $ \aset \in \counters$, $ \aupfunc' ( \aset)$ is defined as
  follows:

\noindent
  \emph{Case 1}: $ \acountval_{1}' ( \aset) + \minupfunc_{\astate_{1},
  \astate_{3}} ( \aset) \ge 0$. Let $ \aupfunc' ( \aset) =
  \minupfunc_{\astate_{1}, \astate_{3}} ( \aset)$.
  \begin{align*}
    \acountval_{3} ( \aset ) & \ge \acountval_{1} ( \aset) +
    \minupfunc_{\astate_{1}, \astate_{3}} (
    \aset) & \text{[From the semantics of } \Ainc \text{]}\\
    & \ge \acountval_{1}' ( \aset) +
    \minupfunc_{\astate_{1}, \astate_{3}} (
    \aset) & \text{[Since } \acountval_{1}' \preceq \acountval_{1}
    \text{]}\\
    & = \acountval_{1}' ( \aset) + \aupfunc'( \aset) & \text{[By definition of } \aupfunc'
    ( \aset) \text{]}\\
    & = \acountval_{3}' ( \aset) &
  \end{align*}

\noindent
  \emph{Case 2}: $ \acountval_{1}' ( \aset) + \minupfunc_{\astate_{1},
  \astate_{3}} ( \aset) < 0$. Therefore, $\minupfunc_{\astate_{1},
  \astate_{3}} ( \aset) < -\acountval_{1} ' ( \aset)$.  Moreover,
  since $\maxupfunc_{\astate_{1}, \astate_{3}} ( \aset) \ge 0$ (due to
  the optional decrement property) and $ \acountval_{1} ' ( \aset) \ge
  0$, $-\acountval_{1} ' ( \aset) \le
  \maxupfunc_{\astate_{1}, \astate_{3}} ( \aset)$. Let $ \aupfunc' (
  \aset) = -\acountval_{1}' ( \aset) $. Now, $ \acountval_{3} ' (
  \aset) = \acountval_{1}' ( \aset) + \aupfunc'( \aset) = 0 \preceq
  \acountval_{3} ( \aset)$.

  By definition of $\minupfunc_{\astate_{1}, \astate_{3}}$ and the
  closure of the set of transitions $\transitions$ of $\Aphi$ under
  component-wise interpolation, we have $(\astate_{1}, \aupfunc',
  \astate_{3}) \in \transitions$ and hence $\langle \astate_{1},
  \acountval_{1}' \rangle \step{}\langle \astate_{3}, \acountval_{3}'
  \rangle$. Since $ \acountval_{3}' \preceq \acountval_3$ and $\langle
  \astate_{3}, \acountval_{3} \rangle \stepinc { *} \langle
  \astate_{2}, \vect{0} \rangle$, we can use the induction hypothesis
  to conclude that $\langle \astate_{3}, \acountval_{3}' \rangle \step
  { *} \langle \astate_{2}, \vect{0} \rangle$. So we conclude that
  $\langle \astate_{1}, \acountval_{1}' \rangle \step{}\langle
  \astate_{3}, \acountval_{3}' \rangle \step {*} \langle \astate_{2},
  \vect{0} \rangle$.
\end{proof} 
\else
 \begin{proof}[Proof sketch]
   If $\Ainc$ takes a transition at the bottom of
   Figure~\ref{fig:OptionalDecrExample}, $\Aphi$ takes the
   corresponding decrementing transition at the top, ignoring any
   incremental errors. This may result in $\Aphi$ reaching $\vect{0}$
   earlier in the run, in which case ``no change'' transitions are
   used in the rest of the run.
 \end{proof}
\fi

\begin{thm}
  \label{thm:2ExpspUpBound}
  \SAT{$\mainlogic^{\top}$} is in
  \twoexpspace{}.
\end{thm}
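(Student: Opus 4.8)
The plan is to combine the VASS constructions of this section with Rackoff's \expspace{} bound for control state reachability, while keeping track of the exponential number of counters. Given a $\mainlogic^{\top}$ formula $\aformula$ over $\set{\avariable_{1},\dots,\avariable_{\numvars}}$, I first apply Lemma~\ref{lem:APhiFromEarlierPaper} to obtain $\Aphi=\langle\states,\counters,\transitions\rangle$ together with state sets $\states_{0},\states_{f}$, so that $\aformula$ is satisfiable iff $\langle\ainitst,\vect{0}\rangle\step{*}\langle\afinst,\vect{0}\rangle$ for some $\ainitst\in\states_{0}$, $\afinst\in\states_{f}$. To turn this into a single-source/single-target question I add two fresh control states $s,t$ with zero-update transitions from $s$ to every state of $\states_{0}$ and from every state of $\states_{f}$ to $t$; these transitions trivially preserve closure under component-wise interpolation and optional decrement, so the whole section still applies. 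Iterating Lemma~\ref{lem:IncAutSimReal} and instantiating Lemma~\ref{lem:RealAutSimInc} at $\acountval_{1}=\acountval_{1}'=\vect{0}$ gives the equivalence $\langle s,\vect{0}\rangle\step{*}\langle t,\vect{0}\rangle$ in $\Aphi$ iff $\langle s,\vect{0}\rangle\stepinc{*}\langle t,\vect{0}\rangle$ in $\Ainc$; hence $\aformula$ is satisfiable iff $\Ainc$ admits a gainy run reaching $\langle t,\vect{0}\rangle$ from $\langle s,\vect{0}\rangle$.

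The crux is reducing this gainy reach-to-$\vect{0}$ problem to ordinary control state reachability. Since an incremental error can only increase counters, it cannot by itself drive them to $\vect{0}$ at the end, so I pass to the reverse VASS $\reverse{\Ainc}$ (same states and counters, every update negated): a direct check shows that a gainy step $\langle q,\acountval\rangle\stepinc{}\langle q',\acountval'\rangle$ of $\Ainc$ is exactly a lossy step $\langle q',\acountval'\rangle\stepdec{}\langle q,\acountval\rangle$ of $\reverse{\Ainc}$, so the gainy run exists iff $\langle t,\vect{0}\rangle\stepdec{*}\langle s,\vect{0}\rangle$ in $\reverse{\Ainc}$. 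Two monotonicity observations then remove the imperfections. First, as $\vect{0}$ is the least valuation and a lossy step may drop any counter, a lossy run reaches $\langle s,\vect{0}\rangle$ iff it reaches the control state $s$ at all (the final step can zero out whatever remains). Second, since ordinary VASS are monotone, the control states reachable by lossy runs coincide with those reachable by perfect runs. Therefore $\aformula$ is satisfiable iff $s$ is reachable from $\langle t,\vect{0}\rangle$ in the ordinary VASS $\reverse{\Ainc}$, an instance of the \expspace{}-complete control state reachability problem.

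For the complexity, recall that $\counters$ is the set of all nonempty subsets of $\set{\avariable_{1},\dots,\avariable_{\numvars}}$, so the dimension is $d=\card{\counters}=2^{\numvars}-1$ with $\numvars\le\size{\aformula}$; the number of control states is also $2^{O(\size{\aformula})}$ and every update entry lies in $\interval{-\numvars}{\numvars}$. Thus $\reverse{\Ainc}$ is of size $2^{O(\size{\aformula})}$ and explicitly constructible within our space budget. Rackoff's procedure decides control state reachability in space exponential in the dimension, namely $2^{O(d\log d)}$ (with only polynomial dependence on the remaining data), and here $d\log d=2^{O(\size{\aformula})}$, so the whole computation runs in space $2^{2^{O(\size{\aformula})}}$, i.e.\ in \twoexpspace{}.

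I expect the main obstacle to be the middle step: verifying that reversal turns gains into losses, and that the two monotonicity arguments correctly collapse lossy reach-to-$\vect{0}$ into perfect control state reachability, together with the bookkeeping ensuring that Rackoff's \emph{single}-exponential-space algorithm, applied to a VASS whose counter set is \emph{already} exponential, yields exactly double-exponential and not triple-exponential space.
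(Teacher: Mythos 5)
Your proof is correct and follows essentially the same four-step route as the paper's: Lemma~\ref{lem:APhiFromEarlierPaper}, then Lemmas~\ref{lem:IncAutSimReal} and~\ref{lem:RealAutSimInc} to pass to gainy runs of $\Ainc$ (from $\vect{0}$ to $\vect{0}$), then reversal turning gains into losses, then your two monotonicity observations---which are exactly the paper's ``standard tricks'' collapsing lossy reach-to-$\vect{0}$ into ordinary control state reachability---finished off by Rackoff's bound. The only differences are cosmetic: the fresh states $s,t$ (harmless, since, as you checked, zero-update transitions preserve interpolation closure and optional decrement), and your dimension-refined use of Rackoff's space bound, which is sound but not even necessary, as the generic \expspace{} bound applied to an instance of size $2^{O(\size{\aformula})}$ already gives space $2^{2^{O(\size{\aformula})}}$, i.e.\ \twoexpspace{}.
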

\makeatletter{}\ifLONG
\begin{proof}
  The proof is in four steps.

\begin{description}
\itemsep 0 cm 

\item[Step 1]
  From \cite{Demri&DSouza&Gascon12}, a $\mainlogic^{\top}$ formula
  $\aformula$ is satisfiable iff $\langle \ainitst, \vect{0} \rangle
  \step{*} \langle \afinst, \vect{0} \rangle$ in $\Aphi$ for some
  $\ainitst \in \states_{0}$ and $\afinst \in \states_{f}$.
  
\item[Step 2] This is the step that requires new insight. From
  Lemmas \ref{lem:IncAutSimReal} and \ref{lem:RealAutSimInc}, $\langle
  \ainitst, \vect{0} \rangle \step{*} \langle \afinst, \vect{0}
  \rangle$ in $\Aphi$ iff  $\langle \ainitst, \vect{0} \rangle
  \stepinc{*} \langle \afinst, \vect{0} \rangle$ in $\Ainc$.

\item[Step 3] This is a standard trick.  Let $\Adec = \langle
  \states,\counters, \transitions^{\mathit{rev}} \rangle$ be a VASS
  such that for every transition $(\astate,\aupfunc,\astate') \in
  \transitions^{\mathit{min}}$ of $\Ainc$, $\Adec$ has a transition
  $(\astate',-\aupfunc,\astate) \in \transitions^{\mathit{rev}}$,
  where $-\aupfunc: \counters \to \Zed$ is the function such that
  $-\aupfunc (\aset) = -1 \times \aupfunc(\aset)$ for all $\aset \in
  \counters$.  We infer that $\langle \ainitst, \vect{0} \rangle
  \stepinc{*} \langle \afinst, \vect{0} \rangle$ in $\Ainc$ iff
  $\langle \afinst, \vect{0} \rangle \stepdec{*} \langle \ainitst,
  \vect{0} \rangle$ in $\Adec$ (we can simply reverse every transition
  in the run of $\Ainc$ to get a run of $\Adec$ and vice-versa).
  
\item[Step 4] This is another standard trick.
  Since $\Adec$ does not have zero-tests, we can remove all
  decrementing errors from a run of $\Adec$ from $\langle \afinst,
  \vect{0} \rangle$ to $\langle \ainitst, \vect{0} \rangle$, to get
  another run from $\langle \afinst, \vect{0} \rangle$ to $\langle
  \ainitst, \acountval \rangle$, where $\acountval$ is some counter
  valuation (possibly different from $\vect{0}$). Using decremental
  errors at the last configuration, $\Adec$ can then reach the
  configuration $\langle \ainitst, \vect{0} \rangle$. In other words,
  $\langle \afinst, \vect{0} \rangle \stepdec{*} \langle \ainitst,
  \vect{0} \rangle$ iff $\langle \afinst, \vect{0} \rangle \step{*}
  \langle \ainitst, \acountval \rangle$ for some counter valuation
  $\acountval$. Checking the latter condition is precisely the control
  state reachability problem for VASS.

\end{description} 
 
\noindent If the control state in the above instance is reachable, then
  Rackoff's proof gives a bound on the length of a shortest run reaching it
  \cite{Rackoff78} (see also~\cite{Demrietal09}). The bound is doubly exponential in the size of the
  VASS. Since in our case, the size of the VASS is exponential in the
  size of the $\mainlogic^{\top}$ formula $\aformula$, the bound is
  triply exponential. A non-deterministic Turing machine can maintain
  a binary counter to count up to this bound, using doubly exponential
  space.  The machine can start by guessing some initial state
  $\ainitst$ and a counter valuation set to $\vect{0}$.  This can be
  done in polynomial space. In one step, the machine guesses a
  transition to be applied next and updates the current configuration
  accordingly, while incrementing the binary counter. At any step, the
  space required to store the current configuration is at most doubly
  exponential. By the time the binary counter reaches its triply
  exponential bound, if a final control state is not reached, the
  machine rejects its input. Otherwise, it accepts. Since this
  non-deterministic machine operates in doubly exponential space, an
  application of Savitch's Theorem~\cite{Savitch70} gives us the required
  \twoexpspace{} upper bound for the satisfiability problem of
  $\mainlogic^{\top}$.
\end{proof}
\else
\begin{proof}[Proof sketch]
  The proof is in four steps (standard arguments are used for 3. and 4.).
  \begin{itemize}
  \itemsep 0 cm 
  \item[1] From \cite{Demri&DSouza&Gascon12}, a
    $\mainlogic^{\top}$ formula $\aformula$ is satisfiable if{f}
    $\langle \ainitst, \vect{0} \rangle \step{*} \langle \afinst,
    \vect{0} \rangle$ in $\Aphi$ for some $\ainitst \in \states_{0}$
    and $\afinst \in \states_{f}$.

  \item[2] This is the step that requires new insight. From
    Lemmas \ref{lem:IncAutSimReal} and \ref{lem:RealAutSimInc},
    $\langle \ainitst, \vect{0} \rangle \step{*} \langle \afinst,
    \vect{0} \rangle$ in $\Aphi$ if{f} $\langle \ainitst, \vect{0}
    \rangle \stepinc{*} \langle \afinst, \vect{0} \rangle$ in $\Ainc$.

  \item[3]  Let $\Adec$ be the VASS
    obtained from $\Ainc$ by ``reversing'' each transition. By
    replacing each gainy transition of $\Ainc$ by the reverse lossy
    transtion of $\Adec$, we infer that $\langle \ainitst, \vect{0}
    \rangle \stepinc{*} \langle \afinst, \vect{0} \rangle$ in $\Ainc$
    if{f} $\langle \afinst, \vect{0} \rangle \stepdec{*} \langle
    \ainitst, \vect{0} \rangle$ in $\Adec$.

  \item[4] $\langle \afinst,
    \vect{0} \rangle \stepdec{*} \langle \ainitst, \vect{0} \rangle$
    if{f} $\langle \afinst, \vect{0} \rangle \step{*} \langle \ainitst,
    \acountval \rangle$ for some 
    $\acountval$. Checking the latter condition is precisely the
    control state reachability problem for VASS.
  \end{itemize}
The number of control states and counters in $\Aphi$ and in
$\Adec$ is  exponential in $\size{\aformula}$ (size of $\aformula$).  Each
control state and transition function of $\Adec$ can be represented in
space polynomial in $\size{\aformula}$. Given a control state,
testing if it is an initial or a final state can be done in linear time
in the size of the state.  The automaton $\Adec$ can be
constructed in exponential time  in $\size{\aformula}$. Hence, the
\expspace{} upper bound for the control state reachability problem in
VASS \cite{Rackoff78} gives the \twoexpspace{} upper bound for \SAT{$\mainlogic^{\top}$}.
\end{proof}
\fi

\begin{cor}\label{cor:mainlogic-twoexpspace}
    \SAT{$\mainlogic$} is in \twoexpspace.
  \end{cor}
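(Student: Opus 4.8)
The plan is to obtain this upper bound simply by composing two results already established in this section, so that essentially all the technical work is inherited. First, Corollary~\ref{cor:reduction-LRVtop} provides a polynomial-time reduction from \SAT{$\mainlogic$} into \SAT{$\mainlogic^{\top}$}: given an $\mainlogic$ formula $\aformula$, it produces in polynomial time an equisatisfiable $\mainlogic^{\top}$ formula, whose size is therefore bounded by $p(\size{\aformula})$ for some polynomial $p$. Second, Theorem~\ref{thm:2ExpspUpBound} tells us that \SAT{$\mainlogic^{\top}$} can be decided in space doubly exponential in the size of the input formula.

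I would then argue that the composition remains within \twoexpspace, using the fact that this class is closed under polynomial-time many-one reductions. Concretely, to decide whether $\aformula$ is satisfiable, one first runs the reduction of Corollary~\ref{cor:reduction-LRVtop} to obtain the $\mainlogic^{\top}$ formula $\aformula'$ (polynomial time, hence polynomial space), and then invokes the decision procedure of Theorem~\ref{thm:2ExpspUpBound} on $\aformula'$. The latter runs in space bounded by $2^{2^{q(\size{\aformula'})}}$ for some polynomial $q$; substituting $\size{\aformula'} \leq p(\size{\aformula})$ gives a space bound of the form $2^{2^{q(p(\size{\aformula}))}}$, which is again doubly exponential in $\size{\aformula}$. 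Thus the whole procedure operates in \twoexpspace.

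There is no genuine obstacle in this final step: the difficult parts have already been discharged, namely the elimination of inequality tests (Proposition~\ref{prop:LRV2LRVapprox}) and of test formulas (Proposition~\ref{prop:LRVapprox-to-LRVtop}) feeding into Corollary~\ref{cor:reduction-LRVtop}, and the simulation via gainy and lossy VASS together with Rackoff's bound underlying Theorem~\ref{thm:2ExpspUpBound}. The only thing to verify is the routine observation that the polynomial blow-up incurred by the reduction is absorbed by the doubly-exponential space bound, which is immediate since the composition of a polynomial with a doubly-exponential function is still doubly exponential.
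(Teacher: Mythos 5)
Your proposal is correct and matches the paper's (implicit) argument exactly: the corollary is stated without proof precisely because it follows immediately by composing the polynomial-time reduction of Corollary~\ref{cor:reduction-LRVtop} with the \twoexpspace{} bound of Theorem~\ref{thm:2ExpspUpBound}, using the closure of \twoexpspace{} under polynomial-time reductions. Your verification that a polynomial blow-up is absorbed by a doubly-exponential space bound is the only point that needed checking, and it is handled correctly.
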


\makeatletter{}\section{Simulating Exponentially Many Counters}
\label{section-simulating}

In Section~\ref{section-upper-bound}, the reduction from \SAT{$\mainlogic$}
to control state reachability for VASS involves an exponential blow up, since
we use one counter  for each nonempty subset of variables. The question whether
this can be avoided depends on whether $\mainlogic$ is powerful enough
to reason about subsets of variables or whether there is a smarter
reduction without a blow-up. 
Similar
questions are open in other related areas~\cite{Manuel&Ramanujan09,Manuel&Ramanujam12}.

Here we prove that $\mainlogic$ is indeed powerful enough to reason
about subsets of variables. We establish a \twoexpspace{} lower
bound. The main idea behind this lower bound proof is that the power
of LRV to reason about subsets of variables can be used to simulate
exponentially many counters.
The lower bound is developed in three parts, with each part explained
in a sub-section of this section. The first part defines chain
systems, which are like VASS, except that transitions can not access
counters directly, but can access a pointer to a chain of counters.
The second part shows lower bounds for the control state reachability
problem for chain systems.  The third part shows that $\mainlogic$ can
reason about chain systems.

\makeatletter{}\subsection{Chain Systems}
\label{section-chain-automata}We introduce a new class of counter systems that is instrumental to show that
 \SAT{$\mainlogic^{\top}$} is \twoexpspace-hard. This is an intermediate formalism 
between counter automata with zero-tests with counters 
bounded by triple exponential values (having  \twoexpspace-hard control state reachability problem)
and properties expressed in  $\mainlogic^{\top}$.  Systems with chained counters have no zero-tests
and the only updates are increments and decrements. However, the systems are equipped
with a finite family of counters, each family having an exponential number of counters in some part of the input
(see details below). 
Let $\mathtt{exp}(0, n) \egdef n$ and $\mathtt{exp}(k+1,n) \egdef 2^{\mathtt{exp}(k,n)}$ for every $k \geq 0$. 

\begin{defi}
\label{definition-ca-chained-counters}
A \defstyle{counter system with chained counters} (herein called a \defstyle{chain system}) 
is a tuple 
$\aautomaton = \triple{\locations,\amap,k}{\locations_0,\locations_F}{\transitions}$ where 
\begin{enumerate}
\itemsep 0 cm
\item $\amap: \interval{1}{n} \rightarrow \Nat$ where $n \geq 1$ is the \defstyle{number of chains}
      and $\mathtt{exp}(k,\amap(\achain))$  is  the number of counters for the chain $\achain$ where $k \geq 0$,
\item $\locations$ is a non-empty finite set of \defstyle{states},
\item $\locations_0 \subseteq \locations$ is the set of \defstyle{initial states} and 
      $\locations_F \subseteq \locations$ is the set of \defstyle{final states},
\item $\transitions$ is the set of \defstyle{transitions} in $\locations \times \instructions \times \locations$
      where 
      \begin{align*}
        \instructions = \set{ &\inc{\achain}, \dec{\achain},
          \cnext{\achain}, \cprevious{\achain}, \cisfirst{\achain},
          \cisnotfirst{\achain}, \\
                             &\cislast{\achain},
          \cisnotlast{\achain}: \achain \in \interval{1}{n} } .
      \end{align*}
\end{enumerate}
\end{defi}\smallskip

\noindent By convention, sometimes, we write $\alocation \step{\ainstruction} \alocation'$
instead of $\triple{\alocation}{\ainstruction}{\alocation'} \in \transitions$. 

The system
$\aautomaton = \triple{\locations,\amap,k}{\locations_0,\locations_F}{\transitions}$ is said to be 
at \defstyle{level k}.
In order to encode the natural numbers from $\amap$ and the value $k$, we  use a unary representation. 
We say that a transition containing $\inc{\achain}$ is \defstyle{$\achain$-incrementing}, and a transition containing 
$\dec{\achain}$ is \defstyle{$\achain$-decrementing}. 
The idea is that for each chain $\achain \in \interval{1}{n}$, we have $\mathtt{exp}(k,\amap(\achain))$ 
counters, but we cannot 
access them directly in the transitions as we do in VASS. Instead, we have a 
pointer to a counter that we can move. We can ask if we are pointing to the first counter ($\cisfirst{\achain}$) 
or not ($\cisnotfirst{\achain}$), or to the last counter ($\cislast{\achain}$) or not 
($\cisnotlast{\achain}$), and we can change the pointer to the next ($\cnext{\achain}$) or 
previous ($\cprevious{\achain}$) counter.

A \defstyle{run} is a finite sequence $\arun$ in $\transitions^*$ such that
\begin{enumerate}
\itemsep 0 cm
\item for every two $\arun(i) = \alocation \step{\ainstruction} \alocationbis$
 and 
     $\arun(i+1) = \alocation' \step{\ainstruction'} \alocationbis'$ we have
  $\alocationbis = \alocation'$,
\item for every chain $\achain \in \interval{1}{n}$, for every $i \in \interval{1}{|\arun|}$,
we have $0 \leq c_i^{\achain} < \mathtt{exp}(k,\amap(\achain))$ where 
\begin{align}\label{eq:c-i-chain}
  \begin{split}
    c_i^{\achain} &= \card{\set{i' < i \mid \arun(i') = \alocation
        \step{\cnext{\achain}} \alocationbis}} ~-~~~ \\
    &~~~~\card{\set{i' < i \mid \arun(i') = \alocation
        \step{\cprevious{\achain}} \alocationbis }},
  \end{split}
\end{align}
\item  for every $i \in \interval{1}{|\arun|}$ and for every chain $\achain \in \interval{1}{n}$, 
  \begin{enumerate}[label=\({\alph*}]
  \item if $\arun(i) = \alocation \step{\cisfirst{\achain}}
    \alocation'$, then $c_i^{\achain} = 0$;
  \item if $\arun(i) =
    \alocation \step{\cisnotfirst{\achain}} \alocation'$, then
    $c_i^{\achain} \neq 0$;
  \item if $\arun(i) = \alocation
    \step{\cislast{\achain}} \alocation'$, then $c_i^{\achain} =
    \mathtt{exp}(k,\amap(\achain)) - 1$;
  \item if $\arun(i) = \alocation
    \step{\cisnotlast{\achain}} \alocation'$, then $c_i^{\achain} \neq
    \mathtt{exp}(k,\amap(\achain)) -1$.
  \end{enumerate}
\end{enumerate}\medskip

\noindent A run is \defstyle{accepting} whenever $\arun(1)$ starts with an initial state from $\locations_0$ and
$\arun(\length{\arun})$ ends with a final state from $\locations_F$. 
A run is \defstyle{perfect} iff for every $\achain \in \interval{1}{n}$, there is some
injective function 
\[
\gamma : \set{i \mid \arun(i) \text{ is}  \ \achain \text{-decrementing}} \to \set{i \mid \arun(i) 
\text{ is} \ \achain \text{-incrementing}}
\]
such that for every $\gamma(i) =j$ we have that  $j < i$ and $c_i^{\achain} = c_j^{\achain}$.
A run is \defstyle{gainy and ends at zero} (different from `ends in zero' defined below) iff 
for every chain  $\achain \in \interval{1}{n}$, there is some injective function
\[
\gamma : \set{i \mid \arun(i) \text{ is}  \ \achain \text{-incrementing}} \to \set{i \mid \arun(i) 
\text{ is} \ \achain \text{-decrementing}}
\]
such that for every $\gamma(i) =j$ we have that  $j > i$ and $c_i^{\achain} = c_j^{\achain}$.
In the sequel, we shall simply say that the run is gainy. 
Below, we define two problems for which we shall characterize the computational complexity.

\begin{center}
  \begin{tabular}{|@{\hspace{.1cm}}r@{\hspace{.1cm}}l@{\hspace{.1cm}}|}
    \hline
    \textsc{Problem:}& Existence of a perfect  accepting run \\& of level $k \geq 0$ ~~(Per($k$))
    \\
    \hline
    \textsc{Input:} & A chain system $\aautomaton$ of level $k$.
    \\
    \textsc{Question:} & Does $\aautomaton$ have a \emph{perfect} accepting run?\\
    \hline
  \end{tabular}

  \medskip

  \begin{tabular}{|@{\hspace{.1cm}}r@{\hspace{.1cm}}l@{\hspace{.1cm}}|}
    \hline
    \textsc{Problem:}& Existence of a gainy  accepting run \\& of level $k \geq 0$ ~~(Gainy($k$))
    \\
    \hline
    \textsc{Input:} & A chain system $\aautomaton$  of level $k$.
    \\
    \textsc{Question:} & Does $\aautomaton$ have a \emph{gainy} accepting run?\\
    \hline
  \end{tabular}
\end{center}\medskip

\noindent Per($k$) is actually a control state reachability problem in VASS
where the counters are encoded succinctly. 
 Gainy($k$) is  a reachability problem in  VASS with
incrementing errors and the reached counter values are equal to zero. Here, the counters
are encoded succinctly too.

 Let $\aautomaton = \triple{\locations,\amap,k}{\locations_0,\locations_F}{\transitions}$ be a 
chain system of level $k$. A run $\arun$ \defstyle{ends in zero} whenever
for every chain $\achain \in \interval{1}{n}$, $c_{L}^{\achain} = 0$ with $L = \length{\arun}$.
We write Per$^{{\rm zero}}$($k$) and  Gainy$^{{\rm zero}}$($k$) to denote the variant problems of
Per($k$) and Gainy($k$), respectively, in which runs that end in zero are considered. 
First, note that  Per$^{{\rm zero}}$($k$) and Per($k$) are  interreducible in logarithmic space since it is
always possible to add adequately self-loops when a final state is reached in order to guarantee that 
 $c_{L}^{\achain} = 0$ for every chain $\achain \in \interval{1}{n}$. Similarly, 
Gainy$^{{\rm zero}}$($k$) and Gainy($k$) are  interreducible in logarithmic space. 

\begin{lem} \label{lemma-inter}
For every $k \geq 0$, Per($k$) and Gainy($k$) are interreducible in logarithmic space.
\end{lem}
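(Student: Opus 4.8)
The plan is to reduce both problems to each other through a single \emph{reversal} operation on chain systems, routed through the ends-in-zero variants so that the counter indices read by a run and by its reverse line up exactly. Since Per($k$) and Per$^{{\rm zero}}$($k$) are interreducible in logarithmic space, and likewise Gainy($k$) and Gainy$^{{\rm zero}}$($k$), it suffices to show that $\aautomaton$ has a gainy accepting run ending in zero if{f} its reverse $\reverse{\aautomaton}$ has a perfect accepting run ending in zero. As reversal is an involution, the very same construction will give both directions of the reduction.

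Given $\aautomaton = \triple{\locations,\amap,k}{\locations_0,\locations_F}{\transitions}$, I would set $\reverse{\aautomaton} = \triple{\locations,\amap,k}{\locations_F,\locations_0}{\reverse{\transitions}}$, where each transition $\alocation \step{\ainstruction} \alocationbis$ is replaced by $\alocationbis \step{\reverse{\ainstruction}} \alocation$ and $\reverse{\ainstruction}$ swaps $\inc{\achain}$ with $\dec{\achain}$ and $\cnext{\achain}$ with $\cprevious{\achain}$, while fixing each of $\cisfirst{\achain}$, $\cisnotfirst{\achain}$, $\cislast{\achain}$, $\cisnotlast{\achain}$. Each of these substitutions is an involution, so $\reverse{\reverse{\aautomaton}} = \aautomaton$, and $\reverse{\cdot}$ is plainly computable in logarithmic space, as it only relabels transitions and swaps two sets of states.

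For correctness I would map a run $\arun$ of $\aautomaton$ to the word $\reverse{\arun}$ listing the reversed transitions in reverse order, and show that $\reverse{\arun}$ is an accepting run of $\reverse{\aautomaton}$ ending in zero exactly when $\arun$ is, with $\arun$ gainy if{f} $\reverse{\arun}$ is perfect. Acceptance is handled by the swap $\locations_0 \leftrightarrow \locations_F$, and the state-compatibility condition on consecutive transitions is preserved because reversing a transition exchanges its source and target. The heart of the argument is pointer bookkeeping: writing $p_0 = 0, p_1, \dots, p_{\length{\arun}}$ for the successive pointer positions of chain $\achain$ (so that $c_i^{\achain} = p_{i-1}$ is the index read by transition $i$), the pointer sequence $q$ of $\reverse{\arun}$ satisfies $q_m = p_{\length{\arun}-m}$, traversing the same values backwards. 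Since $\inc{\achain}$, $\dec{\achain}$ and all four tests leave the pointer unchanged (so $p_i = p_{i-1}$ at such a position), the index acted on or tested by transition $i$ in $\arun$ equals the index of its mirror transition $\length{\arun}+1-i$ in $\reverse{\arun}$; this keeps the bound $0 \le c_i^{\achain} < \mathtt{exp}(k,\amap(\achain))$ intact and makes the test clauses translate literally ($\cisfirst{\achain}$ to $\cisfirst{\achain}$, and so on). Consequently a gainy matching sending each $\achain$-increment at a position $i$ to a later $\achain$-decrement at $j > i$ with $c_i^{\achain} = c_j^{\achain}$ becomes, after reversal, an injection sending the $\achain$-decrement at $\length{\arun}+1-i$ to the earlier $\achain$-increment at $\length{\arun}+1-j$ on the same counter, which is precisely a perfect matching; the converse holds by applying the same reasoning to $\reverse{\arun}$.

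The step I expect to be the main obstacle is exactly this index alignment. The relation $q_m = p_{\length{\arun}-m}$ forces $q_0 = p_{\length{\arun}}$, whereas the definition of a run pins $q_0 = 0$; the two agree only when $\arun$ ends with its pointers at $0$, which is the reason the reduction must pass through the ends-in-zero variants and also what anchors both ends of the pointer trajectory so that first/last tests are mirrored correctly. Once the equivalence between a gainy run of $\aautomaton$ ending in zero and a perfect run of $\reverse{\aautomaton}$ ending in zero is in place, composing it with the already established log-space equivalences between the plain and ends-in-zero problems yields the log-space reductions in both directions, and the involutivity of $\reverse{\cdot}$ means a single construction suffices.
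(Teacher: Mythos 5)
Your proposal is correct and follows the same route as the paper's own proof: pass to the ends-in-zero variants, then apply a single involutive reversal that exchanges $\inc{\achain}$ with $\dec{\achain}$, $\cnext{\achain}$ with $\cprevious{\achain}$, and the initial with the final states, so that one construction yields both directions. The one point where you depart from the paper is the treatment of the position tests, and there your version is the one that is actually consistent. The paper's reversal additionally sets $\reverse{\cisfirst{\achain}} \egdef \cislast{\achain}$ and $\reverse{\cislast{\achain}} \egdef \cisfirst{\achain}$, but this clashes with swapping $\cnext{\achain}$ and $\cprevious{\achain}$: as your pointer bookkeeping shows, once the original run ends with its pointers at zero, the reversed run traverses exactly the same pointer values in reverse order ($q_m = p_{\length{\arun}-m}$), so the mirror of a test transition reads precisely the value the original test read, and the test must be kept unchanged. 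Under the paper's swap, the mirror of a successful $\cisfirst{\achain}$ test (which read $0$) would be required to read $\mathtt{exp}(k,\amap(\achain))-1$, so the reversed run would fail to be a run whenever a chain has more than one counter and the run uses a first/last test. (Swapping first with last would be the right choice for the alternative reversal that keeps $\cnext{\achain}$ and $\cprevious{\achain}$ fixed and mirrors each chain end-for-end, but that variant needs runs ending at the last counter rather than at zero, which is not what the zero variants provide.) Everything else in your argument --- acceptance handled by exchanging $\locations_0$ and $\locations_F$, preservation of the counter index at non-moving transitions, and the translation of a gainy matching into a perfect one by mirroring positions $i \mapsto \length{\arun}+1-i$, with the converse by involutivity --- matches the paper's argument, so your write-up can be read as the corrected form of it.
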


\makeatletter{}\begin{proof}
Below, we show that  Per$^{{\rm zero}}$($k$) and Gainy$^{{\rm zero}}$($k$)  are interreducible in logarithmic space,
which allows us to get the proof of the lemma since logarithmic-space reductions are closed under composition.
From $\aautomaton$, let us define a reverse chain system $\reverse{\aautomaton}$ of level $k$,
where the reverse operation $\tilde{\cdot}$ is defined on instructions, transitions, sets of transitions
and systems as follows:
\begin{itemize}
\itemsep 0 cm
\item $\reverse{\inc{\achain}} \egdef \dec{\achain}$; $\reverse{\dec{\achain}} \egdef \inc{\achain}$; 
      $\reverse{\cnext{\achain}} \egdef \cprevious{\achain}$; $\reverse{\cprevious{\achain}} \egdef \cnext{\achain}$,
\item $\reverse{\cisfirst{\achain}} \egdef \cislast{\achain}$; $\reverse{\cislast{\achain}} \egdef \cisfirst{\achain}$;
      $\reverse{\cisnotfirst{\achain}} \egdef \cisnotlast{\achain}$; $\reverse{\cisnotlast{\achain}} \egdef 
      \cisnotfirst{\achain}$,
\item $\reverse{\alocation \step{\ainstruction} \alocation'} \egdef
       \alocation' \step{\reverse{\ainstruction}} \alocation$,
\item $\reverse{\aautomaton} \egdef  \triple{\locations,\amap,k}{\locations_F,\locations_0}{\reverse{\transitions}}$
      with  $\reverse{\transitions} \egdef 
      \set{\reverse{\alocation \step{\ainstruction} \alocation'}: \alocation \step{\ainstruction} 
      \alocation' \in \transitions}$. Note that $\locations_0$ and $\locations_F$ have been swapped. 
\end{itemize}\medskip

\noindent The reverse operation can be extended to sequences of transitions as follows:
$\reverse{\varepsilon} \egdef \varepsilon$ and $\reverse{\atransition \cdot \aword} \egdef
\reverse{\aword} \cdot \reverse{\atransition}$. 
Note that $\tilde{\cdot}$ extends the reverse operation defined in the proof 
of Theorem~\ref{thm:2ExpspUpBound} (step 3). 

One can show the following implications, for any run $\rho$ fo $\aautomaton$ that ends in zero:
\begin{enumerate}
\itemsep 0 cm 
\item $\arun$ is perfect and accepting for $\aautomaton$ implies 
      $\reverse{\arun}$ is gainy  and accepting for $\reverse{\aautomaton}$.
\item $\arun$ is gainy  and accepting for $\reverse{\aautomaton}$ implies $\reverse{\arun}$ is perfect 
       and accepting for $\aautomaton$.
\item $\arun$ is gainy  and accepting for $\aautomaton$ implies 
      $\reverse{\arun}$ is perfect  and accepting for $\reverse{\aautomaton}$.
\item $\arun$ is perfect  and accepting for $\reverse{\aautomaton}$ implies $\reverse{\arun}$ is gainy 
       and accepting for $\aautomaton$.
\end{enumerate}

(1) and (4) [resp. (2) and (3)] have very similar proofs because $\reverse{\cdot}^{-1}$ is actually
equal to $\reverse{\cdot}$. 
(1)--(4) are  sufficient to establish that Per$^{{\rm zero}}$($k$) and Gainy$^{{\rm zero}}$($k$)  are 
interreducible in logarithmic space.
\end{proof}

\begin{lem}  
\label{lemma-upper-bound-perk}
Per($k$) is in $(k+1) \expspace$.
\end{lem}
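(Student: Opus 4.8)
The plan is to unfold a chain system of level $k$ into an ordinary VASS whose counters are the individual chained counters and whose control states additionally record the position of every pointer, to observe that Per($k$) is then exactly control-state reachability in this VASS, and finally to bound the space by applying Rackoff's analysis while being careful about how its bound depends on the \emph{dimension} as opposed to the \emph{number of control states}.

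First I would make the unfolding precise. Given $\aautomaton = \triple{\locations,\amap,k}{\locations_0,\locations_F}{\transitions}$, I define a VASS $\avass'$ whose counters are the pairs $\set{(\achain,p) : \achain \in \interval{1}{n},\ 0 \le p < \mathtt{exp}(k,\amap(\achain))}$ and whose control states are pairs $(\alocation,\pi)$ with $\alocation \in \locations$ and $\pi$ a \emph{pointer vector} assigning to each chain $\achain$ a position in $\interval{0}{\mathtt{exp}(k,\amap(\achain))-1}$. Each chain-system transition induces VASS moves: $\cnext{\achain}$ and $\cprevious{\achain}$ shift $\pi(\achain)$ (and are disabled when this would leave the range); $\cisfirst{\achain}$, $\cisnotfirst{\achain}$, $\cislast{\achain}$, $\cisnotlast{\achain}$ become guards comparing $\pi(\achain)$ with $0$ and $\mathtt{exp}(k,\amap(\achain))-1$; and $\inc{\achain}$, $\dec{\achain}$ add $\pm 1$ to the counter $(\achain,\pi(\achain))$. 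All of these are ordinary VASS moves, since the pointer tests are control-state tests, not zero-tests on counters. Unwinding the definition, the injective matching of $\achain$-decrementing to earlier $\achain$-incrementing steps at equal pointer positions is exactly the requirement that every counter stay non-negative; hence $\aautomaton$ has a perfect accepting run iff some control state $(\alocation_f,\pi)$ with $\alocation_f \in \locations_F$ is reachable from the initial configuration whose control state is $(\alocation_0,\vect{0})$ (all pointers at $0$) and all of whose counters are $0$, for some $\alocation_0 \in \locations_0$. Thus Per($k$) is precisely the control-state reachability problem for $\avass'$.

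Next I would bound the parameters of $\avass'$. As $n$ and every $\amap(\achain)$ are written in unary, both are at most $s = \length{\aautomaton}$, so the dimension is $N = \sum_\achain \mathtt{exp}(k,\amap(\achain)) \le s\cdot\mathtt{exp}(k,s) \le \mathtt{exp}(k,p(s))$ for a fixed polynomial $p$, whereas a single control state $(\alocation,\pi)$ is representable in $\log|\locations| + \sum_\achain \mathtt{exp}(k-1,\amap(\achain))$ bits, i.e.\ $\mathtt{exp}(k-1,p(s))$ bits (polynomially many when $k=0$). Crucially, $\avass'$ is never built explicitly: given a stored control state and a chain-system transition, the induced VASS move and its effect are computable in space polynomial in the size of the stored configuration, so the transition relation of $\avass'$ can be produced on the fly within our budget.

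Finally I would run the standard nondeterministic search for control-state reachability, and here lies the step I expect to be the main obstacle: I must use Rackoff's bound in the refined form in which the length of a shortest covering run, and the counter values along it, are at most $(\text{number of control states} \cdot \text{update norm})^{2^{O(N\log N)}}$, that is, \emph{doubly exponential in the dimension $N$ but only polynomial in the number of control states and the update magnitude}. This separation is essential, because for small $k$ the number of control states sits one exponential level above $N$, so the crude reading ``space exponential in the total size of the VASS'' of the \expspace{} result of~\cite{Rackoff78} would yield only $(k+2)\expspace$. With update norm $1$, with $\log(\text{number of control states}) \le \mathtt{exp}(k-1,p(s))$ and $N \le \mathtt{exp}(k,p(s))$, both the logarithm of the run-length bound and the number of bits needed per counter value are at most $2^{O(N\log N)} \le \mathtt{exp}(k+1,q(s))$ for a polynomial $q$. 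A configuration of $\avass'$ then fits in $N\cdot\mathtt{exp}(k+1,q(s)) \le \mathtt{exp}(k+1,q'(s))$ bits, and a binary step-counter up to the run-length bound needs $\mathtt{exp}(k+1,q(s))$ bits; guessing the run step by step, generating transitions on the fly and rejecting once the step-counter is exhausted, thus runs in nondeterministic space $\mathtt{exp}(k+1,q'(s))$. An application of Savitch's theorem~\cite{Savitch70} leaves us in deterministic space $\mathtt{exp}(k+1,q'(s))^2 = \mathtt{exp}(k+1,O(q'(s)))$, which is $(k+1)\expspace$, as claimed.
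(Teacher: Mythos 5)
Your proposal is correct and follows essentially the same route as the paper: the paper likewise unfolds the chain system into a VASS whose control states record all pointer positions and whose counters are the individual chained counters (the pointer tests becoming control-state guards), reduces Per($k$) --- via its logspace-equivalent ends-in-zero variant --- to control-state reachability in that VASS, and concludes by citing~\cite{Rackoff78} on an instance of size $\mathcal{O}(\mathtt{exp}(k,\length{\aautomaton})^{\length{\aautomaton}})$. Your insistence on the refined form of Rackoff's bound (exponential in the dimension, with only polynomial dependence on the logarithm of the number of control states and of the update norm) is exactly what the paper's terse citation implicitly relies on, and it is a genuine point; note, however, that the refinement is actually indispensable only at $k=0$, where the product of pointer positions makes the control-state count exponentially larger than the dimension, whereas for $k\geq 1$ both parameters sit at the same level $\mathtt{exp}(k,\mathrm{poly}(\length{\aautomaton}))$, so even the crude ``space exponential in the instance size'' reading already yields $(k+1)\expspace$ rather than the $(k+2)\expspace$ you feared.
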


The proof of Lemma~\ref{lemma-upper-bound-perk} consists of simulating perfect runs
by the runs of a VASS in which the control states record the positions of the pointers in the chains. 

\makeatletter{}\begin{proof}
It is sufficient to show that Per$^{\rm zero}$($k$) is in $(k+1) \expspace$.

Let $\aautomaton = \triple{\locations,\amap,k}{\locations_0,\locations_F}{\transitions}$ be a 
chain system with $\amap: \interval{1}{n} \rightarrow \Nat$. 
We reduce this instance of Per$^{\rm zero}$($k$) into several  
instances of the control state reachability problem for VASS such that the number of instances is bounded by
$\mathcal{O}(\length{\aautomaton}^2)$ 
and the size of  
each instance is in $\mathcal{O}(\mathtt{exp}(k,\length{\aautomaton})^{\length{\aautomaton}})$, which provides the 
$(k+1) \expspace$ upper bound by~\cite{Rackoff78}.
The only instructions in the VASS $\aautomaton'$ defined below are: increment a counter,
decrement a counter or the $\mathtt{skip}$ action, which just changes the control state without 
modifying the counter values (which can be obviously simulated by an increment followed by a 
decrement). 

Let us define a VASS $\aautomaton' = \pair{\locations',\counters'}{\transitions'}$ with
$$\counters' = \set{\acounter^1_{0}, \ldots,\acounter^1_{\mathtt{exp}(k,\amap(1))-1}, \ldots,\acounter^n_{0}, \ldots,
\acounter^n_{\mathtt{exp}(k,\amap(n))-1}}.
$$
In $\aautomaton'$, it will be possible to access the counters directly
by encoding the positions of the pointers in the states. 
Let $\locations' = \locations \times \interval{0}{\mathtt{exp}(k,\amap(1))-1} \times \cdots \times 
\interval{0}{\mathtt{exp}(k,\amap(n))-1}$. It remains to  define the transition relation $\transitions'$
($\tuple{\beta_1}{\beta_n}$ below is any element in $\interval{0}{\mathtt{exp}(k,\amap(1))-1} \times \cdots \times 
\interval{0}{\mathtt{exp}(k,\amap(n))-1}$, unless conditions apply):
\begin{itemize}
\itemsep 0 cm
\item Whenever $\alocation \step{\inc{\achain}} \alocation' \in \transitions$, we have
      $\tuple{\alocation,\beta_1}{\beta_{n}} \step{\inc{\acounter^{\achain}_{\beta_{\achain}}}} 
      \tuple{\alocation',\beta_1}{\beta_{n}} \in \transitions'$ (and similarly with decrements),
\item Whenever  $\alocation \step{\cnext{\achain}} \alocation' \in \transitions$, we have
      $$\tuple{\alocation,\beta_1, \ldots, \beta_{\achain}}{\beta_{n}} \step{\mathtt{skip}} 
      \tuple{\alocation',\beta_1, \ldots, \beta_{\achain}+1}{\beta_{n}} \in \transitions'$$
     if $\beta_{\achain}+1 < \mathtt{exp}(k,\amap(\achain))$ (and similarly with $\cprevious{\achain}$),
\item When  $\alocation \step{\cisfirst{\achain}} \alocation' \in \transitions$, 
      $\tuple{\alocation,\beta_1}{\beta_{n}} \step{\mathtt{skip}} 
      \tuple{\alocation',\beta_1}{\beta_{n}} \in \transitions'$ 
     if $\beta_{\achain} = 0$ (and similarly with $\cisnotfirst{\achain}$, $\cislast{\achain}$ and
     $\cisnotlast{\achain}$).
\end{itemize}
It is easy to show that there is a perfect accepting run that ends in zero  if{f}
there are $\mathbf{\alocation}_0 \in \locations_0 \times \set{0}^n$
and $\mathbf{\alocation}_F \in \locations_F \times  \set{0}^n$
such that
there is a run from the configuration $\pair{\mathbf{\alocation}_0}{\vect{0}}$ to the
configuration  $\pair{\mathbf{\alocation}_F}{\vect{x}}$ for some $\vect{x}$. 

In order to prove the above equivalence, we can use the transformations (I) and (II) stated below. 

(I) Let $\arun = \atransition_1 \cdots \atransition_L$ be a run of $\aautomaton$ such that 
for every $i \in \interval{1}{L-1}$, $\atransition_i = \alocation_{i-1} \step{\ainstruction_i} \alocation_{i}$
and the values $c_i^{\achain}$ are defined as before \eqref{eq:c-i-chain}. One can show that 
$\rho' = \atransition'_1 \dotsb \atransition'_L$ with $\atransition'_i = 
\pair{\alocation_{i-1}}{\vect{x}_{i-1}} \step{\ainstruction'_i} \pair{\alocation_i}{\vect{x}_i} $ for every $i$,
is a run of $\aautomaton'$ where
\begin{itemize}
\itemsep 0 cm
\item $\vect{x}_0  = \vect{0}$ 
      and for every $i \in \interval{1}{L}$,  $\vect{x}_i = 
     \tuple{c_i^{1}}{c_i^{n}}$,
\item for every $i \in \interval{1}{L}$, 
      \begin{itemize}
      \itemsep 0 cm
      \item if $\ainstruction_i=\inc{\achain}$ then
            $\ainstruction_i' = \inc{\acounter_{c_i^{\achain}}^{\achain}}$
            (a similar clause holds for decrements),
     \item otherwise (i.e., if $\ainstruction_i$ is not $\inc{\achain}$ nor $\dec{\achain}$ for any $\achain$), $\ainstruction_i' = \mathtt{skip}$. 
      \end{itemize}
\end{itemize} 

(II) Similarly, let $\rho' = \atransition'_1 \dotsb \atransition'_L$ be a run of $\aautomaton'$ where $\atransition'_i = (\alocation_{i-1}, \vect{x}_{i-1}) \step{\ainstruction'_i} (\alocation_i, \vect{x}_i)$ for every $i$, with $\vect{x}_0= \vect{0}$. One can show that $\rho = \atransition_1 \dotsb \atransition_L$ with $\atransition_{i} = \alocation_{i-1} \step{\ainstruction_i} \alocation_{i}$ for every $i$,  is a perfect accepting run of $\aautomaton$ such that for every $i \in \interval{1}{L}$, 
      \begin{itemize}
      \itemsep 0 cm
      \item if $\ainstruction'_i =\inc{\acounter_{j}^{\achain}}$ then $\ainstruction_i =\inc{\achain}$ 
            (a similar clause holds for decrements),
      \item otherwise, if $\vect{x}_{i+1}(\achain) = \vect{x}_i(\achain) + 1$ for some $\achain$, then $\ainstruction_i = \cnext{\achain}$
             (a similar clause holds
            for previous),
     \item otherwise, if 
           $\vect{x}_{i}(\achain) = 0$ and $\alocation_i \step{\cisfirst{\achain}} \alocation_{i+1}
           \in \transitions$ for some $\achain$, then $\ainstruction_i = \cisfirst{\achain}$ (a similar clause holds for $\cisnotfirst{\achain}$),
     \item otherwise, if 
           $\vect{x}_{i}(\achain) =  \mathtt{exp}(k,\amap(\achain)) -1$ and 
           $\alocation_i \step{\cislast{\achain}} \alocation_{i+1}
           \in \transitions$ for some $\achain$, then 
            $\ainstruction_i = \cislast{\achain}$ (a similar clause holds for $\cisnotlast{\achain}$).\qedhere
      \end{itemize} 
\end{proof}

\makeatletter{}\subsection{Hardness Results for Chain Systems}
\label{section-hardness-for-chain-automata}
We show that Per($k$) is $(k+1)$\expspace-hard. The implication is
that replacing direct access to counters by a pointer that can move
along a chain of counters does not decrease the power of VASS, while
providing access to more counters. To demonstrate this, we extend
Lipton's \expspace{}-hardness proof for the control state reachability
problem in VASS~\cite{Lipton76} (see also its exposition in~\cite{Esparza98}). 
Since our pointers can be moved only
one step at a time, this extension involves new insights into the
control flow of algorithms used in Lipton's proof, allowing us to
implement it even with the limitations imposed by step-wise pointer
movements.

\begin{thm}
  \label{thm:CounterAutToPerk}
  Per($k$) is $(k+1)$\expspace-hard.
\end{thm}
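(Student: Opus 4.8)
The plan is to extend Lipton's \expspace-hardness argument \cite{Lipton76,Esparza98} by ``folding'' its hierarchy of counters into a single chain. Recall that Lipton simulates, for the purpose of control-state reachability, a counter program with zero-tests whose counters are bounded by $2^{2^m}$ by a counter system without zero-tests of size $O(m)$: one maintains, for each bounded counter $\acounter$, a complementary counter $\bar{\acounter}$ with invariant $\acounter+\bar{\acounter} = 2^{2^m}$, so that testing $\acounter=0$ amounts to decrementing $\bar{\acounter}$ exactly $2^{2^m}$ times and checking that it reaches $0$ (then restoring it). The ability to iterate exactly $N_i$ times at level $i$ is obtained from level $i-1$ through the squaring identity $N_i = N_{i-1}^2$, realised by two nested level-$(i-1)$ loops; with base $N_0 = 2$ and $m$ levels one reaches $N_m = 2^{2^m}$. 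I would start the reduction from the acceptance problem of Turing machines running in space $\mathtt{exp}(k+1,n)$, which is $(k+1)$\expspace-hard; such a machine is simulated by a Minsky machine whose counters are bounded by $\mathtt{exp}(k+2,n) = 2^{2^{\mathtt{exp}(k,n)}}$, i.e.\ by Lipton's scheme instantiated with $L := \mathtt{exp}(k,n)$ levels.

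The difficulty is that $L = \mathtt{exp}(k,n)$ levels cannot be written out explicitly, since this would make the system of size $\mathtt{exp}(k,n)$ rather than polynomial. This is exactly where chains enter. A chain $\achain$ of a level-$k$ system provides $\mathtt{exp}(k,\amap(\achain)) = L$ counters together with a pointer, and I would use the pointer position as the \emph{level index} of Lipton's hierarchy and the counter at position $i$ as the loop counter of level $i$. The $L$ physically distinct but uniformly-treated levels are thereby described by a single parameterised gadget of polynomial size, in which descending one level in the recursion is the instruction $\cprevious{\achain}$ and returning is $\cnext{\achain}$, while the base level is detected by $\cisfirst{\achain}$ and the top level by $\cislast{\achain}$. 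Thus one chain simultaneously plays the role of the exponentially-long stack of level loop-counters and of the level counter itself, neither of which a polynomial-size VASS could otherwise afford; choosing $\amap(\achain) = n$ keeps the input polynomial since $\amap$ is written in unary.

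The main obstacle, and the place where Lipton's control flow must be redesigned, is precisely that the pointer moves only one step at a time and the system has no stack: a single gadget has to implement the doubly-nested recursive loop for \emph{every} level, entering and leaving levels by unit pointer moves, while guaranteeing that upon each return the loop counters of the intermediate levels are in exactly the state needed to resume. I would organise the control flow into alternating descending and ascending sweeps of the pointer, using the pointed chain counters to record the number of iterations still owed at each level and the first?/last? tests to switch phases; the crucial point is that a level-$i$ loop must perform its inner level-$(i-1)$ loop exactly twice before the pointer is permitted to climb back, and this is enforced jointly by the complementary-counter invariant and by the non-negativity imposed by the perfect-run condition (each decrement matched to an earlier increment at the \emph{same} pointer position).

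Correctness would then follow the usual Lipton pattern: by induction on the level one shows that a perfect run faithfully realises ``clear/verify by $N_i$'' at level $i$, so that the complementary-counter zero-test is sound and complete. Non-negativity of all chain counters rules out any run that skips iterations, while conversely a halting run of the Minsky machine yields a perfect accepting run; it suffices to argue this for Per($k$), with Gainy($k$) obtained for free through Lemma~\ref{lemma-inter}. Combining the reduction with the simulation yields a polynomial-size level-$k$ chain system admitting a perfect accepting run if and only if the Turing machine accepts, which establishes that Per($k$) is $(k+1)$\expspace-hard.
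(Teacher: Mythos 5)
Your proposal follows essentially the same route as the paper's proof: a reduction from zero-test counter machines whose counters are bounded by $2^{2^{\mathtt{exp}(k,n)}}$, Lipton's complement-counter zero tests realised by recursive squaring, the chain pointer serving as the level index of the recursion (with $\cisfirst{\achain}$/$\cislast{\achain}$ detecting its bottom and top, and unit moves $\cprevious{\achain}$/$\cnext{\achain}$ for descending and returning), and the perfect-run condition blocking every run that cheats on a loop. The paper implements the per-level bookkeeping you gesture at with several chains moved in lockstep (loop counters $\mathtt{y},\mathtt{z}$, their complements, a target chain $\mathtt{s}$, a $\mathtt{stack}$ chain holding one return-address bit per level, so that the recursion resumes at the correct call site) together with a bootstrapping initialisation phase for the values $2^{2^i}$, but these are implementation details of the same idea.
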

\begin{proof}
  We give a reduction from the control state reachability problem for
  the counter automata with zero tests whose counters are bounded by
  $2^{2^{N}}$, where $N = \mathtt{exp}(k, n^{\gamma})$ for some $\gamma \geq 1$ and $n$ is the
  size of counter automaton.  Without any loss of generality, we can
  assume that the initial counter values are zero. The main
  challenge in showing the reduction to Per($k$) is to simulate the
  zero tests of the counter automaton in a chain system. The main
  idea, as in Lipton's proof~\cite{Lipton76}, is to use the fact that
  counters are bounded by $2^{2^{N}}$. For each counter $\acounter$ we keep another counter $\overline{\acounter}$ so that the sum
  of the values of $\acounter$ and $\overline{\acounter}$ is always
  $2^{2^{N}}$. Testing $\acounter$ for zero is then equivalent to
  testing that $\overline{\acounter}$ has the value $2^{2^{N}}$, which can be done by decrementing $\overline{\acounter}$
  $2^{2^{N}}$ times. This involves extending ideas from~\cite{Lipton76}.

  In light of this, the chain
  system will have two chains: $\mathtt{counters}$ and
  $\overline{\mathtt{counters}}$. The value of the counter $\acounter_{j}$ of the
  counter automaton will be maintained in the
  $j$\textsuperscript{th} counter of the chain $\mathtt{counters}$,
  while the value of the complement counter
  $\overline{\acounter_{j}}$ will be maintained in the
  $j$\textsuperscript{th} counter of the chain
  $\overline{\mathtt{counters}}$. In the descriptions that follow,
  whenever we write ``increment $j$\textsuperscript{th} counter of
  $\mathtt{counters}$'', we implicitly assume that the increment
  is followed by the decrement of the $j$\textsuperscript{th} counter of
  $\overline{\mathtt{counters}}$, to maintain the sum at $2^{2^{N}}$.
  The transitions of the chain system are designed to ensure the
  following high-level structure:
  \begin{enumerate}
    \item\label{it:initialize} Begin by setting the values of the first $\counternb$ counters
      of the chain $\overline{\mathtt{counters}}$ to $2^{2^{N}}$, and other initializations that will be needed for simulating zero tests.
      (Here, $\counternb$ is the number of counters in the
      original counter automaton.)
    \item\label{it:simInc} For every transition $\langle \alocation, \acounter_{j}
      \leftarrow \acounter_{j} + 1, \alocation'\rangle$ of the counter
      automaton, the chain system will have transitions corresponding
      to the following listing:
      \begin{enumerate}[label=\({\alph*}]
         \item move the pointers to
                      $j$\textsuperscript{th} counters on
                      $\mathtt{counters}$ and $\overline{\mathtt{counters}}$
                      \begin{description}
                      \item $(\cnext{ \mathtt{counters}};\cnext{
                          \overline{\mathtt{counters}}})^{j};$
                      \end{description}
             (We use `$;$' to compose transitions and `$(\cdot)^j$' to repeat $j$ times a finite sequence of instructions.)
         \item increment the $j$\textsuperscript{th}
                      counter on $\mathtt{counters}$, to simulate
                      incrementing $\acounter_{j}$
                      \begin{description}
                      \item $\inc{ \mathtt{counters}};$
                      \end{description}

         \item decrement the $j$\textsuperscript{th}
                      counter on $\mathtt{\overline{ \mathtt{
                      counters}}}$, to maintain the sum of
                      $j$\textsuperscript{th} counters on $\mathtt{
                      counters}$ and $\overline{ \mathtt{ counters}}$
                      at $2^{2^{N}}$
                      \begin{description}
                      \item $\dec{ \overline{ \mathtt{ counters}}};$
                      \end{description}

         \item move the pointers on
                                  $\mathtt{counters}$ and
                                  $\overline{\mathtt{counters}}$ back
                                  to the first counter and go to
                              $\alocation'$
                              \begin{description}
                              \item $(\cprevious{\mathtt{counters}};\cprevious{
                              \overline{\mathtt{counters}}})^{j},
                              \text{ goto } \alocation'$
                              \end{description}

      \end{enumerate}
                                                                                                                                                                \item\label{it:simZeroTest} For every transition $\langle \alocation: \text{if } \acounter_{j} =
  0 \text{ goto } \alocation_{1} \text{ else } \acounter_{j}
  \leftarrow \acounter_{j} - 1 \text{ goto } \alocation_{2} \rangle$
  of the counter automaton, the chain system will have transitions
  corresponding to the following listing:
  \begin{enumerate}[label=\({\alph*}]
    \item non-deterministically guess that
              $j$\textsuperscript{th} counter is nonzero or zero
              \begin{description}
              \item[$\alocation$] goto nonzero or zero;
              \end{description}
    \item    if the guess is that $j$\textsuperscript{th} counter
          is nonzero; decrement it and goto $\alocation_{2}$
          \begin{description}
          \item[\macstyle{nonzero}] $(\cnext{ \mathtt{counters}}; \cnext{ \overline{
      \mathtt{counters}}})^{j};$ $\dec{
	\mathtt{counters}}; \inc{\overline{\mathtt{counters}}}$;\\ $(\cprevious{
      \mathtt{counters}}; \cprevious{ \overline{
          \mathtt{counters}}})^{j};$ goto $\alocation_{2}$
          \end{description}
      \item otherwise, the guess is that $j$\textsuperscript{th} counter
          is zero, hence the complement of the
          $j$\textsuperscript{th} counter is $2^{2^{N}}$; decrement
          the complement $2^{2^{N}}$ times, increment it back to its
          original value and go to $\alocation_{1}$
          \begin{description}
          \item[\macstyle{zero}] $(\cnext{ \mathtt{counters}}; \cnext{
	\overline{ \mathtt{counters}}})^{j};$ [decrement
	$\overline{\mathtt{counters}}$ $2^{2^{N}}$ times; increment
      $\overline{\mathtt{counters}}$ $2^{2^{N}}$ times];
	$(\cprevious{ \mathtt{counters}}; \cprevious{ \overline{
	  \mathtt{counters}}})^{j};$ goto $\alocation_{1}$.
          \end{description}
  \end{enumerate}
                                                \end{enumerate}
  In \eqref{it:simInc} above, the chain system simply imitates the
  counter automaton, maintaining the condition that $\acounter_{j} +
  \overline{\acounter_{j}} = 2^{2^{N}}$. In \eqref{it:simZeroTest},
  the zero test of the counter automaton is replaced by a
  non-deterministic choice between \macstyle{nonzero} and
  $\macstyle{zero}$. The counter $\acounter_{j}$ itself can be nonzero
  or zero, so there are four possible cases. In the case where the
  nondeterministic choice coincides with the counter value, the chain
  system continues to simulate the counter automaton. On the other
  hand, if \macstyle{nonzero}
  is chosen when $\acounter_{j} = 0$, the chain system gets stuck
  since $(\cnext{ \mathtt{counters}};
  \cnext{ \overline{ \mathtt{counters}}})^{j}; \dec{
  \mathtt{counters}}$ can not be executed (since the
  $j$\textsuperscript{th} counter of $\mathtt{counters}$ has value
  zero and can not be decremented). If \macstyle{zero} is
  chosen when $\acounter_{j} \ne 0$ (and hence
  $\overline{\acounter_{j}} < 2^{2^{N}}$), the chain system also gets
  stuck, since (as we shall show later) the sequence of transitions
  $(\cnext{ \mathtt{counters}}; \cnext{ \overline{
  \mathtt{counters}}})^{j};$ [decrement
  $\overline{\mathtt{counters}}$ $2^{2^{N}}$ times] cannot be
  executed. In the rest of this sub-section, we will show that
  [decrement $\overline{\mathtt{counters}}$ $2^{2^{N}}$ times; increment
  $\overline{\mathtt{counters}}$ $2^{2^{N}}$ times] can be
  implemented in such a way that there is one run that does exactly what is
  required and that any other run deviating from the expected behaviour gets
  stuck. This allows us to conclude that the final state in the given
  counter automaton is reachable if and only if it is reachable in the
  constructed chain system.
\end{proof}

The basic principle for decrementing some counter $2^{2^{N}}$ times is
the same one used in Lipton's proof~\cite{Lipton76}, which we now recall
briefly. We will have two chains of counters $\mathtt{s}$ and
$\overline{\mathtt{s}}$. We will denote the counters in these chains
as $\mathtt{s}_{N}, \mathtt{s}_{N-1}, \ldots, \mathtt{s}_{1}$ and
$\overline{\mathtt{s}_{N}}, \overline{\mathtt{s}_{N-1}}, \ldots,
\overline{\mathtt{s}_{1}}$ respectively. If the counter
$\mathtt{s}_{N}$ has the value $2^{2^{N}}$, we describe how to
decrement it $2^{2^{N}}$ times. Further, if the counter
$\mathtt{s}_{i}$ has the value $2^{2^{i}}$, we describe how to
decrement it $2^{2^{i}}$ times. For this, we will use four more chains
$\mathtt{y}, \overline{\mathtt{y}}, \mathtt{z}$ and
$\overline{\mathtt{z}}$. Our description assumes that for each $i$,
the counters $\mathtt{y}_{i}$ and $\mathtt{z}_{i}$ have the value
$2^{2^{i}}$ (initializing these values will be described later as part
of implementing step \eqref{it:initialize} in the proof of
Theorem~\ref{thm:CounterAutToPerk}). Decrementing
$\mathtt{s}_{i}$ $2^{2^{i}}$ times is done by nested loops as shown in
Figure~\ref{fig:ZeroTestGadget}. The outer loop is indexed by
$\mathtt{y}_{i-1}$ and the inner loop by $\mathtt{z}_{i-1}$. Since both
$\mathtt{y}_{i-1}$ and $\mathtt{z}_{i-1}$ have the value $2^{2^{i-1}}$, the
instruction inside the inner loop (decrementing $\mathtt{s}_{i}$) is
executed $2^{2^{i-1}} \times 2^{2^{i-1}} = 2^{2^{i}}$ times. To
implement these loops, we will need to test when
$\mathtt{y}_{i-1}$ and $\mathtt{z}_{i-1}$ become zero. This is done
the same way as above, replacing $i, i-1$ by $i-1, i-2$ respectively.
\begin{figure}
  \begin{center}
\includegraphics[scale=.9]{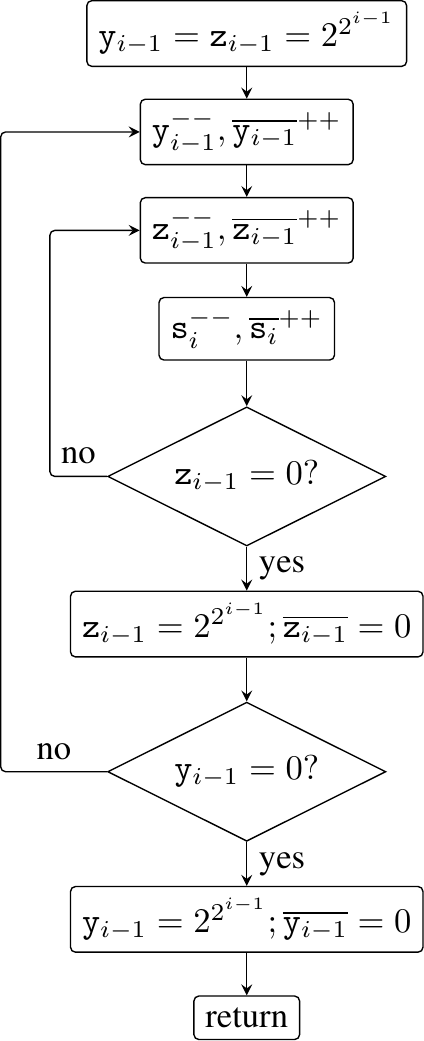}
  \end{center}
  \caption{Control flow of algorithm that decrements
  $\mathtt{s}_{i}$ $2^{2^{i}}$ times.}
  \label{fig:ZeroTestGadget}
\end{figure}
These two zero tests are done recursively by the same set of
transitions. After the recursive zero test, there needs to be a
mechanism to determine if the recursive call was made from the inner
loop or the outer loop. This mechanism is provided by a chain of
counters $\mathtt{stack}$: if the $i$\textsuperscript{th} counter in
the chain $\mathtt{stack}$ has the value $1$ (respectively $0$), then
the recursive call was made by the inner (respectively outer) loop.

We now give the implementation of instructions that follow
the control state named \macstyle{zero} in item~\eqref{it:simZeroTest} of page \pageref{it:simZeroTest} above.
In some intermediate configurations when the implementation is
executing, the four pointers of the four chains $\mathtt{y}$,
$\mathtt{z}$, $\mathtt{s}$ and $\mathtt{stack}$ will all be pointing
to the $i$\textsuperscript{th} counters of their chains. In this case,
we say that the system is \emph{at stage $i$}. The $N$\textsuperscript{th}
stage is the last one. We frequently need to move all four pointers to
the next stage or previous stage, for which we introduce some macros.
The macro \macstyle{Nextstage} is short for the sequence of
transitions $\cnext{\mathtt{y}}; \cnext{\overline{\mathtt{y}}};
\ldots; \cnext{ \mathtt{stack}}; \cnext{\overline{\mathtt{stack}}}$,
which moves all the pointers one stage up. The macro
\macstyle{Prevstage} similarly moves all the pointers one stage down.
The macro \macstyle{transfer($\overline{ \mathtt{z}}$, $\mathtt{s}$)}
is intended to transfer the value of the counter in $\overline{
\mathtt{z}}$ at current stage to the respective counter in $\mathtt{s}$. 
The macro consists of the following transitions.\newpage
\begin{description}
  \item \macstyle{transfer($\overline{ \mathtt{z}}$, $\mathtt{s}$)}
    
  \item \{
      \begin{description}
      \item \macstyle{subtract}: $\inc{\mathtt{z}}; \dec{\overline{ \mathtt{ z}}};
	  \inc{\mathtt{s}}; \dec{\overline{ \mathtt{ s}}};$
          \commstyle{subtract a count from
              $\overline{\mathtt{z}}$ and add it to $\mathtt{s}$}
          \item goto subtract or exit macro
              \commstyle{non-deterministically choose to repeat another
              step of the count transfer or to exit the macro}
      \end{description}
  \item \}
\end{description}\medskip

\noindent The macro \macstyle{transfer($\overline{ \mathtt{y}}$,
$\mathtt{s}$)} is similar to the above, with $\mathtt{y}$ and
$\overline{\mathtt{y}}$ replacing $\mathtt{z}$ and
$\overline{\mathtt{z}}$ respectively. The macro
\macstyle{inc(next($\mathtt{z}$))} moves the pointer of the chain
$\mathtt{z}$ one stage up, increments $\mathtt{z}$ once and moves the
pointer back one stage down: $\cnext{\mathtt{z}};
\inc{\mathtt{z}};
\cprevious{\mathtt{z}}$. The macros \macstyle{inc(next($\mathtt{y}$))}
and \macstyle{inc(next($\overline{\mathtt{s}}$))} are similar to
\macstyle{inc(next($\mathtt{z}$))} with $\mathtt{y}$ and $\overline{ \mathtt{
s}}$ replacing $\mathtt{z}$ respectively. The macro
\macstyle{dec(next($\mathtt{s}$))} is similarly defined to decrement
the next counter in the chain $\mathtt{s}$. 
The macro \macstyle{$\mathtt{stack}$ ==
1} tests if the counter in the current stage in the chain
$\mathtt{stack}$ is greater than $0$: $\dec{\mathtt{stack}};
\inc{\mathtt{stack}}$.

Following is the set of transitions at the control state
\macstyle{zero} in \eqref{it:simZeroTest} in the proof of
Theorem~\ref{thm:CounterAutToPerk} above. The listing below is written
in the form of a program in a low level programming language for
readability. It can be easily translated into a set of transitions of
a chain system. There is a control state between every two consecutive
instructions below, but only the important ones are given names like
``outerloop2''. A line such as ``innernonzero2: $\dec{ \mathtt{z}}$;
$\inc{ \mathtt{z}}$; goto innerloop2'' actually represents the set of
transitions $\set{ \langle \text{innernonzero2}, \dec{ \mathtt{z}},
\alocation\rangle, \langle\alocation, \inc{ \mathtt{ z}},
\text{innerloop2}\rangle}$. The instruction ``\alocation: if
$\cisfirst{\achain}$ then \{program 1\} else \{program 2\}''
represents the set of transitions \enlargethispage{2\baselineskip}
$$\set{ \langle\alocation, \cisfirst{
\achain}, \alocation_{1}\rangle, \langle \alocation, \cisnotfirst{
\achain}, \alocation_{2}\rangle} \cup$$ 
$$\set{ \text{transitions for
program 1 from } \alocation_{1}} \cup \set{ \text{transitions for
program 2 from } \alocation_{2}}.$$ 
An instruction of the form
``$\alocation: \inc{ \overline{ \mathtt{stack}}}; \text{goto
outernonzero2 or outerzero2}$'' represents the set of transitions
$\set{ \langle\alocation, \inc{ \overline{ \mathtt{stack}}},
\text{outernonzero2}\rangle, \langle\alocation, \inc{ \overline{
\mathtt{stack}}}, \text{outerzero2}\rangle}$. Depending on the
non-deterministic choices made at control states that have multiple
transitions enabled, there will be several different runs. We prove
later that there is one run which has the intended effect and the
other runs will never reach the final state.

The action [decrement $\overline{\mathtt{counters}}$ $2^{2^{N}}$
times] in item \eqref{it:simZeroTest} of page \pageref{it:simZeroTest}
is implemented in Chain~system~\ref{csm:dec} by ``transfer($\overline{ \mathtt{ counters}}$,
$\mathtt{s}$); goto $\text{Dec}_{\macstyle{zerorep}}$'' in the first
line. There is a non-deterministic choice for exiting ``transfer($\overline{ \mathtt{ counters}}$,
$\mathtt{s}$)'', and we wish to block any run that
exits before incrementing $\mathtt{s}$ $2^{2^{N}}$ times. This is done
by ``goto $\text{Dec}_{\macstyle{zerorep}}$'', which
forces the chain system to decrement $\mathtt{s}$ $2^{2^{N}}$ times.

\begin{chainsystem}
\begin{description}
\item \macstyle{zero}: $(\cnext{ \mathtt{counters}}; \cnext{
  \overline{ \mathtt{counters}}})^{j};$
  transfer($\overline{ \mathtt{ counters}}$,
  $\mathtt{s}$); goto $\text{Dec}_{\macstyle{zerorep}}$
\item \macstyle{zerorep}: transfer($\mathtt{ counters}$,
  $\mathtt{s}$); goto $\text{Dec}_{\macstyle{zeropass}}$
\item \macstyle{zeropass}: $(\cprevious{ \mathtt{counters}}; \cprevious{
  \overline{ \mathtt{counters}}})^{j};$ goto $\alocation_{1}$
\item \macstyle{$\text{Dec}_{\langle\text{address}\rangle}$}:
  \item if $\cisfirst{\mathtt{stack}}$ then
    \begin{description}
      \item $(\dec{\mathtt{s}})^{4};
	(\inc{\overline{\mathtt{s}}})^{4}$; goto
	DecFinished
    \end{description}
  \item else \mbox{}
    \begin{description}
      \item Prevstage
      \item \macstyle{outerloop2}: $\dec{\mathtt{y}};
	\inc{\overline{\mathtt{y}}}$ \commstyle{$\mathtt{y}$ is the
	index for outer loop}
	\begin{description}
	    \item \macstyle{innerloop2}: $\dec{\mathtt{z}};
	      \inc{\overline{\mathtt{z}}}$ \commstyle{$\mathtt{z}$ is the
	      index for inner loop}
	      \begin{description}
		\item dec(next($\mathtt{s}$)); inc(next($\overline{
		  \mathtt{ s}}$))
		\item \macstyle{innertest2}: goto innernonzero2 or innerzero2
		\item \macstyle{innernonzero2}: $\dec{\mathtt{z}}; \inc{\mathtt{z}}$; goto
		  innerloop2 \commstyle{inner loop not yet complete}
	      \end{description}
	    \item \macstyle{innerzero2}: transfer($\overline{\mathtt{z}}$,
	      $\mathtt{s}$); $\inc{\mathtt{stack}}; \dec{
	      \overline{ \mathtt{ stack}}};$ goto $\text{Dec}_{\langle\text{address}\rangle}$ \commstyle{inner loop
                  complete. $(i-1)$\textsuperscript{th} counter of
                  $\mathtt{stack}$ is set to 1, so the recursive call
                  to $\text{Dec}_{\langle\text{address}\rangle}$
              returns to outertest2}
	    \item \macstyle{outertest2}: $\dec{\mathtt{stack}}; \inc{
	      \overline{ \mathtt{stack}}};$ goto outernonzero2 or outerzero2
	    \item \macstyle{outernonzero2}: $\dec{\mathtt{y}}; \inc{\mathtt{y}}$; goto
	      outerloop2 \commstyle{outer loop not yet complete}
	\end{description}
      \item \macstyle{outerzero2}: transfer($\overline{\mathtt{y}}$,
	$\mathtt{s}$); goto $\text{Dec}_{\langle\text{address}\rangle}$ \commstyle{outer loop
	complete. $(i-1)$\textsuperscript{th} counter of
                  $\mathtt{stack}$ is set to 0, so the recursive call
              to $\text{Dec}_{\langle\text{address}\rangle}$ returns
          to outerexit2}
      \item \macstyle{outerexit2}: Nextstage;  goto
	DecFinished
    \end{description}
    \item fi
    \item \macstyle{DecFinished}: goto backtrack
      \item
      \item \macstyle{backtrack}:
      \item if ($\cisnotlast{\mathtt{stack}}$) then
          \commstyle{is the pointer at some intermediate stage, which
          means we just completed a recursive call?}
	\begin{description}
	  \item if ($\mathtt{stack}$ ==  1) then goto outertest2
              \commstyle{$i$\textsuperscript{th} counter of
              $\mathtt{stack}$ is set to $1$, so return to outertest2}
	  \item else goto outerexit2 \commstyle{$i$\textsuperscript{th} counter of
              $\mathtt{stack}$ is set to $0$, so return to outerexit2}
	\end{description}
    \item else \commstyle{pointer is at the last stage, decrementation
        is complete}
	\begin{description}
	  \item goto $\langle$address$\rangle$
	\end{description}
      \item fi
\end{description}
    \caption{Decrement $j$\textsuperscript{th} counter in $\overline{
        \mathtt{counters}}$ $2^{2^{N}}$ times.}
        \label{csm:dec}
\end{chainsystem}
Suppose, as in the proof of Theorem~\ref{thm:CounterAutToPerk}, we
want to simulate the $\acounter_{j} = 0$ case of the counter
automaton's transition $\langle \alocation: \text{if } \acounter_{j} =
0 \text{ goto } \alocation_{1} \text{ else } \acounter_{j}
\leftarrow \acounter_{j} - 1 \text{ goto } \alocation_{2} \rangle$.
The action [increment $\overline{\mathtt{counters}}$ $2^{2^{N}}$
times] in the proof of Theorem~\ref{thm:CounterAutToPerk} is
implemented above by ``transfer($\mathtt{counters}$, $\mathtt{s}$);
goto $\text{Dec}_{\macstyle{zeropass}}$'' in the second line. If
$\text{Dec}_{\macstyle{zeropass}}$ returns successfully, the control
can go to $\alocation_{1}$ as required. The
only difference between $\text{Dec}_{\macstyle{zerorep}}$ and
$\text{Dec}_{\macstyle{zeropass}}$ is the return address
\macstyle{zerorep} and \macstyle{zeropass}. A generic
$\text{Dec}_{\langle\text{address}\rangle}$ is shown above, which
implements the algorithm shown in Figure~\ref{fig:ZeroTestGadget}.
Formally, every instruction label (such as \macstyle{outerloop2},
\macstyle{innerloop2} etc.) should also be parameterized with
\macstyle{$\langle \text{address} \rangle$}, but these have been ommitted for
the sake of readability.
The case ``if $\cisfirst{\mathtt{stack}}$'' implements the base case
$i=1$. If $i >1$, the else branch is taken, where the first
instruction is to move all pointers one stage down, so that they now
point to $\mathtt{y}_{i-1}$ and $\mathtt{z}_{i-1}$ as required by
the algorithm of Figure~\ref{fig:ZeroTestGadget}. The loop between
\macstyle{outerloop2} and \macstyle{outerzero2} implements the outer
loop of Figure~\ref{fig:ZeroTestGadget} and the loop between
\macstyle{innerloop2} and \macstyle{innerzero2} implements the inner
loop of Figure~\ref{fig:ZeroTestGadget}. The
non-deterministic choice in \macstyle{innertest2} decides whether to
continue the inner loop or to exit. The macro transfer($\overline{\mathtt{z}}$,
$\mathtt{s}$) at the beginning of \macstyle{innerzero2} will reset the
$(i-1)$\textsuperscript{th} counter of $\mathtt{z}$ back to
$2^{2^{i-1}}$, at the same time setting
$(i-1)$\textsuperscript{th} counter of $\mathtt{s}$ to
$2^{2^{i-1}}$. So the run that behaves as
expected increments $\overline{\mathtt{z}_{i-1}}$ exactly
$2^{2^{i-1}}$ times. All other runs are blocked by the recursive call
to $\text{Dec}_{\langle\text{address}\rangle}$ at the end of
\macstyle{innerzero2}. The purpose of $\inc{\mathtt{stack}}; \dec{
\overline{ \mathtt{ stack}}};$ in the middle of \macstyle{innerzero2}
is to ensure that the recursive call to
$\text{Dec}_{\langle\text{address}\rangle}$ returns to the correct
state. After similar tests in \macstyle{outerzero2}, ``Nextstage'' at
the beginning of \macstyle{outerexit2} moves all the pointers back to
stage $i$. Then in \macstyle{backtrack}, the correct return state is
figured out. Exactly how this is done will be clear in the proof of
the following lemma, where we construct runs of the chain system by
induction on stage.

\begin{lem}
  \label{lem:SimZeroTest}
  In the state \macstyle{zero} in the listing above, if the
  $j$\textsuperscript{th} counter in the chain $\mathtt{counters}$ has
  the value $0$, there is a run that reaches the state $q_1$ without
  causing any changes. Any run from the state \macstyle{zero} will be
  blocked if the $j$\textsuperscript{th} counter in the chain
  $\mathtt{counters}$ has a value other than $0$.
\end{lem}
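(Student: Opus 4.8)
The plan is to isolate the entire content of the lemma in a single correctness statement about the generic subroutine $\text{Dec}_{\langle\text{address}\rangle}$ of Chain~system~\ref{csm:dec}, proved by induction on the \emph{stage} $i$ (from the base stage, where $\cisfirst{\mathtt{stack}}$ holds, up to the top stage $N$), and then to read off both halves of the lemma from the two-phase code at \macstyle{zero}. Call a configuration \emph{ground below stage $i$} if for every $k<i$ we have $\mathtt{y}_k=\mathtt{z}_k=2^{2^k}$, $\overline{\mathtt{y}_k}=\overline{\mathtt{z}_k}=0$, $\mathtt{s}_k=0$, $\overline{\mathtt{s}_k}=2^{2^k}$, and the $k$-th counter of $\mathtt{stack}$ is $0$. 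The invariant I would prove is: if $\text{Dec}_{\langle\text{address}\rangle}$ is entered at stage $i$ (all four pointers aligned at their $i$-th counter) in a configuration ground below stage $i$ with $\mathtt{s}_i\le 2^{2^i}$, then (i) when $\mathtt{s}_i=2^{2^i}$ there is a run that moves all of $\mathtt{s}_i$ into $\overline{\mathtt{s}_i}$ (i.e.\ performs exactly $2^{2^i}$ decrements of $\mathtt{s}_i$), leaves the configuration ground below stage $i$, and returns to $\langle\text{address}\rangle$ at stage $i$; and (ii) every run from $\text{Dec}_{\langle\text{address}\rangle}$ either realises exactly this behaviour or is blocked --- in particular, when $\mathtt{s}_i<2^{2^i}$ every run is blocked.

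For the base case $i=1$ the branch guarded by $\cisfirst{\mathtt{stack}}$ executes $(\dec{\mathtt{s}})^{4};(\inc{\overline{\mathtt{s}}})^{4}$, which completes precisely when $\mathtt{s}_1=4=2^{2^{1}}$ and otherwise blocks at the first decrement of a zero counter. For the inductive step I would follow the control flow of Figure~\ref{fig:ZeroTestGadget}: after \macstyle{Prevstage} the pointers sit at stage $i-1$; each inner-loop body decrements $\mathtt{s}_i$ once (via \macstyle{dec(next($\mathtt{s}$))}) and each outer-loop body decrements $\mathtt{y}_{i-1}$ once. The loop lengths are pinned down by the recursive calls: at \macstyle{innerzero2} the preceding \macstyle{transfer($\overline{\mathtt{z}}$, $\mathtt{s}$)} loads $\overline{\mathtt{z}_{i-1}}$ into $\mathtt{s}_{i-1}$ while restoring $\mathtt{z}_{i-1}$, so by the induction hypothesis the ensuing stage-$(i-1)$ $\text{Dec}$ succeeds if{f} $\mathtt{s}_{i-1}=2^{2^{i-1}}$, i.e.\ if{f} $\overline{\mathtt{z}_{i-1}}=2^{2^{i-1}}$, i.e.\ if{f} $\mathtt{z}_{i-1}$ had truly reached $0$; thus the recursive call is a faithful zero-test of $\mathtt{z}_{i-1}$, and symmetrically of $\mathtt{y}_{i-1}$ at \macstyle{outerzero2}. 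This forces exactly $2^{2^{i-1}}$ inner iterations inside exactly $2^{2^{i-1}}$ outer iterations, hence exactly $2^{2^{i-1}}\cdot 2^{2^{i-1}}=2^{2^{i}}$ decrements of $\mathtt{s}_i$. The single stack bit set by $\inc{\mathtt{stack}};\dec{\overline{\mathtt{stack}}}$ before an inner-loop recursive call (and left at $0$ before an outer-loop one), read back by the dispatch in \macstyle{backtrack} (to \macstyle{outertest2} when the bit is $1$, to \macstyle{outerexit2} otherwise), guarantees that each recursive return lands at the correct continuation; and the transfers in \macstyle{innerzero2}/\macstyle{outerzero2} together with the inductive restoration of the deeper counters re-establish the ground configuration below stage $i$. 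Careful tracking of every counter value (the complement counters are emptied and refilled in matched pairs) discharges the restoration part of (i).

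Granting the invariant, the lemma follows by tracing the code at \macstyle{zero}, using the maintained identity $\mathtt{counters}_j+\overline{\mathtt{counters}_j}=2^{2^N}$, so that $\mathtt{counters}_j=0$ if{f} $\overline{\mathtt{counters}_j}=2^{2^N}$. Phase one, \macstyle{transfer($\overline{\mathtt{counters}}$, $\mathtt{s}$)} followed by $\text{Dec}_{\macstyle{zerorep}}$, moves $\overline{\mathtt{counters}_j}$ into $\mathtt{s}_N$ and then demands $\mathtt{s}_N=2^{2^N}$; by the invariant this completes exactly when $\overline{\mathtt{counters}_j}=2^{2^N}$, i.e.\ when $\mathtt{counters}_j=0$, and is otherwise blocked (since $\mathtt{s}_N\le\overline{\mathtt{counters}_j}<2^{2^N}$ forces a decrement of a zero counter inside $\text{Dec}$). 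In the completing case the transfer has raised $\mathtt{counters}_j$ to $2^{2^N}$; phase two at \macstyle{zerorep}, namely \macstyle{transfer($\mathtt{counters}$, $\mathtt{s}$)} followed by $\text{Dec}_{\macstyle{zeropass}}$, moves this mass back, restoring $\mathtt{counters}_j=0$ and $\overline{\mathtt{counters}_j}=2^{2^N}$ and again consuming $\mathtt{s}_N$ before returning to its address \macstyle{zeropass}, after which the pointers are moved back and control passes to $\alocation_1$. Since every auxiliary and complement counter is restored, the net effect is no change, giving the first assertion; and since phase one is blocked whenever $\mathtt{counters}_j\ne 0$, no run reaches $\alocation_1$ in that case, giving the second.

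The step I expect to be the main obstacle is the \emph{soundness} half, clause (ii): ruling out every ``cheating'' run, i.e.\ one that leaves a \macstyle{transfer} macro early, takes a wrong non-deterministic branch at \macstyle{innertest2} or at the \macstyle{zero}/\macstyle{nonzero} split, or otherwise desynchronises a loop from its zero-test. The argument has to exploit that, although chain systems have no zero-tests, the complement invariants convert every such deviation into an attempted decrement of a counter whose value is $0$; since a perfect run must match each decrement to a distinct earlier increment at the same pointer position (Definition~\ref{definition-ca-chained-counters} and the perfect-run condition that follows it), no such decrement is possible and the run is blocked. Making this airtight requires the level-by-level bookkeeping of all auxiliary, complement and stack counters announced in the invariant, which is precisely what the induction on $i$ is organised to carry.
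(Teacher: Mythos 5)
Your proposal is correct and follows essentially the same route as the paper's proof: correctness of the generic $\text{Dec}_{\langle\text{address}\rangle}$ subroutine is established by induction on the stage (base case via the $\cisfirst{\mathtt{stack}}$ branch, inductive step tracing the nested loops, with the recursive calls acting as faithful zero-tests and the $\mathtt{stack}$ bits steering the \macstyle{backtrack} dispatch), and both assertions of the lemma are then read off from the two transfer-plus-Dec phases at \macstyle{zero}. The only difference is organisational: you fold the blocking (soundness) half into the induction invariant itself, with explicit ``ground below stage $i$'' restoration bookkeeping, whereas the paper proves only the existence claim by induction and disposes of blocking by a separate top-down descent (any deviating run reaches a lower stage with a deficient $\mathtt{s}$ value and is eventually stuck) --- the same argument in different packaging.
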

\begin{proof}
  We first prove the existence of a run reaching $q_{1}$ when the
  $j$\textsuperscript{th} counter in the chain $\mathtt{counters}$ has
  the value $0$. The $j$\textsuperscript{th} counter in the chain
  $\overline{\mathtt{counters}}$ will have the value $2^{2^{N}}$. Our
  run executes transfer($\overline{ \mathtt{ counters}}$,
  $\mathtt{s}$) to set $\mathtt{counters}_{j}$ and $\mathtt{s}_{N}$ to
  $2^{2^{N}}$ and set $\overline{\mathtt{s}_{N}}$ and
  $\overline{\mathtt{counters}_{j}}$ to $0$. With these values in the
  counters, we will prove next that there is run from
  $\text{Dec}_{\macstyle{zerorep}}$ to \macstyle{zerorep} that sets
  $\mathtt{s}_{N}$ to $0$ and $\overline{\mathtt{s}_{N}}$ to
  $2^{2^{N}}$. From \macstyle{zerorep}, our run continues to execute
  transfer($\mathtt{ counters}$, $\mathtt{s}$) to set
  $\mathtt{counters}_{j}$ and $\overline{\mathtt{s}_{N}}$ to
  $0$ and set $\overline{\mathtt{counters}_{j}}$ and
  $\mathtt{s}_{N}$ to $2^{2^{N}}$. Then there is again a run from
  $\text{Dec}_{\macstyle{zeropass}}$ to \macstyle{zeropass} which sets
  $\mathtt{s}_{N}$ to $0$ and sets $\overline{\mathtt{s}_{N}}$ to
  $2^{2^{N}}$. From \macstyle{zeropass}, our run moves the pointers on
  the chains $\mathtt{counters}$ and $\overline{\mathtt{counters}}$
  back to the first position and goes to state $q_1$.

  Now suppose that in the state
  $\text{Dec}_{\langle\text{address}\rangle}$, $\mathtt{s}_{N}$ is set
  to $2^{2^{N}}$, $\overline{\mathtt{s}_{N}}$ is set to $0$ and the
  pointer in the chain $\mathtt{stack}$ is at the last position, as
  mentioned in the previous paragraph. We will prove that there is a
  run that reaches $\langle\text{address}\rangle$, setting
  $\mathtt{s}_{N}$ to $0$ and $\overline{\mathtt{s}_{N}}$ to
  $2^{2^{N}}$. We will in fact prove the following claim by induction
  on $i$.

  \emph{Claim:} Suppose that in the state
  $\text{Dec}_{\langle\text{address}\rangle}$, $\mathtt{s}_{i}$ is set
  to $2^{2^{i}}$, $\overline{\mathtt{s}_{i}}$ is set to $0$ and the
  pointers in the chains $\mathtt{stack}, \mathtt{s},
  \mathtt{y}, \mathtt{z}$ are at position $i$. There is a
  run $\arun_{i}$ that reaches \macstyle{DecFinished}, setting
  $\mathtt{s}_{i}$ to $0$ and $\overline{\mathtt{s}_{i}}$ to
  $2^{2^{i}}$, with the pointers in the same position.

  When we first call $\text{Dec}_{\macstyle{zerorep}}$ or
  $\text{Dec}_{\macstyle{zeropass}}$, the pointers in the chains
  $\mathtt{stack}, \mathtt{s}, \mathtt{y}, \mathtt{z}$ are at the
  last position ($N$). Hence, when the run $\arun_{N}$ given by the above
  claim reaches \macstyle{DecFinished}, it can continue to the state
  \macstyle{backtrack} and then to \macstyle{zerorep} or
  \macstyle{zeropass} respectively.

  \emph{Base case of the claim:} $i=1$. The run $\arun_{1}$ is as
  follows: if $\cisfirst{\mathtt{stack}}$ then
  $(\dec{\mathtt{s}})^{4}; (\inc{\overline{\mathtt{s}}})^{4}$; goto
  DecFinished .

  \emph{Induction step of the claim:}. The run $\arun_{i+1}$ is as
  follows.
    \begin{description}
      \item Prevstage
      \item \macstyle{outerloop2}: $\dec{\mathtt{y}};
	\inc{\overline{\mathtt{y}}}$
	\begin{description}
	    \item \macstyle{innerloop2}: $\dec{\mathtt{z}};
	      \inc{\overline{\mathtt{z}}}$
	      \begin{description}
		\item dec(next($\mathtt{s}$)); inc(next($\overline{
		  \mathtt{ s}}$))
		\item \macstyle{innertest2}: goto innernonzero2
		  $2^{2^{i}} - 1$ times, then goto innerzero2
		\item \macstyle{innernonzero2}: $\dec{\mathtt{z}}; \inc{\mathtt{z}}$; goto
		  innerloop2
	      \end{description}
	    \item \macstyle{innerzero2}: transfer($\overline{\mathtt{z}}$,
	      $\mathtt{s}$); $\inc{\mathtt{stack}}; \dec{
	      \overline{ \mathtt{ stack}}};$ goto
	      $\text{Dec}_{\langle\text{address}\rangle}$
	      \commstyle{Follow $\arun_{i}$ here. Since we set
	      $\mathtt{stack}_{i}$ to $1$, we can return to outertest2
	      at the end of $\arun_{i}$ to continue our
	      $\arun_{i+1}$}
	    \item \macstyle{outertest2}: $\dec{\mathtt{stack}}; \inc{
	      \overline{ \mathtt{stack}}};$ goto outernonzero2
	      $2^{2^{i}} - 1$ times, then goto outerzero2
	    \item \macstyle{outernonzero2}: $\dec{\mathtt{y}}; \inc{\mathtt{y}}$; goto
	      outerloop2
	\end{description}
      \item \macstyle{outerzero2}: transfer($\overline{\mathtt{y}}$,
	$\mathtt{s}$); goto
	$\text{Dec}_{\langle\text{address}\rangle}$ \commstyle{Follow
	  $\arun_{i}$ here. Since now $\mathtt{stack}_{i}$ is
	  $0$, we can return to outerexit2 at the end of
	  $\arun_{i}$ to continue our $\arun_{i+1}$}
      \item \macstyle{outerexit2}: Nextstage;  goto
	DecFinished
    \end{description}
    This completes the induction step and the proof of the claim.

    Finally, we will prove that any run from \macstyle{zero} will be
    blocked if the $j$\textsuperscript{th} counter in the chain
    $\mathtt{counters}$ has a value other than $0$. Indeed, the
    $j$\textsuperscript{th} counter in the chain
    $\overline{\mathtt{counters}}$ has a value less than $2^{2^{N}}$
    when the state is $\text{Dec}_{\macstyle{zerorep}}$.
    Hence no run can execute \macstyle{innerloop2} and
    \macstyle{outerloop2} $2^{2^{N-1}}$ times. Hence, any run will
    either get blocked when the pointer in the chain $\mathtt{s}$ is
    still at position $N$, or it will go to state
    $\text{Dec}_{\langle\text{address}\rangle}$ with
    $\mathtt{s}_{N-1}$ less than $2^{2^{N-1}}$. In the latter case, we
    can argue in the same way to conclude that any run is blocked
    either when the pointer in the chain $\mathtt{s}$ is still at
    $N-1$ or it will go to state
    $\text{Dec}_{\langle\text{address}\rangle}$ with
    $\mathtt{s}_{N-2}$ less than $2^{2^{N-2}}$ and so on. Any run is
    blocked at some stage between $N$ and $1$.
\end{proof}

Next, we explain the implementation of step \eqref{it:initialize} in
the proof of Theorem~\ref{thm:CounterAutToPerk}. We show how to get
counters to their required initial values (at the beginning of the
runs all the values are equal to zero). We briefly recall the required
initial values:
\begin{enumerate}
\itemsep 0 cm
\item each counter $\acounter$ has value zero,
\item for every $i \in \interval{1}{N}$, 
      $\overline{\mathtt{y}}_i$, $\overline{\mathtt{z}}_i$ and  $\mathtt{s}_i$ are equal to zero,
\item for every $i \in \interval{1}{N}$, $\mathtt{stack}_i$ is equal to zero and 
      $\overline{\mathtt{stack}}_i$ is equal to one, 
\item each complement counter $\overline{\acounter}$ has value $2^{2^N}$,
\item for every $i \in \interval{1}{N}$, 
      $\mathtt{y}_i$, $\mathtt{z}_i$ and  $\overline{\mathtt{s}}_i$ have the value $2^{2^i}$.
\end{enumerate} 
The initialization for the points (1)--(3) is easy to perform and below we
focus on the initialization for the point (5). Initialization of the
complement counter in point (4) will be dealt with later by simply
adjusting what is done below.  To help achieve these initial values
for 5), we have another chain $\mathtt{init}$, with $N$ counters. We
follow the convention that if at stage $i$, the counter in the chain
$\mathtt{init}$ has value $1$, then all the counters in all the chains
at stage $i$ or below have been properly initialized.  By convention,
the condition $\cislast{\mathtt{init}}$ is true if the pointer in the
chain $\mathtt{init}$ is at stage $N$. The macros \macstyle{Nextstage}
and \macstyle{Prevstage} moves the pointers of the chain
$\mathtt{init}$, in addition to all the other chains.

We now give the code used to initialize
$\mathtt{y}$, $\mathtt{z}$, $\mathtt{s}$ and $\mathtt{stack}$ to the
required values, assuming that all counters are initially set to $0$
and all the pointers in all the chains are pointing to stage $0$. The
counters $\mathtt{y}_{0}$, $\mathtt{z}_{0}, \mathtt{s}_{0}$ and
$\mathtt{stack}_{0}$ are initialized directly. For the counters at
higher stages, we again use nested loops similar to those shown in
Figure~\ref{fig:ZeroTestGadget}, assuming that counters at lower
stages have been initialized. These nested loops are implemented
between \macstyle{innerinit}--\macstyle{innerzero1} and
\macstyle{outerinit}--\macstyle{outerzero1}. The zero tests involved
in terminating these loops are performed by the same decrementing
algorithm, assuming that counters at lower stages are already
initialized. This will introduce some complications while backtracking
from the decrementing algorithm --- we have to figure out whether the
call to \macstyle{Dec} was from \macstyle{innerzero1} or
\macstyle{outerzero1}, or from within \macstyle{Dec} itself from a
recursive call. To handle this, the ``backtrack'' part of the code
below has been updated to check if (next($\mathtt{init}$) ==  1): if
so, we are inside some recursive call to \macstyle{Dec}. Otherwise,
the call to \macstyle{Dec} was from one of the loops initializing a
higher level.

\begin{description}
  \item \macstyle{begininit}:
    $(\inc{\mathtt{y}})^{4};(\inc{\mathtt{z}})^{4};
    (\inc{\overline{\mathtt{s}}})^{4};
    \inc{\mathtt{init}}; \inc{\overline{\mathtt{stack}}}$
  \item \macstyle{initialize}: If $\cislast{\mathtt{init}}$ goto beginsim else
    goto outerinit
  \item \macstyle{outerinit}: $\dec{\mathtt{y}}; 
    \inc{\overline{\mathtt{y}}}$ \commstyle{$\mathtt{y}$ is the index for
    outer loop}
    \begin{description}
    \item \macstyle{innerinit}: $\dec{\mathtt{z}};
      \inc{\overline{\mathtt{z}}}$ \commstyle{$\mathtt{z}$ is the index for
      inner loop}
      \begin{description}
	\item  \macstyle{INC}: inc(next($\mathtt{y}$)); inc(next($\mathtt{z}$));
	  inc(next($\overline{\mathtt{s}}$))
	\item \macstyle{innertest1}: goto innernonzero1 or innerzero1
	\item \macstyle{innernonzero1}: $\dec{\mathtt{z}}; \inc{\mathtt{z}};$ goto
	  innerinit \commstyle{inner loop not yet complete}
      \end{description}
    \item \macstyle{innerzero1:} transfer($\overline{\mathtt{z}}$,
      $\mathtt{s}$); $\inc{\mathtt{stack}};
      \dec{\overline{\mathtt{stack}}};$ goto Dec \commstyle{inner loop
      complete. (next($\mathtt{init}$) ==  0) and ($\mathtt{stack}$ ==
  1), so Dec returns to outertest1}
    \item \macstyle{outertest1}: $\dec{\mathtt{stack}};
      \inc{\overline{\mathtt{stack}}};$ goto outernonzero1 or outerzero1
      \item \macstyle{outernonzero1}: $\dec{\mathtt{y}}; \inc{\mathtt{y}};$ goto
	outerinit \commstyle{outer loop not yet complete}
    \end{description}
  \item \macstyle{outerzero1}: transfer($\overline{\mathtt{y}}$,
    $\mathtt{s}$); goto Dec \commstyle{outer loop
    complete. (next($\mathtt{init}$) ==  0) and ($\mathtt{stack}$ ==
  0), so Dec returns to outerexit1}
  \item \macstyle{outerexit1}: Nextstage; $\inc{\mathtt{init}};
    \inc{\overline{\mathtt{stack}}};$ goto initialise
  \item \macstyle{beginsim}: start simulating the counter machine (see
    part of the proof of Theorem~\ref{thm:CounterAutToPerk} related to
    the simulation) 
  \item
  \item \macstyle{Dec}:
  \item if $\cisfirst{\mathtt{init}}$ then
    \begin{description}
      \item $(\dec{\mathtt{s}})^{4};
	(\inc{\overline{\mathtt{s}}})^{4}$; goto
	DecFinished
    \end{description}
  \item else \mbox{}
    \begin{description}
      \item Prevstage
      \item \macstyle{outerloop2}: $\dec{\mathtt{y}};
	\inc{\overline{\mathtt{y}}}$ \commstyle{$\mathtt{y}$ is the
	index for outer loop}
	\begin{description}
	  \item \macstyle{innerloop2}: $\dec{\mathtt{z}};
	    \inc{\overline{\mathtt{z}}}$ \commstyle{$\mathtt{z}$ is the
	    index for inner loop}
	    \begin{description}
	      \item dec(next($\mathtt{s}$)); inc(next($\overline{
		\mathtt{ s}}$))
	      \item \macstyle{innertest2}: goto innernonzero2 or innerzero2
	      \item \macstyle{innernonzero2}: $\dec{\mathtt{z}}; \inc{\mathtt{z}}$; goto
		innerloop2 \commstyle{inner loop not yet complete}
	    \end{description}
	  \item \macstyle{innerzero2}: transfer($\overline{\mathtt{z}}$,
	    $\mathtt{s}$); $\inc{\mathtt{stack}}; \dec{
	    \overline{ \mathtt{ stack}}};$ goto Dec \commstyle{inner loop
	    complete. (next($\mathtt{init}$) ==  1) and ($\mathtt{stack}$ ==
  1), so Dec returns to outertest2}
	  \item \macstyle{outertest2}: $\dec{\mathtt{stack}}; \inc{
	    \overline{ \mathtt{stack}}};$ goto outernonzero2 or outerzero2
	  \item \macstyle{outernonzero2}: $\dec{\mathtt{y}}; \inc{\mathtt{y}}$; goto
	    outerloop2 \commstyle{outer loop not yet complete}
	\end{description}
      \item \macstyle{outerzero2}: transfer($\overline{\mathtt{y}}$,
	$\mathtt{s}$); goto Dec \commstyle{outer loop
	complete. (next($\mathtt{init}$) ==  1) and ($\mathtt{stack}$ ==
  0), so Dec returns to outerexit2}
      \item \macstyle{outerexit2}: Nextstage;  goto
	DecFinished
    \end{description}
    \item fi
    \item \macstyle{DecFinished}: goto backtrack
    \item \macstyle{backtrack}:
      \item if (next($\mathtt{init}$) ==  1) then \commstyle{we are not at the
	end of recursion}
	\begin{description}
	  \item if ($\mathtt{stack}$ ==  1) then goto outertest2
	  \item else goto outerexit2
	\end{description}
      \item else \commstyle{we are at the end of recursion}
	\begin{description}
	  \item if ($\mathtt{stack}$ ==  1) then goto outertest1
	  \item else goto outerexit1
	\end{description}
      \item fi
\end{description}
The ideas involved in the above listing are similar to the ones in the
code decrementing $\mathtt{counters}_{j}$ $2^{2^{N}}$ times.

\begin{lem}
  \label{lem:InitialisationFlow}
  Suppose the control state is \macstyle{begininit}, all the pointers
  in all the chains are at stage $1$ and all the counters at all
  stages have the value $0$. For any $i$ between $1$ and $N$, there is
  a run $\arun_{i}'$ ending at \macstyle{initialise}, such that the
  pointer is at stage $i$  in all the chains and all the counters at
  or below stage $i$ have been initialised. If a run starts from
  \macstyle{begininit} as above, it will not reach \macstyle{beginsim}
  unless all the counters have been properly initialized.
\end{lem}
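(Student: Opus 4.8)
The plan is to prove the two assertions of the lemma—existence of the initialising runs $\arun_i'$, and blocking of every deviating run—separately. The existence part I would establish by induction on the stage $i \in \interval{1}{N}$, closely mirroring the inductive construction of the decrementing runs $\arun_i$ in the proof of Lemma~\ref{lem:SimZeroTest}; the blocking part then follows by reusing the ``gets stuck'' analysis of the \macstyle{Dec} gadget together with the perfectness requirement of Per($k$).

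For existence I would prove by induction that there is a run $\arun_i'$ from \macstyle{begininit} (all pointers at stage $1$, all counters zero) to \macstyle{initialise} such that at the end all pointers sit at stage $i$ and, for every $\ell \leq i$, the counters $\mathtt{y}_\ell$, $\mathtt{z}_\ell$, $\overline{\mathtt{s}}_\ell$ hold $2^{2^\ell}$ while $\mathtt{init}_\ell = 1$ and $\overline{\mathtt{stack}}_\ell = 1$. The base case $i=1$ is immediate: the straight-line block at \macstyle{begininit} performs $(\inc{\mathtt{y}})^{4};(\inc{\mathtt{z}})^{4};(\inc{\overline{\mathtt{s}}})^{4};\inc{\mathtt{init}};\inc{\overline{\mathtt{stack}}}$, setting exactly $\mathtt{y}_1 = \mathtt{z}_1 = \overline{\mathtt{s}}_1 = 4 = 2^{2^1}$, and control reaches \macstyle{initialise} with pointers at stage $1$. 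For the step, assume $\arun_i'$ with $i < N$; then $\cisnotlast{\mathtt{init}}$ holds at \macstyle{initialise}, so control passes to \macstyle{outerinit}, where the outer loop (indexed by $\mathtt{y}_i$) and inner loop (indexed by $\mathtt{z}_i$) each run $2^{2^i}$ times, and every inner pass executes \macstyle{inc(next($\mathtt{y}$))}, \macstyle{inc(next($\mathtt{z}$))}, \macstyle{inc(next($\overline{\mathtt{s}}$))}, so that stage $i+1$ receives exactly $2^{2^i}\times 2^{2^i}=2^{2^{i+1}}$ increments. The zero tests terminating both loops are tests that $\mathtt{z}_i$ and $\mathtt{y}_i$ have reached $0$; since stages $\leq i$ are initialised by the induction hypothesis, each such test can be realised by splicing in a copy of the decrementing run $\arun_i$ from Lemma~\ref{lem:SimZeroTest}. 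Finally \macstyle{outerexit1} executes \macstyle{Nextstage} followed by $\inc{\mathtt{init}};\inc{\overline{\mathtt{stack}}}$ and returns to \macstyle{initialise}, yielding $\arun_{i+1}'$; and for $i=N$ the test $\cislast{\mathtt{init}}$ then lets control proceed to \macstyle{beginsim}.

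The subtle point—and the main obstacle—is the correctness of the return-address logic in \macstyle{backtrack}. Because \macstyle{Dec} is invoked both recursively and from the initialisation loops, on return from a zero test one must decide whether the caller was \macstyle{innerzero1}/\macstyle{outerzero1} (an initialisation loop at the current frontier) or \macstyle{innerzero2}/\macstyle{outerzero2} (a deeper recursive frame). This is precisely what the guard $(\text{next}(\mathtt{init}) == 1)$ distinguishes: inside a recursive descent the next-stage $\mathtt{init}$ counter is $1$, whereas a call from the frontier loop has next-stage $\mathtt{init}$ equal to $0$; the current-stage value of $\mathtt{stack}$ then selects between \macstyle{outertest} and \macstyle{outerexit}. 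I would verify that along $\arun_{i+1}'$ the invariants maintained on $\mathtt{init}$ and $\mathtt{stack}$ force every \macstyle{backtrack} choice onto the intended control state, so that each spliced copy of $\arun_i$ returns correctly and the loop bookkeeping is preserved.

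For the blocking claim I would argue contrapositively. The only non-deterministic branchings lie in the loop-exit tests (\macstyle{innertest1}, \macstyle{outertest1}, and their counterparts inside \macstyle{Dec}) and in the \macstyle{transfer} macros. If a run exits a loop early, the corresponding loop index ($\mathtt{z}_i$ or $\mathtt{y}_i$) is still positive, so the ensuing invocation of \macstyle{Dec} cannot complete its $2^{2^{i-1}}$ nested decrements and gets stuck, exactly as in the analysis of Lemma~\ref{lem:SimZeroTest}. If instead a run attempts to iterate past the prescribed count, it performs a $\dec{\cdot}$ on some chain at a pointer position with no unmatched earlier $\inc{\cdot}$, so the run fails the perfectness condition of Per($k$) and is not accepting. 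Hence the only way to advance the $\mathtt{init}$ pointer from stage $\ell$ to $\ell+1$ is to run the two loops exactly $2^{2^\ell}$ times each, and therefore no accepting run can leave \macstyle{begininit} and reach \macstyle{beginsim} without having set every stage-indexed counter to its prescribed value at stage $N$, which completes the proof.
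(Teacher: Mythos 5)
Your proposal is correct and follows essentially the same route as the paper's proof: an induction on the stage $i$ whose base case is the straight-line block after \macstyle{begininit}, whose inductive step runs the \macstyle{outerinit}/\macstyle{innerinit} loops exactly $2^{2^{i}}$ times each while splicing in the decrementing runs $\arun_i$ of Lemma~\ref{lem:SimZeroTest} (with the $(\text{next}(\mathtt{init}) == 1)$ guard and the $\mathtt{stack}$ value resolving the \macstyle{backtrack} return address), followed by a blocking argument showing that early loop exits strand the run inside \macstyle{Dec} and over-iteration violates perfectness, so \macstyle{beginsim} is unreachable without full initialisation. The only cosmetic slip is the phrase ``$2^{2^{i-1}}$ nested decrements'' (the \macstyle{Dec} call at stage $i$ performs $2^{2^{i}}$ decrements via two nested loops of $2^{2^{i-1}}$ iterations each), which does not affect the argument.
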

\begin{proof}
  By induction on $i$. For the base case $i = 1$, $\arun_{1}'$ is the
  run that executes the instructions immediately after
  \macstyle{begininit}  and ends at \macstyle{initialise}.

  Now we assume the lemma is true up to $i$ and prove it for $i+1$. By
  induction hypothesis, there is a run $\arun_{i}'$ ending at
  \macstyle{initialise}, with all the pointers in all the chains at stage $i$
  and all the counters at or below stage $i$ initialised. The run
  $\arun_{i+1}'$ is obtained by appending the following sequence to
  $\arun_{i}'$. It uses the runs $\arun_{i}$ constructed in the proof
  of Lemma~\ref{lem:SimZeroTest}.
  \begin{description}
    \item \macstyle{initialise}: If $\cislast{\mathtt{init}}$ goto
      beginsim else goto outerinit \commstyle{pointer in chain
	$\mathtt{init}$ is at $i \ne N$; go to \macstyle{outerinit}}
    \item \macstyle{outerinit}: $\dec{\mathtt{y}}; 
      \inc{\overline{\mathtt{y}}}$ 
      \begin{description}
      \item \macstyle{innerinit}: $\dec{\mathtt{z}};
	\inc{\overline{\mathtt{z}}}$ 
	\begin{description}
	  \item  \macstyle{INC}: inc(next($\mathtt{y}$)); inc(next($\mathtt{z}$));
	    inc(next($\overline{\mathtt{s}}$))
	  \item \macstyle{innertest1}: goto innernonzero1
	    $2^{2^{i}} - 1$ times then goto innerzero1
	  \item \macstyle{innernonzero1}: $\dec{\mathtt{z}}; \inc{\mathtt{z}};$ goto
	    innerinit \commstyle{inner loop not yet complete}
	\end{description}
      \item \macstyle{innerzero1:} transfer($\overline{\mathtt{z}}$,
	$\mathtt{s}$); $\inc{\mathtt{stack}};
	\dec{\overline{\mathtt{stack}}};$ goto Dec \commstyle{follow
	  $\arun_{i}$ here; since (next($\mathtt{init}$) ==  0) and
	  $\mathtt{stack}$ ==  1, we can retrun to
	  \macstyle{outertest1} to continue our $\arun_{i+1}'$}
      \item \macstyle{outertest1}: $\dec{\mathtt{stack}};
	\inc{\overline{\mathtt{stack}}};$ goto outernonzero1
	$2^{2^{i}} - 1$ times then goto outerzero1
	\item \macstyle{outernonzero1}: $\dec{\mathtt{y}}; \inc{\mathtt{y}};$ goto
	  outerinit
      \end{description}
    \item \macstyle{outerzero1}: transfer($\overline{\mathtt{y}}$,
      $\mathtt{s}$); goto Dec \commstyle{follow $\arun_{i}$ here;
    since (next($\mathtt{init}$) ==  0) and
	  $\mathtt{stack}$ ==  0, we can retrun to
	  \macstyle{outerexit1} to continue our $\arun_{i+1}'$}
    \item \macstyle{outerexit1}: Nextstage; $\inc{\mathtt{init}};
      \inc{\overline{\mathtt{stack}}};$ goto initialise
    \end{description}
    This completes the induction step, proving the existence of a run
    ending at \macstyle{initialize} as required.

    Finally we argue that any run from \macstyle{begininit} that does
    not initialize all the counters properly will get stuck. Indeed,
    the only way out of the initialization code is to go to
    \macstyle{beginsim}, which is only reachable via
    \macstyle{initialize}. From \macstyle{initialize}, we can go to
    \macstyle{beginsim} only when the pointers are at stage $N$. We
    finish the argument by showing that any run that visits
    \macstyle{initialize} for the first time with pointers at stage
    $i$ would have initialized all counters at or below stage
    $i$. The only way to go to \macstyle{initialize} with pointers at
    stage $i$ is via \macstyle{begininit} (for $i = 1$) or via
    \macstyle{outerexit1} (for $i > 1$). It is clear in the case of
    $i=1$ that the counters at stage $1$ are properly initialized. In
    the case of $i > 1$, the only way to reach \macstyle{outerexit1}
    is via \macstyle{Dec}. In this case, all runs are forced to
    iterate the loops at \macstyle{outerinit} and \macstyle{innerinit}
    the correct number of times as we have seen in the proof of
    Lemma~\ref{lem:SimZeroTest}, which will again force all the
    counters at or below stage $i$ to be properly initialized.
\end{proof}

To complete the proof of Theorem~\ref{thm:CounterAutToPerk}, we
finally show how to initialize the first $\counternb \le n$ counters from
$\mathtt{counters}$ and $\overline{\mathtt{counters}}$ that
correspond to the counters of the original counter automaton (point (4)
of the initialization values listed after
Lemma~\ref{lem:SimZeroTest}). This is achieved by replacing the code for \macstyle{INC} from the code for the initialization phase by the following one:
\begin{description}
\item  \macstyle{INC}: inc(next($\mathtt{y}$)); inc(next($\mathtt{z}$)); inc(next($\overline{\mathtt{s}}$))
\item  $\cnext{\mathtt{init}}$ \\
       if $\cislast{\mathtt{init}}$ then
          \begin{description}
          \item $\inc{\overline{\mathtt{counters}}}$
          \item
	    ($\cnext{\overline{\mathtt{counters}}}$;$\inc{\overline{\mathtt{counters}}}$)$^{\counternb-1}$
          \item ($\cprevious{\overline{\mathtt{counters}}}$)$^{\counternb-1}$
          \end{description}
       $\cprevious{\mathtt{init}}$
\end{description}
This finishes the proof of Theorem~\ref{thm:CounterAutToPerk}.

\makeatletter{}\subsection{Reasoning About Chain Systems with $\mainlogic$}
\label{section-lower-bound}
Given a chain system $\aautomaton$ of level 1, we construct a polynomial-size formula in $\mainlogic$ that
is satisfiable iff  $\aautomaton$ has a gainy accepting run.
Hence, we get a \twoexpspace{} lower bound for
\SAT{$\mainlogic$}. The main idea  is to encode runs of chain
systems with $\mainlogic$ formulas that access the counters using
binary encoding, so that the formulas can handle exponentially many
counters.
\begin{lem}
  \label{lem:GainyToLRVSat}
  There is a polynomial-time reduction from Gainy($1$) into
  \SAT{$\mainlogic(\mynext,\sometimes)$}.
\end{lem}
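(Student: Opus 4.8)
The plan is to encode a gainy accepting run of a level-$1$ chain system $\aautomaton = \triple{\locations,\amap,1}{\locations_0,\locations_F}{\transitions}$ directly as a model of an $\mainlogic(\mynext,\sometimes)$ formula $\aformula$, so that $\aformula$ is satisfiable iff $\aautomaton$ has a gainy accepting run. Each position of the model will represent one transition step of the run, with the finite control information (current state and which instruction is fired) recorded by propositions simulated through equalities with a constant variable $\mathtt{k}$, exactly as in the proof of Theorem~\ref{theorem-reach-PLRV}; I write $\inc{\achain}$, $\dec{\achain}$, etc.\ for the corresponding label-abbreviations. The single genuinely new ingredient is that, since a chain $\achain$ has $\mathtt{exp}(1,\amap(\achain)) = 2^{\amap(\achain)}$ counters, I cannot afford one variable per counter; instead I maintain the pointer position of chain $\achain$ in \emph{binary}, using $\amap(\achain)$ bit-variables $b^{\achain}_1,\dots,b^{\achain}_{\amap(\achain)}$ (bit $i$ being $1$ iff that variable equals a fixed value $\mathtt{one}$). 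This is what lets a polynomial-size formula address exponentially many counters, and it is the concrete form taken by LRV's ability to reason about subsets of variables. The formula $\aformula$ is the conjunction of well-formedness constraints (control flow and pointer arithmetic) with a matching constraint linking increments to decrements; only $\mynext$, $\sometimes$, Boolean connectives and $\always = \neg\sometimes\neg$ are needed.

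The well-formedness part is routine. As in Theorem~\ref{theorem-reach-PLRV}, $\always$-guarded local formulas force the first position to carry an initial state, the last position a final state, and consecutive positions to agree on the shared state component; guessing one transition per position is immediate. The instruction $\cnext{\achain}$ (resp.\ $\cprevious{\achain}$) is implemented as a binary increment (resp.\ decrement) with carry on the bit-variables of $\achain$, a purely local relation between a position and its $\mynext$-successor; $\cisfirst{\achain}$ and $\cislast{\achain}$ test that all bits are $0$ resp.\ all bits are maximal; and every step that does not move the pointer of $\achain$ is required to copy its bit-variables unchanged to the next position. All of these are expressible by finite conjunctions of atomic local (in)equalities under $\always$, hence of polynomial size.

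The crux is the matching constraint, which must enforce the defining property of a gainy accepting run: for each chain an \emph{injective} map sending every $\achain$-increment to a \emph{strictly later} $\achain$-decrement \emph{at the same pointer position}. Here I exploit future obligations together with the binary encoding. At each $\achain$-incrementing step I place a fresh data value $d$ in the distinguished variable $\avariable$, and at each $\achain$-decrementing step, whose position has bits $q_1\cdots q_m$ (with $m=\amap(\achain)$), I store the step's ``consumed value'' $c^{\achain}$ in bit-indexed target variables: $\avariablebis^{\achain}_i$ receives the consumed value exactly when $q_i=1$ and $\avariablebis^{\achain\prime}_i$ receives it exactly when $q_i=0$ (the complementary variable taking the padding value $\mathtt{k}$), which is again a local constraint tying the bits to the targets. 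The increment at a position with bits $p_1\cdots p_m$ then demands, one obligation per bit, that $d$ reappear at a future decrement in the target variable selected by its own bits:
\[
\always\Big(\inc{\achain}\Rightarrow
\bigwedge_{i\in\interval{1}{\amap(\achain)}}
\big((b^{\achain}_i\oblieqlocal\mathtt{one})\Rightarrow\oblieq{\avariable}{\avariablebis^{\achain}_i}{\dec{\achain}}\big)
\wedge
\big((b^{\achain}_i\oblineqlocal\mathtt{one})\Rightarrow\oblieq{\avariable}{\avariablebis^{\achain\prime}_i}{\dec{\achain}}\big)\Big).
\]
Two uniqueness ``key'' constraints make this work: distinct consumed values at distinct $\achain$-decrements, $\always(\dec{\achain}\Rightarrow\neg\,\oblieq{c^{\achain}}{c^{\achain}}{\dec{\achain}})$, and distinct fresh values at distinct $\achain$-increments, $\always(\inc{\achain}\Rightarrow\neg\,\oblieq{\avariable}{\avariable}{\inc{\achain}})$, together with $\avariable\oblineqlocal\mathtt{k}$ at every increment so that $d$ never coincides with the padding value. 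Because any appearance of $d$ in a target variable forces $d=c^{\achain}$ at that decrement, and $d$ can be the consumed value of at most one decrement, all $m$ required reappearances of $d$ must land at one and the same future $\achain$-decrement, whose position-bits are thereby forced to equal $p$ bit-for-bit; injectivity of the increment-to-decrement map is immediate from the two key constraints. Crucially, nothing forces a decrement to be matched by any increment, which is exactly the incrementing-error relaxation of the gainy semantics, and the obligations are strictly future, giving $j>i$.

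I expect this matching constraint to be the main obstacle, and the delicate point within it is the single-target argument: verifying that the per-bit obligations of one increment cannot be discharged by \emph{several} distinct decrements, which would break same-position matching. This is precisely what the uniqueness of consumed values buys, and it is the place where the subset-of-variables power is used in an essential way---an increment to the counter coded by $p$ creates obligations over exactly the variable-subset $\{\,i : p_i=1\,\}$, mirroring the subset-indexed counters of the upper bound in Section~\ref{section-upper-bound}. Correctness is then completed by the two routine directions: from a gainy accepting run one reads off a model by assigning fresh values at increments and the matching values at the paired decrements; from a model of $\aformula$ the labels recovered at each position yield a run whose increments are matched as prescribed, and which is accepting and gainy. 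Since every conjunct has size polynomial in $\amap$ and $\size{\aautomaton}$, the reduction runs in polynomial time.
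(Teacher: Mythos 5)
Your proposal is correct and follows essentially the same approach as the paper's proof: positions encode transitions, the chain pointer is maintained in binary via variable equalities updated by local next-step constraints, key constraints make the increment and decrement witness values pairwise distinct, and the per-bit future obligations are then all forced---precisely by that distinctness---to land on one and the same later $\achain$-decrement with identical bits, which yields the injective, strictly-future gainy matching. The only difference is cosmetic: the paper reads the bits of the matching decrement through test formulas inside the obligations (testing each bit at the target position of a single variable $\avariable^{\achain}_{\mathit{dec}}$), whereas you route the same information through dedicated bit-indexed target variables filled at the decrement positions, with the test formula reduced to the label $\dec{\achain}$.
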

\makeatletter{}\ifLONG
\begin{proof}
  Let $\aautomaton =
  \triple{\locations,\amap,1}{\locations_0,\locations_F}{\transitions}$
  be a chain system of level $1$ with $f :
  \interval{1}{n} \to \Nat$, having thus $n$ chains of counters, of
  respective size $2^{f(1)}, \dotsc, 2^{f(n)}$.

We encode a word $\rho \in \transitions^*$ that
  represents an accepting run. For this, we use the alphabet
  $\transitions$ of transitions. Note that we can easily simulate the labels $\delta=\set{t_1, \dotsc, t_m}$ with the variables $\mathtt{t}_0, \dotsc, \mathtt{t}_m$, where a node has an encoding of the label $t_i$  if{f} the formula $\tup{t_i} \egdef \mathtt{t}_0 \approx \mathtt{t}_i$ holds true. We build a
  $\mainlogic$ formula $\varphi$ so that there is an accepting gainy run $\rho \in \transitions^*$ of $\aautomaton$ if, and only if, there is a model $\sigma$ so that $\sigma \models \varphi$ and $\sigma$ encodes the run $\rho$.

  The following are standard counter-blind conditions to check.
  \begin{enumerate}[label=\({\alph*}] \itemsep 0 cm \item \label{it:a} Every position
      satisfies $\tup{\atransition}$ for some unique $\atransition \in \transitions$.
    \item\label{it:b} The first position satisfies
      $\tup{(q_0,\ainstruction,q)}$ for some $q_0 \in \locations_0$,
      $\ainstruction \in \instructions$, $q\in \locations$.
    \item\label{it:c} The last position satisfies
      $\tup{(q,\ainstruction,q')}$ for some $q \in \locations$,
      $\ainstruction \in \instructions$, $q' \in \locations_F$.
    \item\label{it:d} There are no two consecutive positions $i$ and
      $i+1$ satisfying $\tup{(q,u,q')}$ and $\tup{(p,u',p')}$
      respectively, with $q' \neq p$.
  \end{enumerate}\medskip

\noindent  We use a variable $\avariable$, and variables
  $\avariable^\achain_{inc}, \avariable^\achain_{dec},
  \avariable^\achain_{i}$ for every chain $\achain$ and for every $i \in
  \interval{1}{f(\achain)}$.  Let us fix the bijections $\chi^\achain :
  \interval{0}{2^{f(\achain)}-1}\to \powerset{\interval{1}{f(\achain)}}$
  for every $\achain \in \interval{1}{n}$, that assign to each number
  $m$ the set of $1$-bit positions of the representation of $m$ in base
  $2$.  We say that a position $i$ in a run is \defstyle{$\achain$-incrementing}
  [resp.\ \defstyle{$\achain$-decrementing}] if it satisfies
  $\tup{(q,u,q')}$ for some $q,q' \in \locations$ and $u =
  \inc{\achain}$ [resp.\ $u = \dec{\achain}$].
 
  In the context of a
  model $\sigma$ with the properties \eqref{it:a}--\eqref{it:d}, we say
  that a counter position $i$ in a run operates on the $\achain$-counter $\acounter$, if
  $\chi^\achain(\acounter) = X_i$, where  \begin{align}\label{eq:Xi}
    X_i = \set{b \in \interval{1}{\amap(\achain)} \mid
    \sigma(i)(\avariable)=\sigma(i)(\avariable^\achain_{b})}.
      \end{align} Note that thus $0 \leq
    \acounter < 2^{f(\achain)}$.

  For every chain $\achain$, let us
  consider the following properties:
  \begin{enumerate}
  \itemsep 0 cm 
    \item \label{it:counters:2} Any two
      positions of $\sigma$ have different values of
      $\avariable^\achain_{inc}$ [resp.\ of $\avariable^\achain_{dec}$].
    \item \label{it:counters:4} For every position $i$ of $\sigma$
      operating on an $\achain$-counter $\acounter$ containing an
      instruction `$\cisfirst{\achain}$' [resp.\
      `$\cisnotfirst{\achain}$', `$\cislast{\achain}$',
      `$\cisnotlast{\achain}$'], we have $\acounter =0$ [resp.\
      $\acounter \neq 0$, $ \acounter =2^{f(\achain)}-1$, $ \acounter
      \neq 2^{f(\achain)}-1$].  \item \label{it:counters:5} For every
      position $i$ of $\sigma$ operating on an $\achain$-counter
      $\acounter$, \begin{itemize} \item if the position contains an
	  instruction `$\cnext{\achain}$' [resp.\
	  `$\cprevious{\achain}$'], then the next position $i+1$
	  operates on the $\achain$-counter $\acounter+1$ [resp.\
	  $\acounter-1$], \item otherwise, the position $i+1$ operates
	  on the $\achain$-counter $\acounter$.  \end{itemize} \item
	  \label{it:counters:6} For every $\achain$-incrementing
	  position $i$ of $\sigma$ operating on an $\achain$-counter
	  $\acounter$ there is a future $\achain$-decrementing position
	  $j > i$ operating on the same $\achain$-counter $\acounter$,
	  such  that
	  $\sigma(i)(\avariable_{\textit{inc}}^{\achain})=\sigma(j)(\avariable_{\textit{dec}}^{\achain})$.
      \end{enumerate}

  \begin{clm}\label{cl:props:suff:nec} There is a model satisfying the
    conditions above if, and only if, $\aautomaton$ has a gainy and
    accepting run.
  \end{clm}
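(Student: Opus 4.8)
The plan is to prove the two directions separately, in each case transferring between the injective matching $\gamma$ that witnesses gaininess at the level of the chain system and the data values carried by the auxiliary variables $\avariable^\achain_{inc}, \avariable^\achain_{dec}$ at the level of the model.

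$(\Leftarrow)$ Suppose $\aautomaton$ has a gainy accepting run $\arun$, so that for each chain $\achain$ there is an injective matching $\gamma_\achain$ from $\achain$-incrementing positions to strictly later $\achain$-decrementing positions preserving the pointer value. I build a model $\sigma$ of length $|\arun|$ as follows. At position $i$ I encode the transition $\arun(i)$ through the label variables $\mathtt{t}_0,\dots,\mathtt{t}_m$, which makes \ref{it:a}--\ref{it:d} hold because $\arun$ is a path from $\locations_0$ to $\locations_F$. To encode pointers I fix $\sigma(i)(\avariable)$ to a reference value and, for each chain $\achain$ and bit $b$, let $\sigma(i)(\avariable^\achain_b)$ equal that value exactly when $b \in \chi^\achain(c_i^\achain)$; then $X_i = \chi^\achain(c_i^\achain)$, so position $i$ operates on $c_i^\achain$. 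Since $\arun$ is a valid run, the $c_i^\achain$ respect the first/last tests and the next/prev updates, which gives \ref{it:counters:4} and \ref{it:counters:5}. Finally I assign pairwise distinct values to $\avariable^\achain_{inc}$ at all positions and set $\sigma(\gamma_\achain(i))(\avariable^\achain_{dec}) = \sigma(i)(\avariable^\achain_{inc})$ for every $\achain$-incrementing $i$, filling the remaining $\avariable^\achain_{dec}$ with fresh pairwise distinct values; injectivity of $\gamma_\achain$ keeps all $\avariable^\achain_{dec}$-values distinct, so \ref{it:counters:2} holds, while \ref{it:counters:6} holds by taking $j = \gamma_\achain(i)$.

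$(\Rightarrow)$ Conversely, let $\sigma$ satisfy all the conditions. By \ref{it:a} each position $i$ carries a unique transition $\arun(i)$; set $\arun = \arun(1)\cdots\arun(|\sigma|)$. Conditions \ref{it:b}--\ref{it:d} ensure $\arun$ begins in $\locations_0$, ends in $\locations_F$, and is a genuine path, hence an accepting skeleton. Writing $\acounter_i^\achain = (\chi^\achain)^{-1}(X_i)$ for the counter that $i$ operates on, an induction on $i$ using \ref{it:counters:5} shows $\acounter_i^\achain = c_i^\achain$ (the base case being that the encoding fixes the first position to operate on counter $0$). Since $\acounter_i^\achain$ is always a legal index, $0 \le c_i^\achain < 2^{f(\achain)}$, and \ref{it:counters:4} turns each first/last test into the required constraint on $c_i^\achain$, so $\arun$ is a valid run. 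It remains to recover the matching: for each chain $\achain$ and $\achain$-incrementing $i$, \ref{it:counters:6} furnishes an $\achain$-decrementing $j > i$ on the same counter with $\sigma(i)(\avariable^\achain_{inc}) = \sigma(j)(\avariable^\achain_{dec})$; put $\gamma_\achain(i) = j$. If $\gamma_\achain(i) = \gamma_\achain(i') = j$, then $\sigma(i)(\avariable^\achain_{inc}) = \sigma(j)(\avariable^\achain_{dec}) = \sigma(i')(\avariable^\achain_{inc})$, whence $i = i'$ by \ref{it:counters:2}; thus each $\gamma_\achain$ is injective and $\arun$ is gainy.

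The main obstacle is the $(\Rightarrow)$ direction, and specifically the role of \ref{it:counters:2}: it is exactly the key-constraint (distinctness of the $\avariable^\achain_{inc}$ and $\avariable^\achain_{dec}$ values) that upgrades the merely pointwise existence of a matching decrement, guaranteed by \ref{it:counters:6}, into a globally injective $\gamma_\achain$. Without distinctness one could only conclude that increments are covered by decrements, not that they are covered injectively, and the correspondence with gaininess (increments at most decrements per counter value) would break. A secondary subtlety worth emphasising is that the conditions constrain only increments to be matched forward, never decrements backward; this asymmetry is precisely what makes the construction capture gainy runs rather than perfect ones, so I would make explicit that no condition is imposed on the unmatched (spurious) decrements.
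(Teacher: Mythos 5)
Your proof is correct and is essentially the paper's own argument: the run-to-model direction uses the same construction (pairwise-distinct values for $\avariable^\achain_{\textit{inc}}$, the value of $\avariable^\achain_{\textit{dec}}$ at a matched decrement copied from $\avariable^\achain_{\textit{inc}}$ at its preimage under $\gamma^\achain$, fresh values elsewhere), and the model-to-run direction reads the run off the labels, identifies the encoded counters with the $c_i^\achain$ by induction on the update and first/last-test conditions, and derives injectivity of $\gamma^\achain$ from the distinctness condition, exactly as the paper does. The one point worth flagging is your parenthetical base case --- that the first position operates on counter $0$ --- which the induction genuinely needs but which is not literally implied by the listed conditions; the paper's proof silently relies on the same fact, so this is a shared imprecision rather than a gap you introduced.
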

\begin{proof}
 \textbf{(Only if)~}  Suppose we have a model $\sigma$ that verifies all the properties  above.   
Since by \eqref{it:counters:2}, all the positions have different data values for 
$\avariable^\achain_{\textit{dec}}$, it then 
follows that the position $j$ to which property \eqref{it:counters:6} makes reference is unique. 
Since, also by \eqref{it:counters:2}, all the 
positions have different data values for $\avariable^\achain_{\textit{inc}}$, we have that the 
$\achain$-decrement corresponds 
to at most one $\achain$-increment position in condition \eqref{it:counters:6}.
Moreover, it is an $\achain$-decrement of the same counter, that is, $X_i = X_{j}$ 
 (cf.~\eqref{eq:Xi}). 
For
  these reasons, for every $\achain$-increment there is a future
  $\achain$-decrement of the same counter which corresponds in a
  unique way to that increment. More precisely, for every chain
  $\achain$, the function
  \[
  \gamma^\achain_\sigma : \set{ 0 \leq i < |\sigma| : \text{$i$ is
      $\achain$-incrementing} } \to \set{ 0 \leq i < |\sigma| :
    \text{$i$ is $\achain$-decrementing} }
  \]
  where $\gamma^\achain_\sigma(i) = j \text{ if{f} }\sigma(i)(\avariable^\achain_{\textit{inc}}) =
  \sigma(j)(\avariable^\achain_{\textit{dec}})$ is well-defined and injective, and for every
  $\gamma^\achain_\sigma(i)=j$ we have that $j>i$ and $X_i = X_j$.

Consider $\arun \in \transitions^*$ as having the same
  length as $\sigma$ and such that $\arun(i) = (q,\ainstruction,q')$ whenever
  $\sigma,i \models \tup{(q,\ainstruction,q')}$. From conditions 
  \eqref{it:counters:4} and \eqref{it:counters:5}, it follows that the
  counter $c^\alpha_i$ corresponding to position $i$ from $\rho$, is
  equal to $\chi^{-1}(X_i)$. Therefore, the functions
  $\set{\gamma^\achain_\sigma}_{\achain \in \interval{1}{n}}$ witness the fact that $\arun$ is a gainy and accepting run of $\aautomaton$.

  \smallskip

 \textbf{(If)~}
  Let us now focus on the converse, by exhibiting a model
  satisfying the above properties, assuming that $\aautomaton$ has an
  accepting gainy run $\arun \in \transitions^*$. We therefore
  have, for every $\achain \in \interval{1}{n}$, an injective function
  \[
  \gamma^\achain : \set{i \mid \arun(i) \text{ is
      $\achain$-incrementing}} \to \set{i \mid \arun(i) \text{ is
      $\achain$-decrementing}}
  \]
  where for every $\gamma^{\achain}(i) = j$ we have that $j > i$ and
  $c_i^\achain = c_j^\achain$.

  We build a model $\sigma$ of the same length of $\arun$, and we now
  describe its data values. Let us fix two
  distinct data values $\adatum,\adatumbis$. For any position $i$, we have
  $\sigma(i)(\mathtt{t}_0)= \adatum$; and $\sigma(i)(\mathtt{t}_j)= \adatum$ if 
  the $i$th transition in $\arun$ is $\atransition_j$ 
  and $\sigma(i)(\mathtt{t}_j) =  \adatumbis$ otherwise. In this way
  we make sure that the properties \eqref{it:a}--\eqref{it:d} hold.

  We use distinct data values $\adatum_0, \dotsc, \adatum_{|\sigma|-1}$ and $\adatum'_0,
  \dotsc, \adatum'_{|\sigma|-1}$ (all different from $\adatum, \adatumbis$). 
  We define the data values of the remaining variables for any
  position $i$:
  \begin{itemize}
  \item For every $\achain$,
    $\sigma(i)(\avariable^\achain_{\textit{inc}})= \adatum_i$.
  \item For every $\achain$,
    \begin{itemize}
    \item if $i$ is $\achain$-decrementing, we define $\sigma(i)(\avariable^\achain_{dec}) =      
     \sigma(i')(\avariable^\achain_{inc})$ where $i' = (\gamma^\alpha)^{-1}(i)$;
      if such $(\gamma^\achain)^{-1}(i)$ does not exist, then
      $\sigma(i)(\avariable^\achain_{dec}) = \adatum'_i$;
    \item otherwise, if $i$ is not $\achain$-decrementing,
      $\sigma(i)(\avariable^\achain_{dec}) =  \adatum'_i$.
    \end{itemize}
  \item $\sigma(i)(\avariable)= \adatum$.
  \item For every $\achain$, $\sigma(i)(\avariable^\achain_{l}) = \sigma(i)(\avariable)=\adatum$ if $l \in
      \chi(c^\alpha_i)$, otherwise $\sigma(i)(\avariable^\achain_{l}) = \adatumbis$.
  \end{itemize}
Observe that these last two items ensure that the 
properties~\eqref{it:counters:4} and \eqref{it:counters:5} hold.

By definition, every position of $\sigma$ has a different data value for 
$\avariable^\achain_{\textit{inc}}$. Let us show that 
the same holds for $\avariable^\achain_{\textit{dec}}$. Note that at any position $i$, 
$\avariable^\achain_{\textit{dec}}$ has: the data value of 
$\sigma(i')(\avariable^\achain_{\textit{inc}})= \adatum_{i'}$ 
for some $i'<i$; or 
the data value $\adatum'_i$. If there were two positions $i<j$ with 
$\sigma(i)(\avariable^\achain_{\textit{dec}}) = \sigma(j)(\avariable^\achain_{\textit{dec}})$ 
it would then be because: (i) $\adatum'_i = \adatum'_j$; (ii) $\adatum_{i'} = 
\adatum'_j$ for some $i'<i$; (iii) $\adatum'_i = \adatum_{j'}$ 
for some $j'<j$; or (iv) 
$\adatum_{i'} = \adatum_{j'}$ for some $i'<i$, $j'<j$. It is evident that none of (i), (ii), (iii) 
can hold since all 
$\adatum_0, \dotsc, \adatum_{|\sigma|-1},
\adatum'_0, \dotsc, \adatum'_{|\sigma|-1}$ are distinct. For this reason, if (iv) holds, 
it means that $i' = j'$, and 
hence that $(\gamma^\achain)^{-1}(i) = (\gamma^\achain)^{-1}(j)$, implying that 
$\gamma^\achain$ is not injective, which is a contradiction. Hence, 
all the positions have different data values for $\avariable^\achain_{\textit{dec}}$. Therefore, $\sigma$ has the  property \eqref{it:counters:2}.

To show that $\sigma$ has property \eqref{it:counters:6}, let $i$ be a $\achain$-incrementing position of
 $\sigma$, remember that 
$\sigma(i)(\avariable^\achain_{\textit{inc}})= \adatum_i$. Note that position $\gamma^\achain(i) = j$ must be
 $\alpha$-decrementing on the same counter. 
By definition of the value of $\avariable^\achain_{\textit{dec}}$, we have that 
$\sigma(j)(\avariable^\achain_{\textit{dec}})$ must be equal to 
$\sigma((\gamma^\achain(j))^{-1})(\avariable^\achain_{\textit{inc}}) = \sigma(i)(\avariable^\achain_{\textit{inc}})=\adatum_i$. 
Thus, property \eqref{it:counters:6} holds.
\end{proof}

  We complete the reduction by showing that all the properties expressed
  before can be efficiently encoded in our logic.

  \begin{clm}\label{cl:props:exp:LRV} Properties
    \eqref{it:a}--\eqref{it:d} and
    \eqref{it:counters:2}--\eqref{it:counters:6} can be expressed by
    formulas of $\mainlogic$, that can be constructed in polynomial time
    in the size of $\aautomaton$.
  \end{clm}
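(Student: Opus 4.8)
The plan is to translate each listed condition into a $\mainlogic(\mynext,\sometimes)$ formula of size polynomial in $\size{\aautomaton}$ and take the conjunction of all of them; recall $\always\gamma\egdef\lnot\sometimes\lnot\gamma$, so no past operators are needed. The key observation that makes everything positional is the reading of counters off the bit variables: by \eqref{eq:Xi}, bit $b$ of the $\achain$-counter is set at position $i$ exactly when $\avariable\oblieqlocal\avariable^\achain_b$ holds there, so $X_i$ — and hence the counter $c_i^\achain=(\chi^\achain)^{-1}(X_i)$ — is a Boolean combination of the atoms $\avariable\oblieqlocal\avariable^\achain_b$, $b\in\interval{1}{\amap(\achain)}$, evaluated at $i$. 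The control-flow conditions \eqref{it:a}--\eqref{it:d} are then routine LTL over the alphabet $\transitions$: using $\tup{\atransition}$ I would write $\always\bigvee_{\atransition}(\tup{\atransition}\land\bigwedge_{\atransition'\neq\atransition}\lnot\tup{\atransition'})$ for \eqref{it:a}, a single disjunction read at position $0$ for \eqref{it:b}, the guard $\lnot\mynext\top$ (last position) for \eqref{it:c}, and $\always\bigwedge_{q'\neq p}\lnot(\tup{(q,\cdot,q')}\land\mynext\tup{(p,\cdot,p')})$ for \eqref{it:d}; each is clearly polynomial.

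For \eqref{it:counters:2} I would exploit the signature feature ``a value never recurs'': $\always\,\lnot\,\oblieq{\avariable^\achain_{\textit{inc}}}{\avariable^\achain_{\textit{inc}}}{\top}$ forces all $\avariable^\achain_{\textit{inc}}$-values to be pairwise distinct, and symmetrically for $\avariable^\achain_{\textit{dec}}$. Condition \eqref{it:counters:4} is immediate from the positional reading of $X_i$: ``counter $=0$'' is $\bigwedge_b\lnot(\avariable\oblieqlocal\avariable^\achain_b)$, ``counter $=2^{\amap(\achain)}-1$'' is $\bigwedge_b(\avariable\oblieqlocal\avariable^\achain_b)$, and the ``$\neq$'' versions are their negations; each is guarded by the disjunction of transitions carrying the corresponding instruction. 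For \eqref{it:counters:5}, ``same counter at $i$ and $i+1$'' is $\bigwedge_b\bigl((\avariable\oblieqlocal\avariable^\achain_b)\Leftrightarrow\mynext(\avariable\oblieqlocal\avariable^\achain_b)\bigr)$, whereas $\cnext{\achain}$ (resp.\ $\cprevious{\achain}$) is the ripple-carry binary successor (resp.\ predecessor) relation between the two bit patterns: guessing the lowest bit $b_0$ that flips, I require the bits below $b_0$ to go $1\to0$ (resp.\ $0\to1$), bit $b_0$ to go $0\to1$ (resp.\ $1\to0$), and all higher bits to be preserved. This is a disjunction over $b_0\in\interval{1}{\amap(\achain)}$ of formulas of size $O(\amap(\achain))$, hence polynomial, and it is vacuously unsatisfiable exactly when the move would overflow or underflow, thereby blocking the runs excluded by the constraint $0\le c_i^\achain<2^{\amap(\achain)}$.

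The hard part is \eqref{it:counters:6}, and this is precisely where the tested obligations of $\mainlogic$ (as opposed to $\mainlogic^\top$) are indispensable: I must match each $\achain$-increment to a later $\achain$-decrement acting on the \emph{same} counter, i.e.\ with $X_i=X_j$, yet the test formula evaluated at the witness $j$ cannot refer back to $i$. Writing $\mathrm{dec}_\achain\egdef\bigvee_{(q,\dec{\achain},q')\in\transitions}\tup{(q,\dec{\achain},q')}$, at every $\achain$-incrementing position I would impose
\begin{align*}
&\oblieq{\avariable^\achain_{\textit{inc}}}{\avariable^\achain_{\textit{dec}}}{\mathrm{dec}_\achain}\;\land\;\\
&\bigwedge_{b=1}^{\amap(\achain)}\Bigl((\avariable\oblieqlocal\avariable^\achain_b)\Rightarrow\oblieq{\avariable^\achain_{\textit{inc}}}{\avariable^\achain_{\textit{dec}}}{\mathrm{dec}_\achain\,\land\,\avariable\oblieqlocal\avariable^\achain_b}\Bigr)\;\land\;\\
&\bigwedge_{b=1}^{\amap(\achain)}\Bigl(\lnot(\avariable\oblieqlocal\avariable^\achain_b)\Rightarrow\oblieq{\avariable^\achain_{\textit{inc}}}{\avariable^\achain_{\textit{dec}}}{\mathrm{dec}_\achain\,\land\,\lnot(\avariable\oblieqlocal\avariable^\achain_b)}\Bigr).
\end{align*}
The first conjunct asserts the existence of a later decrement carrying the matching datum on $\avariable^\achain_{\textit{dec}}$ (this is the only witness to existence when $\amap(\achain)=0$, i.e.\ a one-counter chain). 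Because \eqref{it:counters:2} has already forced the $\avariable^\achain_{\textit{dec}}$-values to be distinct, this decrement is the \emph{unique} position $j$ with $\sigma(j)(\avariable^\achain_{\textit{dec}})=\sigma(i)(\avariable^\achain_{\textit{inc}})$; the per-bit conjuncts then pin down that same $j$ bit by bit, forcing bit $b$ at $j$ to agree with bit $b$ at $i$, whence $X_j=X_i$. This is exactly the same-counter requirement of the injection $\gamma^\achain$ in the definition of a gainy run.

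Finally I would note that every formula built above uses only $\mynext$, $\sometimes$ (through $\always$) and the atomic, possibly tested, equalities, and that each is polynomial in $\size{\aautomaton}$ (here $\amap(\achain)$ is encoded in unary, and $n$ and $\card{\transitions}$ are polynomial), so their conjunction is computable in polynomial time; correctness then follows from Claim~\ref{cl:props:suff:nec}. The only genuinely delicate step is the same-counter matching of \eqref{it:counters:6}; all the other conditions are direct transcriptions of positional bit predicates, and I expect them to require only straightforward verification.
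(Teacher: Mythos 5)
Your construction coincides with the paper's proof in all essentials: the positional reading of counters through the atoms $\avariable \oblieqlocal \avariable^\achain_b$, the untested negated obligations for \eqref{it:counters:2}, the ripple-carry bit update for \eqref{it:counters:5}, and, for the crucial condition \eqref{it:counters:6}, exactly the same trick---the pairwise distinctness of the $\avariable^\achain_{\textit{dec}}$-values guaranteed by \eqref{it:counters:2} makes every tested obligation $\oblieq{\avariable^\achain_{\textit{inc}}}{\avariable^\achain_{\textit{dec}}}{\cdot}$ point to one and the same future position, so the per-bit tests pin that position to the same counter.

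The one place where you genuinely deviate is the formulation of the pointer update in \eqref{it:counters:5}, and your version is the more careful one. The paper writes the update as a guarded implication, $\bigwedge_i \big(\textit{flip}^\achain_i \Rightarrow \textit{copy}^\achain_i \land \textit{zero}^\achain_i \land \textit{swap}^\achain_i\big)$, which is vacuously true when the current bit pattern is all ones (dually, all zeros for $\cprevious{\achain}$); taken literally, that formula does not prevent a model from executing $\cnext{\achain}$ on the last counter and writing arbitrary bits at the next position, even though property \eqref{it:counters:5} semantically forbids this (there is no counter $\acounter+1$ to operate on), and a model exploiting this loophole need not correspond to any run of $\aautomaton$. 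Your disjunction over the flipped position $b_0$, in which each disjunct asserts the full before/after bit pattern, is false on the all-ones (resp.\ all-zeros) pattern, so overflow and underflow are blocked exactly as the semantics of chain-system runs requires; the paper's formula would need the extra conjunct $\bigvee_i \textit{flip}^\achain_i$ under the $\cnext{\achain}$ guard to achieve the same effect. Everything else, including the polynomiality argument via the unary encoding of $\amap$, matches the paper's proof.
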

  \begin{proof}
    Along the proof, we use the following formulas, for any $\achain
    \in \interval{1}{n}$ and for any $i \in \interval{1}{f(\achain)}$,
    \newcommand{\bit}{\textit{bit\,}^\achain} \[ \bit_i \egdef
    \avariable \oblieqlocal \avariable^\achain_i ,  \] 
where $\bit_1$ represents the most significant bit. Now, we  show how
    to code each of the properties.
    \begin{itemize} \item Properties
	\eqref{it:a}--\eqref{it:d} are easy  to express in
	$\mainlogic$ and present no complications.

    \item     For expressing \eqref{it:counters:2}, we force
      $\avariable^\achain_{\textit{inc}}$ and
      $\avariable^\achain_{\textit{dec}}$ to have a different data value
      for every position of the model with the formula \[ \lnot
      \sometimes ~
      \oblieq{\avariable^\achain_{\textit{inc}}}{\avariable^\achain_{\textit{inc}}}{\top}
      ~~\land~~ \lnot \sometimes ~
      \oblieq{\avariable^\achain_{\textit{dec}}}{\avariable^\achain_{\textit{dec}}}{\top}
      .  \] \item     Property \eqref{it:counters:4} is straightforward
      to express in $\mainlogic$.

    \item     We express property \eqref{it:counters:5} by first
      determining which is the bit $i$ that must be flipped for the
      increment of the counter pointer of the chain $\achain$ (the least significant bit is the $\amap(\alpha)$th one).
      \begin{align*} \textit{flip}^\achain_i ~~\egdef~~ \lnot \bit_i  ~\land~
	\bigwedge_{j > i} \bit_j \end{align*} Then by making sure that
	all the bits before $i$ are preserved.  \begin{align*}
	  \textit{copy}^\achain_i ~~\egdef~~ \bigwedge_{j < i} ~( \bit_j
	  \Leftrightarrow \mynext(\bit_j)) \end{align*} And by making
	  every bit greater or equal to $i$ be a zero.  \begin{align*}
	    \textit{zero}^\achain_i ~~\egdef~~ \bigwedge_{j > i} \mynext
	    (\lnot \bit_j) \end{align*} And finally by swapping the bit $i$.
	    \begin{align*} \textit{swap}^\achain_i ~~\egdef~~ \mynext ~\bit_i
	    \end{align*} Hence, the property to check is, for every
	    $\achain \in \interval{1}{n}$, \begin{align*}
	      \bigwedge_{(q,\cnext{\achain},q') \in \transitions}
	      \Big( \tup{(q,\cnext{\achain},q')} ~~~\Rightarrow~~~
	      \bigwedge_{i\in \interval{1}{n}}
	      ~\big(\textit{flip}^\achain_i ~~\Rightarrow~~
	      \textit{copy}^\achain_i ~\land~ \textit{zero}^\achain_i
	      ~\land~ \textit{swap}^\achain_i\big) \Big) .  \end{align*} The
	      formula expressing the property for decrements of counter
	      pointers ($\cprevious{\achain}$) is  analogous.

    \item     Finally, we express property \eqref{it:counters:6} by
      testing, for every $\achain$-incrementing position, \[
      \oblieq{\avariable^\achain_{\textit{inc}}}{\avariable^\achain_{\textit{dec}}}{
      \achain\textit{-dec}} \qquad \text{ where }~ \achain\textit{-dec}
      ~=~ \bigvee_{(q,\dec{\achain},q') \in \transitions}
      \tup{(q,\dec{\achain},q')} \] and for every $i \in
      \interval{1}{f(\achain)}$, \[ \bit_i ~\Rightarrow~
      \oblieq{\avariable^\achain_{\textit{inc}}}{\avariable^\achain_{\textit{dec}}}{\bit_i}
      ~~~\land~~~ \lnot\bit_i ~\Rightarrow~
      \oblieq{\avariable^\achain_{\textit{inc}}}{\avariable^\achain_{\textit{dec}}}{\lnot\bit_i}
      .  \] If the $\achain$-increment has some data value $\adatum$ at
      variable $\avariable^\achain_{\textit{inc}}$, there must be only
      one future position $j$ where $\avariable^\achain_{dec}$ carries
      the data value $\adatum$ ---since every $\avariable^\achain_{dec}$ has a
      different value, by \eqref{it:counters:2}. For this reason, both
      positions (the $\achain$-increment and the $\achain$-decrement)
      operate on the same counter, and thus the formula faithfully
      expresses  property \eqref{it:counters:6}.  \qedhere \end{itemize}
    \end{proof}
\noindent  As a corollary of Claims~\ref{cl:props:suff:nec} and
  \ref{cl:props:exp:LRV} we obtain a polynomial-time reduction from
  Gainy(1) into the satisfiability problem for
  $\mainlogic(\mynext,\sometimes)$.
\end{proof}
\else\begin{proof}[Proof sketch]
  Let $\aautomaton =
  \triple{\locations,\amap,1}{\locations_0,\locations_F}{\transitions}$
  be a chain system of level $1$ with $f :
  \interval{1}{n} \to \Nat$, having thus $n$ chains of counters, of
  respective size $2^{f(1)}, \dotsc, 2^{f(n)}$.

We encode a word $\rho \in \transitions^*$ that
  represents an accepting run. For this, we use the alphabet
  $\transitions$ of transitions. We can  simulate the labels $\delta=\set{t_1, \dotsc, t_m}$ with variables $\mathtt{t}_0, \dotsc, \mathtt{t}_m$, where a node has an encoding of the label $t_i$  if{f} the formula $\tup{t_i} = \mathtt{t}_0 \approx \mathtt{t}_i$ holds true.
We build an
  $\mainlogic$ formula $\varphi$ so that there is an accepting gainy run $\rho \in \transitions^*$ of $\aautomaton$ if, and only if, there is a model $\sigma$ so that $\sigma \models \varphi$ and $\sigma$ encodes $\rho$.

The counter-blind conditions to check are: (a) Every position
      satisfies $\tup{t}$ for some $t \in \transitions$; (b) the first position satisfies
      $\tup{(q_0,\ainstruction,q)}$ for some $q_0 \in \locations_0$,
      $\ainstruction \in \instructions$, $q\in \locations$; (c) the last position satisfies
      $\tup{(q,\ainstruction,q')}$ for some $q \in \locations$,
      $\ainstruction \in \instructions$, $q' \in \locations_F$; (d) no two consecutive positions $i$ and
      $i+1$ satisfy $\tup{(q,u,q')}$ and $\tup{(p,u',p')}$
      respectively, with $q' \neq p$. The difficulty is then in checking that the values of the counters encode indeed a correct gainy run.

We say that a position $i$ is \defstyle{$\achain$-incrementing}
  [resp.\ \defstyle{$\achain$-decrementing}] if it satisfies
  $\tup{(q,u,q')}$ for some $q,q' \in \locations$ and $u =
  \inc{\achain}$ [resp.\ $u = \dec{\achain}$].
We use a label $(\achain,i)$ and variables $\avariable^\achain_{\textit{inc}}$, $\avariable^\achain_{\textit{dec}}$ for every $\achain \in [1,n]$ and  $i \in [1,f(\achain)]$.
We say
  that a position $i$ operates on the $\achain$-counter $\acounter$, if $\tup{(\achain,j)}$ holds (i.e., position $i$ encodes the label $(\achain,j)$) for every position $j$
 of the representation of $\acounter$ in base
 $2$ containing a `$1$', and $\lnot \tup{(\achain,j)}$ for every position $j$ containing a `$0$'. Note that we can encode every value $0 \leq    \acounter < 2^{f(\achain)}$.

  For every chain $\achain$, let us
  consider the following properties:
  \begin{itemize}
    \item  Every two
      positions of $\sigma$ have different values of
      $\avariable^\achain_{inc}$ [resp.\ of $\avariable^\achain_{dec}$].
    \item  For every position $i$ of $\sigma$ 
      operating on an $\achain$-counter $\acounter$ with an
      instruction `$\cisfirst{\achain}$' [resp.\
      `$\cisnotfirst{\achain}$', `$\cislast{\achain}$',
      `$\cisnotlast{\achain}$'] , we have $\acounter =0$ [resp.\
      $\acounter \neq 0$, $ \acounter =2^{f(\achain)}-1$, $ \acounter
      \neq 2^{f(\achain)}-1$].  
\item For every
      position $i$ of $\sigma$ operating on an $\achain$-counter
      $\acounter$,  if the position contains an
	  instruction `$\cnext{\achain}$' [resp.\
	  `$\cprevious{\achain}$'], then the next position $i+1$
	  operates on the $\achain$-counter $\acounter+1$ [resp.\
	  $\acounter-1$]; otherwise, the position $i+1$ operates
 	  on the $\achain$-counter $\acounter$.  
\item For every $\achain$-incrementing
	  position $i$ of $\sigma$ operating on an $\achain$-counter
	  $\acounter$ there is a future $\achain$-decrementing position
	  $j > i$ on the same $\achain$-counter,
          	  so  that
	  $\sigma(i)( \avariable^\achain_{\textit{inc}})=\sigma(j)(\avariable^\achain_{\textit{dec}})$.
\end{itemize}
In fact, these properties together with (a)--(d) are sufficient and necessary to encode a gainy and accepting run of $\aautomaton$, and they can be all expressed in $\mainlogic$. Then, we obtain a polynomial-time reduction from
  Gainy(1) into the satisfiability problem for
  $\mainlogic(\mynext,\sometimes)$.
\end{proof}
\fi

\makeatletter{}\section{A Robust Equivalence}
\label{section-robustness}\enlargethispage{2\baselineskip}
We have seen that the satisfiability problem for  $\mainlogic$ is equivalent to
the control state reachability problem in an exponentially larger
VASS. In this section we evaluate how robust is this result with
$\mainlogic$ variants or fragments.
We consider infinite data words (instead of finite data words),
finite sets of MSO-definable temporal operators (instead of $\mynext$, $\previous$, $\since$, $\until$)
and also repetitions of pairs of values (instead of repetitions of single values). 
\makeatletter{}\subsection{Infinite Words with Multiple Data}
\label{section-variants}
So far, we have considered only finite
words with multiple data. 
\ifLONG
It is also natural to consider the variant with infinite words but it is known that this may lead 
to 
undecidability: for instance, 
freeze LTL with a single register is decidable over finite data words whereas
it is undecidable over infinite data 
words~\cite{DL-tocl08}. By contrast, FO$^2$ over finite data
words or over infinite data words is decidable~\cite{BDMSS06}.  
\else
It is also natural to consider the variant with infinite words, but it is known that this sometimes leads 
to 
undecidability. However, in this case the decidability and complexity results are preserved.
\fi
\newcommand{\SATo}[1]{\ensuremath{\textup{SAT}_\omega(#1)}}Let $\SATo{~}$ be the variant of the satisfiability problem $\SAT{~}$ in which 
infinite models of length $\omega$ are taken
into account instead of finite ones. 
The satisfaction relation for $\mainlogic$, $\pmainlogic$, $\mainlogic^{\top}$,
etc.\ on  $\omega$-models is defined accordingly.\newpage

\begin{prop} \label{proposition-mainlogic-infinite} \
\begin{description}
\itemsep 0 cm 
\item[(I)] \SATo{$\pmainlogic$} is decidable.
\item[(II)] \SATo{$\mainlogic$} is \twoexpspace-complete.
\end{description}
\end{prop}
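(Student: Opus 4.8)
The plan is to transfer the finite-word correspondences to $\omega$-models, relying on the observation that the test-elimination machinery of Section~\ref{section-elimination} is insensitive to the finite/infinite distinction. Concretely, I would first check that the reductions behind Corollary~\ref{cor:reduction-LRVtop} remain satisfiability-preserving over $\omega$-models: the only place finiteness is used is the disjunct $\lnot \mynext \top$ appearing in the constant-gadget $\textit{const}$ and in $\textit{val-}\mathtt{v}^\approx_{\avariable,\psi}$, which over $\omega$-models is simply never satisfied, so these gadgets continue to force $\mathtt{k}$ to be a global constant and the auxiliary variables to behave as prescribed; all past operators are unaffected since the past of any position is finite. Hence \SATo{\mainlogic} reduces in polynomial time to \SATo{\mainlogic^{\top}} and \SATo{\pmainlogic} to \SATo{\pmainlogic^{\top}}.

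For (I) it then suffices to invoke the decidability of \SATo{\pmainlogic^{\top}}. This is exactly the robust decidability result of~\cite{Demri&DSouza&Gascon12}, which already treats infinite sequences by reducing to a fairness (repeated-reachability) problem for Petri nets shown decidable in~\cite{Jancar95}. Composing with the $\omega$-version of Corollary~\ref{cor:reduction-LRVtop} yields decidability of \SATo{\pmainlogic}.

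For the upper bound of (II), I would adapt Theorem~\ref{thm:2ExpspUpBound}. After reducing to \SATo{\mainlogic^{\top}}, I build the VASS $\Aphi$ of Lemma~\ref{lem:APhiFromEarlierPaper} but equip it with an $\omega$-acceptance condition: a generalized B\"uchi condition on control states witnessing all until/eventually subformulas, together with the requirement that every repetition obligation be eventually discharged. The latter I would capture by demanding that the run return infinitely often to the all-zero counter valuation $\vect{0}$; a visit to $\vect{0}$ certifies that no obligation is pending, so infinitely many such visits force every obligation created at a finite position to be fulfilled. The gainy/lossy transformation of Lemmas~\ref{lem:IncAutSimReal} and~\ref{lem:RealAutSimInc} then applies \emph{segment-wise}: the infinite run decomposes into consecutive gainy runs from $\langle q_k, \vect{0}\rangle$ to $\langle q_{k+1}, \vect{0}\rangle$, each of which, exactly as in the four steps of Theorem~\ref{thm:2ExpspUpBound}, exists iff a control-state reachability instance holds in the reversed VASS $\Adec$. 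Thus $\omega$-satisfiability reduces to a \emph{repeated} control-state reachability problem on $\Adec$, which is \expspace-complete by Rackoff-style bounds (see~\cite{Rackoff78,Demrietal09}); since $\Adec$ is of exponential size in $\size{\aformula}$, this gives the \twoexpspace{} upper bound for \SATo{\mainlogic^{\top}}, and then for \SATo{\mainlogic} via Corollary~\ref{cor:reduction-LRVtop}.

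For the matching lower bound I would not redo the chain-system construction but instead relativize the finite-word hardness instance of Lemma~\ref{lem:GainyToLRVSat}. Given a finite-word formula $\varphi \in \mainlogic(\mynext,\sometimes)$, I introduce a fresh stable end-marker $\mathtt{end}$ (simulated by an equality of fresh variables, with $\always(\mathtt{end}\Rightarrow\mynext\mathtt{end})$ and $\sometimes\,\mathtt{end}$) and a constant variable $\mathtt{k}$ distinct from every real variable at every $\neg\mathtt{end}$ position; I would force all real variables to equal $\mathtt{k}$ throughout the $\mathtt{end}$-region and relativize $\varphi$ to the $\neg\mathtt{end}$ prefix, replacing each data test $\oblieq{\avariable}{\avariablebis}{\aformula}$ by $\oblieq{\avariable}{\avariablebis}{\aformula \land \neg\mathtt{end}}$ (and likewise for $\oblineq{\avariable}{\avariablebis}{\aformula}$) and bounding the temporal eventualities by $\mathtt{end}$. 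Because no real value recurs across the $\neg\mathtt{end}/\mathtt{end}$ boundary, the $\neg\mathtt{end}$ prefix of an $\omega$-model behaves exactly as a finite model of $\varphi$, so the resulting formula is $\omega$-satisfiable iff $\varphi$ is finitely satisfiable, transferring the finite \twoexpspace{} lower bound to \SATo{\mainlogic}. The main obstacle I anticipate is the completeness half of the $\omega$ upper bound, namely proving that a valid $\omega$-model can always be realised by a gainy run returning to $\vect{0}$ infinitely often: this is precisely where the incremental-error freedom of $\Ainc$ is needed, to absorb obligations that a faithful run might otherwise keep pending forever, and where the argument must be made carefully to stay within the \expspace{} budget.
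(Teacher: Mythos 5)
Your treatment of (I) and of the \twoexpspace{} lower bound in (II) coincides with the paper's own: the eliminations of Section~\ref{section-elimination} indeed transfer unchanged to $\omega$-models (the disjunct $\lnot\mynext\top$ is simply vacuous there), and the paper proves hardness by exactly the relativization you describe, using an equality $\avariable_{new} \oblieqlocal \avariablebis_{new}$ of fresh variables as a propositional marker of a finite prefix to which all temporal operators and obligations are restricted.

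The gap is in the upper bound of (II), and it is not the technical inconvenience you anticipate at the end but a false characterization. It is not true that an $\omega$-satisfiable $\mainlogic^{\top}$ formula always admits a run of $\Aphi$ (or of the gainy $\Ainc$) visiting $\vect{0}$ infinitely often, no matter which generalized B\"uchi condition you add on control states. Consider
\[
\aformulater_0 \;=\; \always\Bigl(\;\bigwedge_{j\in\set{1,2}} \oblieq{\avariable_j}{\avariablebis}{\top} \;\wedge \bigwedge_{j,j'\in\set{1,2}} \neg\,\oblieq{\avariable_j}{\avariable_{j'}}{\top} \;\wedge\; \avariable_1 \noblieqlocal \avariable_2 \Bigr).
\]
The negated obligations together with the local disequality force all values taken by $\avariable_1,\avariable_2$ throughout the model to be pairwise distinct, and each of these values must reappear later in $\avariablebis$. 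So every position creates two obligations for the counter $\set{\avariablebis}$, while each position offers the single value of $\avariablebis$ as the only way to either discharge one of them or resolve one within a frame; at the symbolic level, every frame sequence consistent with $\aformulater_0$ has two increment points per position for $\set{\avariablebis}$ and at most one decrement point or within-frame resolution (an edge into $\avariable_1$ or $\avariable_2$ is excluded by the negated obligations). Hence in every run of $\Aphi$ this counter increases by at least one per step; gainy errors only make counters larger, so $\vect{0}$ is never revisited and your algorithm would answer ``unsatisfiable''. Yet $\aformulater_0$ is $\omega$-satisfiable: take all created values pairwise distinct and let the value created at position $i$ in $\avariable_1$ [resp.\ $\avariable_2$] reappear in $\avariablebis$ at position $2i+2$ [resp.\ $2i+3$]. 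The backlog of pending obligations grows without bound, but every individual obligation is met --- which is precisely the phenomenon that ``return to $\vect{0}$ infinitely often'' cannot capture.

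This example also dictates the shape of the paper's actual proof. Over infinite models, a counter with infinitely many decrement opportunities needs no counting at all (any finite backlog can be matched injectively into the future), whereas a counter whose decrement opportunities stop after a finite prefix must be balanced exactly within that prefix. Accordingly, the paper guesses a subset $\asetter \subseteq \nepowerset{\set{\avariable_1,\ldots,\avariable_k}}$ of the counters, checks $\langle \ainitst,\vect{0}\rangle \step{*} \langle \afinst,\vect{0}\rangle$ in the restriction $\Aphi^{\asetter}$ of $\Aphi$ to the counters in $\asetter$ (this finite reachability question is where the four-step argument of Theorem~\ref{thm:2ExpspUpBound} is reused), and separately checks non-emptiness of a B\"uchi automaton $\aautomatonbis^{\asetter}$ of exponential size which takes care, without any counters and soundly so thanks to the optional-decrement property, of the obligations attached to the counters outside $\asetter$ in the infinite suffix. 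Guessing $\asetter$ and performing both checks is a nondeterministic \twoexpspace{} procedure, and Savitch's Theorem~\cite{Savitch70} concludes. In short, your condition must be replaced by ``reach $\vect{0}$ once on the coordinates that still matter, and hand the remaining coordinates over to a B\"uchi condition''.
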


\makeatletter{}\begin{proof} 
(I) The developments of Section~\ref{section-elimination} apply to the infinite case,
and since  $\SATo{\pmainlogic^{\top}}$ is shown decidable in~\cite{Demri&DSouza&Gascon12},
we get decidability of  $\SATo{\pmainlogic}$. \\ 
(II) First, note that  $\SATo{\mainlogic}$ is \twoexpspace-hard.
Indeed, there is a simple logarithmic-space reduction from $\SAT\mainlogic$ into 
$\SATo\mainlogic$, which can be performed as for standard LTL. 
Indeed, it is sufficient to introduce two new variables
$\avariable_{new}$ and $\avariablebis_{new}$, to state that
 $\avariable_{new} \oblieqlocal \avariablebis_{new}$ is true at a finite prefix of the model (herein
 $\avariable_{new} \oblieqlocal \avariablebis_{new}$ plays the role of a new propositional variable)
and to relativize all the temporal operators and obligations to positions on which 
 $\avariable_{new} \oblieqlocal \avariablebis_{new}$ holds true. 

Concerning the complexity upper bound, from 
Section~\ref{section-elimination}, we can conclude that 
there is a polynomial-time reduction from $\SATo\mainlogic$ into 
$\SATo{\mainlogic^{\top}}$. The satisfiability problem $\SATo{\mainlogic^{\top}}$
is shown decidable in~\cite{Demri&DSouza&Gascon12} and we can adapt 
developments from Section~\ref{section-reduction-from-top} to get also a \twoexpspace \ upper bound
for  $\SATo{\mainlogic^{\top}}$. This is the purpose of the rest of the proof.

By analyzing the constructions from~\cite[Section 7]{Demri&DSouza&Gascon12}, one
can show that $\aformula$ of $\mainlogic^{\top}$ built over the variables
$\set{\avariable_1, \ldots, \avariable_k}$ is  
$\omega$-satisfiable iff there 
are  $\asetter \subseteq \nepowerset{\set{\avariable_1, \ldots, \avariable_k}}$,
a VASS $\Aphi^{\asetter} = \triple{\states}{\asetter}{\transitions}$
along with sets $\states_{0}, \states_{f} \subseteq \states$ of 
\defstyle{initial} and \defstyle{final} states respectively
and a B\"uchi automaton $\aautomatonbis^{\asetter}$ such that:
\begin{enumerate}
\itemsep 0 cm
\item  $\Aphi^{\asetter}$ is the restriction of $\Aphi$ defined in 
       Section~\ref{section-reduction-from-top} and  
       $\langle \ainitst, \vect{0} \rangle
       \step{*} \langle \afinst, \vect{0} \rangle$ in $\Aphi^{\asetter}$  
       for some $\ainitst \in 
       \states_{0}$ and $\afinst \in \states_{f}$.
\item $\aautomatonbis^{\asetter}$ accepts a non-empty language.
\end{enumerate}

By arguments similar to those from Section~\ref{section-reduction-from-top}, existence
of a run $\langle \ainitst, \vect{0} \rangle
       \step{*} \langle \afinst, \vect{0} \rangle$ can be checked in
\twoexpspace.  Observe that in the construction 
in~\cite[Section 7]{Demri&DSouza&Gascon12}, counters in 
$\nepowerset{\set{\avariable_1, \ldots, \avariable_k}} \setminus \asetter$
are also updated in $\Aphi^{\asetter}$ (providing the VASS $\Aphi$)
but one can show that this is not needed because of {\em optional
decrement} condition and because $\aautomatonbis^{\asetter}$
is precisely designed to take care of the future obligations in the infinite 
related to the
counters in $\nepowerset{\set{\avariable_1, \ldots, \avariable_k}} \setminus \asetter$.
$\aautomatonbis^{\asetter}$ can be built in exponential time
and it is of exponential size in the size of $\aformula$. Hence,
non-emptiness of  $\aautomatonbis^{\asetter}$ can be checked in 
\expspace. Finally, the number of possible subsets $\asetter$ is only at most double
exponential in the size of $\aformula$, which allows to get a nondeterministic
algorithm in \twoexpspace \ and provides a \twoexpspace \ upper bound
by Savitch's Theorem~\cite{Savitch70}.
\end{proof}

\subsection{Adding MSO-Definable Temporal Operators}
It is standard to extend  \ifLONG the  linear-time temporal logic \fi   LTL  with MSO-definable temporal  operators (see 
e.g.~\cite{Wolper83,Gastin&Kuske03}), and the same can be done with $\mainlogic$. 
For this, one considers MSO formulas over a linear order $<$; that is, a structure $\mathbb{A} = (A,<)$ contains only one (binary) relational symbol $<$ and it is determined (modulo isomorphism) by the size of $A$. Let $\mathbb{A}_n = (\set{0, \dotsc, n-1}, <)$ [resp.\ $\mathbb{A}_\omega = (\N, <)$]  be the linear order $0 < \dotsb < n-1$ [resp.\ $0<1< \dotsb$]. A temporal operator $\oplus$ of arity $n$ is \defstyle{MSO-definable} whenever there is an MSO($<$) formula $\aformula(\avariable, \mathtt{P_1},\ldots,\mathtt{P_n})$ with a unique free position variable $\avariable$ and with $n$ free unary predicates $\mathtt{P_1},\ldots,\mathtt{P_n}$ such that
\[\sigma, i \models \oplus(\aformulabis_1, \ldots, \aformulabis_n) \text{ iff }
\mathbb{A}_{|\sigma|} \models \aformula(i,\aset_1, \ldots,\aset_n),\] where $\aformulabis_j$ is a formula of LRV extended with $\oplus$ and  $\aset_j = \set{ i \mid \sigma, i \models \aformulabis_j}$ for every $j$, see e.g.~\cite{Gastin&Kuske03}. The \twoexpspace \ upper bound is preserved with a fixed finite set of MSO-definable
\ifLONG temporal \fi  operators. 

\begin{thm} \label{theorem-mso}
Let $\set{\oplus_1, \ldots, \oplus_N}$ be a finite set of MSO-definable temporal operators.
Satisfiability problem for $\mainlogic$ extended with $\set{\oplus_1, \ldots, \oplus_N}$ is
\twoexpspace-complete. 
\end{thm}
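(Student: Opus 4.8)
The lower bound is immediate. Since $\mynext$, $\previous$, $\until$ and $\since$ are themselves MSO-definable, the logic $\mainlogic$ is syntactically contained in its extension with $\set{\oplus_1, \ldots, \oplus_N}$, and the identity map is a (trivial) reduction from \SAT{$\mainlogic$} to its extension. By Lemma~\ref{lem:GainyToLRVSat}, Theorem~\ref{thm:CounterAutToPerk} and Lemma~\ref{lemma-inter}, \SAT{$\mainlogic$} is already \twoexpspace-hard, so the extension is too.

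For the upper bound the plan is to retrace the two-step reduction of Section~\ref{section-upper-bound} and check that each step survives the addition of a \emph{fixed, finite} family of MSO-definable operators. First, the test-elimination of Section~\ref{section-elimination} goes through essentially verbatim. The auxiliary formulas \textit{const} and \textit{val-}$\mathtt{v}_{\avariable,\psi}$ use only $\always$, $\mynext$ and $\sometimes$, which are unaffected; and the correctness arguments rest on the fact that the translations $(\cdot)^\approx$ and $(\cdot)^\top$ are homomorphic on every temporal connective. Since $\oplus_1,\ldots,\oplus_N$ are just additional temporal connectives whose semantics depends only on the truth-sets of their arguments, homomorphy and the inductive equivalence~\eqref{eq:psi-equiv-psitop} extend to them without change. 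Hence \SAT{$\mainlogic$ extended with $\set{\oplus_i}$} reduces in polynomial time to \SAT{$\mainlogic^{\top}$ extended with $\set{\oplus_i}$}.

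The core step is to rebuild the VASS $\Aphi$ of Lemma~\ref{lem:APhiFromEarlierPaper} (see Appendix~\ref{section-appendix-aphifrompreviouspaper}) for the extended logic. By the B\"uchi--Elgot--Trakhtenbrot theorem, each $\oplus_i$ with defining MSO($<$) formula $\aformula_i(\avariable, \mathtt{P_1},\ldots,\mathtt{P_n})$ is captured by a finite automaton $B_i$ of \emph{fixed} size (depending only on the fixed family, not on the input formula), as in the standard treatment of MSO-definable temporal logics~\cite{Gastin&Kuske03}. Following that extension of the Vardi--Wolper construction, one enriches the atoms so that the control states of $\Aphi$ record, besides a maximally consistent set of subformulas, the constantly many states of the $B_i$'s needed to evaluate the MSO operators along the word. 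Crucially, the counters stay exactly as before --- one per nonempty subset of $\set{\avariable_1,\ldots,\avariable_\numvars}$ --- and the counter updates on each transition are still determined \emph{solely} by the future obligations $\oblieq{\avariable}{\avariablebis}{\top}$, which the MSO machinery leaves untouched. Consequently the two properties used downstream, closure under component-wise interpolation and optional decrement, are inherited unchanged. Because $\set{\oplus_i}$ is fixed, the $B_i$'s contribute only a constant blow-up, so $\Aphi$ remains of exponential size in $\size{\aformula}$ over the same exponential counter index set.

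With this automaton in hand, the remainder of Section~\ref{section-reduction-from-top} applies verbatim: Lemmas~\ref{lem:IncAutSimReal} and~\ref{lem:RealAutSimInc} relate perfect and gainy runs, reversal turns incremental into decremental errors, and discarding decremental errors reduces satisfiability to the control state reachability problem for an exponentially large VASS, which is in \twoexpspace{} by Rackoff's bound~\cite{Rackoff78} exactly as in Theorem~\ref{thm:2ExpspUpBound}. The main obstacle is precisely the verification in the previous paragraph: that the enrichment of the atoms by the $B_i$'s preserves the two structural invariants of Lemma~\ref{lem:APhiFromEarlierPaper} while keeping the automaton exponential. Since the MSO automata act orthogonally to the obligation counters and are of constant size, this is a careful but routine bookkeeping check. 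Combining the matching lower and upper bounds yields \twoexpspace-completeness.
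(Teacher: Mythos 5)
Your proposal is correct and follows essentially the same route as the paper's own proof: the lower bound inherited from $\mainlogic$, the test-elimination of Section~\ref{section-elimination} extended verbatim to the new connectives, and a Gastin--Kuske-style exponential-size automaton for the extended symbolic models whose product with the obligation-counting VASS (counters and updates untouched, so interpolation closure and optional decrement are preserved) feeds into the Section~\ref{section-reduction-from-top} machinery and Rackoff's bound. The only imprecision is your phrase ``constant blow-up'': the fixed-size automata $B_i$ contribute a constant factor \emph{per occurrence} of an MSO operator, hence an exponential factor overall, but since $\Aphi$ was already exponential in $\size{\aformula}$ this still yields the exponential bound you conclude.
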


\makeatletter{}\begin{proof} \twoexpspace-hardness is inherited from $\mainlogic$. 
In order to establish the complexity upper bound, first note that  
LTL extended with a fixed finite set of MSO-definable temporal operators 
preserves the nice properties of LTL (see e.g.~\cite{Gastin&Kuske03}):
\begin{itemize}
\item Model-checking and satisfiability problems are \pspace-complete.
\item Given a formula $\aformula$ from such an extension, one can build a B\"uchi automaton
      $\aautomaton_{\aformula}$ accepting exactly the models for $\aformula$
      and the size of $\aautomaton_{\aformula}$ is in $\mathcal{O}(2^{p(\length{\aformula})})$
      for some polynomial $p(\cdot)$ (depending on the finite set of MSO-definable operators). 
\end{itemize}
All these results hold because the set of MSO-definable operators is finite and fixed, 
otherwise the complexity for satisfiability and the size of the B\"uchi automata are of non-elementary 
magnitude in the worst case. Moreover, this holds for finite and infinite models.

In order to obtain the \twoexpspace, the following properties are now sufficient: 
\begin{enumerate}
\itemsep 0 cm 
\item Following developments from Section~\ref{section-elimination}, it is straightforward to show that
there is a logarithmic-space reduction from the satisfiability problem 
for $\mainlogic + \set{\oplus_1, \ldots, \oplus_N}$ into the satisfiability problem 
for $\mainlogic^{\top} + \set{\oplus_1, \ldots, \oplus_N}$.
\item By combining~\cite{Gastin&Kuske03} and~\cite[Theorem 4]{Demri&DSouza&Gascon12} (see also 
Appendix~\ref{section-appendix-aphifrompreviouspaper}),
a formula $\aformula$ in $\mainlogic^{\top} + \set{\oplus_1, \ldots, \oplus_N}$ is satisfiable 
iff $\pair{\ainitst}{\vect{0}} 
\step{*} \pair{\afinst}{\vect{0}}$ for some $\ainitst \in
\states_{0}$ and $\afinst \in \states_{f}$ in some VASS $\Aphi$ such that the number of states
in exponential in $\length{\aformula}$. The relatively small size for $\Aphi$ is due to the fact that
$\Aphi$ is built as the product of a VASS checking obligations (of exponential size in the
number of variables and in the size of the local equalities) 
and of a finite-state automaton accepting the symbolic models
of $\aformula$ of exponential size thanks to~\cite{Gastin&Kuske03} (see also more details in 
Section~\ref{section-reduction-from-top}) . 
\end{enumerate}
By using arguments from Section~\ref{section-reduction-from-top}, we can then reduce existence of a run 
 $\pair{\ainitst}{\vect{0}} 
\step{*} \pair{\afinst}{\vect{0}}$ to an instance of the control state reachability problem in some
VASS of linear size  in the size of $\Aphi$, whence the \twoexpspace \ upper bound. 
\end{proof} 

Note that $\pmainlogic$ augmented with MSO-definable temporal operators is decidable too. 
\ifLONG
Indeed, it can be
translated into $\pmainlogic^{\top}$ augmented with MSO-definable temporal operators by adapting
the developments from Section~\ref{section-elimination}. Then, the satisfiability problem
of this latter logic can be translated in the reachability problem for VASS as done 
in~\cite[Theorem 4]{Demri&DSouza&Gascon12} except that finite-state automata are built according 
to~\cite{Gastin&Kuske03}. 
\fi 

\subsection{The $\mathtt{Now}$ Operator or the Effects of Moving the Origin}
The
satisfiability problem for  Past LTL with the temporal operator $\mathtt{Now}$ is known to be
\expspace-complete~\cite{Laroussinie&Markey&Schnoebelen02}.  
The satisfaction relation is parameterised by
the current position of the origin and past-time temporal operators
use that position. For instance, given $i,o \in \Nat$ with $o \leq i$ (`$o$' is the position of the origin),
\begin{center}
\begin{tabular}{lcl}
$\sigma, i \models_{o} \mathtt{Now} \ \aformula$ & $\equivdef$ & 
$\sigma, i \models_{i} \aformula$ \\ 
$\sigma, i \models_{o} \aformula_1 \since \aformula_2$
& $\equivdef$ & there is $j \in \interval{o}{i}$ such that
$\sigma, j \models_{o} \aformula_2$ and \\
& & 
for all $j' \in \interval{j-1}{i}$, we have $\sigma, j' \models_{o} \aformula_1$.
\end{tabular}
\end{center}
The powerful operator $\mathtt{Now}$ can be
obviously defined in MSO  but not with the above definition since 
it requires  {\em two}
free position variables, one of which refers to the current position of the origin
and 
past-time operators are interpreted 
relatively to that
position. 

\begin{thm} 
\label{theorem-now}
\SAT{$\mainlogic \ + \ \mathtt{Now}$} is
\twoexpspace-complete. 
\end{thm}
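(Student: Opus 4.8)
The plan is to follow the same two-phase route used for Theorem~\ref{thm:2ExpspUpBound} and Theorem~\ref{theorem-mso}, the only novelty being the data-blind operator $\mathtt{Now}$. The lower bound is free: $\mainlogic$ is the fragment of $\mainlogic + \mathtt{Now}$ that never uses $\mathtt{Now}$, so \twoexpspace-hardness is inherited from the hardness of \SAT{$\mainlogic$} established in Section~\ref{section-simulating}. For the upper bound I would first rerun the test-formula elimination of Section~\ref{section-elimination} (Propositions~\ref{prop:LRV2LRVapprox} and~\ref{prop:LRVapprox-to-LRVtop}) to obtain a polynomial-time reduction from \SAT{$\mainlogic + \mathtt{Now}$} into \SAT{$\mainlogic^{\top} + \mathtt{Now}$}. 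This requires only checking that each step commutes with $\mathtt{Now}$: the auxiliary conjuncts $\textit{const}$ and $\textit{val-}\mathtt{v}_{\avariable,\psi}$ are $\always$-guarded and therefore origin-independent, the obligations they encode are future-directed, and the translations $(\cdot)^{\approx}$ and $(\cdot)^{\top}$ stay homomorphic once we add the clause $(\mathtt{Now}\,\gamma)^{\top} = \mathtt{Now}\,(\gamma^{\top})$; the correctness inductions of those propositions then go through when stated uniformly over all origins $o$.

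The heart of the argument is to reduce \SAT{$\mainlogic^{\top} + \mathtt{Now}$} to control state reachability in a VASS, exactly as in Section~\ref{section-reduction-from-top}, but with the symbolic-model automaton replaced. Recall that the VASS $\Aphi$ of Lemma~\ref{lem:APhiFromEarlierPaper} factors as a product of (i) a counter part that tracks the future data obligations $\oblieq{\avariable}{\avariablebis}{\top}$, whose counters are indexed by the nonempty subsets of the $k$ variables (so the dimension is $2^{k}-1$), and (ii) a finite-state automaton recognizing the admissible sequences of atoms, i.e.\ the temporal skeleton. Since $\mathtt{Now}$ is data-blind it affects only factor (ii) and leaves the set of counters untouched. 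As in Theorem~\ref{theorem-mso}, I would obtain factor (ii) as an automaton over the alphabet of atoms; the difference is that for the forgettable-past operator $\mathtt{Now}$ the canonical such automaton is \emph{doubly} exponential in $\size{\aformula}$ rather than singly exponential, this being exactly the source of the \expspace-completeness of $\mathtt{Now}$-Past-LTL in~\cite{Laroussinie&Markey&Schnoebelen02}. Its states record, besides the usual atom, the truth values of the resettable past subformulas relative to the last origin; crucially this bookkeeping introduces no new data variables, and hence no new counters. The gainy-to-lossy reversal of Lemmas~\ref{lem:IncAutSimReal} and~\ref{lem:RealAutSimInc} and steps 3--4 of Theorem~\ref{thm:2ExpspUpBound} then apply verbatim, reducing satisfiability to checking $\langle \afinst, \vect{0}\rangle \step{*} \langle \ainitst, \acountval\rangle$ for some $\acountval$ in a VASS whose dimension is $2^{k}-1$ and whose number of states is doubly exponential in $\size{\aformula}$.

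The point that makes the bound survive is a quantitative reading of Rackoff's theorem~\cite{Rackoff78}: the logarithm of a shortest covering run is bounded by a product of one factor that is doubly exponential in the \emph{dimension} of the VASS and one factor polynomial in the logarithm of its number of states (the update norms being bounded by $k$). Here the dimension is $2^{k}-1$, exponential in $\size{\aformula}$, so the first factor is already doubly exponential in $\size{\aformula}$; the doubly exponential number of states makes the second factor only exponential in $\size{\aformula}$. As the product of a doubly exponential and an exponential quantity is still doubly exponential, the logarithm of the shortest run is doubly exponential in $\size{\aformula}$, exactly as in the $\mathtt{Now}$-free case of Theorem~\ref{thm:2ExpspUpBound}. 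A nondeterministic machine guesses the run on the fly while maintaining a binary length counter and a current configuration, both of doubly exponential bit-size, and thus runs in doubly exponential space; Savitch's theorem~\cite{Savitch70} then yields the \twoexpspace{} upper bound, matching the lower bound.

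The step I expect to be the main obstacle is the explicit construction of factor (ii) for $\mathtt{Now}$ in the symbolic, atom-based setting: one must adapt the forgettable-past automaton of~\cite{Laroussinie&Markey&Schnoebelen02} so that it runs over the atoms of $\mainlogic^{\top}$, correctly resets its past-context at each $\mathtt{Now}$, and provably adds no counter, together with the careful verification that the doubly exponential state blow-up is dominated, in Rackoff's bound, by the exponential-dimension term. Everything else---test-formula elimination, gainy/lossy reversal, and the final control state reachability---is a routine transcription of Sections~\ref{section-elimination} and~\ref{section-reduction-from-top}.
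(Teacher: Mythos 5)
Your overall architecture coincides with the paper's: the lower bound is inherited from $\mainlogic$, and the upper bound comes from redoing the symbolic-model/VASS construction of~\cite[Theorem 4]{Demri&DSouza&Gascon12} with the temporal component replaced by the doubly exponential automaton for forgettable past from~\cite{Laroussinie&Markey&Schnoebelen02}, the counter component (counters indexed by nonempty subsets of the $k$ variables, updates bounded by $k$) left untouched, followed by a Rackoff-style short-run bound, on-the-fly guessing, and Savitch. The one difference in the final step is harmless: the paper folds the doubly exponential state space into the update values via the VASS-to-VAS translation of~\cite{Hopcroft&Pansiot79} so as to apply Rackoff's theorem in its VAS form, whereas you invoke a VASS-shaped bound directly (as the paper itself does, citing~\cite{Rackoff78,Demrietal09}, before switching to VAS); both give a triply exponential run length, hence configurations and a length counter of doubly exponential bit-size. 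One caveat: your phrase ``doubly exponential in the dimension'' misstates the bound --- the factor multiplying the logarithm of the size is $2^{O(d\log d)}$, singly exponential in the dimension $d$; your subsequent arithmetic (dimension $2^{k}-1$ yields a factor doubly exponential in $\size{\aformula}$) uses the correct magnitude, but the sentence taken literally would lose an exponential and only give 3\expspace.

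The genuine gap is your first step: the claim that the test elimination of Section~\ref{section-elimination} commutes with $\mathtt{Now}$ via the clause $(\mathtt{Now}\,\gamma)^{\top}=\mathtt{Now}\,(\gamma^{\top})$. Tests in $\mainlogic$ may contain past operators, whose truth becomes origin-dependent in the presence of $\mathtt{Now}$, while the conjunct $\textit{val-}\mathtt{v}_{\avariablebis,\psi}$ of Proposition~\ref{prop:LRVapprox-to-LRVtop} evaluates $\psi$ at every position with respect to the \emph{initial} origin only. Concretely, let $q$ abbreviate an equality between two dedicated variables and consider $\psi_0 = q \wedge \mynext\always\neg q \wedge \mynext\,\mathtt{Now}\,(\oblieq{\avariable}{\avariablebis}{\pastsometimes q})$. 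Under the natural semantics in which the origin parameter propagates into the test, the obligation created at position $1$ carries origin $1$, so it must be discharged at some $j>1$ with matching data and $q$ true somewhere in $\interval{1}{j}$; since $q$ holds only at position $0$, $\psi_0$ is unsatisfiable. After your translation, however, the conjunct $\always(\mathtt{v}_{\avariablebis,\pastsometimes q}\approx\avariablebis \Leftrightarrow \pastsometimes q)$ is anchored at origin $0$, where $\pastsometimes q$ holds at \emph{every} position, so $\mathtt{v}_{\avariablebis,\pastsometimes q}$ agrees with $\avariablebis$ everywhere and the translated obligation $\oblieq{\avariable}{\mathtt{v}_{\avariablebis,\pastsometimes q}}{\top}$ merely asks for a future data match; the translation is satisfiable, so satisfiability is not preserved. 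No restatement of the correctness induction ``uniformly over all origins'' can repair this, since a single auxiliary variable cannot record the truth of $\psi$ relative to unboundedly many origins. This is exactly where $\mathtt{Now}$ differs from the MSO-definable operators of Theorem~\ref{theorem-mso}, which are origin-independent, so that Section~\ref{section-elimination} applies there verbatim; and it is presumably why the paper's proof (admittedly terse on this point) does not route through test elimination at all, but adapts the whole construction of~\cite[Theorem 4]{Demri&DSouza&Gascon12} to $\mainlogic+\mathtt{Now}$, so that tests are evaluated inside the doubly exponential automaton --- the one component that knows the current origin. To repair your argument, handle general tests inside the atom/frame construction rather than compiling them away with auxiliary variables; the rest of your proof then goes through.
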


\makeatletter{}
\begin{proof} Again,  \twoexpspace-hardness is inherited from $\mainlogic$. 
In order to get the \twoexpspace,  we use arguments similar to those from forthcoming Proposition~\ref{proposition-pspace}. 
Indeed, the decidability proof from~\cite[Theorem 4]{Demri&DSouza&Gascon12} 
(see also Appendix~\ref{section-appendix-aphifrompreviouspaper}) can be adapted to 
$\mainlogic + \mathtt{Now}$. The only difference is that the finite-state automaton  
is of double  exponential size, see details below.
Despite this exponential blow-up, the 
\twoexpspace \ upper bound can be preserved. 

Let $\aformula$ be a formula with $k$ variables. There exist a VASS 
$\Adec = \triple{\locations}{\counters}{\transitions}$ 
and $\locations_0,\locations_f \subseteq \locations$
such that
$\aformula$ is satisfiable iff there are $\afinst \in \locations_f$
and  $\ainitst \in \locations_0$ such that  
$\pair{\afinst}{\vect{0}}  \step{*}
  \pair{\ainitst}{\acountval}$ for some counter valuation
  $\acountval$.
Note that the number of counters in $\Adec$  is bounded by 
$2^k$, $\card{\locations}$ is double exponential in $\length{\aformula}$
and  the maximal value for an update
in a transition of $\Adec$ is $k$.
Indeed,  a formula from Past LTL+$\mathtt{Now}$ is equivalent
to a B\"uchi automaton of double exponential size in its size~\cite{Laroussinie&Markey&Schnoebelen02}. 
Moreover, deciding whether a state in $\locations$ belongs to
$\locations_0$ [resp. $\locations_f$] can be checked in exponential space
in  $\length{\aformula}$ and $\transitions$ can be decided
in exponential space too. 
By using~\cite{Rackoff78} (see also~\cite{Demrietal09}),
we can easily conclude 
that $\pair{\afinst}{\vect{0}}  \step{*}
  \pair{\ainitst}{\acountval}$ for some counter valuation
  $\acountval$ iff  $\pair{\afinst}{\vect{0}}  \step{*}
  \pair{\ainitst}{\acountval'}$ for some
  $\acountval'$ such that the length of the run is bounded
by $p(\length{\Adec} + \mathtt{max}(\Adec))^{\mathfrak{f}(k)}$ where 
$p(\cdot)$ is a polynomial, $\mathfrak{f}$ is a map of double exponential growth and $\mathtt{max}(\Adec)$
denotes the maximal absolute value in an update (bounded by $k$ presently).
In order to take advantage of the results on VAS (vector addition system --without states--), we use the translation
from VASS to VAS introduced in~\cite{Hopcroft&Pansiot79}: if a VASS has 
$N_1$ control states, the maximal absolute value in an update is $N_2$ and it has $N_3$ counters,
then we can build a VAS (being able to preserve coverability properties)
such that it has $N_3+3$ counters, the maximal absolute value in an update is $\max(N_2, N_1^2)$.
From $\Adec$, we can indeed build an equivalent VAS with a number of counters bounded by
$2^k+3$ and with a maximal absolute value in an an update at most double exponential in 
the size of $\length{\aformula}$. 
So, the length of the run is at most triple exponential in
$\length{\aformula}$. Consequently, 
the satisfiability problem for $\mainlogic + \mathtt{Now}$ is in
\twoexpspace. 
\end{proof}

This contrasts with the undecidability results 
from~\cite[Theorem 5]{Kara&Schwentick&Zeume10} in presence of the operator  $\mathtt{Now}$.
In~\cite[Theorem 5]{Kara&Schwentick&Zeume10}, the logic has two sorts of formulas: position formulae from class formulae
(a class being a sequence of positions with the same data value). It is a more expressive formalism that can navigate both the word in the usual way, as well as its data classes.

\makeatletter{}
\subsection{Bounding the Number of Variables}
\label{section-bounding-nb-variables}

Given the relationship between the number of variables in a
formula and the number of counters needed in the corresponding VASS, we investigate the consequences of fixing the number of
variables. Interestingly, this classical restriction has an
effect only for $\mainlogic^{\top}$, i.e., when test formulas
$\aformula$ are restricted to $\top$ in
$\oblieq{\avariable}{\avariablebis}{\aformula}$. Let $\mainlogic_k$ [resp.\ $\mainlogic_k^\top$, $\pmainlogic_k^\top$] be the restriction to formulas with at most $k$ variables.
In~\cite[Theorem 5]{Demri&DSouza&Gascon12}, it is shown that  
\SAT{$\mainlogic^{\top}_{1}$} is \pspace-complete by establishing a reduction
into the reachability problem for VASS when counter values are 
linearly bounded. Below, we generalize this result for any $k \geq 1$ by
using the proof of Theorem~\ref{thm:2ExpspUpBound} and the fact that
the control state reachability problem  for VASS with at most
$k$ counters (where $k$ is a constant) is in \pspace.

\begin{prop} 
\label{proposition-pspace}
 For every $k \geq 1$,
\SAT{$\mainlogic_k^{\top}$}  is \pspace-complete.
\end{prop}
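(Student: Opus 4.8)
The plan is to prove the two directions separately. For \pspace-hardness, I would note that $\mainlogic_1^{\top}$ is syntactically a fragment of $\mainlogic_k^{\top}$ for every $k \geq 1$, so the identity map is a trivial logarithmic-space reduction; since \SAT{$\mainlogic_1^{\top}$} is already \pspace-hard by Proposition~\ref{proposition-diamond}(II) (that is, \cite[Theorem 5]{Demri&DSouza&Gascon12}), so is \SAT{$\mainlogic_k^{\top}$}. The whole content is therefore in the matching \pspace upper bound.

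For membership, I would re-run the construction behind Theorem~\ref{thm:2ExpspUpBound}, paying attention to which parameters blow up. Given $\aformula \in \mainlogic_k^{\top}$ over $\set{\avariable_1,\dots,\avariable_k}$, Lemma~\ref{lem:APhiFromEarlierPaper} yields a VASS $\Aphi = \langle \states, \counters, \transitions\rangle$ whose counter set $\counters$ is exactly the family of nonempty subsets of $\set{\avariable_1,\dots,\avariable_k}$. The crucial observation is that $k$ is now a fixed constant, so $\card{\counters} = 2^{k}-1 = O(1)$, whereas in the general case this number grew with $\size{\aformula}$. The number of control states $\card{\states}$ is still exponential in $\size{\aformula}$ (atoms of the Vardi--Wolper closure), and each transition changes a counter by at most $k = O(1)$. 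I would then apply Steps 1--4 of the proof of Theorem~\ref{thm:2ExpspUpBound} verbatim: satisfiability of $\aformula$ is equivalent to $\langle \ainitst, \vect{0}\rangle \step{*} \langle \afinst, \vect{0}\rangle$ in $\Aphi$, which, via Lemmas~\ref{lem:IncAutSimReal}--\ref{lem:RealAutSimInc} and the two reversal tricks, becomes an instance of the control-state reachability problem in $\Adec$. All of $\Aphi$, $\Ainc$, $\Adec$ retain the constant number $2^{k}-1$ of counters and the constant maximal update.

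The remaining and decisive step is to solve control-state reachability for $\Adec$ in \pspace. Here I would invoke Rackoff's bound in the form that isolates the dimension: for a VASS whose number of counters is a fixed constant, the length of a shortest covering run, and hence the counter values along it, is bounded polynomially in the number of control states and in the maximal update magnitude~\cite{Rackoff78}. Since $\card{\states} = 2^{O(\size{\aformula})}$ and the updates are $O(1)$, this gives a bound that is only singly exponential in $\size{\aformula}$, in contrast to the triply exponential bound of the general case. A nondeterministic procedure can then guess a covering run one transition at a time, storing at each moment only the current control state (an atom, i.e.\ $O(\size{\aformula})$ bits), the $2^{k}-1$ binary counter values (each $O(\size{\aformula})$ bits, since they never exceed the singly-exponential length bound times the constant update), and a step counter bounded by that same length (again $O(\size{\aformula})$ bits); the transition relation of $\Adec$ is local and can be checked on the fly in polynomial space without materialising the exponentially many states. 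This runs in nondeterministic polynomial space, and Savitch's theorem~\cite{Savitch70} yields \pspace.

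The main obstacle I anticipate is precisely this last point. The crude ``doubly exponential in the size of the VASS'' reading of Rackoff's bound used in Theorem~\ref{thm:2ExpspUpBound} is too weak here, because $\size{\Adec}$ is still exponential in $\size{\aformula}$ and would give a triply-exponential run length. The argument only goes through with the refined, dimension-sensitive statement of Rackoff's theorem, in which the doubly-exponential blow-up is charged entirely to the number of counters (now fixed at $2^{k}-1$) while the dependence on the number of states and on the update magnitude stays polynomial. Verifying that this refined bound holds, and that the intermediate configurations stay within polynomial space, is the heart of the proof; everything else is a direct instantiation of the machinery already established for Theorem~\ref{thm:2ExpspUpBound}.
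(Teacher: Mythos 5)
Your proposal is correct and follows essentially the same route as the paper's proof: both obtain hardness trivially from the one-variable case, and both obtain membership by reusing the VASS $\Adec$ from the proof of Theorem~\ref{thm:2ExpspUpBound}, observing that for fixed $k$ the dimension ($2^k-1$ counters) and maximal update are constants, invoking the dimension-sensitive form of Rackoff's bound (as in~\cite{Rackoff78,Demrietal09}) to get a singly exponential bound on the length of a covering run, and then guessing such a run on the fly in nondeterministic polynomial space before applying Savitch's theorem. Your emphasis on needing the refined, dimension-isolating statement of Rackoff's theorem (rather than the crude doubly-exponential-in-size reading) is exactly the point the paper's proof also relies on.
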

\makeatletter{}\ifLONG
\begin{proof}
\pspace-hardness  is due to the fact that LTL with a 
single propositional
variable is \pspace-hard~\cite{Demri&Schnoebelen02}, which can be easily simulated with the atomic formula
$\avariable \oblieqlocal \mynext \avariable$. 
Let $k \geq 1$ be some fixed value and $\aformula \in 
\mainlogic^{\top}_k$. In the proof of Theorem~\ref{thm:2ExpspUpBound},
we have seen that there exist a VASS 
$\Adec = \triple{\locations}{\counters}{\transitions}$ 
and $\locations_0,\locations_f \subseteq \locations$
such that
$\aformula$ is satisfiable iff there are $\afinst \in \locations_f$
and  $\ainitst \in \locations_0$ such that  
$\pair{\afinst}{\vect{0}}  \step{*}
  \pair{\ainitst}{\acountval}$ for some counter valuation
  $\acountval$.
Note that the number of counters in $\Adec$  is bounded by 
$2^k$, $\card{\locations}$ is exponential in $\length{\aformula}$
and  the maximal value for an update
in a transition of $\Adec$ is $k$. 
Moreover, deciding whether a state in $\locations$ belongs to
$\locations_0$ [resp. $\locations_f$] can be checked in polynomial space
in  $\length{\aformula}$ and $\transitions$ can be decided
in polynomial space too. 
By using~\cite{Rackoff78} (see also~\cite{Demrietal09}),
we can easily conclude 
that $\pair{\afinst}{\vect{0}}  \step{*}
  \pair{\ainitst}{\acountval}$ for some counter valuation
  $\acountval$ iff  $\pair{\afinst}{\vect{0}}  \step{*}
  \pair{\ainitst}{\acountval'}$ for some
  $\acountval'$ such that the length of the run is bounded
by $p(\length{\Adec} + \mathtt{max}(\Adec))^{\mathfrak{f}(k)}$ where 
$p(\cdot)$ is a polynomial, $\mathfrak{f}$ is a map of double exponential growth and $\mathtt{max}(\Adec)$
denotes the maximal absolute value in an update (bounded by $k$ presently).
Since $k$ is fixed, the length of the run is at most exponential in
$\length{\aformula}$. Consequently, the following polynomial-space 
nondeterministic algorithm allows to 
check whether $\aformula$
is satisfiable. Guess $\afinst \in \locations_f$, $\ainitst \in \locations_0$
and guess on-the-fly a run of length at most  $p(\length{\Adec}  + \mathtt{max}(\Adec))^{\mathfrak{f}(k)}$
from $\pair{\afinst}{\vect{0}}$ to some 
$\pair{\ainitst}{\acountval'}$. Counter valuations can be
represented in polynomial space too. 
By Savitch's Theorem~\cite{Savitch70}, we conclude that 
the satisfiability problem for $\mainlogic^{\top}_k$ is in \pspace.
\end{proof} 
\else
\fi

This does not imply that $\mainlogic_k$ is in \pspace, since the reduction from $\mainlogic$ into 
$\mainlogic^\top$ in Section~\ref{section-elimination} introduces new variables. In fact, it introduces a number of 
variables that depends on the size of the formula. It turns out that this is unavoidable, and 
that its satisfiability problem is $2\expspace$-hard, by the following reduction.
\begin{lem}\label{lem:LRV-to-LRV1}
  There is a polynomial-time reduction from 
\SAT{$\mainlogic^\top$} into \SAT{$\mainlogic_1$} [resp.\  \SAT{$\pmainlogic^\top$} and \SAT{$\pmainlogic_1$}].
\end{lem}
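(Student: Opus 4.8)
The plan is to reduce $\mainlogic^{\top}$ (many variables, trivial tests) to $\mainlogic_1$ (one variable, \emph{full} tests) by a space-expanding encoding: each position $\ell$ of an original model $\sigma$ over variables $\avariable_1,\dots,\avariable_k$ is represented by a contiguous \emph{block} of $B=O(k)$ positions in a single-variable model $\sigma_\aformula$, with one designated data position per variable $\avariable_j$ carrying $\sigma(\ell)(\avariable_j)$. The reason the target is $\mainlogic_1$ rather than $\mainlogic_1^{\top}$ is essential: a future obligation $\oblieq{\avariable_a}{\avariable_b}{\top}$ of the source must become a single-variable obligation $\oblieq{\avariable}{\avariable}{\psi}$ whose test $\psi$ pinpoints the block-offset of the target, which is exactly the extra power $\mainlogic_1$ has over $\mainlogic_1^{\top}$ (and is consistent with \SAT{$\mainlogic_1^{\top}$} being only \pspace{} by Proposition~\ref{proposition-pspace}, hence not \twoexpspace{}-hard).

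First I would make block boundaries \emph{locally recognizable}, which is the crux. With a single equality-only variable and adversarial data no fixed local value-pattern is spoof-proof unless we force the required inequalities ourselves, so each block starts with a \emph{doubled separator} (two consecutive positions carrying the same value) and every data or copy position is flanked by \emph{guard} positions whose values are forced to differ from both neighbours. A structure formula $\Theta$ (evaluated at position $0$) enforces: position $0$ starts a separator; separators recur exactly every $B$ positions with complete blocks; all guard inequalities hold; and the model ends right after a complete block. A straightforward induction from position $0$ shows that $\Theta$ pins the layout rigidly, so that $\avariable \oblieqlocal \mynext \avariable$ holds at a position \emph{iff} it is a block start. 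Consequently $\textsf{slot}_r \egdef \previous^{r}(\avariable \oblieqlocal \mynext \avariable)$ holds at a position iff its block-offset is $r$, giving a local name for each role ($0\le r<B$).

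Next I would translate formulas homomorphically over Booleans and temporal operators, always evaluating the translation at block starts: $\mynext\psi \rightsquigarrow \mynext^{B}\psi'$, $\previous\psi\rightsquigarrow\previous^{B}\psi'$, and $\psi_1\until\psi_2\rightsquigarrow(\textsf{slot}_0\Rightarrow\psi_1')\until(\textsf{slot}_0\wedge\psi_2')$ (dually for $\since$). A local atom $\avariable_a\oblieqlocal\mynext^{i}\avariable_b$ becomes a fixed-offset single-variable equality $\mynext^{o_a}(\avariable\oblieqlocal\mynext^{iB+o_b-o_a}\avariable)$, where $o_a,o_b$ are the data offsets; non-existent targets make both sides false, matching the source. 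For $\oblieq{\avariable_a}{\avariable_b}{\top}$ the only danger is a \emph{same-block} match when $b>a$, since then the $\avariable_b$-position of the current block lies strictly ahead of the $\avariable_a$-position and would be wrongly captured. I would avoid this by adding, after all data positions of each block, a \emph{tail copy} $\hat\avariable_a$ forced equal to $\avariable_a$; reading the source value from $\hat\avariable_a$ puts every same-block data position in the (excluded) strict past, so the obligation translates to $\mynext^{o_{\hat a}}\,\oblieq{\avariable}{\avariable}{\textsf{slot}_{r_b}}$, which fires exactly for a matching $\avariable_b$-data position of a strictly later block.

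The correctness statement is the usual biconditional $\sigma\models\aformula$ iff $\sigma_\aformula\models t(\aformula)$, proved by structural induction using $\Theta$ and the rigidity lemma, exactly as in the eliminations of Section~\ref{section-elimination}. Since $B=O(k)$ and $k\le|\aformula|$, both $\Theta$ and $t(\aformula)$ have size polynomial in $|\aformula|$ and are computable in polynomial time. The past variant $\pmainlogic^{\top}\to\pmainlogic_1$ is symmetric: past obligations $\poblieq{\avariable_a}{\avariable_b}{\top}$ are handled with a \emph{head copy} placed before all data positions together with $\poblieq{\avariable}{\avariable}{\textsf{slot}_{r_b}}$, and the slot predicates still use only the past \emph{temporal} operator $\previous$, which is available in both $\mainlogic_1$ and $\pmainlogic_1$. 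The main obstacle is the combination in the two middle paragraphs: manufacturing a spoof-proof, locally testable block marker out of a single equality-only variable, and then threading the same-block exclusion through the obligation operator, whose test is evaluated at the target independently of the source.
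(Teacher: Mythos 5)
Your reduction is correct, and its skeleton is the same as the paper's: encode each position of a $k$-variable model as a contiguous block of $O(k)$ positions of a one-variable model, make block boundaries locally recognizable through the pattern ``two consecutive positions carrying equal values'', relativize the Boolean and temporal operators to block starts (stretching $\mynext$ to $\mynext^{B}$ and guarding $\until$/$\since$ by the block-start predicate), and turn local atoms into fixed-offset equalities. The paper's concrete layout interleaves one globally constant delimiter value between the $k$ data values, giving segments $\adatum\,\adatum_1\,\adatum\,\adatum_2\cdots\adatum\,\adatum_k\,\adatum$ of length $2k+1$; this plays exactly the role of your separator-plus-guards design. Where you genuinely diverge is on the crux you single out, the exclusion of same-block matches when translating $\oblieq{\avariable_a}{\avariable_b}{\top}$: you enlarge the block with a tail copy of each variable placed after all data positions and launch the single-variable obligation from that copy, so that the current block's $\avariable_b$-slot falls in the excluded strict past. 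The paper instead keeps the block minimal and splits on whether $\avariable_a \oblieqlocal \avariable_b$ holds at the current position: if it does, $\oblieq{\avariable_a}{\avariable_b}{\top}$ is equivalent to $\oblieq{\avariable_b}{\avariable_b}{\top}$, which can soundly be launched from the $\avariable_b$-slot itself (no same-block position is strictly future of it); if it does not, the naive translation from the $\avariable_a$-slot is already sound, since a same-block match would contradict the inequality. Both fixes are correct and polynomial: yours buys a uniform, case-free translation at the price of $k$ extra positions per block, and adapts verbatim to past obligations via a head copy; the paper's keeps blocks at $2k+1$ positions at the price of a disjunctive case analysis, which it likewise reuses for $\pmainlogic^{\top}$ into $\pmainlogic_1$. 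A last cosmetic difference: your slot predicates look backwards with $\previous^{r}$ to the block start, while the paper's $\gamma_j$ look forwards to the next boundary; both are legal in $\mainlogic_1$ and both are spoof-proof once the structure formula rigidifies the layout, for exactly the uniqueness-of-offsets reason you give.
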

\makeatletter{}\ifLONG
\begin{proof}
The idea of the coding is the following. Suppose we have a formula $\varphi \in \mainlogic^\top$ using 
$k$ variables $\avariable_1, \dotsc, \avariable_k$ so that $\sigma \models \varphi$. 

We will encode $\sigma$ in a model $\sigma_\varphi$ that encodes in only one variable, say $\avariable$ the whole model of $\sigma$ restricted to $\avariable_1, \dotsc, \avariable_k$. To this end, $\sigma_\varphi$ is divided into $N$ segments $s_1 \dotsb s_N$ of equal length, where $N=|\sigma|$. A special fresh data value is used as a special constant. 
Suppose that $\adatum$ is a data value that is not in $\sigma$. Then, each  segment $s_i$ has 
length $k' = 2k+1$, and is 
defined as the data values ``$\adatum ~ \adatum_1 ~ \adatum~  \adatum_2~ \dotsc \adatum~ \adatum_k ~\adatum$'', 
where $\adatum_j = \sigma(i)(\avariable_j)$. 
Figure~\ref{fig:lrv-to-lrv1} contains an example for $k=3$ and $N=3$. 
\begin{figure}
  \centering
  \includegraphics[scale=.46]{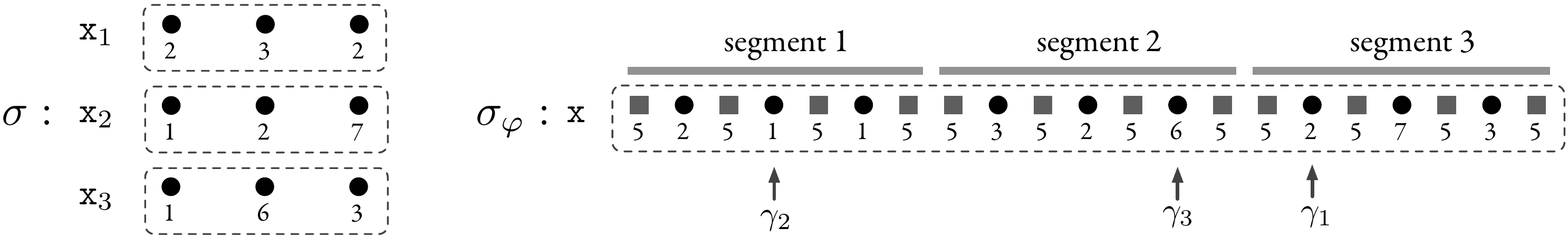}
  \caption{Example of the reduction from \SAT{$\mainlogic^\top$} into \SAT{$\mainlogic_1$}, for $k=3$, $N=3$ and $\adatum=5$}
    \label{fig:lrv-to-lrv1}
\end{figure}
In fact, we can force that the model has this shape with $\mainlogic_1$.
Note that with this coding, we can tell that we are between two segments if there are two consecutive equal data values. In fact, we are at a position corresponding to $\avariable_i$ (for $i \in \interval{1}{k}$) 
inside a segment  if we are standing at the $2i$-th element of a segment, and we can test this with the formula 
\[
\gamma_i ~~~=~~~ \mynext^{k' - 2i} \avariable \approx \mynext^{k' - 2i+1} \avariable ~~\lor~~
(\mynext^{k' - 2i} \top \land \lnot \mynext^{k' - 2i+1} \top) .
\]
Using these formulas $\gamma_i$, we can translate any test $\oblieq{\avariable_i}{\avariable_j}{\top}$ into a formula that
 \begin{enumerate}
 \itemsep 0 cm 
 \item moves to the position $2i$ of the segment (the one corresponding to the $\avariable_i$ data value),
 \item tests $\oblieq{\avariable}{\avariable}{\gamma_j}$.
 \end{enumerate}
We can do this similarly with all formulas.

\smallskip

Let us now explain in more detail how to do the translation, and why it works.

Consider the following property of a given position $i$ multiple of $k'$ of a model $\sigma_\varphi$:
\begin{itemize}
\itemsep 0 cm 
\item for every $j \in \interval{0}{k'-1}$,
  $\sigma_\varphi(i+j)(\avariable)=\sigma_\varphi(i)(\avariable)$ if, and only if,
  $j$ is even, and
\item either $i+k' \geq |\sigma_\varphi|$ or $\sigma_\varphi(i+k') = \sigma_\varphi(i)$.
\end{itemize}
This property can be easily expressed with a formula
\newcommand{\segk}{\textit{segment-}k'}
\[
 \segk  ~~~=~~~ \bigwedge_{\substack{0 \leq j \leq k'-1,\\ j \text{ is even}}} \avariable \approx \mynext^j\avariable ~~\land~~ \bigwedge_{\substack{0 \leq j \leq k'-1,\\ j \text{ is odd}}} \avariable \not\approx \mynext^j\avariable ~~\land~~ (\lnot\mynext^{k'}\top \lor \avariable \approx \mynext^{k'}\avariable) .
\]
Note that this formula tests that we are standing at the beginning of a segment in particular. It is now straightforward to produce a $\mainlogic_1$ formula that tests
\begin{itemize}
\itemsep 0 cm 
\item $\sigma_\varphi,0 \models \segk$
\item for every position $i$ so that $\sigma_\varphi(i)(\avariable) = \sigma_\varphi(i+1)(\avariable)$, we have $\sigma, i+1 \models \segk$.
\end{itemize}
\newcommand{\manyseg}{\textit{many-segments}}
Let us call \manyseg \  the formula expressing such a property. Note that the property implies that $\sigma_\varphi$ is a succession of segments, as the one of Figure~\ref{fig:lrv-to-lrv1}.

We give now the translation of a formula $\varphi$ of $\mainlogic^\top$ with $k$ variables into a formula $\varphi'$ of $\mainlogic_1$ with $1$ variable.
\begin{align*}
  tr(\mynext \psi) &~=~ \overbrace{\mynext \cdots \mynext}^{k' \ {\rm times}} tr(\psi)\\
  tr(\mynext^{-1} \psi) &~=~ \overbrace{\mynext^{-1} \cdots \mynext^{-1}}^{k' \ {\rm times}} tr(\psi)\\
  tr(\sometimes \psi) &~=~ \sometimes(\segk \land tr(\psi))\\
  tr(\psi \until \gamma) &~=~ (\segk \Rightarrow tr(\psi) ~\until~ (\segk \wedge tr(\gamma))\\
  tr(\psi \since \gamma) &~=~ (\segk \Rightarrow tr(\psi) ~\since~ (\segk \wedge tr(\gamma))\\
  tr(\avariable_i \approx\mynext^\ell\avariable_j) &~=~\mynext^{2i-1} \avariable \approx \mynext^{\ell \cdot k'+2j-1} \avariable \qquad \text{(and similarly for $\not \approx$)}\medskip
\end{align*}
\noindent Now we have to translate $\oblieq{\avariable_i}{\avariable_j}{\top}$. This would be translated in our encoding by saying that the $i$-th position of the current segment is equal to the $j$-th position of a future segment. Note that the following formula
\[
    \xi_{i,j} ~=~ \mynext^{2i-1} ( \oblieq{\avariable}{\avariable}{\gamma_j})
\]
does not exactly encode this property. For example, consider  that we would like to test $\oblieq{\avariable_2}{\avariable_3}{\top}$ at the first element of the model $\sigma_\varphi$ depicted in Figure~\ref{fig:lrv-to-lrv1}. Although $\xi_{2,3}$ holds, 
the property is not true, there is no \emph{future} segment with the data value $1$ in the position encoding $\avariable_3$. 
In fact, the formula $\xi$ encodes correctly the property only when $\avariable_i \not\approx \avariable_j$. 
However, this is not a problem since when $\avariable_i \approx \avariable_j$ the formula 
$\oblieq{\avariable_i}{\avariable_j}{\top}$ is equivalent to $\oblieq{\avariable_j}{\avariable_j}{\top}$. 
We can then translate the formula as follows.
\begin{align*}
  tr(\oblieq{\avariable_i}{\avariable_j}{\top}) &~=~ (tr(\avariable_i \approx \avariable_j) \land \xi_{j,j}) ~~\lor~~ (tr(\avariable_i \not\approx \avariable_j) \land \xi_{i,j}) 
\end{align*}
Recall that $\oblineq{\avariable_i}{\avariable_j}{\top}$ is not translated since such formulae can be eliminated.
We then define $\varphi' = \manyseg \land tr(\varphi)$.

\begin{clm}
  $\varphi$ is satisfiable if, and only if, $\varphi'$ is satisfiable.
\end{clm}
\begin{proof}
In fact, if $\sigma \models \varphi$, then by the discussion above, $\sigma_\varphi \models \varphi'$. If, on the other hand, $\sigma' \models \varphi'$ for some $\sigma'$, then since $\sigma' \models \manyseg$ it has to be a succession of segments of size $2k+1$, and we can recover a model $\sigma$ of size $|\sigma'| / 2k+1$ where $\sigma(i)(\avariable_j)$ is the data value of the $2j$-th position of the $i$-th segment of $\sigma'$. In this model, we have that $\sigma \models \varphi$.
\end{proof}
This coding can also be extended with past obligations in a straightforward way, 
\begin{align*}
  tr(\poblieq{\avariable_i}{\avariable_j}{\aformula}) &~=~ (tr(\avariable_i \approx \avariable_j) \land \xi^{-1}_{j,j}) ~~\lor~~ (tr(\avariable_i \not\approx \avariable_j) \land \xi^{-1}_{i,j})\quad \text{where}\\
    \xi^{-1}_{i,j} &~=~ \mynext^{2i-1} ( \poblieq{\avariable}{\avariable}{\gamma_j}) .
\end{align*}
Therefore, there is also a reduction from $\pmainlogic^\top$ into $\pmainlogic_1$.
\end{proof}
\else\begin{proof}[Proof sketch] 
 Let 
  $\varphi \in \mainlogic^\top$ using $k$ variables $\avariable_1,
  \dotsc, \avariable_k$ so that $\sigma \models \varphi$.
  We will encode $\sigma$ restricted to $\avariable_1, \dotsc, \avariable_k$ inside a model $\sigma_\varphi$ with only one variable, say $\avariable$. To this end,
  $\sigma_\varphi$ is divided into $N$ segments $s_1 \dotsb s_N$ of
  equal length, where $N=|\sigma|$. A special data value $d$ not used
  in $\sigma$ plays the role of \emph{delimiter} between segments and
  between positions that code values from $\sigma$. Suppose that
  $d$ is a data value that is not
  in $\sigma$. Then, each segment $s_i$ has length $k' =
  2k+1$, and is defined as the data values ``$d ~ d_1 ~ d~ d_2~ \dotsc
  d~ d_k ~d$'', where $d_j =
  \sigma(i)(\avariable_j)$. Figure~\ref{fig:lrv2lrv1} contains an example.
  \begin{figure*}
    \centering
    \includegraphics[scale=.46]{LRV-to-LRV1.pdf}
    \caption{Example of the reduction from \SAT{$\mainlogic^\top$} into \SAT{$\mainlogic_1$}, for $k=3$, $N=3$ and $d=5$.}
\label{fig:lrv2lrv1}
\end{figure*}
              We can force that the model has this shape with
  $\mainlogic_1$.  With this coding, we can tell that we are
  between two segments if there are two consecutive equal data
  values. In fact, we are at a position corresponding to
  $\avariable_i$ (for $i \in \interval{1}{k}$) inside a segment if we are
  standing at the $2i$-th element of a segment, and we can test this
  with the formula
$
  \gamma_i ~=~ \mynext^{k' - 2i} \avariable \approx \mynext^{k' -
    2i+1} \avariable ~\lor~ (\mynext^{k' - 2i} \top \land \lnot
  \mynext^{k' - 2i+1} \top)
$.
 Using the $\gamma_i$'s, we translate 
  $\oblieq{\avariable_i}{\avariable_j}{\top}$ in $\varphi$ into a formula that: moves to the position $2i$ of the segment (the one
    corresponding to $\avariable_i$), and tests $\oblieq{\avariable}{\avariable}{\gamma_j}$.
  We can do this similarly with all formulas.
\end{proof}
\fi

\begin{cor}
For all $k \geq 1$, \SAT{$\mainlogic_k$} is $2\expspace$-complete.
\end{cor}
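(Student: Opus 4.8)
The plan is to treat the two bounds separately and to observe that essentially all the work has already been carried out in the preceding sections; the corollary is then obtained by chaining together the reductions established earlier.

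For the upper bound, I would simply note that $\mainlogic_k$ is by definition the syntactic fragment of $\mainlogic$ consisting of formulas using at most $k$ variables, so that \SAT{$\mainlogic_k$} is an instance of \SAT{$\mainlogic$}. By Corollary~\ref{cor:mainlogic-twoexpspace}, \SAT{$\mainlogic$} is in \twoexpspace{}, and hence so is \SAT{$\mainlogic_k$} for every $k \geq 1$. This direction requires nothing new.

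For the lower bound, I would first reduce to the case $k=1$: since a formula with at most one variable is in particular a formula with at most $k$ variables, we have $\mainlogic_1 \subseteq \mainlogic_k$ for every $k \geq 1$, so it suffices to show that \SAT{$\mainlogic_1$} is \twoexpspace-hard. I would then assemble the following chain of polynomial-time (hence composable) reductions. Theorem~\ref{thm:CounterAutToPerk} gives that Per($1$) is \twoexpspace-hard, and Lemma~\ref{lemma-inter} transports this to Gainy($1$). Lemma~\ref{lem:GainyToLRVSat} provides a polynomial-time reduction from Gainy($1$) into \SAT{$\mainlogic(\mynext,\sometimes)$}, whence \SAT{$\mainlogic$} is itself \twoexpspace-hard. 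The key remaining point is that Lemma~\ref{lem:LRV-to-LRV1}, which collapses the $k$ variables into a single one by encoding each valuation as a fixed-length segment, takes as input a formula of $\mainlogic^{\top}$ rather than of full $\mainlogic$. I would therefore route through Corollary~\ref{cor:reduction-LRVtop}, the test-elimination reduction, which gives a polynomial-time reduction from \SAT{$\mainlogic$} into \SAT{$\mainlogic^{\top}$} and so yields \twoexpspace-hardness of \SAT{$\mainlogic^{\top}$}. Composing with Lemma~\ref{lem:LRV-to-LRV1} gives \twoexpspace-hardness of \SAT{$\mainlogic_1$}, and therefore of \SAT{$\mainlogic_k$} for all $k \geq 1$.

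There is no genuinely difficult new step: the \twoexpspace{} lower bound already resides in the chain-system hardness result of Theorem~\ref{thm:CounterAutToPerk} and in the single-variable encoding of Lemma~\ref{lem:LRV-to-LRV1}. The one subtlety I would flag explicitly is the mismatch of domains — Lemma~\ref{lem:LRV-to-LRV1} is stated for $\mainlogic^{\top}$, whereas the hardness emerging from Lemma~\ref{lem:GainyToLRVSat} concerns $\mainlogic$ with nontrivial test subformulas — which is resolved cleanly by first applying Corollary~\ref{cor:reduction-LRVtop}. Combining the matching upper and lower bounds then establishes \twoexpspace-completeness of \SAT{$\mainlogic_k$} for every $k \geq 1$.
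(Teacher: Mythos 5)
Your proposal is correct and follows exactly the route the paper intends (the paper leaves the proof of this corollary implicit): the upper bound is inherited from Corollary~\ref{cor:mainlogic-twoexpspace}, and the lower bound chains Theorem~\ref{thm:CounterAutToPerk}, Lemma~\ref{lemma-inter}, Lemma~\ref{lem:GainyToLRVSat}, Corollary~\ref{cor:reduction-LRVtop}, and Lemma~\ref{lem:LRV-to-LRV1}, using $\mainlogic_1 \subseteq \mainlogic_k$ at the end. You also correctly identify the one subtle point, namely that Lemma~\ref{lem:LRV-to-LRV1} starts from $\mainlogic^{\top}$ so the hardness of $\mainlogic$ must first be transferred via the test-elimination reduction.
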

\begin{cor}\label{cor:PLRV1-equiv-VASS}
For all  $k \geq 1$, \SAT{$\pmainlogic_k$} is as hard as Reach(VASS).
\end{cor}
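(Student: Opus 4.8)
The plan is to sandwich $\SAT{\pmainlogic_k}$ between Reach(VASS) on both sides, using reductions already established in the excerpt and exploiting the two syntactic inclusions $\pmainlogic_1 \subseteq \pmainlogic_k \subseteq \pmainlogic$ (allowing more variables can only enlarge the class of admissible formulas). These inclusions provide, for every fixed $k \geq 1$, identity reductions $\SAT{\pmainlogic_1} \to \SAT{\pmainlogic_k}$ and $\SAT{\pmainlogic_k} \to \SAT{\pmainlogic}$. Since the corollary asserts that $\SAT{\pmainlogic_k}$ is \emph{as hard as} Reach(VASS), it suffices to exhibit reductions in both directions, i.e.\ to show that $\SAT{\pmainlogic_k}$ and Reach(VASS) are interreducible; the substance is the hardness direction, and the converse is immediate.

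For the hardness direction (Reach(VASS) reduces to $\SAT{\pmainlogic_k}$), I would compose three reductions already at hand. First, Theorem~\ref{theorem-reach-PLRV} reduces Reach(VASS) into $\SAT{\pmainlogic}$ in polynomial space. Second, Corollary~\ref{cor:reduction-LRVtop} reduces $\SAT{\pmainlogic}$ into $\SAT{\pmainlogic^{\top}}$ in polynomial time (its ``resp.'' clause). Third, the past-case of Lemma~\ref{lem:LRV-to-LRV1} reduces $\SAT{\pmainlogic^{\top}}$ into $\SAT{\pmainlogic_1}$ in polynomial time. Chaining these and then appending the inclusion reduction $\SAT{\pmainlogic_1} \to \SAT{\pmainlogic_k}$ yields a polynomial-space reduction from Reach(VASS) into $\SAT{\pmainlogic_k}$, valid for every fixed $k \geq 1$. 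In particular the single variable furnished by Lemma~\ref{lem:LRV-to-LRV1} already suffices, so the bound does not depend on $k$.

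For the converse (membership), $\SAT{\pmainlogic_k}$ is a syntactic fragment of $\SAT{\pmainlogic}$, and the latter is equivalent to Reach(VASS) modulo polynomial-space reductions (as recorded in the corollary immediately following Corollary~\ref{cor:reduction-LRVtop}). Composing the inclusion $\SAT{\pmainlogic_k} \to \SAT{\pmainlogic}$ with that reduction gives a polynomial-space reduction from $\SAT{\pmainlogic_k}$ into Reach(VASS). Together with the hardness direction, this shows that $\SAT{\pmainlogic_k}$ and Reach(VASS) are interreducible in polynomial space, which is exactly the claim.

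The only point requiring care is the bookkeeping of reduction classes. Every link in the chain is at worst a polynomial-space reduction (polynomial-time reductions and the inclusion reductions being special cases), and polynomial-space reductions are closed under composition; hence each composite remains a polynomial-space reduction. There is no genuine obstacle here: the corollary is a direct packaging of Theorem~\ref{theorem-reach-PLRV}, Corollary~\ref{cor:reduction-LRVtop}, and Lemma~\ref{lem:LRV-to-LRV1}, together with the trivial observation that restricting the number of variables downward from $\pmainlogic$ to $\pmainlogic_k$, and further to $\pmainlogic_1$, behaves monotonically with respect to satisfiability.
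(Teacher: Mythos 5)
Your proposal is correct and follows exactly the route the paper intends for this corollary: chain the polynomial-space reduction of Theorem~\ref{theorem-reach-PLRV} with the polynomial-time reductions of Corollary~\ref{cor:reduction-LRVtop} (past case) and Lemma~\ref{lem:LRV-to-LRV1} (past case), then use the trivial inclusion $\pmainlogic_1 \subseteq \pmainlogic_k$; the converse direction via $\pmainlogic_k \subseteq \pmainlogic$ and the equivalence of \SAT{$\pmainlogic$} with Reach(VASS) is likewise how the paper closes the loop. No gaps relative to the paper's own (implicit) argument.
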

\makeatletter{}
\subsection{The Power of Pairs of Repeating Values}
\ifLONG
Let us consider the last variant of $\mainlogic$ in which 
the repetition of tuples of values is possible. 
Such an extension amounts to introducing additional variables in a first-order setting.
This may lead to undecidability since 3 variables are enough for undecidability, see 
e.g.~\cite{BDMSS06:journal}. However,  $\mainlogic$  makes a restricted
use of variables, leading to \twoexpspace-completeness.
There might be hope that $\mainlogic$ augmented with repetitions of pairs of values
have a reasonable computational cost. 
Consider an extension to $\mainlogic$ with  
 atomic formulas of the form
$\oblieq{(\avariable_1, \dotsc, \avariable_k)}{(\avariablebis_1, \dotsc, \avariablebis_k)}{\aformula}$
where $\avariable_1, \dotsc, \avariable_k, \avariablebis_1, \dotsc, \avariablebis_k \in \Var$. 
\else
\noindent
Let us consider an \ifLONG last \fi  $\mainlogic$ variant  so that 
repetition of tuples of values is possible: we add \ifLONG atomic \fi formulas of the form
$\oblieq{(\avariable_1, \dotsc, \avariable_k)}{(\avariablebis_1, \dotsc, \avariablebis_k)}{\aformula}$
where $\avariable_1, \dotsc, \avariable_k, \avariablebis_1, \dotsc, \avariablebis_k \in \Var$. 
\fi 
This extends  $\oblieq{\avariable}{\avariablebis}{\varphi}$ by testing whether the vector of data values 
from the variables $(\avariable_1, \dotsc, \avariable_k)$ of the current position coincides with 
that of $(\avariablebis_1, \dotsc, \avariablebis_k)$ in a future position. 

\ifLONG
The semantics are extended 
accordingly:
\begin{center}
\begin{tabular}{lcl}
$\sigma, i \models    \oblieq{(\avariable_1, \dotsc, \avariable_k)}{(\avariablebis_1, \dotsc, \avariablebis_k)}{\varphi}$ & iff & 
  there exists $j$ such that $i < j < \length{\sigma}$, $\sigma, j \models \varphi$, \\
& &  and $\sigma(i)(\avariable_l) = \sigma(j)(\avariablebis_l)$ for every $l \in \interval{1}{k}$.
\end{tabular}
\end{center}
\fi 

We call this extension $\mainlogic_{vec}$. 
Unfortunately, we can show that \SAT{$\mainlogic_{vec}$} is undecidable, 
even when only tuples of dimension 2 are allowed. This is
proved by reduction from a variant of Post's Correspondence Problem
(PCP), see below. In order to code solutions of PCP instances, we adapt a proof
technique used in~\cite{BDMSS06:journal} for first-order logic with
two variables and two equivalence relations on words. However, 
our proof uses only \emph{future}  modalities (unlike the proof of~\cite[Proposition 27]{BDMSS06:journal})
and no past obligations (unlike  the proof of~\cite[Theorem 4]{Kara&Schwentick&Zeume10}). 
To prove this result, we work with a variant of the
PCP problem in which solutions $u_{i_1} \dotsb u_{i_n} = v_{i_1}
\dotsb v_{i_n}$ have to satisfy $|u_{i_1} \dotsb u_{i_j}| \leq |v_{i_1} \dotsb v_{i_j}|$ for every $j$. 

\begin{thm} \label{theorem-undecidability-with-pairs}
  \SAT{$\mainlogic_{vec}(\mynext,\until)$} is undecidable.
\end{thm}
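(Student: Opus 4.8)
The plan is to reduce from a suitable undecidable variant of Post's Correspondence Problem, namely the directed modified PCP \pcpdir: given pairs $(u_1,v_1),\dots,(u_m,v_m)$ of words, decide whether there are indices $i_1,\dots,i_n$ with $u_{i_1}\cdots u_{i_n}=v_{i_1}\cdots v_{i_n}$ and, in addition, $|u_{i_1}\cdots u_{i_j}| \le |v_{i_1}\cdots v_{i_j}|$ for every prefix $j$. Undecidability of \pcpdir is inherited from PCP: the standard reduction from the halting problem can be arranged so that in any solution the $v$-side never falls behind the $u$-side. The domination condition is exactly what will let the whole construction stay inside the future fragment $\mainlogic_{vec}(\mynext,\until)$, with no recourse to $\previous$ or $\since$.

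Given such an instance, I would encode a candidate solution $w=u_{i_1}\cdots u_{i_n}=v_{i_1}\cdots v_{i_n}$ as a finite data word. Using equalities between dedicated variables to simulate a finite alphabet (as already explained in the Preliminaries), each position is labelled by the tile index $i_j$ it belongs to, a flag telling whether it carries a letter of the $u$-reading or of the $v$-reading, the actual letter of $w$, and block delimiters. Two designated data variables $\avariable_1,\avariable_2$ attach to every letter position a pair $(\sigma(\ell)(\avariable_1),\sigma(\ell)(\avariable_2))$ acting as a two-dimensional \emph{address}: $\avariable_1$ records the position of the letter inside $w$ (its \emph{column}) and $\avariable_2$ records the tile index $j$ (its \emph{row}). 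Pair repetition synchronises the two readings: the $t$-th letter of the $u$-reading and the $t$-th letter of the $v$-reading are forced to carry the same pair, after which a letter-matching conjunct $\always(\cdots)$ guarantees that both readings spell the same $w$, and a tile-matching conjunct guarantees that both parses use the same index sequence.

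First I would make the pair a rigid key via $\always\,\lnot\,\oblieq{(\avariable_1,\avariable_2)}{(\avariable_1,\avariable_2)}{\top}$, forcing all addresses to be pairwise distinct; this makes every future pair-repetition link functional and injective, so the counters hidden in $\avariable_1,\avariable_2$ cannot cheat. The single-coordinate repetitions $\oblieq{\avariable_1}{\avariable_1}{\cdot}$ and $\oblieq{\avariable_2}{\avariable_2}{\cdot}$ (already available in $\mainlogic$) realise the ``same column'' and ``same row'' navigation, while the genuine pair repetition realises ``same cell'' navigation; it is precisely this combination---the product of two otherwise independent equivalence relations---that is known to be undecidable on data words (cf.\ the two-equivalence-relation encoding of~\cite{BDMSS06:journal}). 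I would transcribe that grid/PCP encoding into $\mainlogic_{vec}$, replacing each FO$^2$ navigation step by a future obligation and relying on the domination condition to keep every matching position demanded by a formula to the right of the current one.

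For correctness, soundness is the easy direction: from an actual directed solution one builds the data word above, assigns distinct addresses consistently, and checks each conjunct. Completeness is the converse: in any model, key-uniqueness together with the synchronisation conjuncts lets one read off a well-defined index sequence and two parses that spell the same word, hence a directed PCP solution. The main obstacle is the synchronisation step itself: with only future operators and pair repetitions one cannot simply ``compare the $t$-th of one reading with the $t$-th of the other'', so the columnwise and rowwise alignments must be bootstrapped inductively along the address chains. One must then verify both that the domination condition keeps all required links forward-directed and that uniqueness of the pair-key rules out spurious coincidences of addresses that could let a non-solution satisfy the formula.
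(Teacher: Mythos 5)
Your starting point matches the paper's: both reduce from \pcpdir{} and both rely on the prefix-domination (directedness) condition to keep every matching obligation future-directed, which is what lets the construction live in $\mainlogic_{vec}(\mynext,\until)$ without $\previous$, $\since$ or past obligations. But two things break in your construction, one local and one central. The local one: your ``rigid key'' conjunct $\always\,\lnot\,\oblieq{(\avariable_1,\avariable_2)}{(\avariable_1,\avariable_2)}{\top}$ forces \emph{all} address pairs in the model to be pairwise distinct, while your synchronisation step requires the $t$-th letter of the $u$-reading and the $t$-th letter of the $v$-reading to carry the \emph{same} pair. As stated these are contradictory: under the key constraint no pair ever repeats, so every cross-reading pair-repetition obligation fails and the reduction maps every instance to an unsatisfiable formula. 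The uniqueness constraint must be relativized to each reading separately (via test formulas inside the obligations); in the paper's proof the analogous conditions --- each value occurs at most twice, and no two positions agree on both variables --- are imposed within $\amodel_{\aalphabet}$ and within $\amodel_{\bar \aalphabet}$, never across them, precisely because the cross-reading pair repetitions are the matching mechanism. Relatedly, your row coordinate as literally defined (the tile index of the position's \emph{own} block) cannot be shared by the two $t$-th letters: by directedness they lie in different blocks, so ``row'' must refer to block numbers of one fixed parse.

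The central gap is that you never give a mechanism making $\avariable_1$ ``record the position of the letter inside $w$''. Data values are uninterpreted and the logic has no arithmetic, so ``this value equals the position index'' is not an expressible property; making a pair of data values behave like a position address using only repetition constraints \emph{is} the theorem's difficulty, and you flag it yourself (``the columnwise and rowwise alignments must be bootstrapped inductively along the address chains'') without resolving it. The paper's resolution is a concrete chaining device: positions of the $\aalphabet$-projection carry alternating parity labels; consecutive positions are forced to share the value of $\avariable$ or of $\avariablebis$ depending on parity; together with the at-most-twice constraints, a graph-theoretic argument shows the induced links form exactly the successor relation of the projection, hence pairs are unique addresses of positions, and an induction along this chain converts pair-matching into the position-by-position letter matching from which a \pcpdir{} solution is read off. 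Without this (or an equivalent device), your completeness direction cannot start: nothing ties the ``column'' values to actual positions, so pair-repetition links may permute or skip columns and a non-solution can satisfy the formula. Fixing your proposal essentially amounts to reconstructing this chaining-plus-counting argument.
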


\makeatletter{}Let us introduce the problem below as a variant of PCP:
\begin{center}
  \begin{tabular}{|rl|}
    \hline
\textsc{Problem:}& Modified Directed Post's Correspondence Problem (\pcpdir)
\\
    \hline
    \textsc{Input:} & A finite alphabet $\Sigma$, $n \in \Nat$,
    $u_1, \dotsc, u_n, v_1, \dotsc, v_n \in \Sigma^+$, $\length{u_1},
    \length{u_2} \leq 2$ and\\& $\length{v_{1}}, \length{v_{2}} \ge 3$.
\\
    \textsc{Question:} & Are there indices $1 \leq i_1, \dotsc, i_m \leq n$ \\&
    so that $u_{i_1} \dotsb u_{i_m} = v_{i_1} \dotsb v_{i_m}$,
    $i_{1} \in \set{1,2}$, $\length{u_{i_1} \dotsb u_{i_m}}$ is even and\\
    & for every $\length{u_{i_{1}}} < j < \length{u_{i_1} \dotsb
    u_{i_m}}$, if the $j$\textsuperscript{th} position of $v_{i_1}
    \dotsb v_{i_m}$ occurs in\\ & $v_{i_{k}}$ for some $k$, then the
    $j$\textsuperscript{th} position of $u_{i_1} \dotsb u_{i_m}$
    occurs in $u_{i_{k'}}$ for some\\ & $k' > k$?\\
    \hline
  \end{tabular}
\end{center}

\begin{lem} \label{lemma-undecidability-pcpdir}
  The \pcpdir{} problem is undecidable.
\end{lem}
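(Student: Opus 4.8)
The plan is to reduce the halting problem for Turing machines to \pcpdir. I would start from a deterministic Turing machine $M$ and the classical encoding of its computation histories as a (modified) Post correspondence instance \`a la Hopcroft--Ullman: a correspondence solution is forced to spell out an accepting run $\# C_0 \# C_1 \# \dotsb \# C_{\mathit{acc}} \#$, where the ``bottom'' word $v_{i_1}\dotsb v_{i_m}$ is always exactly one configuration block ahead of the ``top'' word $u_{i_1}\dotsb u_{i_m}$. The crucial observation is that this standard reduction is \emph{already directed}: every strict intermediate prefix satisfies $\length{u_{i_1}\dotsb u_{i_j}} < \length{v_{i_1}\dotsb v_{i_j}}$, the two sides coinciding only at the very end. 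Read position by position, this is exactly the condition in \pcpdir, namely that whenever position $j$ of the bottom word lies in block $v_{i_k}$, position $j$ of the top word lies in a later block $u_{i_{k'}}$ with $k' > k$. Hence directedness comes for free from the source encoding, and $M$ halts on the empty input if and only if the encoded instance has a directed solution.

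It remains to coerce this instance into the precise shape demanded by \pcpdir. First, the ``modified'' flavour is inherited: the computation-history encoding has a designated start tile $(u_1, v_1)$ with $u_1 = \#$ and $v_1 = \# C_0 \#$, so on the empty input $\length{u_1} = 1$ and $\length{v_1} = 3$, already meeting the start-tile length bounds. To obtain the alternative $i_1 \in \set{1,2}$ I would add a second start tile (tile $2$) as a harmless variant of the first, so that the constraint on the initial index is satisfiable without altering which runs are encoded. To force $\length{u_{i_1}\dotsb u_{i_m}}$ to be even, I would pre-encode every alphabet symbol $a$ by a fixed two-symbol block; this preserves the existence of a solution, makes every tile word even in length, and therefore makes the common matched word even in every solution. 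Since doubling multiplies all lengths by two, it keeps the start tiles inside the windows $\length{u_1}, \length{u_2} \le 2$ and $\length{v_1}, \length{v_2} \ge 3$, and it preserves strict directedness.

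All these transformations are computable in polynomial time and, composed with the undecidable halting problem, yield the undecidability of \pcpdir. The main obstacle is not any single gadget but the simultaneous bookkeeping: one must verify both \emph{completeness} (each accepting run of $M$ gives rise to a correspondence solution that respects the even-length constraint, the allowed start indices, and the strict directedness on the open position interval $\length{u_{i_1}} < j < \length{u_{i_1}\dotsb u_{i_m}}$) and \emph{soundness} (every solution of the transformed instance projects back onto a genuine accepting run, with no spurious matches created by the symbol-doubling or the auxiliary start tile). The delicate interaction to check is between the parity-fixing re-encoding, which rescales all lengths, and the exact start-tile bounds; I would choose the doubling and the start gadget together so that the two admissible start tiles land precisely in the prescribed length windows while the bottom side is guaranteed to lead from the outset.
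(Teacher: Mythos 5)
Your overall strategy --- start from the Hopcroft--Ullman encoding of Turing machine computations as a modified PCP instance and observe that this encoding is already directed --- is exactly the route the paper takes. But your device for forcing even length is broken, and this is a genuine gap. Recall that in any Hopcroft--Ullman solution the \emph{last} position necessarily violates directedness: it lies in the final block $v_{i_m}$ of the bottom word and in the final block $u_{i_m}$ of the top word, so $k' = k$ there; this is precisely why \pcpdir{} requires directedness only on the open interval $\length{u_{i_1}} < j < \length{u_{i_1} \dotsb u_{i_m}}$, exempting the single last position. Now double every symbol, and let $2L$ be the length of the resulting matched word. The old last position becomes two positions, $2L-1$ and $2L$, both lying in the final blocks of both words; the exemption covers only $j = 2L$, so directedness fails at $j = 2L-1$, which is strictly inside the required interval. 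Consequently the doubled instance has \emph{no} directed solution even when the machine halts, and the completeness direction of your reduction collapses. The paper's parity trick avoids this by acting only at the start, where the exemption is generous (all positions of $u_{i_1}$ are exempt): it adds a second start tile $u_2 = \$u_1$, $v_2 = \$v_1$ for a fresh symbol $\$$, so that the two admissible starts yield matched words whose lengths differ by exactly one; whichever parity is needed, one of the two starts provides it, and the tail of the solution --- hence the directedness analysis --- is untouched.

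A secondary flaw: you justify directedness by claiming that strict prefix inequality $\length{u_{i_1} \dotsb u_{i_j}} < \length{v_{i_1} \dotsb v_{i_j}}$, ``read position by position, is exactly'' the directedness condition. It is not. Directedness for the positions lying in bottom block $k$ amounts to $\length{u_{i_1} \dotsb u_{i_k}} \leq \length{v_{i_1} \dotsb v_{i_{k-1}}}$, which is strictly stronger than prefix inequality: take all top blocks of length $3$ and all bottom blocks of length $4$; the prefixes stay strictly ordered, yet the early positions of each bottom block sit in the top block with the \emph{same} index. The correct argument, which the paper spells out, uses the configuration-level invariant you mention in passing: after $k$ tiles the bottom encodes one configuration more than the top, a configuration takes at least $4$ letters, and every $u$-tile has at most $2$ or $3$ letters, so no single tile can let the top catch up across a configuration boundary. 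The first flaw is fatal as stated and requires replacing the doubling by a start-local parity gadget such as the paper's; the second is repairable by invoking the configuration argument instead of prefix lengths.
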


This is a corollary of the undecidability proof for PCP as 
 done in~\cite{hopcroft:ullman}.
By reducing \pcpdir{} to satisfiability for $\mainlogic_{vec}$, we get undecidability. 

\makeatletter{}
\begin{proof}  
In~\cite{hopcroft:ullman}, 
the halting problem for Turing machines is first
  reduced to a variation of PCP called modified PCP. In
  modified PCP, a requirement is that the solution should begin with
  the pair $u_{1}, v_{1}$ (this requirement can be easily eliminated
  by encoding it into the standard PCP, but here we find it convenient
  to work in the presence of a generalisation of this requirement). To
  ensure that there is a solution of even length whenever there is a
  solution, we let $u_{2} = \$u_{1}$ and $v_{2} = \$v_{1}$, where $\$$
  is a new symbol. Now $u_{1} u_{i_{2}} \dotsb u_{i_{m}} = v_{1}
  v_{i_{2}} \dotsb v_{i_{m}}$ is a solution iff $u_{2} u_{i_{2}}
  \dotsb u_{i_{m}} = v_{2} v_{i_{2}} \dotsb v_{i_{m}}$ is a solution.
  In modified PCP, $\length{u_{1}} = 1$ and $\length{v_{1}} \ge 3$.
  Hence, $\length{ u_{2}} = 2$ and $\length{ v_{2}} \ge 3$. The
  resulting set of strings $u_{1}, \ldots, u_{n}, v_{1}, \ldots,
  v_{n}$ make up our instance of \pcpdir{}.

  In the encoding of the halting problem from the cited 
  work,
            $\length{u_{i}} \le 2$ 
  for any $i$ between $1$ and $n$ (this
  continues to hold even after we add $\$$ to the first pair as
  above). A close examination of the proof in the cited 
    work reveals
  that if the modified PCP instance  $u_1, \dotsc, u_n, v_1, \dotsc,
  v_n \in \Sigma^*$ has a solution $u_{i_1} \dotsb u_{i_m} = v_{i_1}
  \dotsb v_{i_m}$, then for every $\length{u_{i_{1}}} < j <
  \length{u_{i_1} \dotsb u_{i_m}}$, if the $j$\textsuperscript{th}
  position of $v_{i_1} \dotsb v_{i_m}$ occurs in $v_{i_{k}}$ for some
  $k$, then the $j$\textsuperscript{th} position of $u_{i_1} \dotsb
  u_{i_m}$ occurs in $u_{i_{k'}}$ for some $k' > k$ (we call this the
  directedness property). In short, the reason for this is that for
  any $k \in \Nat$, if $v_{i_{1}} \dotsb v_{i_{k}}$ encodes the first
  $\ell+1$ consecutive configurations of a Turing machine, then
  $u_{i_{1}} \dotsb u_{i_{k}}$ encodes the first $\ell$
  configurations. Hence, for any letter in $v_{i_{k+1}}$ (which starts
  encoding $(\ell + 2)$\textsuperscript{th} configuration), the
  corresponding letter cannot occur in $u_{i_1} \dotsb u_{i_k+1}$
  (unless the single string $u_{i_{k+1}}$ encodes the entire
  $(\ell+1)$\textsuperscript{st} configuration and starts encoding
  $(\ell + 2)$\textsuperscript{th} configuration; this however is not
  possible since there are at most $3$ letters in $u_{i_{k+1}}$ and
  encoding a configuration requires at least $4$ letters). After the
  last configuration has been encoded, the length of $u_{i_{1}} \dotsb
  u_{i_{k}}$ starts catching up with the length of $v_{i_{1}} \dotsb
  v_{i_{k}}$ with the help of pairs $(u,v)$ where $\length{u} = 2$ and
  $\length{v} = 1$. However, as long as the lengths do not catch up,
  the directedness property continues to hold. As soon as the lengths
  do catch up, it is a solution to the PCP. Only the positions of
  $u_{i_{1}}$ and the last position of the solution violate the
  directedness property.
\end{proof}

Given an instance $\apcpinst$ of \pcpdir{}, we construct an $\dcltltup(\mynext,\until)$ formula $\aformula_{\apcpinst}$ such that $\apcpinst$ has
a solution iff $\aformula_{\apcpinst}$ is satisfiable.  To do so, we adapt a proof technique
from~\cite{BDMSS06:journal,Kara&Schwentick&Zeume10} but we need to provide substantial
changes in order to fit our logic. Moreover, none of the results in~\cite{BDMSS06:journal,Kara&Schwentick&Zeume10}
allow to derive our main undecidability result since we use neither
past-time temporal operators nor past obligations of the form 
$\poblieq{\pair{\avariable}{\avariable'}}{\pair{\avariablebis}{\avariablebis'}}{\aformulater}$.

Let $\bar
\aalphabet = \set{\bar \aletter \mid \aletter \in \aalphabet}$ be a
disjoint copy of $\aalphabet$. For convenience, we assume that each
position of a $\mainlogic$ model is labelled by a letter from
$\aalphabet \cup \bar \aalphabet$ (these labels can be easily encoded
as equivalence classes of some extra variables). For such a model
$\amodel$, let $\amodel_{\aalphabet}$ (resp.  $\amodel_{\bar
\aalphabet}$) be the model obtained by restricting $\amodel$ to
positions labelled by $\aalphabet$ (resp. $\bar \aalphabet$). If
$u_{i_1} \dotsb u_{i_m} = v_{i_1} \dotsb v_{i_m}$ is a solution to
$\apcpinst$, the idea is to construct a $\mainlogic$ model whose
projection to $\aalphabet \cup \bar\aalphabet$ is $u_{i_1}
\overline{v_{i_1}} \dotsb u_{i_m} \overline{v_{i_m}}$. To check that
such a model $\amodel$ actually represents a solution, we will write
$\dcltltup(\mynext,\until)$ formulas to say that ``for all $j$, if the
$j$\textsuperscript{th} position of $\amodel_{\aalphabet}$ is labelled
by $\aletter$, then the $j$\textsuperscript{th} position of
$\amodel_{\bar \aalphabet}$ is labelled by $\bar \aletter$''.

The main difficulty is to get a handle on the $j$\textsuperscript{th}
position of $\amodel_{\aalphabet}$. The difficulty arises since the
$\mainlogic$ model $\amodel$ is an interleaving of
$\amodel_{\aalphabet}$ and $\amodel_{\bar \aalphabet}$. This is
handled by using two variables $\avariable$ and $\avariablebis$.
Positions $1$ and $2$ of $\amodel_{\aalphabet}$ will have the same
value of $\avariable$, positions $2$ and $3$ will have the same value
of $\avariablebis$, positions $3$ and $4$ will have the same value of
$\avariable$ and so on.  Generalising, odd positions of
$\amodel_{\aalphabet}$ will have the same value of $\avariable$ as in
the next position of $\amodel_{\aalphabet}$ and the same value of
$\avariablebis$ as in the previous position of $\amodel_{\aalphabet}$.
Even positions of $\amodel_{\aalphabet}$ will have the same value of
$\avariable$ as in the previous position of $\amodel_{\aalphabet}$ and
the same value of $\avariablebis$ as in the next position of
$\amodel_{\aalphabet}$. 
These constraints allow us to chain the positions of $\amodel$
corresponding to $\amodel_{\aalphabet}$. 
To easily identify odd and even positions, an
additional label is introduced at each position, which can be $\odd$
or $\even$. 
These labels encoding the parity status of the positions in $\amodel_{\aalphabet}$
will be helpful to write formulae.
The sequence $\amodel_{\aalphabet}$ looks as follows.
\begin{align}\label{eq:dv-chaind}
  \left[
  \begin{array}{c}
    \avariable \\  \text{Odd/Even} \\ \text{Letter} \\\avariablebis
  \end{array}
  \right]:
  \vertuple{\adatum_{1}}{\oddin}{\aletter_{1}}{\adatum_{1}'}
  \vertuple{\adatum_{1}}{\even}{\aletter_{2}}{\adatum_{2}'}
  \vertuple{\adatum_{3}}{\odd}{\aletter_{3}}{\adatum_{2}'}
  \vertuple{\adatum_{3}}{\even}{\aletter_{4}}{\adatum_{4}'}
  \cdots
  \vertuple{\adatum_{m-1}}{\odd}{\aletter_{m-1}}{\adatum_{m-2}'}
  \vertuple{\adatum_{m-1}}{\evenen}{\aletter_{m}}{\adatum_{m}'} \tag{$\star$}
\end{align}
The $\adatum_i$'s and $\adatum'_i$'s are data values for the variables $\avariable$ and $\avariablebis$ respectively.
Each label is actually a pair in $\aalphabet \times \set{\odd, \oddin, \even, \evenen}$
to record the letter from $\aalphabet$ and the parity status of the position.
The letter $\odd$ identifies an odd position (a sequence starts here from the first position)
and the first position is identified by the special letter $\oddin$. Similarly, 
the letter $\even$ identifies an even position
and the last position is identified by the special letter $\evenen$. 
Similarly, in order to define the sequence $\amodel_{\bar{\aalphabet}}$, we consider
the letters $\bar{\odd}$, $\bar{\even}$, $\bar{\oddin}$ and $\bar{\evenen}$ with analogous intentions. 
So, by way of example, each position of $\amodel_{\aalphabet}$ is labelled by two letters: a letter in $\aalphabet$
and a letter specifying the parity of the position. 

We assume that the atomic formula $\aletter$  ($\aletter\in \aalphabet$)
is true at some position $j$ of a model $\amodel$ iff the position $j$
is labelled by the letter $\aletter$. We denote $\bigvee_{\aletter \in
\aalphabet} \aletter$ by $\aalphabet$. For a word $\aword \in
\aalphabet^{*}$, we denote by $\aformula_{\aword}^{i}$ the
\mainlogic{} formula that ensures that starting from $i$ positions to
the right of the current position, the next $\length{\aword}$
positions are labelled by the respective letters of $\aword$ (e.g.,
$\aformula_{\aletter \aletterbis \aletterbis}^{3} \eqdef
\mynext^{3} \aletter \land \mynext^{4} \aletterbis \land \mynext^{5}
\aletterbis$). We enforce a model of the form 
($\star$)
above through the following formulas.
\begin{enumerate}
  \item \label{cond:instanced} The projection of the labeling of
    $\sigma$ on to $\Sigma \cup \bar \Sigma$ is in $(u_{1}
    \overline{v_{1}} + u_{2}\overline{v_{2}}) \set{ u_i \bar v_i
    \mid i \in \interval{1}{n}}^*$: to facilitate writing this condition in
    \mainlogic{}, we introduce two new variables $\avariablebp,
    \avariableep$ such that at any position, they have the same value
    only if that position is not the starting point of some pair $u_{i}
    \overline{v_{i}}$.
    \begin{align*}
      &\avariablebp \noblieqlocal \avariableep \land \bigvee_{i\in
      \set{1,2}}
      (\aformula_{u_{i}}^{0}\land
      \aformula_{\overline{v_{i}}}^{\length{u_{i}} + 1} \land
      (\mynext^{\length{u_{i} v_{i}} + 1} \top \Rightarrow
      \mynext^{\length{u_{i} v_{i}} + 1} \avariablebp \noblieqlocal
      \avariableep)) \\ & \land \always
      \left( \avariablebp \oblieqlocal \avariableep \lor \bigvee_{i
      \in [1,n]}(\aformula_{u_{i}}^{0}\land
      \aformula_{\overline{v_{i}}}^{\length{u_{i}} + 1}) \land
      (\mynext^{\length{u_{i} v_{i}} + 1} \top \Rightarrow
      \mynext^{\length{u_{i} v_{i}} + 1} \avariablebp \noblieqlocal
      \avariableep)\right).
    \end{align*}
  \item\label{cond:chaind}
    \begin{enumerate}
      \item \label{cond:atmost2d} For every data value $\adatum$, there are
	at most two positions $i,j$ in $\sigma_\Sigma$  with
	$\sigma_\Sigma(i)(\avariable) =
	\sigma_\Sigma(j)(\avariable)= \adatum$ or $\sigma_\Sigma(i)
	(\avariablebis) = \sigma_\Sigma(j)(\avariablebis)= \adatum$.
	\begin{align*}
	  &\always \left( \aalphabet
	  \Rightarrow
	  (\neg \oblieq{\avariable}{\avariable}{\aalphabet}) \lor
	  (\oblieq{\avariable}{\avariable}{\aalphabet \land
	  (\neg \oblieq{\avariable}{\avariable}{\aalphabet}}))
	  \right)\\ \land & \always \left( \aalphabet
	  \Rightarrow
	  (\neg \oblieq{\avariablebis}{\avariablebis}{\aalphabet}) \lor
	  (\oblieq{\avariablebis}{\avariablebis}{\aalphabet \land
	  (\neg \oblieq{\avariablebis}{\avariablebis}{\aalphabet}}))
	  \right).
	\end{align*}
	The same condition for $\sigma_{\bar\Sigma}$, enforced with a
	formula similar to the one above.
      \item\label{cond:proj:LRd}  $\amodel_{\aalphabet}$ projected
	onto $\set{\odd,\oddin,	\even, \evenen}$ is in $\oddin
	(\even\odd)^{*} \evenen$.
	\begin{align*}
	  &\oddin \land \always \left( (\odd \lor \oddin) \Rightarrow
	  \bar \aalphabet \until (\aalphabet \land (\even \lor
	  \evenen)) \right) \land \always \left( \even \Rightarrow
	  \bar \aalphabet \until (\aalphabet \land \odd) \right)\\ &
	  \land \always (\evenen \Rightarrow \lnot \mynext \sometimes
	  \aalphabet).
	\end{align*}
        Note that $\oddin$ [resp. $\evenen$] is useful to identify the first odd 
        position [resp. last even position]. 
	$\amodel_{\bar \aalphabet}$ projected onto $\set{\bar \odd,
	\overline{\oddin}, \bar \even, \overline{\evenen}}$ is in
	$\overline{\oddin}(\bar \even \bar \odd)^{*}
	\overline{\evenen}$.
	\begin{align*}
	  &\aalphabet \until  \overline{\oddin} \land \always \left(
	  (\bar \odd \lor \overline{\oddin}) \Rightarrow
	  \aalphabet \until (\bar \aalphabet \land (\bar \even \lor
	  \overline{\evenen}))
	  \right) \land \always \left( \bar \even
	  \Rightarrow \aalphabet \until (\bar \aalphabet \land \bar \odd)
	  \right)\\ & \land \always (\overline{\evenen} \Rightarrow
	  \lnot \mynext \top). 
	\end{align*}
      \item \label{cond:LtoRd} For every position $i$ of
	$\amodel_{\aalphabet}$ labelled by $\odd$ or $\oddin$, there
	exists a future position $j > i$ of $\amodel_{\aalphabet}$
	labelled by $\even$ or $\evenen$
	so that $\sigma_{\aalphabet}(i) (\avariable) =
	\sigma_{\aalphabet}(j) (\avariable)$.
	\begin{align*}
	  \always \left( (\odd \lor \oddin)
	  \Rightarrow \oblieq{\avariable}{\avariable}{\aalphabet \land
	  (\even \lor \evenen)} \right).
	\end{align*}
        For every position $i$ of
	$\amodel_{\bar \aalphabet}$ labelled by $\bar \odd$ or
	$\overline{\oddin}$, there
	exists a future position $j > i$ of $\amodel_{\bar \aalphabet}$
	labelled by $\bar \even$ or $\overline {\evenen}$
	so that $\sigma_{\bar \aalphabet}(i) (\avariable) =
	\sigma_{\bar \aalphabet}(j) (\avariable)$ (enforced with a
	formula similar to the one above).
      \item \label{cond:RtoLd} For every position $i$  of
	$\amodel_{\aalphabet}$  labelled by $\even$, there exists a
	future position $j > i$ of $\amodel_{\aalphabet}$ labelled by
	$\odd$ so that $\sigma_{\aalphabet}(i) (\avariablebis) =
	\sigma_{\aalphabet}(j) (\avariablebis)$.
	\begin{align*}
	  \always \left( \even \Rightarrow
	  \oblieq{\avariablebis}{\avariablebis}{\aalphabet \land \odd} \right).
	\end{align*}
        For every position $i$  of
	$\amodel_{\bar \aalphabet}$  labelled by $\bar\even$, there exists a
	future position $j > i$ of $\amodel_{\bar \aalphabet}$ labelled by
	$\bar \odd$ so that $\sigma_{\bar \aalphabet}(i) (\avariablebis) =
	\sigma_{\bar \aalphabet}(j) (\avariablebis)$ (enforced with a
	formula similar to the one above).
    \end{enumerate}
  \item\label{cond:endpointsd} For any position $i$ of $u_{i_{1}}$,
    the corresponding position $j$ in $\overline{v_{i_{1}}} \dotsb
    \overline{v_{i_{m}}}$ (which always happens to be in $v_{i_{1}}$, since $\length{u_{i_{1}}} \le 2$ and
    $\length{v_{i_{1}}} \ge 3$) should satisfy $\sigma(i)(\avariable)
    = \sigma(j)(\avariable)$ and $\sigma(i)(\avariablebis) =
    \sigma(j)(\avariablebis)$. In addition, position $i$ is labelled
    with $\aletter \in \aalphabet$ iff position $j$ is labelled with
    $\bar \aletter \in \bar \aalphabet$.
      \begin{align*}
	& \left(\bigvee_{\aletter \in \aalphabet} (\oddin \land \aletter)
	\Rightarrow
	\oblieq{\langle \avariable, \avariablebis \rangle}{\langle
	\avariable, \avariablebis \rangle}{\overline{\oddin} \land
	\overline{\aletter}} \right)\\ & \land 
        \mynext \aalphabet \Rightarrow
	\bigvee_{\aletter \in\aalphabet}
        \mynext \aletter \wedge \mynext^3 \overline{\aletter} \wedge
        \mynext(\avariable \oblieqlocal \mynext^2 \avariable \wedge 
        \avariablebis \oblieqlocal \mynext^2 \avariablebis).
                				      \end{align*}
    \item\label{cond:allpointsd}
      For any position $i$ of $\amodel$ with
      $2\length{u_{i_{1}}} < i < \length{u_{i_{1}} \dotsb u_{i_{m}}}$,
      if it is labeled with $\bar a \in \bar \Sigma$, there is a
      future position $j > i$ labeled with $a \in \Sigma$ such that
      $\sigma(i)(\avariable) = \sigma(j)(\avariable)$ and
      $\sigma(i)(\avariablebis) = \sigma(j)(\avariablebis)$. The
      following formula assumes that $\length{u_{i_{1}}} \le 2$, as it
      is in \pcpdir{}.
      \begin{align*}
	&\mynext \aalphabet \Rightarrow \aalphabet \until
	\left(\mynext^{2}\always \bigvee_{\bar \aletter \in \bar \aalphabet} \left(
	(\bar \aletter \land \lnot \overline{\evenen}) \Rightarrow
	(\avariable, \avariablebis) \oblieqlocal \langle \aletter?
	\rangle (\avariable, \avariablebis)\right)\right) \quad
	\text{(} \mynext \aalphabet
	\text{ is true when } \length{u_{i_{1}}} = 2 \text{)}\\
	& \land \mynext \bar \aalphabet \Rightarrow \aalphabet \until
	\left(\mynext\always \bigvee_{\bar \aletter \in \bar \aalphabet} \left(
	(\bar \aletter \land \lnot \overline{\evenen}) \Rightarrow
	(\avariable, \avariablebis) \oblieqlocal \langle \aletter?
	\rangle (\avariable, \avariablebis)\right)\right) \quad
	\text{(} \mynext \bar \aalphabet
	\text{ is true when } \length{u_{i_{1}}} = 1 \text{)}
      \end{align*}
    \item\label{cond:endpointsdfi} If $i$ and $j$ are the last
      positions labeled with $\aalphabet$ and $\bar \aalphabet$
      respectively, then $\sigma(i)(\avariable) =
      \sigma(j)(\avariable)$ and $\sigma(i)(\avariablebis) =
      \sigma(j)(\avariablebis)$. In addition, position $i$ is labelled
      with $\aletter \in \aalphabet$ iff position $j$ is labelled with
      $\bar \aletter \in \bar \aalphabet$.
      \begin{align*}
	\always \left(\bigvee_{\aletter \in \aalphabet} (\evenen \land
	\aletter) \Rightarrow \oblieq{\langle \avariable,
	\avariablebis \rangle}{\langle \avariable, \avariablebis
	\rangle}{\overline{\evenen} \land \overline{\aletter}} \right).
      \end{align*}
\end{enumerate}

\noindent Given an instance $\apcpinst$ of \pcpdir{}, the required formula
$\aformula_{\apcpinst}$ is the conjunction of all the formulas above.
\begin{lem}
  \label{lem:PcpSolFormSat}
  Given an instance $\apcpinst$ of \pcpdir{}, $\aformula_{\apcpinst}$
  is satisfiable iff $\apcpinst$ has a solution.
\end{lem}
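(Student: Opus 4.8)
The plan is to prove both directions of the equivalence, using the projection of a model onto $\aalphabet \cup \bar\aalphabet$ as the bridge between satisfying assignments and \pcpdir{} solutions. Concretely, I would show that any model of $\aformula_{\apcpinst}$ has a projection of the form $u_{i_1}\overline{v_{i_1}}\dotsb u_{i_m}\overline{v_{i_m}}$ encoding candidate indices $i_1, \dotsc, i_m$, and that the remaining conjuncts force $u_{i_1}\dotsb u_{i_m} = v_{i_1}\dotsb v_{i_m}$ together with the extra requirements of \pcpdir{}; conversely, every solution gives rise to such a model whose data values follow the chaining discipline of \eqref{eq:dv-chaind}.

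For the direction from solutions to satisfiability, suppose $i_1, \dotsc, i_m$ is a solution. First I would build the underlying word $u_{i_1}\overline{v_{i_1}}\dotsb u_{i_m}\overline{v_{i_m}}$, labelling each position with its letter from $\aalphabet \cup \bar\aalphabet$ and with the appropriate parity marker from $\set{\odd,\oddin,\even,\evenen}$ (and their barred copies), so that $\sigma_\aalphabet$ and $\sigma_{\bar\aalphabet}$ have the shape of \eqref{eq:dv-chaind}. The data values on $\avariable$ and $\avariablebis$ are then assigned exactly by the chaining: in $\sigma_\aalphabet$ the pairs of positions $(1,2),(3,4),\dots$ share a fresh $\avariable$-value and the pairs $(2,3),(4,5),\dots$ share a fresh $\avariablebis$-value, and symmetrically for $\sigma_{\bar\aalphabet}$; moreover the $j$\textsuperscript{th} positions of $\sigma_\aalphabet$ and $\sigma_{\bar\aalphabet}$ receive the same pair $\langle \avariable, \avariablebis \rangle$. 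Because $u_{i_1}\dotsb u_{i_m} = v_{i_1}\dotsb v_{i_m}$, equally-indexed positions carry complementary letters, so conditions~\eqref{cond:endpointsd}, \eqref{cond:allpointsd} and \eqref{cond:endpointsdfi} hold; the structural conjuncts \eqref{cond:instanced}--\eqref{cond:RtoLd} hold by construction. The only delicate point is \eqref{cond:allpointsd}: its \emph{future} obligation demands that the $a$-position matching a given $\bar a$-position lie to the right, and this is guaranteed precisely by the directedness property of \pcpdir{}, which ensures the corresponding $u$-occurrence appears in a later block than the matched $v$-occurrence.

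For the converse, I would start from a model $\sigma$ of $\aformula_{\apcpinst}$. Condition~\eqref{cond:instanced} fixes the projection to lie in $(u_1\overline{v_1}+u_2\overline{v_2})\set{u_i\overline{v_i} : i \in \interval{1}{n}}^*$, yielding indices $i_1 \in \set{1,2}$ and $i_2, \dotsc, i_m$. The heart of the argument is that \eqref{cond:atmost2d}--\eqref{cond:RtoLd} force the alternating chaining of \eqref{eq:dv-chaind}: the at-most-two-occurrences condition together with the obligations $\oblieq{\avariable}{\avariable}{\cdots}$ and $\oblieq{\avariablebis}{\avariablebis}{\cdots}$ pins each $\odd$ position to a unique $\even$ partner via $\avariable$ and each $\even$ position to a unique $\odd$ partner via $\avariablebis$, so that the $\avariable/\avariablebis$-values address positions consecutively; the same holds in $\sigma_{\bar\aalphabet}$. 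Using this chaining, \eqref{cond:endpointsd} anchors the alignment on the first block, and the pair-repetition tests of \eqref{cond:allpointsd} and \eqref{cond:endpointsdfi} propagate it: each pair $\langle \avariable, \avariablebis \rangle$ occurring in $\sigma_{\bar\aalphabet}$ matches the equally-indexed pair in $\sigma_\aalphabet$, forcing the two projections to be letter-by-letter complementary and of equal length. Reading off the common word then shows $u_{i_1}\dotsb u_{i_m} = v_{i_1}\dotsb v_{i_m}$, and the markers $\oddin/\evenen$ guarantee the length is even, so $i_1, \dotsc, i_m$ is a solution of \pcpdir{}.

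The hard part will be the forward direction, specifically verifying that the pair $\langle \avariable, \avariablebis \rangle$ used in \eqref{cond:allpointsd} and \eqref{cond:endpointsdfi} identifies positions by \emph{index} rather than accidentally matching an unrelated position that happens to share both coordinates. This is where \eqref{cond:atmost2d} is essential: it bounds the ambiguity of each single coordinate to two positions, and combined with the rigid parity pattern of \eqref{cond:proj:LRd} and the forward direction of the obligations, it leaves exactly one admissible match, namely the index-aligned one. I would also have to check carefully that using only future obligations (no past obligations $\poblieq{\avariable}{\avariablebis}{\aformula}$ and no past-time operators) is enough — which is exactly why the reduction is from \pcpdir{} rather than plain PCP, the directedness property being tailored so that every required letter comparison points to the right.
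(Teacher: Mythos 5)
Your solution-to-model construction and your use of directedness to discharge the future-only obligations are exactly what the paper does (it even dismisses that direction as routine), and your skeleton for the converse --- shape from \eqref{cond:instanced}, chaining from \eqref{cond:atmost2d}--\eqref{cond:RtoLd}, anchoring via \eqref{cond:endpointsd}, propagation via \eqref{cond:allpointsd} and \eqref{cond:endpointsdfi} --- is also the paper's. But there is a genuine gap at the crux of the converse: you claim that pinning each $\odd$ position to a \emph{unique} future $\even$ partner via $\avariable$ (and each $\even$ position to a unique future $\odd$ partner via $\avariablebis$) already implies that these values ``address positions consecutively''. That inference does not hold: existence and uniqueness of forward partners with the right parities are equally satisfied by a drifting matching such as $1 \to 4$, $3 \to 6$, $5 \to 8, \dots$ (via $\avariable$) together with $2 \to 3$, $4 \to 5, \dots$ (via $\avariablebis$). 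Such a matching is impossible only for a \emph{global} reason --- at the right end of the word the last $\odd$ position runs out of admissible partners --- and detecting this is precisely the content of the paper's directed-graph argument, which your sketch omits: build the graph on positions of $\amodel_{\aalphabet}$ with an edge $i \to j$ iff $i<j$ and the positions agree on $\avariable$ or on $\avariablebis$; use \eqref{cond:atmost2d} to bound in- and outdegree by one and \eqref{cond:proj:LRd}--\eqref{cond:RtoLd} to give every non-last position outdegree at least one; conclude by counting (the first position has indegree $0$) that every non-first position has indegree exactly one; and finally rule out any non-successor edge by taking the \emph{last} position $i$ whose edge jumps to some $j > i+1$ --- then $j-1$ also sends an edge to $j$, so $j$ has indegree $2$, a contradiction. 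Without this step, and the companion fact (proved separately in the paper from \eqref{cond:atmost2d}, \eqref{cond:LtoRd}, \eqref{cond:RtoLd}) that no two positions of $\amodel_{\aalphabet}$ agree on \emph{both} $\avariable$ and $\avariablebis$, your induction cannot get off the ground: its step needs to know that the $\avariable$-partner of position $i$ is exactly position $i+1$ and that a pair match is unambiguous.

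A smaller point: your last paragraph locates this difficulty in the ``forward direction'', but it belongs entirely to the model-to-solution direction. Conditions \eqref{cond:allpointsd} and \eqref{cond:endpointsdfi} are existential, so when verifying them on the model built from a solution, accidental coincidences of pairs are harmless; the only delicate point in that direction is the one you already identified, namely that the directedness property of \pcpdir{} makes every required match point to the right.
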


\begin{proof}
  Suppose $u_{i_1} \dotsb u_{i_m} = v_{i_1} \dotsb v_{i_m}$ is a
  solution of $\apcpinst$. It is routine to check that, with this
  solution, a model satisfying $\aformula_{\apcpinst}$ can be built.

  Now suppose that $\aformula_{\apcpinst}$ has a satisfying model
  $\amodel$. From condition \eqref{cond:instanced}, we get a sequence
  $u_{i_1}\overline{v_{i_{1}}} \dotsb u_{i_m} \overline{v_{i_{m}}}$.
  It is left to prove that $u_{i_{1}} \dotsb u_{i_{m}} = v_{i_{1}}
  \dotsb v_{i_{m}}$.

  Let $\amodel_{\aalphabet}$ (resp. $\amodel_{\bar \aalphabet}$) be
  the model obtained from $\amodel$ by restricting it to positions
  labeled with $\aalphabet$ (resp. $\bar \aalphabet$). Construct a
  directed graph whose vertices are the positions of
  $\amodel_{\aalphabet}$ and there is an edge from $i$ to $j$ iff $i <
  j$ and $\amodel_{\aalphabet} ( i ) ( \avariable) =
  \amodel_{\aalphabet} ( j ) (\avariable)$ or $\amodel_{\aalphabet} (
  i ) ( \avariablebis) = \amodel_{\aalphabet} ( j ) (\avariablebis)$.
  We claim that the set of edges of this directed graph represents the
  successor relation induced on $\amodel_{\aalphabet}$ by $\amodel$.
  To prove this claim, we first show that all positions have indegree
  $1$ (except the first one, which has indegree $0$) and that all
  positions have outdegree $1$ (except the last one, which has outdegree $0$). Indeed, condition~\eqref{cond:atmost2d} 
  ensures that
  for any position, both the indegree and outdegree are at most $1$.
  From conditions~\eqref{cond:proj:LRd},~\eqref{cond:LtoRd} and~\eqref{cond:RtoLd}, each position (except the last one) has 
  outdegree at least $1$. Hence, all positions except the last one have
  outdegree exactly $1$. By the definition of the set of edges, the
  last position has outdegree $0$. If more than one position has indegree $0$, it will force some 
  other position to have indegree more
  than $1$, which is not possible. By the definition of the set of
  edges, the first position has indegree $0$ and hence, all other
  positions have indegree exactly $1$. To finish proving the claim
  (that the set of edges of our directed graph represents the
  successor relation induced on $\amodel_{\aalphabet}$ by $\amodel$),
  we will now prove that at any position of $\amodel_{\aalphabet}$
  except the last one, the outgoing edge goes to the successor
  position in $\amodel_{\aalphabet}$. If this is not the case, let
  $i$ be the last position where this condition is violated. The
  outgoing edge from $i$ then goes to some position $j > i + 1$. Since
  the outgoing edges from each position between $i + 1$ and $j - 1$
  go to the respective successors, position $j$ will have indegree
  $2$, which is not possible.

  Next we will prove that there cannot be two positions of
  $\amodel_{\aalphabet}$ with the same value for variables
  $\avariable$ and $\avariablebis$. Suppose there were two such
  positions and the first one is labeled $\odd$ (the argument for
  $\even$ is similar). If the second position is also labeled
  $\odd$, then by condition \eqref{cond:LtoRd}, there is
  at least one position labeled $\even$ with the same value for
  variable $\avariable$, so there are three positions with the same
  value for variable $\avariable$, violating condition
  \eqref{cond:atmost2d}.  Hence, the second position must be labelled
  $\even$. Then by condition \eqref{cond:RtoLd}, there is a position
  after the second position with the same value for variable
  $\avariablebis$. This implies there are three positions with the
  same value for variable $\avariablebis$, again violating condition
  \eqref{cond:atmost2d}.  Therefore, there cannot be two positions of
  $\amodel_{\aalphabet}$ with the same value for variables
  $\avariable$ and $\avariablebis$.

  Finally, we prove that for every position $i$ of $\amodel_{\bar
  \aalphabet}$, if $\overline{\aletter_{i}}$ is its label, then the
  unique position in $\amodel_{\aalphabet}$ with the same value of
  $\avariable$ and $\avariablebis$ is position $i$ of
  $\amodel_{\aalphabet}$ and carries the label $\aletter_{i}$. For $1
  \le i \le \length{u_{i_{1}}}$, this follows from condition
  \eqref{cond:endpointsd}. For $i = \length{u_{i_{1}} \dotsb
  u_{i_{m}}}$, this follows from condition \eqref{cond:endpointsdfi}.
  The rest of the proof is by induction on $i$. The base case is
  already proved since $\length{u_{i_{1}}} \ge 1$. For the induction
  step, assume the result is true for all positions of $\amodel_{\bar
  \aalphabet}$ up to position $i$. Suppose position $i$ of
  $\amodel_{\bar \aalphabet}$ is labeled by $\bar \odd$ (the case of
  $\bar \even$ is symmetric). Then by the induction hypothesis and~\eqref{cond:proj:LRd}, position $i$ of 
  $\amodel_{\aalphabet}$ is
  labeled by $\odd$ (or $\oddin$, if $i=1$).  By condition~\eqref{cond:LtoRd} and the definition of edges in the 
  directed graph
  that we built, position $i + 1$ of $\amodel_{\bar \aalphabet}$
  (resp. $\amodel_{\aalphabet}$) has same value of $\avariable$ as
  that of position $i$.  We know from the previous paragraph that
  there is exactly one position of $\amodel_{\aalphabet}$ with the
  same value of $\avariable$ and $\avariablebis$ as that of position
  $i + 1$ in $\amodel_{\bar \aalphabet}$. This position in
  $\amodel_{\aalphabet}$ cannot be $i$ or before due to induction
  hypothesis. If it is not $i+1$ either, then there will be three
  positions of $\amodel_{ \aalphabet}$ with the same value of
  $\avariable$, which violates condition \eqref{cond:atmost2d}. Hence,
  the position of $\amodel_{\aalphabet}$ with the same value of
  $\avariable$ and $\avariablebis$ as that of position $i + 1$ in
  $\amodel_{\bar \aalphabet}$ is indeed $i + 1$ and it carries the
  label $\aletter_{i}$ by condition \eqref{cond:allpointsd}.
\end{proof}
As a conclusion,  the satisfiability problem for $\mainlogic_{vec}(\mynext,\until)$ is undecidable.

\ifLONG
The reduction from \SAT{$\mainlogic$} to \SAT{$\mainlogic^{\top}$} can be easily adapted to lead to
a reduction from \SAT{$\mainlogic_{vec}$}   to  \SAT{$\mainlogic_{vec}^{\top}$}, 
whence \SAT{$\mainlogic_{vec}^{\top}$} is undecidable too. 
\else
Furthermore, our proof can be adapted to show that \SAT{$\mainlogic_{vec}^{\top}$}
is also undecidable. 
\fi

\makeatletter{}\section{Implications for Logics on Data Words}
\label{section-implications}
A data word is an element of $(\Sigma \times \D)^*$, where $\Sigma$ is a finite alphabet and $\D$ is an infinite domain. We focus here on first-order logic with two variables, and on a temporal logic.

\subsection{Two-variable Logics}
We study a fragment of \emsot{} on data words, and we show that it has a satisfiability problem in 3\expspace, as a consequence of our results on the satisfiability for \mainlogic. The satisfiability problem for \emsot{} is known to be decidable, equivalent to the reachability problem for VASS \cite{BDMSS06:journal}, with no known primitive-recursive algorithm. Here we show a large fragment with elementary complexity.

Consider the fragment of \emsot{} ---that is, first-order logic with two variables, with a prefix of 
existential quantification over monadic relations--- where all formulas are of the form $\exists X_1, 
\dotsc, X_n ~ \varphi$ with
\begin{align*}
  \varphi &\coloneqq \textit{atom} \mid \lnot \varphi \mid 
                \varphi \land \varphi \mid \varphi \lor \varphi \mid \\
&~~~~~\,\exists x \exists y ~
\varphi \mid \forall x \forall y ~ \varphi \mid \forall x \exists y ~ (x \leq y ~ \land ~
  \varphi), \text{ where}\\
   \textit{atom} &\coloneqq \zeta = \zeta' \mid \zeta \neq \zeta' \mid
   \zeta \sim \zeta' \mid \zeta < \zeta' \mid \zeta \leq \zeta' \mid\\
&~~~~~
   +1(\zeta,\zeta') \mid X_i(\zeta) \mid a(\zeta)
 \end{align*}
for any $a \in \Sigma$, $i \in \interval{1}{n}$, 
and $\zeta, \zeta' \in \set{x,y}$. 
The relation $x < y $ tests that the position $y$ appears after $x$ in
the word; $+1(x,y)$ tests that $y$ is the next position to $x$; and $x
\sim y$ tests that positions $x$ and $y$ have the same data value.
We call this fragment \femsot. In fact, \femsot{} captures \emsotw{} (i.e., all regular languages on the finite labeling of the data word).\footnote{Indeed, note that one can easily
 test whether a word is accepted by a finite automaton with the help of a monadic relation $X_{\textit{fst}}$ that holds only at the first position. Further, the property of $X_{\textit{fst}}$ can be expressed in \femsot{} as $(\exists y ~.~ X_{\textit{fst}}(y)) \land (\forall x \forall y ~.~ x<y \rightarrow \lnot X_{\textit{fst}}(y))$.}
However, it seems to be strictly less expressive than \emsot, since \femsot{} does not appear to be able to express the property \emph{there are exactly two occurrences of every data value}, which can be easily expressed in \emsot{}. Yet, it can express \emph{there are at most two occurrences of every data value} (with $\exists X ~ \forall x \forall y .~ x \sim y \land x < y \rightarrow X(x) \land \lnot X(y)$), and \emph{there is exactly one occurrence of every data value}.
For the same reason, it would neither capture $\emsotraneq$.

By an exponential reduction into 
\SAT{$\mainlogic$},
we obtain that \femsot{} is decidable in elementary time.

\begin{prop}\label{prop:sat-forward-EMSOT-2EXPSPACE}
  The satisfiability problem for \femsot{} is in 3\expspace.
\end{prop}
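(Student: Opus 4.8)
The plan is to give a satisfiability-preserving translation from \femsot{} into \SAT{$\mainlogic$} that incurs only an \emph{exponential} blow-up in the size of the formula; combined with the \twoexpspace{} upper bound of Corollary~\ref{cor:mainlogic-twoexpspace}, this yields the $3\expspace$ bound, since running a $2^{2^{\mathrm{poly}(m)}}$-space procedure on an input of size $m=2^{\mathrm{poly}(n)}$ costs space $2^{2^{2^{\mathrm{poly}(n)}}}$.

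First I would reduce to a Scott-style normal form. Every quantifier block in the grammar binds both variables, hence is a sentence; processing the formula bottom-up and naming each closed subformula by a fresh monadic predicate, I obtain (after the prefix $\exists X_1\dots X_n$) a Boolean combination of blocks of the two shapes $\forall x\forall y\,\chi$ and $\forall x\exists y\,(x\le y\land\chi)$, with $\chi$ quantifier-free and the forward guard $x\le y$ preserved; the number of monadic predicates stays polynomial. All monadic predicates (original and auxiliary) and the alphabet letters are then folded into the labelling of the data word, which in $\mainlogic$ is encoded through equalities between dedicated variables as explained in the preliminaries; thus only polynomially many variables are used, and the single data attribute of the word is carried by one distinguished variable $\avariable$, so that $x\sim y$ becomes $\sigma(x)(\avariable)=\sigma(y)(\avariable)$.

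The heart of the reduction is the translation of the two blocks, carried out position-wise with $\always$. Fix a block $\forall x\exists y\,(x\le y\land\chi)$. Evaluating $\chi(x,y)$ only requires the unary type of $x$, the unary type of $y$, whether $y=x$, $y=x+1$ or $y>x+1$, and whether $x\sim y$. The cases $y=x$ (a unary condition) and $y=x+1$ (reachable by $\mynext$, with $x\sim y$ checked by a local equality $\avariable \oblieqlocal \mynext\avariable$) are immediate. For the genuine forward case I branch on the unary type $\tau$ of the current position $x$: once $\tau$ is fixed, substituting its truth values into $\chi$ leaves a Boolean combination $\theta^{\tau}_{=}(y)$ (resp.\ $\theta^{\tau}_{\neq}(y)$) of atoms about $y$ alone, of size linear in $|\chi|$. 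The existential requirement under data-equality is then exactly $\oblieq{\avariable}{\avariable}{\theta^{\tau}_{=}}$ and under data-inequality exactly $\oblineq{\avariable}{\avariable}{\theta^{\tau}_{\neq}}$; to isolate $y>x+1$ from the successor, I copy $x$'s value into position $x+1$ via a fresh variable $\mathtt z$ enforced by $\always(\mathtt z \oblieqlocal \mynext^{-1}\avariable)$ and fire the obligation one step ahead, e.g.\ $\mynext\,\oblieq{\mathtt z}{\avariable}{\theta^{\tau}_{=}}$. Schematically the block becomes
\[
\always\ \bigvee_{\tau}\Big(\mathrm{type}_\tau\ \land\ \big[\psi^{\tau}_{=}\ \lor\ \psi^{\tau}_{\mathrm{succ}}\ \lor\ \oblieq{\avariable}{\avariable}{\theta^{\tau}_{=}}\ \lor\ \oblineq{\avariable}{\avariable}{\theta^{\tau}_{\neq}}\big]\Big),
\]
where $\psi^\tau_{=}$, $\psi^\tau_{\mathrm{succ}}$ handle the $y=x$ and successor cases. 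The universal block $\forall x\forall y\,\chi$ is dual: since universal quantification over an unordered pair is symmetric I only need the forward direction, writing $\always\bigwedge_\tau(\mathrm{type}_\tau\Rightarrow\neg\exists y{>}x\,\neg\chi)$ and expressing $\exists y{>}x\,\neg\chi$ by the same obligation mechanism. In particular no past obligations arise, which is why $\mainlogic$ (rather than $\pmainlogic$) suffices, matching the \emph{forward} nature of the fragment. Each block yields a formula with an outer disjunction/conjunction over the exponentially many types $\tau$, so the whole $\mainlogic$ formula has exponential size in $|\varphi|$ while using only polynomially many variables.

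The main obstacle will be verifying the faithfulness of this block translation — in particular cleanly separating the order cases $y=x$, $y=x+1$, $y>x+1$ so that the obligations, which range over \emph{all} $y>x$, neither double-count the successor nor let a position witnessed ``as non-successor'' secretly be the successor; the auxiliary-variable copying trick above is exactly what resolves this, and establishing the correspondence between models of $\varphi$ and of its translation is the routine but delicate part. Once correctness is established, satisfiability is preserved, the target formula is of exponential size, and feeding it to the \twoexpspace{} algorithm of Corollary~\ref{cor:mainlogic-twoexpspace} places \SAT{\femsot} in $3\expspace$.
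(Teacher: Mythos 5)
Your proposal is correct and follows essentially the same route as the paper's proof: Scott normal form, monadic predicates folded into dedicated variables, a case split on $y=x$, $y=x+1$, $y>x+1$ crossed with data (in)equality, an auxiliary ``previous value'' variable to push obligations strictly beyond the successor (the paper uses the doubly-nested obligation $\oblieq{\avariable}{\avariable_{\textit{prev}}}{\oblieq{\avariable_{\textit{prev}}}{\avariable}{\cdot}}$ where you fire $\mynext\,\oblieq{\mathtt{z}}{\avariable}{\cdot}$ one step ahead --- the two are equivalent), and an exponential-size output handed to the \twoexpspace{} procedure of Corollary~\ref{cor:mainlogic-twoexpspace}. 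The one slip to repair: $\always(\mathtt{z} \oblieqlocal \mynext^{-1}\avariable)$ is false at position $0$, which has no predecessor, so this constraint as written is unsatisfiable and must be guarded as in the paper's \textit{prev} formula, e.g.\ $\always(\mynext\top \Rightarrow \avariable \oblieqlocal \mynext \mathtt{z})$.
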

\makeatletter{}\ifLONG
\begin{proof}
Through a standard translation we can bring any formula of $\femsot$ into a formula of the form
\[
\varphi = \exists X_1, \dotsc, X_n ~ \left( \forall x \forall y ~ \chi ~\land~ \bigwedge_k \forall x \exists y ~ (x \leq y \land \psi_k) \right)
\]
that preserves satisfiability, where $\chi$ and all $\psi_k$'s are 
quantifier-free formulas, and there are no tests for labels. Furthermore, this is a polynomial-time translation. This translation is just the Scott normal form of $\emsot$ \cite{BDMSS06:journal} adapted to $\femsot$, and can be done in the same way.

We now give an exponential-time translation $tr : \femsot \to \mainlogic$. For any formula $\varphi$ of \femsot, $tr(\varphi)$ is an equivalent (in the sense of satisfiability) $\mainlogic$ formula,
 whose satisfiability can be tested in $\twoexpspace$ (Corollary~\ref{cor:mainlogic-twoexpspace}). 
 This yields an upper bound of $3\expspace$ for $\femsot$.

The translation makes use of: a distinguished variable $\avariable$ that encodes the data values of any data word satisfying $\varphi$; variables $\avariable_0, \dotsc, \avariable_n$ that are used to encode the monadic relations $X_1, \dotsc, X_n$; and a variable $\avariable_{\textit{prev}}$ whose purpose will be explained later on.
We give now the translation.
To translate $\forall x \forall y ~ \chi$, we first bring the formula to a form 
\begin{align}\label{eq:normal-form-for-chi}
  \bigwedge_{m \in M} ~ \lnot \exists x \exists y
  ~ \big(~ x \leq y~\land~\chi_{m} ~\land~ \chi_{m}^x ~\land~
  \chi_{m}^y ~\big),
\end{align}
where $M\subseteq \N$, and every $\chi_{m}^x$ (resp.\ $\chi_{m}^y$) is a conjunction of (negations of) atoms of 
monadic relations on $x$ (resp.\ $y$); 
and $\chi_{m} = \mu \land \nu$  where $\mu \in \set{x{=}y, +1(x,y), \lnot (+1(x,y) \lor x {=} y)}$ and 
$\nu \in \set{x \sim y, \lnot (x \sim y)}$. 
\begin{clm}
  $\forall x \forall y ~ \chi$ can be translated into an equivalent formula of the 
   form~\eqref{eq:normal-form-for-chi} 
    in exponential time.
\end{clm}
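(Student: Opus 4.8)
The plan is to prove the claim by a finite case analysis on the relative order and data-equivalence of $x$ and $y$, followed by a disjunctive normal form computation on the leftover monadic literals. Recall that after the Scott normal form step $\chi$ is quantifier-free and contains no label tests, so its atoms are exactly the binary atoms relating $x$ and $y$ (from $=,\neq,<,\leq,+1,\sim$) together with the monadic atoms $X_i(x)$ and $X_i(y)$ for $i \in \interval{1}{n}$.

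First I would reduce to the situation $x \leq y$. Since $\forall x \forall y ~ \chi$ ranges over all ordered pairs, splitting on whether $x \leq y$ or $y < x$ and renaming the bound variables in the second case gives
\[
\forall x \forall y ~ \chi \equiv \lnot \exists x \exists y ~ (x \leq y \land \lnot\chi(x,y)) ~\land~ \lnot \exists x \exists y ~(x < y \land \lnot\chi(y,x)).
\]
Both conjuncts have the same shape, so it suffices to bring $\lnot \exists x \exists y ~(x \leq y \land \lnot\chi)$ (for $\chi$ in either variable order) into the required form and conjoin the two results.

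Next, under the hypothesis $x \leq y$, the truth value of every atom mentioning both $x$ and $y$ is determined by the \emph{$2$-type} of the pair, namely by a choice of $\mu \in \set{x = y,~ +1(x,y),~ \lnot(+1(x,y) \lor x = y)}$ (the order type, which fixes $x=y$, $x\neq y$, $x<y$, $x\leq y$, $+1(x,y)$, $+1(y,x)$, and all their negations) together with $\nu \in \set{x \sim y,~ \lnot(x\sim y)}$ (the data type, which fixes $x\sim y$); degenerate atoms such as $x=x$ or $x\sim x$ are constants. There are only $3 \times 2 = 6$ such $2$-types. For each fixed $(\mu,\nu)$ I would substitute the induced truth values into $\lnot\chi$, leaving a Boolean combination $\beta_{\mu,\nu}$ of the monadic atoms $\set{X_i(x), X_i(y)}_i$ alone. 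Putting $\beta_{\mu,\nu}$ into disjunctive normal form yields $\beta_{\mu,\nu} \equiv \bigvee_l (\alpha_l^x \land \alpha_l^y)$, where each monomial splits into its conjunction $\alpha_l^x$ of literals over $\set{X_i(x)}_i$ and its conjunction $\alpha_l^y$ over $\set{X_i(y)}_i$; this split is always possible because no monadic literal mentions both variables.

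Assembling these, $\exists x \exists y ~(x \leq y \land \lnot\chi)$ becomes $\bigvee_{\mu,\nu}\bigvee_l \exists x \exists y ~(x \leq y \land \mu \land \nu \land \alpha_l^x \land \alpha_l^y)$, and negating turns it into a conjunction of formulas $\lnot \exists x \exists y ~(x \leq y \land \chi_m \land \chi_m^x \land \chi_m^y)$ with $\chi_m := \mu \land \nu$, $\chi_m^x := \alpha_l^x$ and $\chi_m^y := \alpha_l^y$, which is exactly form~\eqref{eq:normal-form-for-chi}; conjoining the outputs for the two variable orders from the first step finishes the translation. For the complexity, there are constantly many $2$-types, and for each the DNF conversion of a formula over the $2n$ monadic atoms produces at most $2^{2n}$ monomials of size $\mathcal{O}(n)$, so the whole construction runs in exponential time. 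The only delicate points — the closest thing to an obstacle — are the bookkeeping in the symmetry reduction and the routine verification that under $x \leq y$ every binary atom is genuinely fixed by $\mu$ and $\nu$; the unavoidable exponential blow-up arises solely from the DNF step and matches the stated bound.
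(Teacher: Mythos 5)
Your proof is correct, and it reaches the normal form by a genuinely different organization than the paper's, even though the skeleton (negate, reduce to ordered pairs $x \leq y$, exponential normal-form conversion, split conjunctions into binary / $x$-monadic / $y$-monadic parts) is shared. The paper first computes a CNF of $\chi$ --- this is where its exponential blow-up occurs --- rewrites $\forall x \forall y\, \chi$ as a conjunction, over clauses, of formulas $\lnot \exists x \exists y\, \bigwedge_j \lnot \nu_{ij}$, handles the two relative orders of $x$ and $y$ by pairing each conjunction with its literal-swapped copy $\nu_{ij}^{x \leftrightarrow y}$ under the guard $x \leq y$, and finally splits each conjunction of literals into three parts. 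You instead perform the order split first (by renaming bound variables), then enumerate the six canonical $2$-types $(\mu,\nu)$, substitute the truth values they force on \emph{all} binary atoms into $\lnot\chi$, and push the exponential blow-up into a DNF computation over the purely monadic residue. The two are dual in spirit (a CNF of $\chi$ is a DNF of $\lnot\chi$), but your ordering buys something concrete: the paper's last step, asserting that each $\mu_{i,j}$ ``can be easily split'' so as to land in form~\eqref{eq:normal-form-for-chi}, is glossed, since the binary literals surviving in a clause (say $x \neq y$ or $\lnot(x<y)$) are not literally of the required shape $\mu \land \nu$ with $\mu \in \set{x{=}y,\, +1(x,y),\, \lnot(+1(x,y) \lor x{=}y)}$ and $\nu \in \set{x \sim y,\, \lnot(x \sim y)}$; an extra expansion of each conjunction into the consistent canonical $2$-types is still needed (it is exactly what the paper's worked example does silently). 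In your construction this canonicalization is built in from the start, every output conjunct is syntactically of the required form, and the inconsistent type $x{=}y \land \lnot(x\sim y)$ is harmless because its negated existential is vacuously true. The cost is the same on both sides: constantly many $2$-types times a $2^{2n}$-monomial DNF for you, versus one CNF conversion of $\chi$ for the paper, both exponential as claimed.
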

 \begin{proof}
As an example, if \[\chi = (x \neq y ~\Rightarrow~ \lnot (x \sim y) \lor X(x) \lor X(y)),\] then
the corresponding formula would be
\begin{align*}
\bigwedge_\mu \lnot\exists x \exists y ~ \big(x \leq y \land \mu \land x \sim y \land\lnot X(x) \land \lnot X(y)\big).
\end{align*}
for all $\mu \in \{+1(x,y), \lnot (+1(x,y) \lor x {=} y)\}$.
We can bring the formula into this normal form in exponential time. To
this end, we can first bring $\chi$ into CNF, $\chi = \bigwedge_{i \in
  I} \bigvee_{j \in J_i} \nu_{ij}$, where every $\nu_{ij}$ is an atom
or a negation of an atom. Then,
\begin{align*}
  \forall x \forall y ~ \chi ~~\equiv~~ \forall x \forall y ~\bigwedge_{i
    \in I} \bigvee_{j \in J_i} \nu_{ij} ~~\equiv~~ \bigwedge_{i \in I}
  \forall x \forall y ~ \bigvee_{j \in J_i} \nu_{ij} ~~\equiv~~
  \bigwedge_{i \in I} \lnot \exists x \exists y ~ \bigwedge_{j \in
    J_i} \lnot \nu_{ij}
\end{align*}
Let $\nu_{ij}^{x \leftrightarrow y}$ be $\nu_{ij}$ where $x$ and $y$
are swapped.  Note that $\exists x \exists y ~ \bigwedge_{j \in J_i}
\lnot \nu_{ij}$ is equivalent to $\exists x \exists y ~ \bigwedge_{j
  \in J_i} \lnot \nu_{ij}^{x \leftrightarrow y}$. Now for every $i \in
I$, let
\begin{align*}
  \mu_{i,1} &= x \leq y \land
  \bigwedge_{j \in J_i} \lnot \nu_{ij} &
 \mu_{i,2} &= x \leq y \land \bigwedge_{j \in J_i} \lnot
  \nu_{ij}^{x \leftrightarrow y}
.
\end{align*}
Note that $\exists x \exists y ~ \mu_{i,1} \lor \exists x
\exists y ~ \mu_{i,2}$  is equivalent to $\exists x \exists y ~ \bigwedge_{j
  \in J_i} \lnot \nu_{ij}$. Hence,
\[
\bigwedge_{i \in I} \lnot \bigvee_{j \in \set{1,2}} \exists x \exists
y ~ (x \leq y \land \mu_{i,j}) ~~\equiv~~ \bigwedge_{i \in I} \bigwedge_{j
  \in \set{1,2}} \lnot \exists x \exists y ~ (x \leq y \land
\mu_{i,j})
\]
is equivalent to $\forall x \forall y ~ \chi$. Finally, every $\mu_{i,j}$ can be easily split into a conjunction of three formulas (one of binary relations, one of unary relations on $x$, and one of unary relations on $y$), thus obtaining a formula of the 
 form ($\star$).
This procedure takes polynomial time once the CNF normal form is obtained, and it then takes exponential time in the worst case.
 \end{proof}

We define 
$tr(\chi_{m}^x)$ 
as the conjunction of all the formulas $\avariable_0 \approx \avariable_i$ so that 
$X_i(x)$ is a conjunct of 
$\chi_{m}^x$,
and all the formulas $\lnot (\avariable_0 \approx \avariable_i)$ so that 
$\lnot X_i(x)$ is a conjunct of 
 $\chi_{m}^x$;  
we do similarly for 
$tr(\chi_{m}^y)$.
If $\mu = +1(x,y)$ and $\nu = x \sim y$ we translate 
 \[
 tr\left(\exists y ~ \big(~ x \leq y ~\land~\chi_{m} ~\land~ \chi_{m}^x ~\land~ \chi_{m}^y ~\big)\right) ~=~ 
 \avariable \approx \X\avariable~ \land~ tr(\chi_{m}^x) ~\land~ 
 \X tr(\chi_{m}^y)
 .
 \]
If $\mu = (x = y)$ and $\nu = x \sim y$ we translate 
 \[
 tr\left(\exists y ~ \big(~ x \leq y ~\land~\chi_{m} ~\land~ \chi_{m}^x ~\land~ 
 \chi_{m}^y ~\big)\right) ~=~  tr(\chi_{m}^x) ~\land~ 
 tr(\chi_{m}^y)
 .
 \]
We proceed similarly for $\mu = +1(x,y)$, $\nu = \lnot(x \sim y)$; and the translation is of course $\bot$ (false) if $\mu = (x {=} y)$, $\nu = \lnot(x {\sim} y)$.  The difficult cases are the remaining ones. Suppose 
$\mu = \lnot (+1(x,y) \lor x {=} y)$, $\nu = x {\sim} y$.
In other words, $x$ is at least two positions before $y$, and they have the same data value. 
Observe that the formula 
 $tr(\chi_{m}^x) \land \oblieq{\avariable}{\avariable}{tr(\chi_{m}^y)}$
 does not 
encode precisely this case, as it would correspond to a weaker condition $x < y  \land x \sim y$. In order to 
properly translate this case we make use of the variable $\avariable_{\textit{prev}}$, ensuring that it always has 
the data value of the variable $\avariable$ in the previous position
\[
    \textit{\textit{prev}} ~=~ \G \big( ~\X \top ~\Rightarrow~ \avariable \approx \X \avariable_{\textit{prev}} ~\big)
.
\]
 We then define $tr\left(\exists y ~ \big(~ x \leq y~\land~\chi_{m} ~\land~ \chi_{m}^x ~\land~ 
 \chi_{m}^y ~\big)\right)$ as
 \[
  tr(\chi_{m}^x) ~\land~ \oblieqz{\avariable}{\avariable_{\textit{prev}}} {
 \oblieq{\avariable_{\textit{prev}}}{\avariable}{tr(\chi_{m}^y)} }
 .
 \]
Note that by nesting twice the future obligation we ensure that the target position where $tr(\chi_{m}^y)$ must hold is at a distance of at least two positions.
For $\nu = \lnot(x {\sim}y)$ we produce a similar formula, replacing the innermost appearance of $\approx$ with $\not\approx$ in the formula above.
We then define $tr(\forall x \forall y ~ \chi)$ as 
\[
\textit{\textit{prev}} ~\land~ \bigwedge_{m \in M} ~ \lnot ~\F ~ tr\big(\exists y ~ (x \leq y 
~\land~\chi_{m} \land \chi_{m}^x \land \chi_{m}^y )\big)
.
\]

To translate $\forall x \exists y ~ (x \leq y \land \psi_k)$ we proceed in a similar way. As before, we bring $x \leq y \land \psi_k$ into the form
$
\bigvee_{m \in M} ~ x \leq y \land \chi_{m} \land \chi_{m}^x \land \chi_{m}^y,
$
in exponential time. We then define $tr(\forall x \exists y ~ (x \leq y \land \psi_k))$ as
\[
\textit{\textit{prev}} ~\land~  \G~ \bigvee_{m \in M}  tr\big(\exists y ~ (x \leq y \land \chi_{m} \land \chi_{m}^x \land \chi_{m}^y )\big)
.
\]
Thus, 
\[
tr(\varphi) ~=~ tr\big(\forall x \forall y ~ \chi\big)
~\land~
\bigwedge_k tr\big(\forall x \exists y ~ (x \leq y \land \psi_k)\big).
\]

\medskip

\noindent One can show that the translation $tr$ defined above preserves satisfiability. 
More precisely, it can be seen that:
\begin{enumerate}
\item Any data word whose data values are the
  $\avariable$-projection from a model satisfying $tr(\varphi)$,
  satisfies $\varphi$; and, conversely,
\item for any data word  satisfying $\varphi$ with a given assignment for $X_1, \dotsc, X_n$ to the word positions, and for any model $\sigma$ such that
  \begin{itemize}
  \item $\sigma$ has the same length as the data word,
  \item for every position $i$, $\sigma(i)(\avariable)$ is the data value of position $i$ from the data word, and $\sigma(i)(\avariable_0) = \sigma(i)(\avariable_j)$ iff $X_j$ holds at position $i$ of the data word, and
  \item for every position $i>0$, $\sigma(i)(\avariable_{\textit{prev}}) = \sigma(i-1)(\avariable)$,
  \end{itemize}
we have that $\sigma \models tr(\varphi)$.
\end{enumerate}\smallskip

\noindent By Corollary~\ref{cor:mainlogic-twoexpspace} we can decide the satisfiability of the translation in \twoexpspace, and since the translation is exponential, this gives us a $3\expspace$ upper bound for the satisfiability of $\femsot$.
\end{proof}
\else
\begin{proof}[Proof sketch]
Through a standard translation we can bring any formula of $\femsot$ into a formula of the form
\[
\varphi = \exists X_1, \dotsc, X_n ~ \left( \forall x \forall y ~ \chi ~\land~ \bigwedge_k \forall x \exists y ~ (x \leq y \land \psi_k) \right)
\]
that preserves satisfiability, where $\chi$ and all $\psi_k$'s are 
quantifier-free formulas, and there are no tests for labels. Furthermore, this is a polynomial-time translation. This translation is just the Scott normal form of $\emsot$ \cite{BDMSS06:journal} adapted to $\femsot$, and can be done in the same way. 

We now give an exponential-time translation $tr : \femsot \to \mainlogic$. For any formula $\varphi$ of \femsot, $tr(\varphi)$ is an equivalent (in the sense of satisfiability) $\mainlogic$ formula.

The translation makes use of: a distinguished variable $\avariable$ that encodes the data values of any data word satisfying $\varphi$; variables $\avariable_0, \dotsc, \avariable_n$ that are used to encode the monadic relations $X_1, \dotsc, X_n$; and a variable $\avariable_{\textit{prev}}$ whose purpose will be explained later on.
We give now the translation.
To translate $\forall x \forall y ~ \chi$, we first bring the formula to a form 
\begin{align}
\tag{$\star$}
  \bigwedge_{m \in M} ~ \lnot \exists x \exists y
  ~ \big(~ x \leq y~\land~\chi_{m} ~\land~ \chi_{m}^x ~\land~
  \chi_{m}^y ~\big),
\end{align}
where every $\chi_{m}^x$ (resp.\ $\chi_{m}^y$) is a conjunction of (negations of) atoms of 
monadic relations on $x$ (resp.\ $y$); 
and $\chi_{m} = \mu \land \nu$  where $\mu \in \set{x{=}y, +1(x,y), \lnot (+1(x,y) \lor x {=} y)}$ and 
$\nu \in \set{x \sim y, \lnot (x \sim y)}$.  As an example, if 
$\chi = \lnot (x \sim y) \lor X(x) \lor X(y)$, then
the corresponding formula would be
\begin{align*}
  \bigwedge_{\mu,\chi^x, \chi^y} \big(&
(\lnot \exists x \exists y ~ x \leq y \land \mu \land  \lnot
  X(x) \land \chi^y ) ~ \land ~\\[-.9em]&
(\lnot \exists x \exists y ~ x \leq y \land \mu \land \chi^x \land \lnot X(y)) \big)
\end{align*}
for all $\mu \in \set{x{=}y, +1(x,y), \lnot (+1(x,y) \lor x {=} y)}$, $\chi^x \in \set{X(x), \lnot X(x)}$, $\chi^y \in \set{X(y), \lnot X(y)}$.
We can bring the formula into this normal form in exponential time.
We define 
$tr(\chi_{m}^x)$ 
as the conjunction of all the formulas $\avariable_0 \approx \avariable_i$ so that 
$X_i(x)$ is a conjunct of 
$\chi_{m}^x$,
and all the formulas $\lnot (\avariable_0 \approx \avariable_i)$ so that 
$\lnot X_i(x)$ is a conjunct of 
 $\chi_{m}^x$;  
we do similarly for 
$tr(\chi_{m}^y)$.
If $\mu = +1(x,y)$ and $\nu = x \sim y$ we define
$ tr\left(\exists y \big( x \leq y \land\chi_{m}\land \chi_{m}^x \land \chi_{m}^y \big)\right)$
as
 $
 \avariable \approx \X\avariable \land tr(\chi_{m}^x) \land
 \X tr(\chi_{m}^y)
 .
 $
 
If $\mu = (x = y)$ and $\nu = x \sim y$ we translate 
 \[
 tr\left(\exists y  \big( x \leq y \land\chi_{m} \land \chi_{m}^x \land
 \chi_{m}^y \big)\right) ~=~ tr(\chi_{m}^x) \land
 tr(\chi_{m}^y)
 .
 \]
 
We proceed similarly for $\mu = +1(x,y)$, $\nu = \lnot(x \sim y)$; and the translation is of course $\bot$ (false) if $\mu = (x {=} y)$, $\nu = \lnot(x {\sim} y)$.  The difficult cases are the remaining ones. Suppose 
$\mu = \lnot (+1(x,y) \lor x {=} y)$, $\nu = x {\sim} y$.
In other words, $x$ is at least two positions before $y$, and they have the same data value. 
Observe that the formula 
 $tr(\chi_{m}^x) \land \oblieq{\avariable}{\avariable}{tr(\chi_{m}^y)}$
 does not 
encode precisely this case, as it would correspond to a weaker condition $x < y  \land x \sim y$. In order to 
properly translate this case we make use of the variable $\avariable_{\textit{prev}}$, ensuring that it always has 
the data value of the variable $\avariable$ in the previous position
\[
    \textit{\textit{prev}} ~=~ \G \big( ~\X \top ~\Rightarrow~ \avariable \approx \X \avariable_{\textit{prev}} ~\big)
.
\]
 We then define $tr\left(\exists y ~ \big(~ x \leq y~\land~\chi_{m} ~\land~ \chi_{m}^x ~\land~ 
 \chi_{m}^y ~\big)\right)$ as
 \[
  tr(\chi_{m}^x) ~\land~ \oblieqz{\avariable}{\avariable_{\textit{prev}}} {
 \oblieq{\avariable_{\textit{prev}}}{\avariable}{tr(\chi_{m}^y)} }
 .
 \]
Note that by nesting twice the future obligation we ensure that the target position where $tr(\chi_{m}^y)$ must hold is at a distance of at least two positions.
For $\nu = \lnot(x {\sim}y)$ we produce a similar formula, replacing the innermost appearance of $\approx$ with $\not\approx$ in the formula above.
We then define $tr(\forall x \forall y ~ \chi)$ as 
\[
\textit{\textit{prev}} ~\land~ \bigwedge_{m \in M} ~ \lnot ~\F ~ tr\big(\exists y ~ (x \leq y 
~\land~\chi_{m} \land \chi_{m}^x \land \chi_{m}^y )\big)
.
\]

To translate $\forall x \exists y ~ (x \leq y \land \psi_k)$ we proceed in a similar way. 
One can show that $tr$  preserves satisfiability. 
By Corollary~\ref{cor:mainlogic-twoexpspace} we can decide the satisfiability of the translation in \twoexpspace, and since the translation is exponential, this gives us a $3\expspace$ 
bound for the satisfiability of $\femsot$.
\end{proof}
\fi

\ifLONG
\begin{rem}
  The proof above can be also extended to work with a similar fragment of $\emsotk$, that is, $\emsot$ extended with all binary relations of the kind $+i(x,y)$ for every $i \leq k$, with the semantics that $y$ is $i$ positions after $x$.  Hence, we also obtain the decidability of the satisfiability problem for this logic in $3\expspace$.  We do not know if the upper bounds we give for $\femsot$ and $\femsotk$ can be improved.
\end{rem}

Our result stating that $\pmainlogic_1$ is equivalent to reachability in VASS (Corollary~\ref{cor:PLRV1-equiv-VASS}), can also be seen as a hardness result for $\fo{2}{<,\sim, \set{+k}_{k \in \Nat}}$, that is, first-order logic with two variables on data words, extended with all binary relations $+k(x,y)$ 
denoting that two elements are at distance $k$. It is easy to see that this logic captures $\pmainlogic_1$ and hence that it is equivalent to reachability in VASS, even in the absence of an alphabet.
\begin{cor}
The satisfiability problem for  $\fo{2}{<,\sim, \set{+k}_{k \in \Nat}}$ is as hard as the reachability problem in VASS, even when restricted to having an alphabet $\Sigma=\emptyset$.
\end{cor}
\fi

\ifLONG
\begin{rem}
  The proof above can be also extended to work with a similar fragment of $\emsotk$, that is, $\emsot$ extended with all binary relations of the kind $+i(x,y)$ for every $i \leq k$, with the semantics that $y$ is $i$ positions after $x$.  Hence, we also obtain the decidability of the satisfiability problem for this logic in $3\expspace$.  We do not know if the upper bounds we give for $\femsot$ and $\femsotk$ can be improved.
\end{rem}

Our result stating that $\pmainlogic_1$ is equivalent to reachability in VASS (Corollary~\ref{cor:PLRV1-equiv-VASS}), can also be seen as a hardness result for $\fo{2}{<,\sim, \set{+k}_{k \in \Nat}}$, that is, first-order logic with two variables on data words, extended with all binary relations $+k(x,y)$ 
denoting that two elements are at distance $k$. It is easy to see that this logic captures $\pmainlogic_1$ and hence that it is equivalent to reachability in VASS, even in the absence of an alphabet.
\begin{cor}
The satisfiability problem for  $\fo{2}{<,\sim, \set{+k}_{k \in \Nat}}$ is as hard as the reachability problem in VASS, even when restricted to having an alphabet $\Sigma=\emptyset$.
\end{cor}
\fi

\subsection{Temporal Logics}
\makeatletter{}\ifLONG
Consider a temporal logic with (strict) future operators $\F_=$ and $\F_{\neq}$, so that ${\F_=}\, \varphi$ (resp.\ ${\F_{\neq}}\, \varphi$) holds at some position $i$ of the finite data word if there is some future position $j>i$ where $\varphi$ is true, and the data values of positions $i$ and $j$ are equal  (resp.\ distinct). We also count with ``next'' operators $\X^k_=$ and $\X^k_{\neq}$ for any $k \in \Nat$, where $\X^k_= \varphi$ (resp.\ $\X^k_{\neq} \varphi$) holds at position $i$ if $\varphi$ holds at position $i+k$, and the data values of position $i+k$ and $i$ are equal (resp.\ distinct). Finally, the logic also features a standard until operator $\U$, tests for the labels of positions, and it is closed under Boolean operators.  We call this logic $\ltlrepeat$. There is an efficient satisfiability-preserving translation from $\ltlrepeat$ into $\mainlogic$ and back and hence we have the following.

\begin{prop}\label{prop:sat-LTLrepeat:2expspace}
  The satisfiability problem for $\ltlrepeat$ is $\twoexpspace$-complete.
\end{prop}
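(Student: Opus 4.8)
The plan is to prove both bounds through satisfiability-preserving translations between $\ltlrepeat$ and $\mainlogic$, each computable in polynomial time, and then to transfer the already-established complexity of $\SAT{\mainlogic}$. For the upper bound I would translate $\ltlrepeat$ into $\mainlogic$ and invoke Corollary~\ref{cor:mainlogic-twoexpspace}; for the lower bound I would translate a suitable fragment of $\mainlogic$ into $\ltlrepeat$ and invoke the $\twoexpspace$-hardness of $\SAT{\mainlogic}$ that follows from Lemma~\ref{lem:GainyToLRVSat} together with Theorem~\ref{thm:CounterAutToPerk} and Lemma~\ref{lemma-inter}.

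For the direction $\ltlrepeat \to \mainlogic$, a data word carries a single data value per position, which I represent by one distinguished variable $\avariable$; the finite labeling from $\Sigma$ is encoded by equalities among $\lceil\log|\Sigma|\rceil+1$ auxiliary variables, exactly as the label formulas $\tup{t}$ in the proof of Theorem~\ref{theorem-reach-PLRV}. The translation $tr$ is homomorphic on Boolean connectives and on $\until$, and treats the data-aware operators by $tr(\F_{=}\varphi) = \oblieq{\avariable}{\avariable}{tr(\varphi)}$, $tr(\F_{\neq}\varphi) = \oblineq{\avariable}{\avariable}{tr(\varphi)}$, $tr(\X^k_{=}\varphi) = (\avariable \oblieqlocal \mynext^k \avariable) \wedge \mynext^k tr(\varphi)$ and $tr(\X^k_{\neq}\varphi) = (\avariable \oblineqlocal \mynext^k \avariable) \wedge \mynext^k tr(\varphi)$. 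These match semantically, since $\F_{=}$ and $\oblieq{\avariable}{\avariable}{\cdot}$ both quantify over a strict future position with equal value of $\avariable$, and dually for the inequality and the bounded-distance operators. A routine structural induction shows $tr$ preserves satisfiability over single-valued models, and $tr(\varphi)$ is of polynomial size, so Corollary~\ref{cor:mainlogic-twoexpspace} gives the $\twoexpspace$ upper bound.

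For the lower bound, the key observation is that over single-data models the \emph{future fragment} of $\mainlogic_1$ is essentially $\ltlrepeat$ itself: $\oblieq{\avariable}{\avariable}{\gamma}$ is $\F_{=}\gamma$, $\oblineq{\avariable}{\avariable}{\gamma}$ is $\F_{\neq}\gamma$, the atom $\avariable \oblieqlocal \mynext^i\avariable$ is $\X^i_{=}\top$ (and its negation $\lnot\X^i_{=}\top$), an ordinary next step $\mynext\psi$ is $\X^1_{=}\psi \lor \X^1_{\neq}\psi$, and $\until$ is $\U$. I would therefore define the converse homomorphic translation and verify that it preserves satisfiability, using that $\mainlogic_1$ models and $\ltlrepeat$ models over a singleton $\Sigma$ coincide. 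It then remains only to supply this translation with a $\twoexpspace$-hard family of $\mainlogic_1$ formulas using no past operator.

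The main obstacle, and the point needing genuine care, is exactly guaranteeing that $\twoexpspace$-hardness survives inside the future-only, single-variable fragment, since $\ltlrepeat$ has neither past modalities ($\previous,\since$) nor more than one data value. I would discharge this by tracing the hardness chain and checking which connectives each step introduces: Lemma~\ref{lem:GainyToLRVSat} already produces hard instances in the future-only (multi-variable) fragment $\mainlogic(\mynext,\sometimes)$; Corollary~\ref{cor:reduction-LRVtop} (via Propositions~\ref{prop:LRV2LRVapprox}--\ref{prop:LRVapprox-to-LRVtop}) and the segment coding of Lemma~\ref{lem:LRV-to-LRV1} each add only the future connectives $\mynext,\sometimes,\always,\until$ and local equalities when their input is future-only, so their composition yields $\twoexpspace$-hard single-variable $\mainlogic_1$ formulas free of any past modality. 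Applying the converse translation to these formulas produces a polynomial-time reduction from a $\twoexpspace$-hard problem into $\SAT{\ltlrepeat}$, which completes the matching lower bound. A minor issue to settle along the way is the succinctness of the superscript $k$ in $\X^k$: under a unary convention $\mynext^k$ is already of polynomial size, while under a binary convention one reuses the standard reduction noted in the preliminaries that lowers offsets to $\set{0,1}$ at the cost of auxiliary variables.
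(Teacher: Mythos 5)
Your proposal is correct. The upper bound is exactly the paper's argument: the same homomorphic translation into $\mainlogic$ (with $\F_=\varphi \mapsto \oblieq{\avariable}{\avariable}{tr(\varphi)}$, $\X^k_=\varphi \mapsto (\avariable\oblieqlocal\mynext^k\avariable)\wedge\mynext^k tr(\varphi)$, and labels encoded by equalities of auxiliary variables) followed by Corollary~\ref{cor:mainlogic-twoexpspace}. Your lower bound, however, takes a genuinely different route. The paper reduces \SAT{$\mainlogic^{\top}$} with $k$ variables \emph{directly} into \SAT{$\ltlrepeat$} by a fresh block encoding: a formula $\psi_k$ forces models over the alphabet $\set{a_1,\dotsc,a_k}$ to be successions of blocks $a_1\dotsb a_k$, and then $\oblieq{\avariable_i}{\avariable_j}{\top}$ is translated using $\X^{i-1}$, $\X^{j-i}_=$, $\X^{j-i}_{\neq}$ and $\F_=$, while $\avariable_i\oblieqlocal\mynext^t\avariable_j$ becomes $\X^{i-1}\X^{t\cdot k-(i-1)+(j-1)}_=\top$. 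You instead compose the paper's existing reductions --- Lemma~\ref{lem:GainyToLRVSat} for future-only hardness, then Corollary~\ref{cor:reduction-LRVtop} and Lemma~\ref{lem:LRV-to-LRV1}, whose segment coding plays precisely the role of the paper's block coding --- and finish with an essentially syntactic identification of future-only $\mainlogic_1$ with $\ltlrepeat$ (including the neat observation that $\mynext\psi$ is $\X^1_=\psi\lor\X^1_{\neq}\psi$). This buys modularity, and it makes explicit a point the paper's sketch leaves implicit: since $\ltlrepeat$ has no past operators, the source of the reduction must be the future-only fragment, so one has to verify that \twoexpspace-hardness survives there; you do this by tracing the chain and checking that Lemma~\ref{lem:no-nesting-reduction}, Propositions~\ref{prop:LRV2LRVapprox} and \ref{prop:LRVapprox-to-LRVtop}, and Lemma~\ref{lem:LRV-to-LRV1} only introduce future connectives, a fact the paper asserts only globally in its conclusion (``none of our lower bound proofs involve the past-time operators''). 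What the paper's direct reduction buys in exchange is self-containedness: it never needs to re-inspect the internals of those earlier reductions. Your handling of the unary/binary convention for the superscript in $\X^k$ is a reasonable resolution of an issue the paper does not discuss and implicitly settles by the unary reading.
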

\begin{proof}[Proof sketch]
For the $\twoexpspace$-membership, there is a straightforward polynomial-time translation from  $\ltlrepeat$ into $\mainlogic$ that preserves satisfiability, where 
$\F_= \varphi$ is translated as $\oblieq{\avariable}{\avariable}{\varphi'}$
and $\F_{\neq} \varphi$ is translated as $\oblineq{\avariable}{\avariable}{\varphi'}$; $\X_=^k \varphi$ as $\X^k \varphi' ~\land~ \avariable \approx \X^k \avariable$; and any test for label $a_i$ by $\avariable_0 \approx \avariable_i$ ($\varphi'$ is the translation of $\varphi$).

On the other hand, given a formula $\phi$ of $\mainlogic^{\top}$ using $k$ variables, consider the alphabet $\set{a_1, \dotsc, a_k}$ and the formula $\psi_k$ that forces the model to be a succession of `blocks' of data words of length $k$ with labels $a_1 \dotsb a_k$ (hence, forcing its length to be a multiple of $k$). We show how to define a formula $tr(\phi)$ of $\ltlrepeat$ so that $tr(\phi) \land \psi_k$ is satisfiable if and only if $\phi$ is satisfiable. We define $tr(\X \phi')$ as $\X^k tr(\phi')$; and $tr(\F \phi')$ as $\F (a_1 \land tr(\phi'))$. For future obligation formulas, we define $tr(\oblieq{\avariable_i}{\avariable_j}{\top})$  as $\X^{i-1}\F_= a_j$ if $j\leq i$ or 
$\X^{i-1}(\X_=^{j-i}\F_= a_j \lor (\F_= a_j \land \X_{\neq}^{j-i} \top))$ otherwise ---note that in the second case special care must be taken to ensure that the data value is repeated at a strictly future block. Finally, for the local obligation formulas, we define $tr(\avariable_i \oblieqlocal \mynext^{t} \avariable_j)$ as $\X^{i-1}\X_=^{t \cdot k - (i-1) + (j-1)}\top$. Note that this can be easily extended to treat inequalities. Thus, there is a polynomial-time reduction from the satisfiability problem for $\mainlogic^\top$ into that of $\ltlrepeat$.
\end{proof}

In fact, $\ltlrepeat$ corresponds to a fragment of the linear-time 
temporal logic $\textsf{LTL}$ extended with one register for storing and comparing data values. We  denote it by $\ensuremath{{\sf LTL}^\downarrow_1}$, and it was studied in \cite{DL-tocl08}.
This logic contains one operator to store the current datum, one operator to test whether the current datum is equal to 
the one stored. The \emph{freeze} operator $\downarrow \varphi$ permits to \emph{store} the current datum in the 
register and continue the evaluation of the formula $\varphi$. The operator $\uparrow$  \emph{tests} whether the current 
data value is equal to the one stored in the register. When the temporal operators are limited to $\F, \U$ and $\X$, this 
logic is decidable with non-primitive-recursive complexity \cite{DL-tocl08}. 

Indeed $\ltlrepeat$ is the fragment where we only allow $\downarrow$ and $\uparrow$ to appear in the form of $\downarrow \F ({\uparrow} \land \varphi)$ and $\downarrow \X^k ({\uparrow} \land \varphi)$ ---or with $\lnot\,{\uparrow}$ instead of $\uparrow$. Markedly, this restriction allows us to jump from a non-primitive-recursive complexity of the satisfiability problem, to an elementary $\twoexpspace$ complexity.
\else Our results have also implications for a fragment of the 
logic LTL extended with one register for storing and comparing data values (called the \emph{freeze} operator), noted $\textup{LTL}_1^{\downarrow}$.
Its satisfiability problem was shown to be decidable, but with non-primitive-recursive complexity \cite{DL-tocl08}.

 Our results yield a fragment of $\textup{LTL}_1^{\downarrow}$ with 
elementary $2\expspace$ upper bound.

\fi

\makeatletter{}\section{Conclusion}
\label{section-conclusion}
\ifLONG
We introduced  the logic $\mainlogic$ that significantly extends the languages 
in~\cite{Demri&DSouza&Gascon12}, for instance by allowing future obligations of the general form 
$\oblieq{\avariable}{\avariablebis}{\aformula}$. 
We have shown that \SAT{ $\mainlogic$} can be reduced to the
control state reachability problem in VASS, obtaining
a \twoexpspace{} upper bound as a consequence. Since $\mainlogic$ can be also viewed as a fragment of a logic introduced
in~\cite{Kara&Schwentick&Zeume10} whose satisfiability is equivalent
to Reach(VASS), we provide
also an interesting fragment with elementary complexity. 
The reduction into the control state reachability problem involves an
exponential blow-up, which is unavoidable as demonstrated by our
\twoexpspace{} lower bound. To prove this lower bound,
we  introduced  the class of chain systems of level $k$ and we 
proved the
$(k+1)\expspace$-completeness of the control state reachability problem by 
extending the proof from~\cite{Lipton76,Esparza98}. 
This class of systems is interesting for its own sake and could be used to establish other hardness
results thanks to our results. 
We have also shown that the proof technique we used to reduce
$\mainlogic^{\top}$ to the control state reachability problem does not
work in the presence of past obligations. Indeed, the satisfiability
problem for $\pmainlogic^{\top}$ ($\mainlogic^{\top}$ with past
obligations) is as hard as Reach(VASS)
which witnesses that  past obligations  have a computational cost. 
Furthermore, note that none of our lower bound proofs involve the past-time
operators $\previous$ and $\since$. 
Finally, a new correspondence between data logics and decision problems
for VASS is provided, apart from the fact that the complexity of several data 
\ifLONG
logics has been characterised, some of the logics
being defined with first-order features (see Section~\ref{section-implications}). 
\else
logics has been characterized.
\fi 
\begin{figure*}
\center
\cut{
\begin{minipage}{17cm}
$\underbrace{\mainlogic^{\top}_k}_{\text{\normalsize\pspace{\rm -complete}}}$ \ \ \ 
$\underbrace{\mainlogic \equiv \mainlogic^{\top} \equiv \mainlogic_1 \equiv
\mainlogic  +  \set{\oplus_1, \ldots, \oplus_k}}_{\text{\normalsize\twoexpspace{\rm -complete}}}$ \ \ \ 
$\underbrace{\pmainlogic \equiv \pmainlogic^{\top} \equiv \pmainlogic_1}_{\text{\normalsize $\equiv$ Reach(VASS)}}$ 
\ \  \ 
$\underbrace{\mainlogic^{\top}_{vec}}_{\text{\normalsize undecidable}}$ 
\end{minipage}}
\begin{minipage}{4.8cm}
\begin{align*}
  \mainlogic^{\top}_k &:\text{\normalsize\pspace{\rm -complete}}\\
\mainlogic \equiv \mainlogic^{\top} \equiv \mainlogic_1 \equiv
\mainlogic  +  \set{\oplus_1, \ldots, \oplus_k}
&:\text{\normalsize\twoexpspace{\rm -complete}}\\
\pmainlogic \equiv \pmainlogic^{\top} \equiv \pmainlogic_1&\text{\normalsize $\equiv$  Reach(VASS)}\\
\mainlogic^{\top}_{vec}&: \text{\normalsize undecidable}
\end{align*}
\end{minipage}
\caption{Summary of results.}
\label{figure-summary}
\end{figure*}
A summary of the results can be found in Figure~\ref{figure-summary}.

\else
\enlargethispage{-0.4in}
We introduced  the logic $\mainlogic$ and variants by allowing data value repetitions
thanks to formulas of the form $\oblieq{\avariable}{\avariablebis}{\aformula}$. 
$\mainlogic$ extends the main logic from~\cite{Demri&DSouza&Gascon12}
but it is also  a fragment of BD-LTL from~\cite{Kara&Schwentick&Zeume10}
whose satisfiability is equivalent
to Reach(VASS). We showed that \SAT{$\mainlogic$} is \twoexpspace-complete
by reduction into the control-state reachability problem for VASS
(via a detour to  gainy VASS) and
 by introducing a new class of counter machines (the chain systems) in order to
get the complexity lower bound. 
This new class of counter machines is interesting for its own sake and could be used to establish other hardness
results thanks to our results that extend non-trivially the proof from~\cite{Lipton76,Esparza98}. 
Correspondences between extensions of $\mainlogic$ (such as $\pmainlogic$, $\pmainlogic^{\top}$ and 
$\pmainlogic_1$) and reachability problem for VASS are also established, 
strengthening further the relationships between data logics and reachability problems for counter machines.
Other results for  variants are presented in
the paper and a summary  can be found  below. 

\scalebox{.8}{
\begin{minipage}{4.8cm}
\begin{align*}
  \mainlogic^{\top}_k &:\text{\normalsize\pspace{\rm -complete}}\\
\mainlogic \equiv \mainlogic^{\top} \equiv \mainlogic_1 \equiv
\mainlogic  +  \set{\oplus_1, \ldots, \oplus_k}
&:\text{\normalsize\twoexpspace{\rm -complete}}\\
\pmainlogic \equiv \pmainlogic^{\top} \equiv \pmainlogic_1&\text{\normalsize $\equiv$  Reach(VASS)}\\
\mainlogic^{\top}_{vec}&: \text{\normalsize undecidable}
\end{align*}
\end{minipage}
}
\fi 

\newcommand{\etalchar}[1]{$^{#1}$}

\clearpage
\onecolumn
\appendix
\makeatletter{}\section{Elements of the proof of Lemma~\ref{lem:APhiFromEarlierPaper} from~\cite{Demri&DSouza&Gascon12}}
\label{section-appendix-aphifrompreviouspaper}

Below, we recall developments from~\cite{Demri&DSouza&Gascon12} that lead to the proof of 
Lemma~\ref{lem:APhiFromEarlierPaper}. We provide the main definitions but leave the details 
to~\cite{Demri&DSouza&Gascon12}.

Let $\aformula$ be an $\mainlogic^\top$ formula built over variables in
 $\{\avariable_1,\dots, \avariable_k\}$. Let $l$ be the maximal $i$ such
 that a term of the 
form $\mynext^i x$ occurs in $\aformula$ (without any loss of generality, we can assume that $l \geq 1$). 
The value $l$ is called the
$\mynext$-length of $\aformula$. In order to define the set of
atomic formulae used in the symbolic models
we introduce the set of constraints
$\Omega^l_k$ that contains constraints of the form either $\mynext^i \avariable =
\mynext^j \avariablebis$ or $\mynext^i (\oblieq{\avariable}{\avariablebis}{\top})$ with $\avariable, \avariablebis \in \{\avariable_1,\dots, \avariable_k\}$
and $i,j \in \interval{0}{l}$. 

The set of $(l,k)$-frames is denoted $\FFrame^l_k$, and is made up of pairs
comprising a set of atomic constraints, along
with some information 
about the last position of the model. This latter component is an element of
$\interval{0}{l}
\uplus \set{\rm nd}$ where `${\rm nd}$' means that the model does not
end before the end of the frame. We have $\alpha={\rm nd}$  and $\pair{\symbval}{\alpha} \in
\FFrame^l_k$ iff $\symbval$ satisfies the conditions:

\begin{description}

\item[(F1)] For all $i\in \interval{0}{l}$ and $\avariable \in \{\avariable_1,\ldots,
  \avariable_k\}$, $\mynext^i \avariable = \mynext^i \avariable \in \symbval$.
\item[(F2)] For all $i,j \in \set{0,\ldots,l}$ and $\avariable, \avariablebis \in
  \{\avariable_1,\ldots, \avariable_k\}$, $\mynext^i \avariable = \mynext^j \avariablebis \in \symbval$ iff
  $\mynext^j \avariablebis = \mynext^i \avariable \in \symbval$.
\item[(F3)] For all $i,j,j' \in \interval{0}{l}$ and $\avariable, \avariablebis, \avariableter \in \{
  \avariable_1, \ldots, \avariable_k \}$, if $\set{\mynext^i \avariable = \mynext^j \avariablebis, \mynext^j \avariablebis =
    \mynext^{j'} \avariableter} \subseteq \symbval$ then $\mynext^i \avariable = \mynext^{j'} \avariableter
  \in \symbval$.
\item[(F4)] For all $i,j \in \interval{0}{l}$ and $\avariable, \avariablebis \in
  \{\avariable_1,\ldots, \avariable_k\}$ such that $\mynext^i \avariable = \mynext^j \avariablebis \in \symbval$:
  \begin{itemize}
  \itemsep 0 cm
  \item if $i=j$, then for every $\avariableter \in \{ \avariable_1, \dots, \avariable_k \}$ we have
    $\mynext^i (\oblieq{\avariable}{\avariableter}{\top}) \in \symbval$ iff $\mynext^j (\oblieq{\avariablebis}{\avariableter}{\top}) \in
    \symbval$;
  \item if $i<j$ then $\mynext^i (\oblieq{\avariable}{\avariablebis}{\top}) \in \symbval$, and for
    $\avariableter \in \set{\avariable_1,\ldots,\avariable_k}$, $\mynext^i (\oblieq{\avariable}{\avariableter}{\top}) \in \symbval$ iff either
    $\mynext^j (\oblieq{\avariablebis}{\avariableter}{\top}) \in \symbval$ or there exists $i < j' \leq
    j$ such that $\mynext^i \avariable = \mynext^{j'} \avariableter \in \symbval$.
  \end{itemize}
\end{description}

Additionally, we have  $\alpha \in \interval{0}{l}$ and $\pair{\symbval}{\alpha} \in
 \FFrame^l_k$ iff $\symbval \subseteq \Omega_k^{\alpha}$ and 
\begin{description}
\item[(F1$'$)] For all $j \in \interval{0}{\alpha}$ and $\avariable \in \{\avariable_1,\ldots,
  \avariable_k\}$, $\mynext^j \avariable = \mynext^j \avariable \in \symbval$.
\item[(F2$'$)] For all $j,j' \in  \interval{0}{\alpha}$ and $\avariable, \avariablebis \in
  \{\avariable_1,\ldots, \avariable_k\}$, $\mynext^j \avariable = \mynext^{j'} \avariablebis \in \symbval$ iff
  $\mynext^{j'} \avariablebis = \mynext^j \avariable \in \symbval$.
\item[(F3$'$)] For all $j,j',j'' \in \interval{0}{\alpha}$ and $\avariable, \avariablebis, \avariableter \in \{
  \avariable_1, \ldots, \avariable_k \}$, if $\set{\mynext^j \avariable = \mynext^{j'} \avariablebis,\mynext^{j'} \avariablebis
    = \mynext^{j''} \avariableter} \subseteq \symbval$ then $\mynext^j \avariable = \mynext^{j''}
  \avariableter \in \symbval$.
\item[(F4$'$)] For all $j,j' \in \interval{0}{\alpha}$ and $\avariable, \avariablebis \in
  \{\avariable_1,\dots \avariable_k\}$ such that $\mynext^{j} \avariable = \mynext^{j'} \avariablebis \in \symbval$:
  \begin{itemize}
  
  \item if $j=j'$, then for every $\avariableter \in \{ \avariable_1, \dots, \avariable_k \}$ we
    have $\mynext^j (\oblieq{\avariable}{\avariableter}{\top}) \in \symbval$ iff $\mynext^{j'} 
   (\oblieq{\avariablebis}{\avariableter}{\top}) \in \symbval$;
  \item if $j<j'$ then $\mynext^j (\oblieq{\avariable}{\avariableter}{\top}) \in \symbval$, and for
    $\avariableter \in \set{\avariable_1,\ldots,\avariable_k}$, $\mynext^j (\oblieq{\avariable}{\avariableter}{\top}) \in \symbval$
    iff either $\mynext^{j'} (\oblieq{\avariablebis}{\avariableter}{\top}) \in \symbval$ or there exists $j
    < j'' \leq j'$ such that $\mynext^j \avariable = \mynext^{j''} \avariableter \in \symbval$.
  \end{itemize}
\item[(F5$'$)] For every constraint $\mynext^{j} (\oblieq{\avariable}{\avariableter}{\top}) \in
  \symbval$ such that $j \in \interval{0}{\alpha}$ there is $j < j' \leq \alpha$
  such that $\mynext^j \avariable = \mynext^{j'} \avariablebis \in \symbval$.
\end{description}
The last condition (F5$'$) imposes that every future obligation is
satisfied before the end of the model.  
The conditions (F1$'$)--(F4$'$) are variants of (F1)--(F4) in which
the value $l$ is replaced by $\alpha$ (index of the last position in the model).
A finite model $\sigma$
satisfies a frame $\pair{\symbval}{\alpha}
\in \FFrame^l_k$ at position $j$ iff
\begin{itemize}

\item either $\alpha = {\rm nd}$ and $\length{\sigma} - (j+1) \geq l+1$,
or $\alpha = \length{\sigma} - (j+1)$,
\item for every constraint $\aconstraint$ in $\symbval$ we have
  $\sigma,j \models \aconstraint$.
\end{itemize}
In this case, we write $\sigma, j \models \pair{\symbval}{\alpha}$.

 We set
$\tup{\top?}_{\pair{\symbval}{\alpha}} (\avariable,i) \egdef [(\avariable,i)]_{\pair{\symbval}{\alpha}} \egdef
\emptyset$ when $\alpha \neq {\rm nd}$ and $i>\alpha$. If $\alpha = {\rm nd}$ or $i
\leq \alpha$ then the definitions are the following:
\[ 
\tup{\top?}_{\pair{\symbval}{\alpha}}(\avariable,i) \egdef \set{ \avariablebis \mid \mynext^i ( \oblieq{\avariable}{\avariablebis}{\top} ) \in
  \symbval}.
\]
\[
[(\avariable,i)]_{\pair{\symbval}{\alpha}} \egdef \set{ \avariablebis \mid \mynext^i \avariable = \mynext^i \avariablebis \in
  \symbval}.
\]
A pair of $(l,k)$-frames
$\pair{\pair{\symbval}{\alpha}}{\pair{\symbval'}{\alpha'}}$ is one-step
consistent iff
\begin{itemize}
\item $\pair{\alpha}{\alpha'}$ belongs to 
      $\set{\pair{j}{j'} \in \interval{1}{l}  \times \interval{0}{l-1}: j' = j - 1}
       \cup \set{\pair{{\rm nd}}{{\rm nd}},\pair{{\rm nd}}{l}}
     $.
\item $\pair{\symbval}{\symbval'}$ satisfies  the  following conditions 
      when $ \alpha = \alpha' = {\rm nd}$:
\begin{description}
\item[(OSC1)] for all $\mynext^i \avariable = \mynext^j \avariablebis \in \Omega^l_k$ with 
       $0 < i,j$, we have $\mynext^i \avariable = \mynext^j \avariablebis \in \symbval$ iff
       $\mynext^{i-1} \avariable = \mynext^{j-1} \avariablebis \in \symbval'$,
\item[(OSC2)] for all  $\mynext^i (\oblieq{\avariable}{\avariablebis}{\top} ) \in \Omega^l_k$ with $i > 0$, we have
 $\mynext^i (\oblieq{\avariable}{\avariablebis}{\top} ) \in \symbval$ iff $\mynext^{i-1} (\oblieq{\avariable}{\avariablebis}{\top})
 \in \symbval'$.
\end{description}
\item  $\pair{\symbval}{\symbval'}$ satisfies the conditions below when $\alpha \in \interval{1}{l}$:
      \begin{itemize}
       \item for every $\mynext^j \avariable = \mynext^{j'} \avariablebis \in \Omega^{\alpha}_k$ with 
       $0 < j,j'$, we have 
       $\mynext^j \avariable = \mynext^{j'} \avariablebis \in \symbval$
       iff $\mynext^{j-1} \avariable  = \mynext^{j'-1} \avariablebis \in \symbval'$,
       \item for every  $\mynext^j (\oblieq{\avariable}{\avariablebis}{\top}) \in \Omega^{\alpha}_k$ with $j > 0$, we have
        $\mynext^j (\oblieq{\avariable}{\avariablebis}{\top}) \in \symbval$ iff
        $\mynext^{j-1} ( \oblieq{\avariable}{\avariablebis}{\top} ) \in \symbval'$.
        \end{itemize}
\end{itemize}
A symbolic model is a finite one-step consistent sequence of frames
ending with a symbolic valuation of the form
$\pair{\symbval}{0}$. Hence, for every position $i$ in a finite
symbolic model $\rho$, if $\length{\rho} -(i+1) \geq l+1$ then $\rho(i)$ is
of the form $\pair{\symbval}{\rm nd}$, otherwise $\rho(i)$ is of the
form $\pair{\symbval}{\length{\rho} -(i+1)}$.

Given a symbolic model $\rho$, we adopt the following notations:
\[ 
\tup{\top?}_{\rho}(\avariable,i) \egdef 
\tup{\top?}_{\rho(i)}(\avariable,0)
\]
\[
[(\avariable,i)]_{\rho} \egdef [(\avariable,0)]_{\rho(i)}.
\]
The symbolic satisfaction relation $\rho,i \symbmodels
\aformula$ is defined in the obvious way (for instance, $\rho, i \models \mynext^{j} \avariable  = \mynext^{j'} \avariablebis$ iff
$\mynext^{j} \avariable  = \mynext^{j'} \avariablebis$ belongs to the first component of $\rho(i)$). 
The following correspondence between symbolic and concrete models
can be shown similarly.\medskip

\begin{lem}~\cite{Demri&DSouza&Gascon12}
  An $\mainlogic^\top$ formula $\aformula$ of $\mynext$-length $l$ over the
  variables $\{\avariable_1,\ldots, \avariable_k\}$ is satisfiable over finite
  models iff there exists a 
  finite $(l,k)$-symbolic model $\rho$ such that $\rho \symbmodels \aformula$
  and $\rho$ is realizable (i.e., \ there is a finite model whose symbolic model is $\rho$).
\end{lem}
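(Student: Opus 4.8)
The statement is the standard soundness and completeness of the symbolic abstraction underlying $\Aphi$, and the plan is to prove the two directions by extracting a symbolic model from a concrete one and conversely, with all the real content concentrated in a single faithfulness lemma relating $\symbmodels$ to $\models$. For the easy direction ($\Leftarrow$), suppose there is a realizable $\rho$ with $\rho \symbmodels \aformula$. By the very definition of realizability I would fix a finite model $\sigma$ whose symbolic model is $\rho$; the faithfulness lemma stated below then yields $\sigma, 0 \models \aformula$ directly from $\rho, 0 \symbmodels \aformula$, so $\aformula$ is satisfiable.

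For the forward direction ($\Rightarrow$), given $\sigma \models \aformula$, I would build $\rho$ by reading off a frame at each position. At position $i < \length{\sigma}$ I set $\rho(i) = \pair{\symbval_i}{\alpha_i}$, where $\symbval_i$ collects exactly those atomic constraints in $\Omega^l_k$ (local equalities $\mynext^a \avariable = \mynext^b \avariablebis$ and future obligations $\mynext^a(\oblieq{\avariable}{\avariablebis}{\top})$) that hold at $\sigma, i$, and where $\alpha_i = {\rm nd}$ if $\length{\sigma} - (i+1) \geq l+1$ and $\alpha_i = \length{\sigma} - (i+1)$ otherwise. I would then verify the four required properties: each $\rho(i)$ is a legal frame (conditions F1--F4 when $\alpha_i = {\rm nd}$, and F1$'$--F5$'$ otherwise), consecutive frames are one-step consistent (OSC1, OSC2 and their bounded variants), the sequence terminates in a frame of the form $\pair{\symbval}{0}$ (since the last position has $\alpha = 0$), and $\rho$ is trivially realizable with witness $\sigma$. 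Each frame condition is checked by observing that it expresses a genuine semantic entailment among constraints holding at a single position of $\sigma$ --- for instance, transitivity of equality gives F3/F3$'$, and the forward-copying of obligations along equalities gives F4/F4$'$ --- while one-step consistency holds because the constraints true at $\sigma, i+1$ are precisely the ``shift by one $\mynext$'' of those true at $\sigma, i$.

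The heart of the argument, and the step I expect to be the main obstacle, is the \emph{faithfulness lemma}: for every subformula $\psi$ of $\aformula$ and every position $i$, $\sigma, i \models \psi$ iff $\rho, i \symbmodels \psi$, proved by structural induction on $\psi$. The Boolean and temporal cases are routine, since $\symbmodels$ is defined to mirror $\models$ on the formula structure, and local equalities are immediate from the definition of $\symbval_i$. The delicate cases are the future obligations $\oblieq{\avariable}{\avariablebis}{\top}$: I must show that an obligation recorded in $\symbval_i$ is genuinely discharged by a later repetition of $\sigma(i)(\avariable)$ in $\avariablebis$, and conversely that every such repetition is recorded as an obligation. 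The subtlety is that an obligation may be satisfied arbitrarily far beyond the local window of length $l$, whereas each frame carries only local information; the conditions OSC2 (propagation of obligations to the next frame), F4/F4$'$ (interaction of obligations with equalities inside a frame), and the termination condition F5$'$ (which forces every still-pending obligation to be discharged before the model ends) are precisely engineered so that obligations are neither silently dropped nor spuriously created along the one-step-consistent chain. Making this rigorous --- tracking each obligation from its creation to its discharge across the chain of frames, and ruling out the two failure modes --- is where the genuine work lies; the remaining verifications are bookkeeping.
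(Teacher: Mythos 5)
The paper never proves this lemma --- it is imported from \cite{Demri&DSouza&Gascon12}, and the appendix merely sets up the definitions (frames, conditions (F1)--(F4), (F1$'$)--(F5$'$), one-step consistency, frame satisfaction) and remarks that the correspondence ``can be shown similarly'' --- so your proposal can only be judged against that definitional framework, and against it your proof is correct. Both directions are organized exactly as the cited construction requires: extract from a satisfying $\sigma$ the canonical \emph{maximal} frame sequence (each $\symbval_i$ consisting of precisely the constraints true at $\sigma,i$), verify the frame and one-step-consistency conditions as semantic entailments (your sample checks --- transitivity for (F3), obligation/equality interaction for (F4), the shift argument for (OSC1)/(OSC2) --- are the right ones, and (F5$'$) holds because in a finite model every true obligation has its witness before the last position), observe that realizability is witnessed by $\sigma$ itself, and transfer satisfaction both ways through a structural induction.

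One point of emphasis is off, however. The step you single out as the heart of the argument --- ``tracking each obligation from its creation to its discharge across the chain of frames'' inside the faithfulness induction --- is not actually needed for this lemma. In both of your directions the symbolic model in play is, by construction (direction $\Rightarrow$) or by the very definition of realizability (direction $\Leftarrow$), the canonical symbolic model of a concrete $\sigma$. Hence the atomic case $\oblieq{\avariable}{\avariablebis}{\top}$ of the induction is immediate: membership in $\symbval_i$ coincides with semantic truth at $\sigma,i$, and semantic truth already packages the existence of the discharging position; no propagation along (OSC2), and no ruling out of ``dropped'' or ``spurious'' obligations, is required there. That cross-frame bookkeeping is genuinely the crux of something else, namely the companion lemma recalled immediately after this one --- that a symbolic model is realizable iff its counting sequence ends at zero --- which is what allows the automaton $\Asat$ to replace quantification over concrete models by counter constraints. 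Conflating the two makes you do harder work than this statement demands, but it does not introduce an error.
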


We define $\Aphi$ as an automaton over the alphabet  $\FFrame^l_k$ (only used for synchronisation purposes),
that accepts the intersection of the languages accepted by the
three automata $\Aosc$, 
$\Asymb$ and $\Asat$ described below:
\begin{itemize}

\item $\Aosc$ recognizes the sequences of frames such that every pair of consecutive
  frames is one-step consistent (easy),
\item $\Asymb$ recognizes the set of symbolic
  models satisfying $\aformula$ (as for LTL formulae),
\item $\Asat$ recognizes the set of symbolic
  models that are realizable. 
\end{itemize}

The automaton $\Aosc$ is a finite-state automaton that checks that the
sequence is one-step 
consistent. Formally, we build $\Aosc = \triple{Q}{Q_0,F}{\step{}}$
such that:
\begin{itemize}

\item $Q$ is the set of $(l,k)$-frames and $Q_0=Q$.

\item the transition relation is defined by $\pair{\symbval_1}{\alpha_1}
  \step{\pair{\symbval}{\alpha}} \pair{\symbval_2}{\alpha_2}$ iff $\pair{\symbval}{\alpha} = \pair{\symbval_1}{\alpha_1}$ and the pair
  $\pair{\pair{\symbval_1}{\alpha_1}}{ \pair{\symbval_2}{\alpha_2}}$ is one-step consistent.

\item  The set
of final states $F$ is equal to $Q$.

\end{itemize}

We define the finite-state automaton $\Asymb$ by adapting the
construction  for LTL. We define
$\mathit{cl}(\aformula)$ to be the standard \emph{closure} of
$\aformula$, namely the smallest set of formulas $\aset$ that contains
$\aformula$, is closed under subformulas, and
satisfies the following conditions:
\begin{itemize}
\item If $\aformulabis \in \aset$ and $\aformulabis$
  is not of the form $\neg \aformulabis_1$ for some $\aformulabis_1$,
  then $\neg \aformulabis \in \aset$.
\item If $\aformulabis_1 \until \aformulabis_2 \in
  \aset$ then
  $\mynext(\aformulabis_1 \until \aformulabis_2) \in \aset$.
\item If $\aformulabis_1 \since \aformulabis_2 
  \in \aset$ then 
  $\mynext^{-1}(\aformulabis_1 \since \aformulabis_2) \in \aset$.
\end{itemize} 
An atom of $\aformula$ is a subset
$\mathit{At}$ of $\mathit{cl}(\aformula)$ which is maximally
consistent: it satisfies the following conditions.
\begin{itemize}
\item For every $\neg \aformulabis \in
  \mathit{cl}(\aformula)$, we have $\neg \aformulabis \in \mathit{At}$ iff
  $\aformulabis \not\in \mathit{At}$.

\item For every $\aformulabis_1 \wedge \aformulabis_2 \in 
\mathit{cl}(\aformula)$, we have $\aformulabis_1 \wedge \aformulabis_2 
\in \mathit{At}$ iff  $\aformulabis_1$ and $\aformulabis_2$ are 
in $\mathit{At}$.

\item For every $\aformulabis_1 \vee \aformulabis_2 \in 
\mathit{cl}(\aformula)$, we have
$\aformulabis_1 \vee \aformulabis_2 \in \mathit{At}$ iff 
$\aformulabis_1$ or $\aformulabis_2$ is in $\mathit{At}$.

\item For every $\aformulabis_1 \until \aformulabis_2 \in 
\mathit{cl}(\aformula)$, we have
$\aformulabis_1 \until \aformulabis_2 \in \mathit{At}$ iff 
either $\aformulabis_2 \in \mathit{At}$ or both
$\aformulabis_1$ and $\mynext(\aformulabis_1 \until
\aformulabis_2)$ are in $\mathit{At}$.

\item For every $\aformulabis_1 \since \aformulabis_2\in 
\mathit{cl}(\aformula)$, we have 
$\aformulabis_1 \since \aformulabis_2 \in \mathit{At}$
iff either $\aformulabis_2 \in \mathit{At}$ or both
$\aformulabis_1$ and $\mynext^{-1}(\aformulabis_1 \since
\aformulabis_2)$ are in $\mathit{At}$.
\end{itemize} 
We denote by ${\rm Atom}(\aformula)$ the set of atoms of $\aformula$.
We now define $\Asymb = (Q,Q_0,\rightarrow,F)$ over the alphabet
$\FFrame^l_k$, where:
\begin{itemize}
\item $Q = {\rm Atom}(\aformula) \times (\FFrame^l_k
\cup \{\sharp\})$ and 
$Q_0= \set{\pair{\mathit{At}}{\sharp} \mid \aformula \in
\mathit{At}}$,
\item $\pair{\mathit{At}}{\aset} \step{\pair{\symbval}{\alpha}} \pair{\mathit{At'}}{\pair{\symbval}{\alpha}}$
iff
\begin{description}
\itemsep 0 cm
\item[(atomic)] if $\alpha = {\rm nd}$ then $\mathit{At}' \intersection
  \Omega^l_k = \symbval$,
      otherwise $\mathit{At}' \intersection
  \Omega^{\alpha}_k = \symbval$, 
      \item[(one-step)] \ 
\begin{enumerate}
\item if $\alpha = 0$ then there is no formula of the form $\mynext
  \aformulabis$ in $\mathit{At}$, otherwise for every $\mynext \aformulabis
  \in \mathit{cl}(\aformula)$, we have $\mynext \aformulabis \in
  \mathit{At}$ iff $\aformulabis \in \mathit{At}',$\\[0.2em]
\item for every $\previous \aformulabis \in \mathit{cl}(\aformula)$,
  we have $\aformulabis \in \mathit{At}$ iff $\previous \aformulabis
  \in \mathit{At}'$.
\end{enumerate}
\end{description}
\item The set
of final states is therefore equal to $\set{\pair{\symbval}{0} \mid
\pair{\symbval}{0} \in \FFrame^l_k}$.
\end{itemize}

For each $\aset \in \nepowerset{\set{\avariable_1, \ldots,\avariable_k}}$, we introduce a counter
that keeps track of the number of obligations that need to be
satisfied by $\aset$. We identify the counters with finite subsets of
$\set{\avariable_1, \ldots,\avariable_k}$.  

A counter valuation 
is a map from $\nepowerset{\set{\avariable_1,\ldots,\avariable_k}}$ to $\Nat$.
We define below a canonical sequence of counter valuations along a
symbolic model.  We will need to introduce some additional
definitions first.  For an $(l,k)$-frame $\pair{\symbval}{\alpha}$ and a counter $\aset \in
\nepowerset{\set{\avariable_1,\ldots, \avariable_k}}$, we define a point of
  increment for $\aset$ in  $\pair{\symbval}{\alpha}$  to be an equivalence class of the
form $[(\avariable,0)]_{\pair{\symbval}{\alpha}}$ such that $\tup{\top?}_{\pair{\symbval}{\alpha}} (\avariable,0) = \aset$ and
 there is no edge between $(\avariable,0)$ and some 
$(\avariablebis,j)$ with $j \in \interval{1}{l}$ assuming that $\symbval$ is represented
graphically. 
In a similar way, a
point of decrement for $\aset$ in $\pair{\symbval}{\alpha}$ is defined to be an
equivalence class of the form $[(\avariable,l)]_{\pair{\symbval}{\alpha}}$ such that
$\tup{\top?}_{\pair{\symbval}{\alpha}} (\avariable,l) \union [(\avariable,l)]_{\pair{\symbval}{\alpha}} = \aset$, and 
there is no edge between $(\avariable,l)$ and some $(\avariablebis,j)$ with $j \in \interval{0}{l-1}$. 
So, when $\alpha \neq nd$, there is point of decrement in  $\pair{\symbval}{\alpha}$.

We denote by $u^+_{\pair{\symbval}{\alpha}}$ the counter valuation which records the
number of points of increment for each counter $\aset$ in $\pair{\symbval}{\alpha}$.
Similarly $u^-_{\pair{\symbval}{\alpha}}$ is the counter valuation which records the
number of points of decrement for each counter $\aset$ in $\pair{\symbval}{\alpha}$.
The codomain of both $u^+_{\pair{\symbval}{\alpha}}$ and
$u^-_{\pair{\symbval}{\alpha}}$ is $\interval{0}{k}$.

Let $\rho$ be an $(l,k)$-symbolic model.
For every $\aset \in \nepowerset{\set{\avariable_1,\ldots,\avariable_k}}$, a point of
  increment for $\aset$ in $\rho$ is an equivalence class of the form
$[(\avariable,i)]_{\rho}$ such that $[(x,0)]_{\rho(i)}$ is a point of increment
for $\aset$ in the frame $\rho(i)$.  Similarly, a point of
  decrement for $\aset$ in $\rho$ is an equivalence class of the form
$[(\avariable,i)]_{\rho}$ such that $i \geq l+1$ and $[(\avariable,l)]_{\rho(i-l)}$ is a
point of decrement for $\aset$ in $\rho(i-l)$.

We can now define a canonical counter valuation sequence $\acountseq$
along $\rho$, called the counting sequence along $\rho$, which
counts the number of obligations corresponding to each subset of
variables $\aset$
that remain unsatisfied at each position.  
We define $\acountseq$ inductively: for each $\aset \in
\nepowerset{\set{\avariable_1,\ldots,\avariable_k}}$ we have $\acountseq(0)(\aset) \egdef 0$ and
$
\acountseq(i+1)(\aset) \egdef \max(0,\acountseq(i)(\aset) + (u^+_{\rho(i)}(\aset) -
u^-_{\rho(i+1)}(\aset))),
$
for every $0 \leq i < \length{\rho}$.

The following result shows that the set of realizable symbolic models
can be characterized using their counting sequences.\medskip

\begin{lem}~\cite{Demri&DSouza&Gascon12}
 A symbolic model $\rho$ is realizable iff for every
counter $\aset$ the final value of the counting sequence $\acountseq$ along
$\rho$ is equal to $0$.
\end{lem}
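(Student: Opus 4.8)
The plan is to reduce the statement to a purely combinatorial fact about the sequence $\acountseq$, exploiting that, for each fixed counter $\aset \in \nepowerset{\set{\avariable_1,\ldots,\avariable_k}}$, the definition
\[
\acountseq(i+1)(\aset) = \max\bigl(0,\ \acountseq(i)(\aset) + u^+_{\rho(i)}(\aset) - u^-_{\rho(i+1)}(\aset)\bigr)
\]
is exactly the \emph{reflected-at-zero running sum} of a sequence of $+1$ events (the points of increment for $\aset$) and $-1$ events (the points of decrement for $\aset$) read along $\rho$. I would first pin down what these events mean concretely: a point of increment for $\aset$ at a position marks the \emph{birth} of a future obligation whose target value cannot lie inside the current window of length $l+1$ (there is no edge from $(\avariable,0)$ to any $(\avariablebis,j)$ with $j\in\interval{1}{l}$), so it can only be discharged at least $l+1$ positions later; a point of decrement for $\aset$ marks a position carrying a value that is \emph{new} relative to the preceding window and is therefore available to \emph{discharge} such an obligation. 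The index offsets between $u^+$ and $u^-$ are precisely what is needed so that a birth at actual position $p$ and the earliest admissible discharge at position $p+l+1$ are processed at the same index of $\acountseq$; hence the relative order of births and admissible deaths is faithfully preserved by the running sum.

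The combinatorial core is then: for each $\aset$, the greedy matching that sends each increment to the earliest still-unused later decrement leaves unmatched exactly $\acountseq(\length{\rho})(\aset)$ of the increments; consequently the final value is $0$ if, and only if, there is an injective map from the points of increment for $\aset$ to strictly later points of decrement for $\aset$. This I would prove by the standard exchange argument on the reflected sum. From it the forward direction $(\Rightarrow)$ follows quickly: if $\sigma$ realizes $\rho$, then every obligation $\oblieq{\avariable}{\avariablebis}{\top}$ recorded in a frame actually holds in $\sigma$, so each birth of an $\aset$-obligation is witnessed by a genuine later occurrence of the same data value at a position recorded as an $\aset$-decrement. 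Ordering the occurrences of each data value in $\sigma$ by position yields an injective, order-respecting matching, whence $\acountseq$ ends at $0$ for every $\aset$.

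For the converse $(\Leftarrow)$ I would build a concrete $\sigma$ of length $\length{\rho}$ realizing $\rho$. Local (within-window) equalities and obligations come for free: the frame conditions (F1)--(F5$'$) together with one-step consistency (OSC1), (OSC2) guarantee that the equivalence relation on window positions described by the frames is globally coherent, so I may assign a common data value to each $\approx$-class internal to a window. The only genuine content is the cross-window obligations, and these I discharge using the matchings furnished by the combinatorial core: for a birth matched to a death I force the two positions to carry the same value, and to every remaining position—in particular to every unmatched, ``wasted'' decrement permitted by the $\max(0,\cdot)$ truncation—I assign a completely fresh value. It then remains to check that the symbolic model of this $\sigma$ is exactly $\rho$: every recorded equality and obligation holds by construction, and, crucially, no \emph{spurious} equality or obligation is introduced, which is where freshness is used.

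The main obstacle is exactly this last verification in $(\Leftarrow)$: ensuring that the glued assignment induces \emph{no unintended} data equalities across windows, so that the symbolic abstraction of $\sigma$ is $\rho$ and not a strictly richer frame sequence. One must argue that a value forced equal by a chain of window overlaps and matched obligations never collides with an independently chosen fresh value, and that the truncation to $0$ never forces satisfaction of a ``phantom'' obligation absent from $\rho$. I would handle this by processing positions left to right while maintaining the invariant that the multiset of live data values at each step corresponds precisely to the obligations still counted by $\acountseq$, which is where the faithfulness of the running-sum interpretation established in the first step is decisive.
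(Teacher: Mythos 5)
A preliminary remark: the paper itself never proves this lemma — it is stated with a citation and imported from \cite{Demri&DSouza&Gascon12}, the appendix only assembling the surrounding definitions — so your proposal can only be measured against the original argument, and in outline it \emph{is} that argument. Your combinatorial core is correct: $\acountseq$ is the running sum reflected at zero of the increment/decrement events, so its final value vanishes for $\aset$ exactly when the points of increment for $\aset$ admit an injective matching into points of decrement for $\aset$ lying at least $l+1$ positions ahead, and the index shift between $u^+_{\rho(i)}$ and $u^-_{\rho(i+1)}$ encodes precisely this admissibility; your two directions — first-reappearance matching for the forward implication, identification of matched classes plus fresh values elsewhere for the converse — are the ones used in \cite{Demri&DSouza&Gascon12}. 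Two facts that your sketch uses silently should be made explicit in a full write-up. First, for the forward direction you need that the first reappearance of the value of an $\aset$-increment is a point of decrement for the \emph{same} $\aset$; this is not automatic but follows from exactness of the abstraction together with the propagation condition (F4)/(F4$'$), which splits the pending obligation set at the reappearance into the equivalence class there plus the residual $\tup{\top?}$-set. Second, for the converse, a matched decrement class generally still carries residual obligations, so your identifications must be chained through successive matchings; termination of these chains (equivalently, well-foundedness of your left-to-right invariant) rests on condition (F5$'$), which forces every obligation recorded in the last $l+1$ frames to be discharged inside the word, and a backward induction from the final frame $\pair{\symbval}{0}$ is the cleanest way to close this step.
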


It remains to define the automaton $\Asat$  that recognizes the set of
realizable symbolic modelst.
The automaton $\Asat$ is equal to 
$\triple{\locations,\locations_0,\locations_f}{\aalphabet,C}{\transitions}$ such that
\begin{itemize}

\item $\locations = \FFrame^l_k$, $\locations_0 = \locations$, $\locations_f = \set{\pair{\symbval}{0} \ \mid  \ \pair{\symbval}{0} \in \locations}$.
  
\item $\aalphabet = \FFrame^l_k$.
  
\item $C = \nepowerset{\{\avariable_1, \ldots, \avariable_k\}}$.
  
\item The transition relation $\transitions$ is defined by
\[
\pair{\symbval}{\alpha} \step{\mathit{up},\pair{\symbval}{\alpha}}
\pair{\symbval'}{\alpha'} \in \transitions
\]
for every $\pair{\symbval}{\alpha}, \pair{\symbval'}{\alpha'} \in
\FFrame^l_k$ and $\mathit{up}: C \rightarrow \interval{-k}{+k}$ verifying 
$$u^+_{\pair{\symbval}{\alpha}}(\aset) -
u^-_{\pair{\symbval'}{\alpha'}}(\aset) \leq \mathit{up}(\aset) \leq
u^+_{\pair{\symbval}{\alpha}}(\aset)$$ 
for every $\aset \in C$ (the letter $\pair{\symbval}{\alpha}$ shall be removed after the product construction below
as well as the notions of initial or final states).
\end{itemize}\medskip

\noindent A symbolic model $\rho$ is accepted by $\Asat$ iff $\rho$ is
realizable~\cite{Demri&DSouza&Gascon12}.

The VASS defined as the intersection of the two automata $\Aosc$, 
$\Asymb$ and the VASS $\Asat$, is indeed a VASS satisfying the assumptions in 
Lemma~\ref{lem:APhiFromEarlierPaper} (once the alphabet is removed but used to build
the product). It is worth noting that 
$\Aosc$ and $\Asymb$  are finite-state automata and therefore the counter updates
are those from the transitions in $\Asat$.

\end{document}